\newtheorem{property}{Proposition}[section]
\newtheorem{definition}{Definition}[section]
\newtheorem{remark}{Remark}[section]
\newtheorem{lemma}{Lemma}[section]
\newtheorem{corollary}{Corollary}[section]
\newcommand{\Probability}{\mathbb{P}}
\newtheorem{theorem}{Theorem}[section]
\newcommand{\R}{\mathbb{R}}
\newcommand{\mw}{\bm{w}}
\newcommand{\mg}{\bm{g}}
\newcommand{\mC}{\mathbb{C}}
\newcommand{\bomega}{\bm{\omega}}
\newcommand{\dynx}{\text{dyn}(\x)}
\newcommand{\grid}{\text{grid}}
\newcommand{\mr}{\bm{r}}
\newcommand{\mB}{\mathcal{B}}
\newcommand{\mT}{\mathcal{T}}
\newcommand{\mI}{\bm{I}}
\newcommand{\Crad}{C_{\text{rad}}}
\newcommand{\W}{\bm{F}}
\newcommand{\mP}{\bm{P}}
\newcommand{\mS}{\mathcal{S}}
\newcommand{\x}{{\bm x}}
\newcommand{\y}{{\bm y}}
\newcommand{\mZ}{{\bm Z}}
\newcommand{\mv}{{\bm v}}
\newcommand{\mz}{{\bm z}}
\newcommand{\e}{{\bm e}}
\newcommand{\mL}{{\mathcal L}}
\newcommand{\btau}{{\bm \tau}}
\newcommand{\K}{\mathcal K}
\newcommand{\w}{{\bm f}}
\title{Performance Analysis of OMP in Super-Resolution}
\author{Yuxuan Han\footnote{Department of Mathematics, The Hong Kong University of Science and Technology(yhanat@connect.ust.hk)},\hspace{0.1cm} Zhiyi Huang\footnote{Department of Computer Science
, University of Hong Kong(zhiyi@cs.hku.hk)},\hspace{0.1cm} Yang Wang\footnote{Department of Mathematics, The Hong Kong University of Science and Technology(yangwang@ust.hk)},\hspace{0.1cm} Rui Zhang\footnote{Theory Lab, Central Research Institute, 2012 Labs, Huawei Technologies Co. Ltd., Hong Kong SAR, China(zhangrui191@huawei.com)} }
\date{}
\begin{document}
\maketitle
\begin{abstract}
Given a spectrally sparse signal $\bm y = \sum_{i=1}^s x_i\w(\tau_i) \in \mathbb{C}^{2n+1}$ consisting of $s$ complex sinusoids, we consider the super-resolution problem, which is about estimating frequency components $\{\tau_i\}_{i=1}^s$ of $\bm y$. We consider the OMP-type algorithms for super-resolution, which is more efficient than other approaches based on Semi-Definite-Programming. Our analysis shows that a two-stage algorithm with OMP initialization can recover frequency components under the separation condition $n\Delta \gtrsim \dynx$ and the dependency on $\dynx$ is inevitable for vanilla OMP algorithm. We further show that the Sliding-OMP algorithm, a variant of the OMP algorithm with an additional refinement step at each iteration, is provable to recover $\{\tau_i\}_{i=1}^s$ if $n\Delta \geq c$.  Moreover, our result can be extended to an incomplete measurement model with $O( s^2\log n)$ measurements.
\end{abstract}

\section{Introduction}
\subsection{Super Resolution}
One fundamental problem  in many industrial applications is  estimating
the modulation parameters (e.g. locations, time delays, etc) from (incomplete) measurements \cite{mccutchen1967superresolution,greenspan2009super,heckel2016super,cai2018spectral}. 
Being limited by sensing or imaging devices, such as
the sampling rate of an analog-to-digital converter, the low temporal
or spatial resolution of the signal is the bottleneck of improving
the performance of denoising or inference. 

In this paper, we consider the super-resolution problem for spectrally sparse signals, which involves extrapolating its frequency information from the low-resolution observation.  
To be precise, our observation is a mixture of $s$ complex sinusoids 
\[
y_{i}=\sum_{k=1}^{s}x_{k}e^{j2\pi\tau_{k}i},-n\leq i\leq n
\]
with unknown frequencies $\mathcal{T}=\{\tau_{1},\cdots,\tau_{s}\}\subset[0,1)$
. In a compact form, we write 
\[
\boldsymbol{y}=\W\boldsymbol{x}
\]
 where 
$\bm F\in\mathbb{C}^{(2n+1)\times s}$ and $x\in\mathbb{C}^{s}$ ($n\gg s$).
Specifically, $\bm F$ has the shape
\[
\W=\left(\begin{array}{cccc}
e^{-j2\pi n\tau_{1}} & e^{-j2\pi n\tau_{2}} & \cdots & e^{-j2\pi n\tau_{s}}\\
e^{-j2\pi(n-1)\tau_{1}} & e^{-j2\pi(n-1)\tau_{2}} & \cdots & e^{-j2\pi (n-1)\tau_{s}}\\
\vdots & \vdots & \ddots & \vdots\\
e^{j2\pi n\tau_{1}} & e^{j2\pi n\tau_{2}} & \cdots & e^{j2\pi n\tau_{s}}
\end{array}\right) = \bigg( \w(\tau_1),\dots,\w(\tau_s)\bigg),
\] 
where $\w(\tau_i)$ is the $i$-th column of $\bm F$. Our goal is to recover $\mT$ from $\bm y.$

\subsection{OMP for sparse representation}

\begin{algorithm}
	\KwIn{$\y$, Stopping threshold $\gamma$, Dictionary $D$}
	\textbf{Initialization:} Setting $\mr_0 = \y, \mathcal{D}_0  = \emptyset, t = 0.$ \\
	\While{$\max_{\w \in D } \lvert \w ^* \mr_{t-1}   \rvert > \gamma $  }
	{
		$t \leftarrow t+ 1. $	
		 $\w_{t} \leftarrow  \text{argmax}_{\w\in D} \lvert \w^* \mr_{t-1}\rvert $.\\
		 $\mathcal{D}_{{t}}\leftarrow \mathcal{D}_{t-1}\cup \{ \bm f_{t} \} $ \\ 
		 $\mr_{t} \leftarrow \big (\mI - {\mP}( \mathcal{D})  \big ) \mr_{t-1} $.\\
			
	}
	\textbf{return}  $\mathcal{D}_{t} .$
	\caption{Orthogonal Matching Pursuit over Dictionary $D$}
	\label{alg-COMP}
\end{algorithm}
One observation is that underlying frequencies $\bm{\tau}\in [0,1)^s$ is the minimizer of the following loss over $[0,1) ^s:$  
\begin{align}\label{eq-global-loss-intro}
\mathcal{L}(\bomega):= \min_{\bm{x}\in \mathbb{C}^s }\text{}  \dfrac{1}{2} \|\bm{y}- \W(\bm{\bomega})\bm{x}\|_{2}^2.
\end{align}

Despite the difficulty of non-convexity and unknown $s$, a line of recent works \cite{Eftekhari2015, Traonmilin2020, traonmilin2020projected, Benard2022} attempted to solve the problem by developing two-stage algorithms based on \eqref{eq-global-loss-intro}: The first stage of the algorithm estimate the spike number $\hat s \in \mathbb{Z}_+$ and an initialization $\hat\bomega $,  and in the second stage, they try to solve $\eqref{eq-global-loss-intro}$ with $\hat{s}$ and $\hat\bomega$ using various optimization methods.

While these two-stage algorithms are numerically efficient, their theoretical guarantees in the $n\Delta \asymp 1$ regime are not well-understood. The theory developed for the algorithm in \cite{Eftekhari2015} only works when $n\Delta > \log n$.   \cite{Traonmilin2020} analyzed the non-convex landscape of \eqref{eq-global-loss-intro} when $\hat{s}=s$ and showed strongly convexity of \eqref{alg-COMP} when  $\lVert \bomega - \btau \rVert_2 \lesssim \dfrac{1}{n\dynx} $, such result is used to design an efficient two-stage algorithm in its follow-up work \cite{traonmilin2020projected}. However, the algorithm in \cite{traonmilin2020projected}  only works in $n\Delta \gtrsim \dynx$ regime due to the dependency on $\dynx$ in the strong convexity result.

 In recent work, \cite{Benard2022}  considered solving \eqref{eq-global-loss-intro} by projected gradient descent with Orthogonal Matching Pursuit (OMP, Algorithm~\ref{alg-COMP}) initialization and provided a promising empirical study. However, no theoretical results are presented for the algorithm.

Besides \cite{Benard2022}, various OMP-based algorithms have  been studied  in super-resolution and DOA estimation literatures \cite{mamandipoor2016newtonized, aich2017grid,emadi2018omp, ganguly2019compressive} due to its numerical efficiency. 
While the OMP algorithm for both discrete and continuous dictionaries is investigated in many previous works, its theoretical guarantee for the super-resolution problem is left open.

In the discrete dictionary setting, the theoretical guarantee of OMP  has been well-studied in previous works \cite{Cai2011, Tropp2004, Tropp2007}. All these works required a low correlation condition between different dictionary entries to show the theoretical success of OMP. However  the analysis is incompatible with the super-resolution scenario since  $D = \{\w(\tau),\tau \in [0,1) \}$ is a continuous dictionary whose entries  have arbitrary large correlation.  Although it is possible to discretize $D$ to convert the problem into the discrete setting \cite{fannjiang2010compressed,aich2017grid}, balancing the trade-off between the model misspecification error \cite{chi2011sensitivity,herman2010general, duarte2013spectral}(which encourages the smaller grid distance) and the correlation condition(which require the large grid distance) is still a challenging problem.

 Previously, the exact recovery guarantee of OMP under continuous dictionaries has been investigated in \cite{Elvira2019, Elvira2021}. 
 \cite{Elvira2021} gives the exact recovery guarantee over the completely monotone function(CMF) dictionatry. 
 However the complex sinusoids dictionary is not included in CMF class, therefore the exact recovery theory in \cite{Elvira2021} cannot be employed. 

\subsection{Contributions}

In this paper, we first propose and analyze the exact recovery guarantee of the Sliding-OMP algorithm (Algorithm~\ref{alg-SOMP}), which is a variant of the continuous OMP algorithm, then we establish the guarantee in the incomplete-measurement setting and discuss its implementation via grid-discretization. Our analysis also sheds light on the continuous OMP algorithm for the super-resolution problem.  
\subsubsection{The Sliding-OMP Algorithm}
The  Sliding-OMP algorithm is illustrated in Algorithm~\ref{alg-SOMP}, where we denote $\bm P(\bomega)$ as the orthogonal projection operator into the column space of $\bm F(\bomega)$. 

\vspace{0.5cm}
\begin{algorithm}[H]
	\textbf{Input:} $\y,$ dictionary $\w (\cdot)$, non-negative preconditioner $\{\sigma_\ell\}_{-n}^n$, sliding stepsize $\eta$, stopping threshold $\gamma$, sliding iteration number $T.$ \\
	\textbf{Initialization:}  $\y, \w (\cdot) \leftarrow \text{PreCondition} \big(\y, \w(\cdot  ),\{\sigma_\ell\} \big)$ . Setting $\mr_0 = \y , t= 0.$  \\
	\While{$\max_{\tau \in [0,1)} \lvert \w (\tau)^* \mr_{t}   \rvert > \gamma  $  }
	{
		 $ \hat\omega_{t+1} \leftarrow  \text{argmax}_{\tau \in [0,1)} \lvert \w (\tau)^* \mr_{t}\rvert $.\\
		 $\hat\bomega_{\leq {t+1}}\leftarrow (\bomega_{\leq t}, \hat\omega_{t+1} )$ \\ 
		$ \bomega_{\leq t+1}\leftarrow$ Sliding($\hat \bomega_{\leq t+1}, \y , \eta, T$)    \\
		 $\mr_{t+1} \leftarrow \big (\mI - {\mP}( \bomega_{\leq {t+1}}) \big ) \mr_{t} $.\\
		$t+1 \leftarrow t. $		
	}
	\textbf{return} $\bomega_{\leq t}.$
	\caption{Sliding Orthogonal Matching Pursuit}
	\label{alg-SOMP}
\end{algorithm}
\vspace{0.5cm}

Compared with continuous OMP,  our algorithm includes an additional pre-conditioning step and a local optimization procedure at each iteration.

\paragraph{Pre-conditioning Operation:} 
The pre-conditioning operation is a standard technique in Fourier edge-detection literatures \cite{Tadmor2007,Gelb1999,Cochran2013} that can help enforce the concentration phenomenon. 
There are numerous choices of possible preconditioners $\bm \sigma$ \cite{Tadmor2007}.  
In our work, we specify $\bm\sigma $ in Algorithm~\ref{alg-precondition operation}  as  \begin{align}\label{eq-particular-preconditioner}
	\sigma_\ell = \dfrac{1}{\lfloor n/2\rfloor} \sum_{k = \max (\ell - \lfloor n/2\rfloor, - \lfloor n/2\rfloor )}^{\min (\ell + \lfloor n/2\rfloor, \lfloor n/2\rfloor)} (1- \lvert \dfrac{k}{\lfloor n/2\rfloor}\rvert )  \big(1 - \lvert \dfrac{\ell}{\lfloor n/2\rfloor}- \dfrac{k}{\lfloor n/2\rfloor}\rvert \big),  \quad -n\leq  \ell \leq n.
\end{align}
In section~\ref{sec-improved-kernel}, we will explain the current selection on $\bm \sigma$, discuss the relationship between our work and Fourier edge detection, and explore other possible preconditioners.

\begin{algorithm}[H]
	\textbf{Input:}  $\bm y, \w(\cdot),\text{non-negative } \bm \sigma:= \{\sigma_\ell\} $\\
	$\y \leftarrow  \sqrt{\bm\sigma} \odot \y, \w(\cdot) \leftarrow \sqrt{\bm\sigma} \odot \w(\cdot). $ \\ 
	\textbf{return} $\y, \w(\cdot).$
	\caption{PreCondition}
	\label{alg-precondition operation}
\end{algorithm}

\paragraph{Sliding Operation:} 

In continuous OMP, the frequency $\omega_t$ obtained in the $t$-th round will keep unchanged in subsequent iterations. 
For Sliding-OMP, instead, we try to improve all previous found frequencies $\bomega_{\leq t}:= (\omega_1,\dots,\omega_t)$   by adding the sliding operation(Algorithm~\ref{alg-sliding operation}) at the end of $t$-th round. 
Similar refining operations were also developed in the Sliding-Frank-Wolfe(SFW) algorithm\cite{Denoyelle2020}. We note the following distinctions between their algorithm and ours:
Firstly, the SFW is proposed to solve the BLASSO problem, a provable optimization program whose solution is guaranteed to be the true frequency, and the authors mainly focus on the convergence of SFW to the BLASSO solution.  
Secondly, while the SFW has been proven to stop after a finite number of iterations, the $s$-step stopping guarantee in \cite{Denoyelle2020} is only illustrated empirically for specific circumstances, whereas our approach leads to an exact $s$-step stopping guarantee.
Finally, the sliding loss employed in the two algorithms differs: the iteration in Algorithm~\ref{alg-sliding operation} is equivalent to minimizing the loss \begin{equation}\label{eq-SOMP-slidingloss}
	\mL_t( \bomega): =  \dfrac{1}{2} \lVert \sum_{i=1}^t a_i\w(\omega_i) - \y \rVert_2^2 
\end{equation}  while the loss $\tilde{\mL}_t$ employed in SFW, has an additional sparsity-induced penalty: \begin{equation}\label{eq-SFW-slidingloss}
	\tilde{\mL}_t(\bomega):= \dfrac{1}{2} \lVert \sum_{i=1}^t a_i\w(\omega_i) - \y \rVert_2^2  + \lambda \lVert \bm a \rVert_1.  
\end{equation}

No convergence guarantee of minimizing  \eqref{eq-SFW-slidingloss}  is provided due to the difficulty of its non-convexity, while we develop a theoretical guarantee for Algorithm~\ref{alg-sliding operation} even when the loss \eqref{eq-SOMP-slidingloss} is also non-convex.

\begin{algorithm}[H]
	\textbf{Input:} $\bomega^{0},$ $\bm y$,  stepsize $\eta$, maximal iteration number $T$.    \\
	\textbf{Initialization: } $k = 0, \bm{w} =   \text{argmin}_{\bm a\in \mC^t} \lVert \sum_{i=1}^t a_i\w(\omega_i^0) - \y \rVert_2^2     $ \\
	\While{ $k\leq T$ }
	{
		$\bm{g}_k\leftarrow -  \dfrac{1}{\lvert \bm w \rvert^2}   \odot \nabla_{\bomega} \y ^*  P(\bomega^{k}) \y   $\\
		$\bomega^{k+1} \leftarrow \bomega^k - \eta   \bm{g}_k $ \\

		$k \leftarrow k+1$
	}
	\textbf{return} $\bomega^{k}.$
	\caption{Sliding}
	\label{alg-sliding operation}
\end{algorithm}

\subsubsection{Theoretical Results}
Now we present our main theoretical results of the recovery guarantee of the Sliding-OMP Algorithm~\ref{alg-SOMP}.

\begin{theorem}\label{thm-sliding-guarantee-intro} Suppose $n\Delta > c$ for some absolute constant $c$ and denote  $\bomega_{\leq t} = (\omega_1,\dots,\omega_t) $ the output of $t$-th iteration in Algorithm~\ref{alg-SOMP}. There exists a permutation $T(\cdot)$ over $[s]$ so that  for $\mT_t: = \{ T(1),\dots, T(t) \}$,  the weighted estimation error satisfies \begin{align*}
		\max_{i\leq t} \lvert x_{T(i)} (\tau_{T(i)} - \omega_i) \rvert   \lesssim  \dfrac{1}{n (n\Delta)^4} \lVert \x(\mT_t^c) \rVert_\infty, 
	\end{align*}
	where $\x(S):= (x_{s_1},\dots,x_{s_m}) $ for any $S = \{s_1,\dots,s_m\}\subset [s]$ and 
 we denote $ \x([s]^c) = 0$ for simplicity.  
\end{theorem}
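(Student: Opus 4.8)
The plan is to prove the statement by induction on the iteration index $t$; in particular it will follow that the while-loop of Algorithm~\ref{alg-SOMP} runs exactly $s$ times and recovers $\mT$ exactly, since the $t=s$ instance forces the right-hand side to vanish. The induction hypothesis at level $t$ is the displayed bound for $\bomega_{\le t}$, together with the side condition $\lVert\x(\mT_i^c)\rVert_\infty\lesssim\lvert x_{T(i)}\rvert$ for every $i\le t$ --- which holds because the greedy step always picks (up to the ambiguity quantified below) a largest remaining spike, and which upgrades the weighted bound to the \emph{unweighted} bound $\lvert\tau_{T(i)}-\omega_i\rvert\lesssim\tfrac1{n(n\Delta)^4}\ll\Delta$, so that estimates never collide and the matching $T$ is unambiguous. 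A consequence I will use repeatedly is the ``clean residual'' identity $\mr_t=\sum_{j\in\mT_t^c}x_j\w(\tau_j)+\e_t$, where $\e_t$ collects $(\mI-\mP(\bomega_{\le t}))\sum_{i\le t}x_{T(i)}\big(\w(\tau_{T(i)})-\w(\omega_i)\big)$ and $-\mP(\bomega_{\le t})\sum_{j\in\mT_t^c}x_j\w(\tau_j)$, both of which are of lower order than $\sum_{j\in\mT_t^c}x_j\w(\tau_j)$ by powers of $n\Delta$ (using the separation and the induction hypothesis). The base case $t=0$ is trivial; for the step I assume $\mT_t^c\ne\emptyset$ --- equivalently the loop reaches iteration $t+1$, which holds for $\gamma$ small enough --- and must produce $T(t+1)$, the iterate $\omega_{t+1}$, and the \emph{tightened} bound for all of $\omega_1,\dots,\omega_{t+1}$.

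First I would localize the newly selected atom. Writing $h_t(\tau):=\w(\tau)^*\mr_t$, the clean-residual identity and the fact that $\bomega_{\le t}$ is $\ge\Delta/2$-separated from every remaining true frequency give $h_t(\tau)=\sum_{j\in\mT_t^c}x_j\,K(\tau-\tau_j)+(\text{lower order})$, where $K(\nu):=\w(\nu)^*\w(0)=\sum_{\ell=-n}^{n}\sigma_\ell\,e^{-j2\pi\nu\ell}$. The key analytic input, extracted from \eqref{eq-particular-preconditioner}, is that $\bm\sigma$ is (up to a scalar) a fourfold autoconvolution of a boxcar --- equivalently the autocorrelation of a triangular window --- so $K$ is a squared Fej\'er kernel: $\lvert K(\nu)\rvert\lesssim K(0)\min\{1,(n\lvert\nu\rvert)^{-4}\}$, with compatible bounds on $K'$ and with $\lvert K''(0)\rvert\asymp n^2K(0)$. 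Combining the lower bound ``$\max_\tau\lvert h_t(\tau)\rvert$ is attained, up to constants, near a largest remaining spike, where it is $\asymp K(0)\lVert\x(\mT_t^c)\rVert_\infty$'' with the $(n\Delta)^{-4}$-decay of $K$ away from $\{\tau_j\}_{j\in\mT_t^c}$ and between distinct remaining spikes, one concludes that $\hat\omega_{t+1}=\text{argmax}_\tau\lvert h_t(\tau)\rvert$ lies within $O(\tfrac1n)$ of some $\tau_{j^\star}$ with $j^\star\in\mT_t^c$ and $\lvert x_{j^\star}\rvert\asymp\lVert\x(\mT_t^c)\rVert_\infty$; I set $T(t+1):=j^\star$, which preserves the side condition.

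The second and harder step is the Sliding operation (Algorithm~\ref{alg-sliding operation}), launched from $\hat\bomega_{\le t+1}=(\bomega_{\le t},\hat\omega_{t+1})$, which by the induction hypothesis and Step~1 lies within $O(\tfrac1n)$ of the target $\bomega^\star:=(\tau_{T(1)},\dots,\tau_{T(t+1)})$, i.e.\ well inside the super-resolution scale. Here I would invoke a local landscape result for the reduced loss $\bomega\mapsto-\y^*\mP(\bomega)\y$ (equivalently $\mL_{t+1}$ in \eqref{eq-SOMP-slidingloss} after eliminating $\bm a$): after preconditioning, and after absorbing the $1/\lvert\bm w\rvert^2$ re-weighting of the gradient in Algorithm~\ref{alg-sliding operation} (which rescales coordinate $i$ by $\approx\lvert x_{T(i)}\rvert^{-2}$), this loss is strongly convex and smooth on an $O(\tfrac1n)$-neighborhood of $\bomega^\star$ with condition number bounded by an \emph{absolute} constant, \emph{independently of} $\dynx$ --- this is exactly where the weighting removes the dynamic-range factor present in \cite{traonmilin2020projected}. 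Hence the $T$ inner (re-weighted) gradient steps converge linearly to the unique stationary point $\bar\bomega$ in that neighborhood, so that for $T\gtrsim\log(n/\gamma)$ the optimization error is dominated by the remaining contributions. It then remains to locate $\bar\bomega$: since $\bomega^\star$ would be stationary if $\y$ were replaced by $\sum_{j\in\mT_{t+1}}x_j\w(\tau_j)$, a first-order (implicit-function) perturbation estimate bounds the weighted displacement $\lvert x_{T(i)}(\tau_{T(i)}-\bar\omega_i)\rvert$ by (perturbation of the gradient)/(strong-convexity constant); the perturbation is generated only by the remaining spikes $\{x_j\w(\tau_j)\}_{j\in\mT_{t+1}^c}$, which enter through $K$ and $K'$ at separations $\ge\Delta$ and therefore each carry a factor decaying like a power of $n\Delta$ (from the $(n\nu)^{-4}$-decay of $K$), while the curvature is $\asymp n^2\lvert x_{T(i)}\rvert^2K(0)$ in the raw parametrization. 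This yields the asserted bound $\lesssim\tfrac1{n(n\Delta)^4}\lVert\x(\mT_{t+1}^c)\rVert_\infty$ for every $i\le t+1$ and, after re-checking the side condition and the resulting non-collision and clean-residual statements, closes the induction.

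The main obstacle is the local landscape result of the third paragraph: establishing strong convexity and smoothness of $\bomega\mapsto-\y^*\mP(\bomega)\y$ near $\bomega^\star$ with an \emph{absolute} condition number. This requires (a) sharp first- and second-derivative bounds on the preconditioned kernel $K$ --- near $0$, where it controls the curvature, and at scales $\ge\Delta$, where it controls the cross-terms; (b) invertibility and uniform conditioning of the Gram matrix $\W(\bomega)^*\W(\bomega)$ over the neighborhood; and (c) a careful accounting showing that the $1/\lvert\bm w\rvert^2$ weighting exactly cancels the $\lvert x_i\rvert$-scaling of the Hessian, so that no $\dynx$ factor survives --- after which one must still verify quantitatively that the basin of attraction of $\bar\bomega$ contains the greedy initialization $\hat\bomega_{\le t+1}$, i.e.\ that the ``global'' localization estimate of Step~1 and the ``local'' convergence estimate of Step~2 are matched.
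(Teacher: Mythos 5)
Your high-level outline matches the paper's: induct on $t$, use the decay of the preconditioned (squared Fej\'er) kernel to show the greedy argmax lands near a top remaining spike, then use the Sliding step to tighten the error down to $O(\tfrac{1}{n(n\Delta)^4}\lVert\x(\mT_{t+1}^c)\rVert_\infty)$, and close the loop by noting that this tightened bound keeps subsequent localization clean. Your Step~1 is essentially Lemma~\ref{lem-error-formula} together with Proposition~\ref{prop-improved-localization}. The gap is in Step~2, and it is the step you yourself flag as ``the main obstacle.'' You propose to establish that (after the $1/\lvert\bm w\rvert^2$ reweighting) the reduced loss is \emph{strongly convex with an absolute condition number} on an $O(1/n)$ ball, so that there is a \emph{unique stationary point} $\bar\bomega$ to which the inner iterations converge linearly, and then locate $\bar\bomega$ by a first-order implicit-function perturbation. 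The paper explicitly argues this route is not available: the introduction stresses that $\mL_t$ is nonconvex and that, because the sliding loss is biased when $t<s$, ``traditional optimization analyses would fail.'' Proposition~\ref{thm-sliding-regularity} accordingly does \emph{not} prove strong convexity; it proves a strictly weaker \emph{weak regularity condition} (Definition~\ref{def-weak-regularity}), and it proves it only over a weighted-$\ell_\infty$ \emph{annulus} $\mB_t$, not a ball. The inner radius of $\mB_t$ is of order $\tfrac{1}{n(n\Delta)^4}\lVert\x(\mT_t^c)\rVert_\infty$ and marks exactly where the bias term swamps the contraction term, so there is no region where the loss behaves like a convex function with a well-located unique stationary point; instead Proposition~\ref{thm-sliding-est-error} tracks the iterates directly and shows they enter and remain in the small weighted-$\ell_\infty$ ball, without ever identifying $\bar\bomega$.

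The geometry matters, not merely the strength of the convexity statement. Your condition-number claim is implicitly an $\ell_2$ (or weighted $\ell_2$) statement, but the bound you need is in weighted $\ell_\infty$: you must control $\max_i\lvert x_{T(i)}(\omega_i-\tau_{T(i)})\rvert$, and you must feed that back into the next greedy localization step, whose analysis (Lemma~\ref{lem-error-formula}) is stated in terms of $\varepsilon_{\x,t}=\max_i\lvert x_{T(i)}(\omega_i-\tau_{T(i)})\rvert$. A Euclidean contraction rate does not translate into a weighted-$\ell_\infty$ contraction rate without reintroducing the dynamic range (the distortion between $\ell_2$ and $\lVert\cdot\rVert_{\x,\infty}$ balls is $\asymp\dynx\sqrt{t}$), which is precisely the dependence the theorem is supposed to eliminate. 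The paper's WRC is formulated coordinatewise in $\lVert\cdot\rVert_{\x,\infty}$ (it splits coordinates into ``boundary'' and ``interior'' components and controls each separately), exactly so that the contraction estimate plugs directly into $\varepsilon_{\x,t}$ and into Proposition~\ref{prop-improved-localization} with no dynamic-range leakage. If you want to pursue your implicit-function route, the burden is to prove (a) a coordinatewise curvature lower bound and off-diagonal Hessian smallness that are genuinely absolute in the weighted $\ell_\infty$ metric, and (b) that the sliding iterates never leave the region where this holds, starting from a new coordinate initialized at the coarser $O(1/n)$ scale while the old coordinates sit at the much finer $O(\tfrac{1}{n(n\Delta)^4})$ scale --- this anisotropy is what the annulus $\mB_t$ and the two-phase argument of Proposition~\ref{thm-sliding-est-error} are designed to handle. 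As written, Step~2 does not establish these and cites a lemma (absolute strong convexity of the biased loss near $\bomega^\star$) that the paper's own analysis contradicts.
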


As a corollary, we have
 \begin{corollary}\label{thm-sliding-OMP-guarantee} As long as $ n\Delta>c  $, Sliding-OMP will  recover $\{\tau_{i}\}_{i=1}^s$ exactly after $s$-iterations.	
\end{corollary}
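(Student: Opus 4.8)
The plan is to obtain Corollary~\ref{thm-sliding-OMP-guarantee} from Theorem~\ref{thm-sliding-guarantee-intro} by specializing to $t=s$, the only real work being to verify that the while loop of Algorithm~\ref{alg-SOMP} performs exactly $s$ iterations. First I would record the \emph{exact recovery at the $s$-th step}: assuming the loop reaches its $s$-th iteration, apply Theorem~\ref{thm-sliding-guarantee-intro} with $t=s$. Then $\mT_s=\{T(1),\dots,T(s)\}=[s]$, so $\mT_s^c=\varnothing$ and, by the stated convention, $\x(\mT_s^c)=0$; the displayed bound collapses to $\max_{i\le s}\lvert x_{T(i)}(\tau_{T(i)}-\omega_i)\rvert=0$. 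Since each of the $s$ spikes is genuinely present ($x_k\ne 0$ for all $k\in[s]$), this forces $\omega_i=\tau_{T(i)}$ for every $i\le s$, i.e.\ the returned $\bomega_{\le s}$ is a permutation of $\mT$.

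Next I would show the loop runs \emph{at least} $s$ times, by induction on the iteration count $t$. If $t<s$ iterations have been completed, Theorem~\ref{thm-sliding-guarantee-intro} applied at this $t$ (legitimate by the induction hypothesis) says the selected frequencies already account for the sub-signal $\W(\mT_t)\x(\mT_t)$, so the residual $\mr_t$ agrees with the contribution $\W(\mT_t^c)\x(\mT_t^c)$ of the $s-t\ge 1$ not-yet-recovered spikes up to a controllable error. Using $\lVert\w(\tau)\rVert^2\asymp n$ and the $\asymp\sqrt n$ conditioning of Vandermonde systems with separation $n\Delta>c$, testing $\mr_t$ against the atom $\w(\tau_j)$ for an index $j\in\mT_t^c$ with $\lvert x_j\rvert=\lVert\x(\mT_t^c)\rVert_\infty$ yields $\max_{\tau}\lvert\w(\tau)^*\mr_t\rvert\gtrsim n\lVert\x(\mT_t^c)\rVert_\infty\ge n\min_k\lvert x_k\rvert$; with the stopping threshold $\gamma$ chosen below this universal level the while condition holds and an $(t+1)$-st iteration is performed, so the loop reaches iteration $s$ (which in turn justifies the use of Theorem~\ref{thm-sliding-guarantee-intro} at $t=s$ above).

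Finally I would show the loop \emph{stops} after iteration $s$. Once $\bomega_{\le s}=\mT$ exactly, $\mP(\bomega_{\le s})$ is the orthogonal projection onto $\mathrm{col}\,\W(\mT)$, which still contains $\y$ after the column rescaling of the pre-conditioning step; unfolding the recursion $\mr_{t}=(\mI-\mP(\bomega_{\le t}))\mr_{t-1}$ and invoking the bound of Theorem~\ref{thm-sliding-guarantee-intro} at each stage — whose right-hand side vanishes exactly when the recovered frequencies coincide with the true ones — gives $\mr_s=0$, so $\max_{\tau}\lvert\w(\tau)^*\mr_s\rvert=0\le\gamma$ and the loop terminates, returning $\bomega_{\le s}=\mT$.

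I expect the last step to be the delicate one. Because the sliding operation (Algorithm~\ref{alg-sliding operation}) re-optimizes \emph{all} previously selected frequencies at every round, the projectors $\mP(\bomega_{\le t})$ for consecutive $t$ are \emph{not} nested and $\mr_t$ is not simply $(\mI-\mP(\bomega_{\le t}))\y$; one must argue that the accumulated residual genuinely \emph{vanishes} — not merely becomes small — once the true support is locked in, and that a single $\gamma$ can sit below the ``keep going'' level of the induction while being at least the (zero) final residual. This is exactly where the quantitative form of Theorem~\ref{thm-sliding-guarantee-intro}, bounding the error by $\lVert\x(\mT_t^c)\rVert_\infty$ which equals $0$ iff $\mT_t^c=\varnothing$, is essential; with it in hand the corollary is immediate.
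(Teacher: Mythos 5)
Your core step --- applying Theorem~\ref{thm-sliding-guarantee-intro} at $t=s$, noting that $\mT_s^c=\varnothing$ forces $\lVert\x(\mT_s^c)\rVert_\infty=0$ and hence $\omega_i=\tau_{T(i)}$ for all $i$ since every $x_k\neq 0$ --- is exactly how the paper intends the corollary to follow from the theorem. The paper's own ``proof of Theorem~\ref{thm-sliding-OMP-guarantee}'' in Section~\ref{sec-proof-main-results} is really the induction that establishes Theorem~\ref{thm-sliding-guarantee-intro} itself; the corollary is then read off at $t=s$ precisely as you do, so the mathematical content of your first paragraph matches the paper.

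Where you go beyond the paper is in the second and third paragraphs, which try to pin down that the while-loop executes for exactly $s$ rounds. The paper does not explicitly address this for Sliding-OMP (a stopping-threshold guarantee is proved only for vanilla continuous OMP, in Theorem~\ref{thm-omp-naive}), and your concerns are legitimate rather than redundant: the ``at least $s$'' direction needs a lower bound on $\max_\tau\lvert\w(\tau)^*\mr_t\rvert$ for $t<s$, and the ``at most $s$'' direction needs $\mr_s$ to vanish. A small slip in your lower bound: after the preconditioning step the atoms are normalized so that $\w(\tau)^*\w(\tau)=K_n(0)=1$, not $\asymp n$, so the correct scale is $\max_\tau\lvert\w(\tau)^*\mr_t\rvert\gtrsim\lVert\x(\mT_t^c)\rVert_\infty\geq\min_k\lvert x_k\rvert$ (compare the threshold $c\min_i\lvert x_i\rvert$ in Theorem~\ref{thm-omp-naive}); the $n$ factor drops out and the conclusion is unchanged. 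Your third-paragraph worry is also real: as written, Algorithm~\ref{alg-SOMP} sets $\mr_{t+1}=(\mI-\mP(\bomega_{\leq t+1}))\mr_t$, which under sliding is \emph{not} equal to $(\mI-\mP(\bomega_{\leq t+1}))\y$ because the projectors are not nested once earlier frequencies are re-optimized. However, every step of the paper's analysis --- the decomposition $\w(\tau)^*(\mI-\tilde{\mP}_t)\y$ in Lemma~\ref{lem-error-formula}, the definition $\hat\x=A(\bomega)^{-1}\W(\bomega)^*\y$ in the sliding loss --- treats $\mr_t$ as $(\mI-\mP(\bomega_{\leq t}))\y$; under that (evidently intended) reading, $\mr_s=0$ immediately once $\bomega_{\leq s}$ coincides with $\mT$, and the loop terminates. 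So your proposal is correct and in fact more careful than what the paper writes down, modulo the normalization issue and the observation that the intended residual is $(\mI-\mP(\bomega_{\leq t}))\y$.
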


We will prove the Theorem~\ref{thm-sliding-OMP-guarantee} in section~\ref{sec-proof-main-results}, Figure~\ref{fig-proof-sketch} shows the relation between the main results and several key intermediate results.

\vspace{0.5cm}

\begin{figure}[H]
\qquad \qquad     \begin{tikzpicture}[
node distance = 4mm and 8mm,
  start chain = going below,
   arr/.style = {{Straight Barb[scale=0.8]}-, rounded corners=1ex, semithick},
     N/.style = {draw, rounded corners, thick, fill=#1,
                 text width=10em, align=center, inner ysep=2ex},
   N/.default = white,
every edge/.style = {draw, arr}
                        ]   
\node (n11) [N=white!30] {Theorem~\ref{thm-sliding-OMP-guarantee}:\\ \footnotesize Sliding-OMP Guarantee}; 
\node (n12)  [N=white!30, below=10pt of n11]  { Theorem~\ref{thm-omp-naive}:\\{\footnotesize Continuous OMP Guarantee} };  
\node (n21) [N=white!30, right=90pt of n11, above] {Proposition~\ref{thm-sliding-est-error}:\\\footnotesize{Sliding Estimation Error} } ;
\node (n22) [N=white!30, below=of n21]  {Proposition~\ref{prop-improved-localization}:\\ \footnotesize{Approximate Localization} };
\node (n31) [N=white!30, right=90pt of n21,above] {Proposition~\ref{prop-regularity}:\\
\footnotesize\text{Convergence under WRC} }  ;
\node (n32) [N=white!30, below=of n31] {Proposition~\ref{prop-sliding-informal}:\\
\text{\footnotesize WRC Criteria of Sliding Loss}};

\draw[arr] (n11)  -- (n21);
\draw[arr] (n11) -- (n22);
\draw[arr] (n12) -- (n22);
\draw[arr] (n21) -- (n32);
\draw[arr] (n21) -- (n31);
    \end{tikzpicture}
    
\caption{The proof sketch of main results} \label{fig-proof-sketch}   
\end{figure}

 Our analysis consists of two key components: 
\paragraph{Approximate Localization:} 
The first contribution of our work is a framework for analyzing the continuous OMP algorithm.  In a discrete setting, the analysis of OMP makes use of low correlation across bases in the dictionary. While the correlation between two bases in a continuous dictionary can be arbitrary close to $1$, violating the requirements in \cite{Tropp2004, Cai2011}.
Indeed, as shown in \cite{Elvira2021}, for the dictionary in the super-resolution problem, one cannot expect that OMP algorithm chooses the true basis of the dictionary in any iteration, resulting in a basis mismatch in every iteration of continuous OMP, which will spread along subsequent iterations and may ruin the performance of continuous OMP.  

Instead of pursuing exact recovery guarantee, we establish the estimation error guarantee of the continuous OMP in  Proposition~\ref{prop-improved-localization}: At $t+1$-th step, for the weighted estimation error $$\varepsilon_{\x,t} := \max_{i\leq t} \lvert x_{T(i)} (\omega_i - \tau_{T_{(i)}}) \rvert $$ and the correlation maximizer $\hat{\omega}_{t+1} =  \text{argmax}_{\tau\in [0,1)} \lvert \w(\tau)^*\bm r_t \rvert$ we have $$\lvert  \hat \omega_{t+1}-\tau_{T(t+1)}\rvert \leq \dfrac{1}{n(n\Delta)^2}  \sqrt{\dfrac{n \varepsilon_{\x,t}  }{\lVert \x (\mT_{t}^c)\rVert_\infty  } }.$$

While such estimation error result helps establish the theoretical guarantee of continuous OMP(Algorithm~\ref{alg-COMP})  in $n\Delta \gtrsim \dynx$ regime, it is not sufficient to provide the guarantee of continuous OMP in $n\Delta \asymp c$ regime: 
Denoting \begin{align*}
	{\varepsilon}_t := \max_{i\leq t}  \lvert \omega_i - \tau_i\rvert  , \quad \dynx = \dfrac{\max_i \lvert x_i \rvert}{\min_i \lvert x_i \rvert}.
\end{align*}
In the worst case, the ratio $\frac{n\varepsilon_{\x,t}}{\lVert \x(\mT_t^c)\rVert_\infty}$ turns to $\dynx\cdot n\varepsilon_{t},$ and applying the result iteratively for every $t$ leads to $\varepsilon_{t} \leq \dfrac{1}{n(n\Delta)^2}\cdot\big(\frac{\sqrt{\dynx}}{n\Delta} \big)^t$, which will tend to infinity when $\frac{\dynx}{n\Delta} \to \infty.$  
Indeed, we show in section~\ref{sec-byproduct-OMP} that the dependency of $n\Delta$ on $ \dynx$ is necessary.

\paragraph{Generalized Basin of Attraction:}

The error results of continuous OMP in the $n\Delta\asymp c$ regime motivates us to add the sliding operation in the algorithm and derive improved estimation error results.

However, the analysis of sliding operation is non-trivial from two perspectives:
Firstly, the loss $\mL _t$ is nonconvex, which makes it difficult to establish the convergence guarantee.
Secondly, the sliding loss is biased when $t < s$, as illustrated in Figure~\ref{fig-sliding-analysis}(a). True frequencies are not the minimizer of $\mL_t$ in general, so traditional optimization analyses would fail in this case.
The above two challenges also prevent us from getting the estimation error guarantee of sliding loss from the established basin of attraction results when $t = s$ in \cite{Eftekhari2015, Traonmilin2020}.
The development of new approaches to studying the estimation of sliding error is the second contribution of our work.

To address these issues, we propose the \textit{weak regularity condition} in section \ref{sec-weak-regularity-condition} as a new criterion of convergence under \textit{weighted $\ell_\infty$ norm} ($\lVert \bm v \rVert_{\bm w,\infty}:= \max_i \lvert v_iw_i\rvert, \forall \bm v,\bm w \in \R^d $). 
 The weak regularity condition is more flexible than the classical one in our setting.
This new criterion allows us to figure out a general region, the \textit{generalized basin of attraction}   $\mB_t$ of $\mL_t$ for each $t$, in which the weighted estimation error $\lVert \bomega_{\leq t} - \btau (\mT_t)  \rVert_{\bm x(\mT_t),\infty} $ is guaranteed to contract under gradient descent iterations.
More precisely, we show that $\mB_t$ has the shape of a concentric weighted $\ell_\infty$-circle: \begin{align*}
		\mathcal B_t: = \{ \bomega \in \R^{t} : \dfrac{\lVert \x({\mT_{t}^c}) \rVert_\infty}{n(n\Delta)^4}  \lesssim \lVert \bomega - \btau(\mT_t)\rVert_{\x(\mT_t) , \infty} \lesssim \dfrac{\lVert \x(\mT_{t-1}^c)\rVert_\infty}{n}  \}. 
\end{align*}

When $t<s$, $\mathcal{B}_t$ helps us establish the estimation error guarantee of sliding operation: $\varepsilon_{\x,t} = O( \dfrac{\lVert \x(\mT_t^c)\rVert_\infty}{n(n\Delta)^4})$, which improves the previous approximate localization result. Indeed, at $t+1$-th step, we can combine such improved estimation error for $\bomega_{t}$ and the approximate localization result for $\hat{\omega}_{t+1}$ to show that $\hat{\bomega}_{t+1} \in \mathcal{B}_{t+1}$, so the Sliding-OMP can successively decrease the estimation error despite the bias of sliding loss. We illustrate the sliding effect in Figure~\ref{fig-sliding-analysis}(b) for two-dimensional projection of $\mathcal{B}_t.$

When $t = s$, $\mB_t$ degenerates to the classical basin of attraction, and the corresponding estimation error guarantee would be a local convergence result to $\btau$.
Furthermore, the convergence theory is based on a weighted $\ell_\infty$ geometry, which is better suited to super-resolution problems, so the generalized basin of attraction determined by our approach is broader than that in previous works \cite{Eftekhari2015, Traonmilin2020} under $\ell_2$ geometry. 
 \begin{figure}[H]
 \centering
\begin{subfigure}[b]{0.4\textwidth}\label{fig-sliding-bias}
 \centering
 	\includegraphics[width=0.8\textwidth]{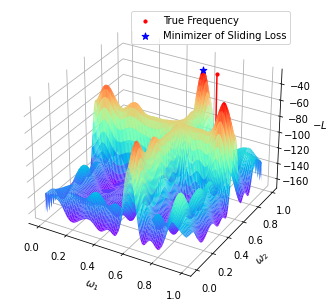}
 \caption{The bias of sliding loss landscape. Here we draw the negative sliding loss at the second step of sliding-OMP for a $4$-spike signal $\bm y = 3\w(0.6)-2\w(0.9)+1.5\w(0.1)-\w(0.3) $ }
 \end{subfigure}
 \qquad 
  \begin{subfigure}[b]{0.4\textwidth}
 \centering
 \hspace{0cm}\vspace{0.1cm}	\includegraphics[width=\textwidth]{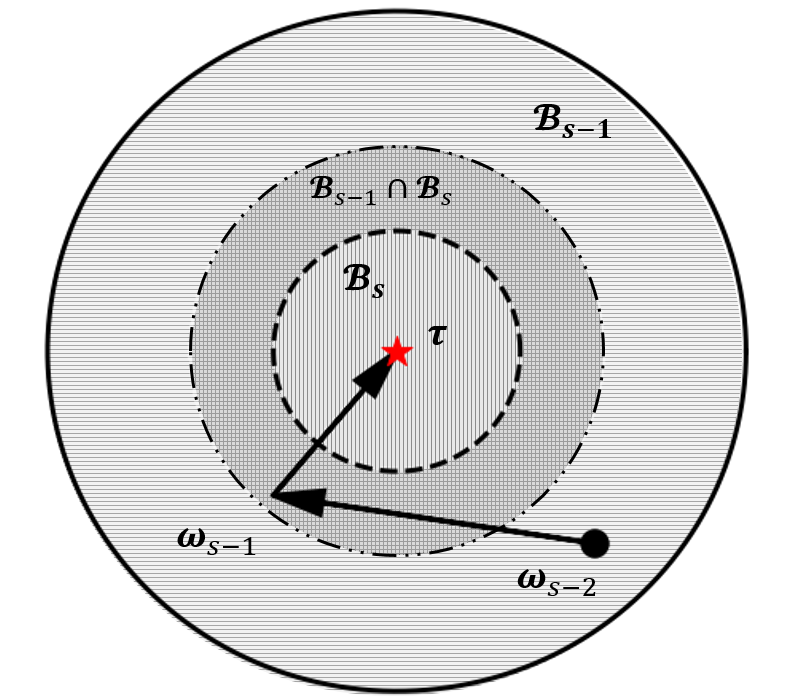}
 \caption{ {The illustration of sliding iteration  and the two-dimensional projection of $\mathcal{B}_t$ and $\bomega_t$ in last two iterations}}
 \end{subfigure}
 \caption{Illustration of Sliding Analysis}\label{fig-sliding-analysis}
 \end{figure}

\subsection{Weak Regularity Condition}\label{sec-weak-regularity-condition}

In this section, we introduce the \textit{weak regularity criterion}  in a general scenario, which is a key component when analyzing sliding operations.

Given a vector $\bm w \in \R_{++}^d$,  we denote $\lVert \bm v \rVert_{\bm w,\infty}:= \max_i \lvert v_iw_i\rvert, \forall \bm v \in \R^d$. We are interested in  the behaviour of the weighted distance $\lVert  \bomega^{k} - \btau  \rVert_{\bm w, \infty} $ for some $\btau \in \R^d$ when $\bomega^{k}$ is generated by the  following iteration formula \begin{align*}
	\bomega^{k+1} = \bomega^{k} - \bm g( \bomega^{k}),\quad k\in \mathbb{N},
\end{align*}
where $\bm g(\cdot)$ is some specified direction function.

In our setting, $\bm g$ is also a variant of the gradient of $\mL$, while $\btau$ is not the minimizer of $\mL$. Moreover, $\mL$ is not convex in general.

Furthermore,  the convergence under $\ell_{\bm w,\infty}$ (short for weighted $\ell _\infty$) geometry is different from that in  $\ell_2$ space. 
All of the aforementioned challenges force us to develop new convergence criteria. To make this quantitative, We introduce the following definition.

\begin{definition}[$\ell^\infty_\mw$-Weak Regularity Condition]  \label{def-weak-regularity} Given $\bm{\tau} \in \R^d$ and $\bm{w} \in \R^d_{++}$ ,	we say a direction function $\mg(\cdot)$ satisfies the  {WRC}$_{\mw}(\bm{\lambda},\alpha,\beta )$\big($\ell^\infty_{\mw}$-weak regularity condition\big) for some $ \bm{\lambda}\in \R^d_{++}$ and $\alpha,\beta>0$   over a set  $B\subset \R^d$, if  
    $\forall \bomega \in B$, we have 
\begin{align}\label{eq-regularity-contraction-requirement} 
	2 g(\bomega)_i( \omega_i - \tau_i ) &\geq   g_i(\bomega)^2 + \lambda_i  ( \omega_i - \tau_i)^2, \quad \forall i \in \mathcal{I}_{\alpha},\\
	 \lvert w_i g_i(\bomega)\rvert &\leq \beta  \lVert \bomega - \btau \rVert_{\bm{w},\infty}\quad \forall i \in \mathcal{I}_\alpha^c, \label{eq-regularity-small-requirement}
\end{align}
where  $\mathcal{I}_\alpha(\bomega): =\{ i\in[d]: \lvert w_i(\omega_i - \tau_i) \rvert \geq \alpha \lVert \bomega - \btau\rVert_{\bm{w},\infty}  \}.  $
\end{definition}

In the definition, we divide the coordinates into ``boundary compoments'' $\mathcal I_\alpha$ and ``interior componments'' $\mathcal I_\alpha^c$  according to their distance to the boundary.
For the boundary component $i\in \mathcal I_\alpha$ , \eqref{eq-regularity-contraction-requirement}  ensures a contraction along $i$-th coordinate after iteration; 
for the interior component $i\in \mathcal I_\alpha^c$, \eqref{eq-regularity-small-requirement} ensures $i$-th coordinate will not move too far and remain within the contracted boundary. 

The number $\alpha$ serves as a threshold for dividing coordinates into boundary components or interior components.
The number $\beta$ indicates how far interior components move after iteration.

The weak regularity criterion leads to linear convergence of $\lVert \bomega^k - \bm g\rVert_{\bm w,\infty}$, according to the following proposition.
\begin{property} \label{prop-regularity}
	If a direction function $\mg(\cdot)$ satisfies $\text{WRC}_{\mw}( \bm{\lambda}, \alpha , \beta )$  over a set $B$ for some $\bm\lambda  , \alpha,\beta$, then  the sequence $\{\bomega^k\}_{k\geq 0}$ given by the iteration formula \begin{equation}
		\bomega^{k+1} = \bomega^k - \mg(\bomega^k)
	\end{equation}
	satisfies \begin{align*}
		 \lVert \bomega^{k+1} - \btau \rVert_{\bm{w},\infty} \leq \max\{\alpha+\beta , \sqrt{1 -\min_i \lvert  \lambda_i \rvert } \} \lVert \bomega^k-\btau \rVert_{\bm{w},\infty}
	\end{align*}
	as long as $\{\bomega^k\}_{k \geq 0}\subset  B.$
\end{property}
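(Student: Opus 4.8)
The plan is to prove the contraction coordinate-by-coordinate, treating boundary and interior components separately, then taking the maximum. Fix an arbitrary $\bomega^k \in B$ and write $\bomega = \bomega^k$, $\bm g = \bm g(\bomega)$ for brevity. First I would handle a boundary coordinate $i \in \mathcal{I}_\alpha(\bomega)$. Starting from the elementary identity
\begin{align*}
(\omega^{k+1}_i - \tau_i)^2 = (\omega_i - g_i - \tau_i)^2 = (\omega_i - \tau_i)^2 - 2g_i(\omega_i - \tau_i) + g_i^2,
\end{align*}
I would apply the contraction requirement \eqref{eq-regularity-contraction-requirement}, namely $2g_i(\omega_i - \tau_i) \geq g_i^2 + \lambda_i(\omega_i - \tau_i)^2$, to cancel the $g_i^2$ term and extract a factor $(1 - \lambda_i)$:
\begin{align*}
(\omega^{k+1}_i - \tau_i)^2 \leq (\omega_i - \tau_i)^2 - g_i^2 - \lambda_i(\omega_i - \tau_i)^2 + g_i^2 = (1 - \lambda_i)(\omega_i - \tau_i)^2.
\end{align*}
Multiplying by $w_i^2$ and using $|\omega_i - \tau_i| w_i \leq \lVert \bomega - \btau \rVert_{\bm w, \infty}$ gives $|w_i(\omega^{k+1}_i - \tau_i)| \leq \sqrt{1 - \lambda_i}\,\lVert \bomega - \btau \rVert_{\bm w, \infty} \leq \sqrt{1 - \min_i |\lambda_i|}\,\lVert \bomega - \btau \rVert_{\bm w, \infty}$. (One should note $\lambda_i \in (0,1)$ is forced by \eqref{eq-regularity-contraction-requirement} combined with the Cauchy–Schwarz-type bound $2g_i(\omega_i-\tau_i) \le g_i^2 + (\omega_i-\tau_i)^2$, so the square root is well-defined; I would remark on this.)

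Next I would handle an interior coordinate $i \in \mathcal{I}_\alpha^c(\bomega)$. By the triangle inequality in a single coordinate,
\begin{align*}
|w_i(\omega^{k+1}_i - \tau_i)| = |w_i(\omega_i - \tau_i) - w_i g_i| \leq |w_i(\omega_i - \tau_i)| + |w_i g_i|.
\end{align*}
The first term is at most $\alpha \lVert \bomega - \btau \rVert_{\bm w, \infty}$ by the very definition of $\mathcal{I}_\alpha^c$ (this is where $\alpha$ being a threshold matters), and the second term is at most $\beta \lVert \bomega - \btau \rVert_{\bm w, \infty}$ by the smallness requirement \eqref{eq-regularity-small-requirement}. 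Hence $|w_i(\omega^{k+1}_i - \tau_i)| \leq (\alpha + \beta)\lVert \bomega - \btau \rVert_{\bm w, \infty}$. Combining the two cases and taking the maximum over all coordinates $i \in [d]$ yields
\begin{align*}
\lVert \bomega^{k+1} - \btau \rVert_{\bm w, \infty} \leq \max\left\{\alpha + \beta,\ \sqrt{1 - \min_i |\lambda_i|}\right\} \lVert \bomega^k - \btau \rVert_{\bm w, \infty},
\end{align*}
which is exactly the claimed bound. Since this holds for every $k$ with $\bomega^k \in B$, and the hypothesis guarantees $\{\bomega^k\}_{k \geq 0} \subset B$, the statement follows for all $k$.

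This proof is essentially a routine unwinding of the definition, so I do not anticipate a genuine obstacle; the only subtlety worth care is making sure the index set $\mathcal{I}_\alpha = \mathcal{I}_\alpha(\bomega^k)$ is allowed to depend on $k$ — the WRC holds pointwise over all of $B$, so at each iterate we simply re-partition the coordinates, and both bounds above are stated relative to the current $\lVert \bomega^k - \btau \rVert_{\bm w, \infty}$, so nothing is lost. I would also double-check the edge case where $\lVert \bomega^k - \btau \rVert_{\bm w, \infty} = 0$, in which case both sides vanish trivially and there is nothing to prove. No induction on $k$ is really needed beyond observing that the per-step estimate chains together, since the hypothesis $\{\bomega^k\} \subset B$ is assumed outright rather than something we must propagate.
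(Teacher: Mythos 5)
Your proof is correct and follows essentially the same route as the paper's: the same boundary/interior split by $\mathcal{I}_\alpha$, the same algebraic expansion and application of \eqref{eq-regularity-contraction-requirement} for boundary coordinates, and the same triangle inequality combined with \eqref{eq-regularity-small-requirement} for interior coordinates. Your extra remarks — that $\lambda_i \in (0,1]$ is forced so the square root is real, and that $\mathcal{I}_\alpha$ may be re-partitioned at each $k$ without loss — are sound observations the paper leaves implicit.
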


\begin{proof}
At $k$-th iteration, for $i \in \mathcal I_\alpha,$ we have \begin{align*}
	\lvert  \omega_i^{k+1} - \tau_i\rvert^2 & = \lvert  \omega_i^{k} -  g_i(\bomega^k) - \tau_i\rvert^2 \\
	& =\lvert \omega_i^k-\tau_i\rvert^2 -2 g_i(\bomega^k)(\omega_i^k-\tau_i) +  g_i(\bomega^k)^2\\
	& \leq \lvert \omega_i^k-\tau_i\rvert^2 -  g_i(\bomega^k)^2- \lambda_i(\omega_i^k-\tau_i)^2+  g_i(\bomega^k)^2\\
	&\leq (1- \lambda_i) \lvert \omega_i^k - \tau_i\rvert^2.
\end{align*}
Multiplying $w_i^2$ in both sides lead to \begin{align*}
	\lvert w_i(\omega_i^{k+1}-\tau_i)\rvert \leq \sqrt{1 - \min_i   \lambda_i } \cdot \lvert w_i(\omega_i^{k}-\tau_i)\rvert.
\end{align*}
For $i\in \mathcal I_{\alpha}^c,$ we have \begin{align*}
	\lvert w_i( \omega_i^{k+1} - \tau_i)\rvert & \leq  \lvert w_i( \omega_i^{k}  - \tau_i)\rvert + \lvert w_i   g_i(\bomega^k)\rvert \\
	&\leq  \lvert w_i( \omega_i^{k}  - \tau_i)\rvert +\beta \lVert \bomega^k - \btau\rVert_{\bm w, \infty}\\
	&\leq (\alpha+\beta) \lVert \bomega^k - \btau\rVert_{\bm w, \infty}.
\end{align*}
Then the claim holds by combining the bounds for $i\in \mathcal I_\alpha$ and $i\in \mathcal I_\alpha^c$ together.  
\end{proof}

We provide the following conclusion informally to illustrate the role of the weak regularity criteria in our analysis of Sliding-OMP: \begin{property}\label{prop-sliding-informal}
	 There exists some absolute constant $c$ so that at the $t$-th step of Algorithm~\ref{alg-SOMP}, the weighted gradient direction  $\bm  g:= \dfrac{1}{n^2}\cdot  \dfrac{1}{ \lvert \bm{w} \rvert^2} \odot \nabla \mL_t $  satisfies the $\text{WRC}_{\x(\mT_t)} ( \bm \lambda,\alpha,\beta) $ with $\min_i \lvert \lambda_i \rvert >c, \alpha +\beta < 1-c$ over the region \begin{align*}
	 	\mathcal B_t: = \{ \bomega \in \R^{t} : \dfrac{\lVert \x(\mT_t^c)\rVert_\infty}{n(n\Delta)^4}  \lesssim \lVert \bomega - \btau_{\leq t}\rVert_{\x(\mT_t), \infty} \lesssim {\dfrac{\lVert \x(\mT_{t-1}^c)\rVert_\infty}{n} }  \}.
	 \end{align*}  
	 Moreover, the initialization given in  Algorithm~\ref{alg-SOMP} is guaranteed to lie in $\mB_t.$ 
\end{property}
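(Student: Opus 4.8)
The plan is to verify the two regularity inequalities \eqref{eq-regularity-contraction-requirement} and \eqref{eq-regularity-small-requirement} for the weighted gradient direction $\bm g = \frac{1}{n^2}\cdot\frac{1}{|\bm w|^2}\odot\nabla\mL_t$ on the annulus $\mathcal B_t$, and then separately check that the initialization $\hat\bomega_{\leq t}$ produced by the OMP correlation-maximization and sliding steps of Algorithm~\ref{alg-SOMP} lands in $\mathcal B_t$. The main tool throughout is an accurate expansion of $\nabla\mL_t(\bomega)$ around the ``truncated truth'' $\btau(\mT_t)$. First I would write $\y = \sum_{i=1}^t x_{T(i)}\w(\tau_{T(i)}) + \bm e_t$, where $\bm e_t := \sum_{i\in\mT_t^c} x_i\w(\tau_i)$ is the tail/bias term, so that $\|\bm e_t\|$ is controlled by $s\|\x(\mT_t^c)\|_\infty$ times the (preconditioned) column norm. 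The preconditioning step is what makes this work: the specific $\bm\sigma$ in \eqref{eq-particular-preconditioner} (a triangular self-convolution, i.e.\ a Jackson/Fej\'er-type kernel) makes the preconditioned inner products $\w(\tau)^*\w(\tau')$ decay fast in $n|\tau-\tau'|$, so that under $n\Delta > c$ the Gram matrix $\W(\bomega)^*\W(\bomega)$ is well-conditioned for $\bomega$ near $\btau(\mT_t)$, and its derivatives in $\bomega$ have the expected $n$ and $n^2$ scaling. This is where I would invoke the kernel estimates from Section~\ref{sec-improved-kernel} (the same estimates that feed Proposition~\ref{prop-improved-localization}).

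\medskip

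Next, for each coordinate $i$ I would Taylor-expand $g_i(\bomega)$ in $(\omega_i-\tau_{T(i)})$ holding the other coordinates fixed. The leading term is $\lambda_i^{(0)}(\omega_i-\tau_{T(i)})$ where $\lambda_i^{(0)}$ comes from the diagonal curvature of $\y^*\bm P(\bomega)\y$ in the $i$-th frequency; the normalization by $\frac{1}{n^2|\bm w|^2}$ is chosen precisely so that $\lambda_i^{(0)}$ is an absolute constant bounded away from $0$ and from $2$ (this is the discrete-prolate/second-moment computation $\sum_\ell \ell^2\sigma_\ell \asymp n^2\sum_\ell\sigma_\ell$). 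The remaining contributions are of three types: (a) higher-order-in-$(\omega_i-\tau_{T(i)})$ terms, which on $\mathcal B_t$ are smaller by a factor $O(n\|\bomega-\btau(\mT_t)\|_{\x(\mT_t),\infty})\le O((n\Delta)^{-?})$ using the outer radius bound $\lesssim \|\x(\mT_{t-1}^c)\|_\infty/n$; (b) cross terms from the other frequencies $\omega_j$, $j\neq i$, which carry the off-diagonal Gram decay $O((n\Delta)^{-k})$ and hence are a small multiple of $\|\bomega-\btau(\mT_t)\|_{\x(\mT_t),\infty}$; and (c) the bias term from $\bm e_t$, which contributes a coordinate-$i$ perturbation of size $O\!\big(\frac{\|\x(\mT_t^c)\|_\infty}{n(n\Delta)^?}\big)$, independent of $\|\bomega-\btau(\mT_t)\|$. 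For the boundary coordinates $i\in\mathcal I_\alpha$, where $|w_i(\omega_i-\tau_{T(i)})|\ge\alpha\|\bomega-\btau(\mT_t)\|_{\x(\mT_t),\infty}$ and in particular $\|\bomega-\btau(\mT_t)\|$ is at least the \emph{inner} radius $\gtrsim \frac{\|\x(\mT_t^c)\|_\infty}{n(n\Delta)^4}$, the bias contribution (c) is dominated by $\lambda_i^{(0)}(\omega_i-\tau_{T(i)})$ once the implicit constants in $\mathcal B_t$ are chosen correctly; so $g_i(\bomega) = (\lambda_i^{(0)}+o(1))(\omega_i-\tau_{T(i)})$ with the $o(1)$ uniform on $\mathcal B_t$, which is exactly what is needed to get \eqref{eq-regularity-contraction-requirement} with $\min_i\lambda_i > c$ after absorbing $g_i(\bomega)^2$ (note $|g_i|\lesssim|\omega_i-\tau_{T(i)}|\le 1$ so $g_i^2$ is itself a small multiple of $(\omega_i-\tau_{T(i)})^2$). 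For interior coordinates $i\in\mathcal I_\alpha^c$, I bound $|w_ig_i(\bomega)|$ directly by summing (a)+(b)+(c): (a),(b) are $\le\beta'\|\bomega-\btau(\mT_t)\|_{\x(\mT_t),\infty}$ with $\beta'$ small, and (c) is $\le \frac{1}{\alpha}\cdot(\text{small})\cdot\|\bomega-\btau(\mT_t)\|$ again because on $\mathcal B_t$ the norm exceeds the inner radius; choosing $\alpha$ a small absolute constant makes $\alpha+\beta<1-c$.

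\medskip

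Finally, the ``moreover'' claim: I would show $\hat\bomega_{\leq t}\in\mathcal B_t$. Write $\hat\bomega_{\leq t} = (\bomega_{\leq t-1},\hat\omega_t)$. Inductively, the sliding step at iteration $t-1$ together with Proposition~\ref{thm-sliding-est-error} gives the \emph{improved} estimation error $\max_{i\le t-1}|x_{T(i)}(\omega_i-\tau_{T(i)})|\lesssim \frac{\|\x(\mT_{t-1}^c)\|_\infty}{n(n\Delta)^4}$, which (since $\|\x(\mT_t^c)\|_\infty\le\|\x(\mT_{t-1}^c)\|_\infty$ and in fact the new tail differs by one spike) places those coordinates comfortably between the two radii. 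For the new coordinate, the Approximate Localization bound (Proposition~\ref{prop-improved-localization}) applied to $\hat\omega_t=\mathrm{argmax}_\tau|\w(\tau)^*\bm r_{t-1}|$ gives $|\hat\omega_t-\tau_{T(t)}|\le\frac{1}{n(n\Delta)^2}\sqrt{n\varepsilon_{\x,t-1}/\|\x(\mT_{t-1}^c)\|_\infty}$; plugging in $\varepsilon_{\x,t-1}\lesssim\frac{\|\x(\mT_{t-1}^c)\|_\infty}{n(n\Delta)^4}$ yields $|\hat\omega_t-\tau_{T(t)}|\lesssim\frac{1}{n(n\Delta)^4}$, hence $|x_{T(t)}(\hat\omega_t-\tau_{T(t)})|\lesssim\frac{\|\x(\mT_{t-1}^c)\|_\infty}{n(n\Delta)^4}$, which lies below the outer radius $\frac{\|\x(\mT_{t-1}^c)\|_\infty}{n}$; one also checks it exceeds the inner radius (or, if it happens not to, then $\hat\omega_t$ is already within the target accuracy and there is nothing to prove for that step, which I would handle as a degenerate case). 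Combining, the weighted $\ell_\infty$ norm $\|\hat\bomega_{\leq t}-\btau(\mT_t)\|_{\x(\mT_t),\infty}$ sits in the annulus, so $\hat\bomega_{\leq t}\in\mathcal B_t$. The base case $t=1$ follows directly from Approximate Localization with $\varepsilon_{\x,0}=0$.

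\medskip

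The main obstacle I anticipate is step two, specifically making the constants in the annulus $\mathcal B_t$ and in $\alpha,\lambda,\beta$ consistent across \emph{all} of (a), (b), (c) simultaneously: the contraction inequality \eqref{eq-regularity-contraction-requirement} wants the bias term (c) to be \emph{small relative to} $\lambda_i^{(0)}(\omega_i-\tau_{T(i)})$, which forces the inner radius to be large enough (hence the $(n\Delta)^{-4}$ and not a smaller power), while the interior estimate \eqref{eq-regularity-small-requirement} and the induction in step three want the outer radius large enough to contain $\hat\bomega_{\leq t}$ and small enough that the higher-order terms (a) stay controlled — so the whole argument hinges on a careful bookkeeping of powers of $n\Delta$ in the preconditioned-kernel estimates, and on verifying that the annulus is nonempty, i.e.\ $\frac{1}{n(n\Delta)^4}\lesssim\frac{1}{n}$, which holds since $n\Delta>c$. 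A secondary technical point is confirming that the normalization $\frac{1}{n^2|\bm w|^2}$ (with $\bm w = \x(\mT_t)$ the true coefficients, not the running least-squares weights) is the right one and that replacing the algorithm's $\frac{1}{|\bm w|^2}$ — which uses the \emph{current} least-squares coefficients — by the true $\x(\mT_t)$ only perturbs $\bm g$ within tolerance on $\mathcal B_t$; this again follows from the Gram conditioning but needs to be stated.
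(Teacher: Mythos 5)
Your proof takes essentially the same route as the paper's formal version in Proposition~\ref{thm-sliding-regularity} and its supporting Lemmas~\ref{lem-coefficient-bound} and~\ref{lem-G-bound-squared}: the per-coordinate split of the weighted gradient into a leading contraction term, same-coordinate higher-order remainders, cross-terms from the other $\mT_t$ frequencies, and a bias term from $\mT_t^c$ corresponds exactly to the paper's $G_{3,m}$, $G_{1,m}$, and the two pieces of $G_{2,m}$, and your handling of the initialization claim via Propositions~\ref{prop-improved-localization} and~\ref{thm-sliding-est-error} mirrors the induction used in the proof of Theorem~\ref{thm-sliding-OMP-guarantee}. One small quantitative slip, which does not change the conclusion: the approximate-localization error for the new coordinate is $\lesssim \tfrac{1}{n(n\Delta)^2}$, not $\tfrac{1}{n(n\Delta)^4}$, because $\lambda_t$ in Lemma~\ref{lem-error-formula} contains the $\varepsilon_{\x,t}$-independent terms $\kappa_{21}+\kappa_{22}+2\kappa_3 = O((n\Delta)^{-4})$ which dominate once the prior estimation error is already of that order; both bounds still land inside $\mathcal B_t$, so the ``moreover'' part survives as you argued.
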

Indeed, Proposition~\ref{prop-sliding-informal}  implies that the Sliding Algorithm~\ref{alg-sliding operation} reduces the estimation error into $O( \dfrac{\lVert \x_{>t}\rVert_\infty}{n(n\Delta)^4})$ after logarithmic iterations, along with  an estimation error guarantee of sliding operation.  We will provide a formal and more precise statement of Proposition~\ref{prop-sliding-informal} in section~\ref{sec-proof-sliding-lanscape}.

\subsection{Additional Results}
\subsubsection{Super-Resolution from Incomplete Measurements} 
 Our analysis can be easily extended from full-measurement analysis to incomplete measurement settings. 
We investigate the following symmetric Bernoulli$(p)$-observation model of the signal $\bm y$ and transfer full-measurement analyses to the incomplete measurement scenario: Every index $y_i$ with $i \in [n]$ is observed with probability $p$, and $y_{-i}$ is observed if and only if $y_i$ is observed.
The extended result states that $p \gtrsim s^2\dfrac{\log n}{n}$ is sufficient to ensure the similar exact recovery result as the full-measurement setting.

In practice, the complexity of the gradient descent operation in sliding operation is proportional to the measurement number, so subsampling can dramatically reduce the computational complexity.
Our discussion in section~\ref{sec-discuss} shows that only $O\big(sn\log n + \text{poly}(s,\log n) \big)$ FLOPs are needed in our algorithm.

We summarize the result formally as the following theorem: 
\begin{theorem}\label{thm-subsampling-result}
	When $\y$ is drawn from the symmetric $p$-Bernoulli observation model, with $np \gtrsim s^2\log(n)$, the claims in Theorem~1.1 and Corollary~1.1 still hold with probability at least $1-1/n^2$. 
\end{theorem}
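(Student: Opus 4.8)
The plan is to reduce the incomplete-measurement analysis to the full-measurement analysis developed for Theorem~\ref{thm-sliding-guarantee-intro} by showing that all the quantities controlled there are preserved, up to constants, under symmetric Bernoulli$(p)$ subsampling. Concretely, let $\Omega \subset \{-n,\dots,n\}$ be the (symmetric) random observation set, and write $\bm y_\Omega$, $\w_\Omega(\tau)$ for the restrictions. The entire machinery of the proof — approximate localization (Proposition~\ref{prop-improved-localization}), the WRC criteria of the sliding loss (Proposition~\ref{prop-sliding-informal}), and hence Proposition~\ref{prop-regularity} applied to $\mL_t$ — is built out of a small number of scalar and vector functionals of the measurement vectors: inner products $\w(\tau)^*\w(\tau')$ and their derivatives, the correlation function $\w(\tau)^*\bm r_t$, projection operators $\bm P(\bomega)$, and the Hessian/gradient of $\mL_t$. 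Each of these is, after the preconditioning rescaling, a weighted average over the index set of smooth bounded functions of the indices. So the first step is to isolate, from the full-measurement proof, the finite list of ``kernel'' quantities $K(\tau,\tau') = \sum_{\ell} \sigma_\ell e^{j2\pi(\tau-\tau')\ell}$ and a bounded number of its $\tau,\tau'$-derivatives, together with the ``signal-correlation'' quantities $\sum_\ell \sigma_\ell e^{-j2\pi\tau\ell} \overline{(r_t)_\ell}$, that the argument actually uses, and to note that Theorem~\ref{thm-sliding-guarantee-intro} goes through verbatim provided each of these agrees with its expectation up to a multiplicative $(1\pm 1/10)$, say, on the relevant ranges of $\tau,\tau'$.

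The second step is the probabilistic heart: a uniform concentration bound. For a fixed pair $(\tau,\tau')$, the subsampled kernel is a sum of independent bounded terms (the symmetry constraint only couples $\ell$ with $-\ell$, which at worst doubles the variance), so a Bernstein inequality gives deviation probability $\exp(-c\, np\, \epsilon^2)$ relative to the full sum of size $\Theta(n)$, after normalizing by $p$. To upgrade to a bound uniform over all $\tau,\tau' \in [0,1)$ I would use a standard $\epsilon$-net / Lipschitz argument: the kernel and its finitely many derivatives are Lipschitz in $(\tau,\tau')$ with constant polynomial in $n$, so a net of size $\mathrm{poly}(n)$ suffices, contributing only an extra $\log n$ factor to the exponent. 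Where does the $s^2$ come from? The sliding-loss analysis involves $t\le s$ frequencies simultaneously — the Gram matrix $\W(\bomega)^*\W(\bomega)$, its inverse, and the projection $\bm P(\bomega)$ — and controlling the inverse requires the minimum singular value of an $s\times s$ (well-separated) Gram matrix to be preserved, which via a matrix Bernstein / union-bound over the $\binom{s}{2}$ entries forces $np \gtrsim s^2 \log n$. So the second step concludes: on an event $\mathcal E$ of probability at least $1 - 1/n^2$, every kernel quantity used in the full-measurement proof is within a constant factor of its rescaled expectation, simultaneously for all configurations of up to $s$ frequencies.

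The third step is bookkeeping: on $\mathcal E$, re-run the proof of Theorem~\ref{thm-sliding-guarantee-intro} and Corollary~\ref{thm-sliding-OMP-guarantee} with $\bm y_\Omega, \w_\Omega$ in place of $\bm y, \w$. Every constant in ``$n\Delta \ge c$'', in the WRC parameters $\bm\lambda,\alpha,\beta$, and in the error bound $\lesssim \frac{1}{n(n\Delta)^4}\lVert \x(\mT_t^c)\rVert_\infty$ degrades only by an absolute factor, because the only inputs to those estimates were the kernel bounds now known to hold on $\mathcal E$; the normalization $\y \leftarrow \y/p$ (or equivalently carrying the factor $1/(np)$ in the averages, exactly as the full-measurement proof carries $1/n$) makes the scaling match. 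Since the error bound still contracts by a constant factor $<1$ per iteration and still hits $0$ at $t=s$, exact recovery in $s$ steps follows, and one absorbs the $1/n^2$ failure probability into the statement.

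The main obstacle I anticipate is the uniformity over \emph{configurations} of frequencies rather than over single points: the WRC verification in Proposition~\ref{prop-sliding-informal} needs the relevant quantities controlled simultaneously for every $\bomega \in \mB_t$ and for every subset $\mT_t$ of size $t\le s$, and the projection operator $\bm P(\bomega)$ is not a single kernel evaluation but a rational function of a Gram matrix. Getting a clean union bound here requires either (i) reducing everything to the $O(s^2)$ scalar Gram entries plus a deterministic Lipschitz-in-$\bomega$ continuation of the net argument, or (ii) invoking a matrix concentration inequality for the subsampled Gram matrix uniformly in $\bomega$ — in either case one must check that the Lipschitz constants (in $\bomega$) of the projection and of $\nabla^2 \mL_t$ are only $\mathrm{poly}(n,s)$ so that the net cost stays logarithmic. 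I expect this to be the one place where genuine care, rather than routine Bernstein bookkeeping, is needed; everything else should transfer mechanically.
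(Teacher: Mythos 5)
Your plan — prove uniform concentration of the subsampled kernel and its derivatives via Bernstein plus an $\epsilon$-net, then re-run the deterministic full-measurement analysis with the perturbed kernel bounds — is the same route the paper takes (Lemma~\ref{lem-uniform-concentration} followed by Propositions~\ref{prop-subsampled-approximate-localization}--\ref{prop-subsampled-sliding}). However, two of your supporting claims do not hold up, and one of them would lead the argument astray if you tried to execute it literally. First, your account of the $s^2$ factor is wrong. A union bound over the $\binom{s}{2}$ Gram entries would only cost an extra $\log s$ inside the square root, and matrix Bernstein for an $s\times s$ matrix would at most introduce a $\sqrt{s}$, so neither mechanism produces $np \gtrsim s^2\log n$. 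In the paper, the per-evaluation (relative) error from Lemma~\ref{lem-uniform-concentration} is $\sqrt{\log(n)/(np)}$, and the $s$ factor arises deterministically, not probabilistically: the bounds on $J_{1,t},J_{2,t},J_{3,t}$, on the diagonal dominance gap $\Delta_i(\W_t^*\W_t)$, and on $\lVert (\W_t^*\W_t)^{-1}\rVert_{\infty,\infty}$ are all sums of at most $s$ kernel evaluations, so the accumulated additive error is $s\sqrt{\log(n)/(np)}$, and requiring this to be $O(1)$ forces $np\gtrsim s^2\log n$. Second, your "main obstacle" — uniformity over configurations $\bomega \in \mathcal B_t$ and subsets $\mT_t$, with a proposed matrix-concentration-in-$\bomega$ fix — is a non-issue in this setup. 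Every quantity the deterministic analysis touches ($\W_t^*\W_t$ entries, projection coefficients, $J$-terms, the $G$-terms in the WRC verification) is a finite linear combination of $\tilde K_n^{(q)}(\cdot)$ for $q\le 2$ evaluated at scalar pairwise differences in $[-\tfrac12,\tfrac12]$. Once Lemma~\ref{lem-uniform-concentration} controls $\sup_{\tau\in[-1/2,1/2]} n^{-q}\lvert\tilde K_n^{(q)}(\tau) - pK_n^{(q)}(\tau)\rvert$, this is already uniform over all configurations at once; there is no separate net over $\bomega$ or $\mT_t$ to pay for, and the $\ell_\infty$ row-sum / diagonal-dominance control of $(\W_t^*\W_t)^{-1}$ carries over with no matrix concentration needed.
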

We leave the proof of Theorem~1.2 in section~\ref{sec-analysis-incomplete-sample}.

\subsubsection{Results for Continuous OMP}\label{sec-byproduct-OMP-intro}
As a corollary of our analysis in approximate localization,  we provide the following estimation error guarantee of continuous OMP without sliding.
\begin{theorem}
	Suppose $n\Delta \geq \zeta \dynx$ for $\zeta \geq C$ with some absolute constant $C$, then there exists two absolute constants $c_1<c_2$ so that the Algorithm~\ref{alg-COMP} with the preconditioner  and stopping threshold $c_1  \min_i\lvert x_i\rvert <\gamma <c_2 \min_i\lvert x_i\rvert $ will stop exactly after finding $s$ frequencies $\omega_1,\dots,\omega_s$. Moreover, there exists unique $T(1),\dots, T(t) \in [s]$ so that $\tau_{T(i)} =\text{argmin}_{\tau \in \mT} \lvert \tau - \omega_i\rvert$ 
	and   \begin{align*}
		\lvert \omega_{i}-\tau_{T(i)}\rvert \lesssim \dfrac{1}{n\cdot \zeta^2\dynx^2 }. 	\end{align*}

\end{theorem}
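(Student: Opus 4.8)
The plan is to bootstrap Proposition~\ref{prop-improved-localization} across the iterations of continuous OMP, and to treat the stopping rule separately. Throughout write $x_{\min}=\min_i\lvert x_i\rvert$ and $x_{\max}=\max_i\lvert x_i\rvert$, so $\dynx=x_{\max}/x_{\min}$, and let $K(t):=\w(\tau)^*\w(\tau+t)=\sum_\ell\sigma_\ell e^{j2\pi\ell t}$ be the kernel induced by the preconditioner of Algorithm~\ref{alg-precondition operation}; for the choice \eqref{eq-particular-preconditioner} this $K$ is nonnegative, concentrated on a main lobe of width $\asymp 1/n$ around $0$, and its tails obey $\lvert K(t)\rvert\lesssim K(0)(n\lvert t\rvert)^{-4}$ for $\lvert t\rvert\gtrsim 1/n$, with $\lVert K'\rVert_\infty\asymp nK(0)$ and comparable tail decay for $K'$ (this is the content of Section~\ref{sec-improved-kernel}). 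The threshold $\gamma$ is compared against the common atom energy, so $\gamma\asymp\min_i\lvert x_i\rvert$ is read relative to $\|\w(\tau)\|^2=K(0)$.

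First I would run an induction on $t=1,\dots,s$ establishing two invariants at once: (i) $T(t)\in\mT_{t-1}^c$, so every round selects a new true frequency and $\mT_t$ has $t$ distinct elements; and (ii) $\varepsilon_t:=\max_{i\le t}\lvert\omega_i-\tau_{T(i)}\rvert\le\frac{1}{n(n\Delta)^2}$. The base case $t=1$ is locating the dominant peak of $\w(\cdot)^*\bm r_0=\sum_j x_jK(\cdot-\tau_j)$: since the $s$ main lobes are $\Delta$-separated, the maximiser $\omega_1$ sits within $\Delta/2$ of a unique $\tau_{T(1)}$, and weighing the curvature $\asymp\lvert x_{T(1)}\rvert\,n^2K(0)$ against the leakage $\lesssim x_{\max}K(0)(n\Delta)^{-4}$ of the other lobes gives $\varepsilon_1\lesssim\frac{1}{n(n\Delta)^4}\le\frac{1}{n(n\Delta)^2}$. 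For the inductive step, in Algorithm~\ref{alg-COMP} no earlier frequency is revisited, so $\omega_{t+1}=\hat\omega_{t+1}$; invariant (ii) makes $\varepsilon_{\x,t}=\max_{i\le t}\lvert x_{T(i)}(\omega_i-\tau_{T(i)})\rvert\le x_{\max}\varepsilon_t$ small enough for Proposition~\ref{prop-improved-localization} to apply, yielding a fresh index $T(t+1)\in\mT_t^c$ together with
\begin{align*}
\lvert\omega_{t+1}-\tau_{T(t+1)}\rvert &\le \frac{1}{n(n\Delta)^2}\sqrt{\frac{n\,\varepsilon_{\x,t}}{\lVert\x(\mT_t^c)\rVert_\infty}} \le \frac{1}{n(n\Delta)^2}\sqrt{\frac{n\,x_{\max}\,\varepsilon_t}{x_{\min}}} \\
&= \frac{1}{(n\Delta)^2}\sqrt{\frac{\dynx\,\varepsilon_t}{n}} \le \frac{\sqrt{\dynx}}{n(n\Delta)^3} \le \frac{1}{n(n\Delta)^2} ,
\end{align*}
the last two steps using $\varepsilon_t\le\frac{1}{n(n\Delta)^2}$ and $n\Delta\ge\zeta\dynx\ge\sqrt{\dynx}$ (as $\zeta\ge C\ge1$, $\dynx\ge1$). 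Hence $\varepsilon_{t+1}\le\frac{1}{n(n\Delta)^2}$ and the induction closes. At $t=s$ the $T(1),\dots,T(s)$ form a permutation of $[s]$; since each $\lvert\omega_i-\tau_{T(i)}\rvert<\Delta/2$ while the $\tau_j$ are $\Delta$-separated, $T(i)=\text{argmin}_{\tau\in\mT}\lvert\tau-\omega_i\rvert$, and $(n\Delta)^2\ge\zeta^2\dynx^2$ upgrades $\varepsilon_s\le\frac{1}{n(n\Delta)^2}$ to $\lvert\omega_i-\tau_{T(i)}\rvert\lesssim\frac{1}{n\zeta^2\dynx^2}$, the claimed bound.

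Next I would check that the while-loop of Algorithm~\ref{alg-COMP} runs for exactly $s$ rounds for suitable absolute constants $c_1<c_2$. If $t<s$, pick $j\in\mT_t^c$ and expand $\w(\tau_j)^*\bm r_t$ with $\bm r_t=(\mI-\mP(\bomega_{\le t}))\bm r_0$: the diagonal term is $x_jK(0)$, while every other term is a value of $K$, or of $K'$ paired with $\varepsilon_t$, at a point of distance $\gtrsim\Delta$ from $\tau_j$; summing these against the $(n\Delta)^{-4}$ tail over the $\Delta$-separated frequencies bounds the total correction by $\lesssim x_{\max}K(0)(n\Delta)^{-4}$, negligible against $x_jK(0)\ge x_{\min}K(0)$, so $\max_\tau\lvert\w(\tau)^*\bm r_t\rvert\ge\tfrac12K(0)x_{\min}>\gamma$ and the loop does not stop early. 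After round $s$, writing $\bm y=\sum_i x_{T(i)}\w(\omega_i)+\bm e$ with $\bm e=\sum_i x_{T(i)}\big(\w(\tau_{T(i)})-\w(\omega_i)\big)$ and noting $\w(\omega_i)\in\text{range}\,\mP(\bomega_{\le s})$, one gets $\bm r_s=(\mI-\mP(\bomega_{\le s}))\bm e$; each $\w(\tau_{T(i)})-\w(\omega_i)$ is spectrally concentrated near $\tau_{T(i)}$ with amplitude $\lesssim\lVert K'\rVert_\infty\varepsilon_s\lesssim nK(0)\varepsilon_s$, so the same kernel-decay bookkeeping gives $\max_\tau\lvert\w(\tau)^*\bm r_s\rvert\lesssim nK(0)x_{\max}\varepsilon_s\lesssim\frac{K(0)x_{\max}}{(n\Delta)^2}=\frac{K(0)x_{\min}\dynx}{(n\Delta)^2}\le\frac{K(0)x_{\min}}{\zeta^2}$, which is below $\gamma$ once $\zeta\ge C$ with $C$ large; choosing $c_1,c_2$ with $C'/\zeta^2<c_1<c_2<\tfrac12$ accommodates both inequalities, and the bound on $\lvert\omega_i-\tau_{T(i)}\rvert$ above then carries over to the output.

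The hard part is not the error recursion, which is the short contraction displayed above once Proposition~\ref{prop-improved-localization} is in hand, but rather (a) keeping that proposition's hypotheses valid along the bootstrap --- chiefly confirming that $\bm r_t$ still has its correlation maximised near a fresh true frequency, which is exactly what invariants (i)--(ii) encode --- and (b) the stopping analysis, where the subtlety is that the leakage of the other $s-1$ spikes into $\w(\tau)^*\bm r_t$ and into $\bm r_s$ must be summed through the $(n\Delta)^{-4}$ tail of the preconditioned kernel rather than via a crude $\ell_1$--$\ell_\infty$ estimate, so that no dependence on $s$ creeps in and $\zeta\asymp1$ suffices.
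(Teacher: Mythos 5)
Your proof takes essentially the same route as the paper's: an induction on $t$ driven by Lemma~\ref{lem-error-formula} and Proposition~\ref{prop-improved-localization} to close a uniform localization invariant, followed by a separate two-sided estimate of $\max_\tau\lvert\w(\tau)^*\bm r_t\rvert$ to place the stopping threshold — the paper carries the invariant $\varepsilon_t\lesssim\frac{1}{n\zeta^2\dynx^2}$ while you carry the (equivalent under $n\Delta\ge\zeta\dynx$) bound $\varepsilon_t\lesssim\frac{1}{n(n\Delta)^2}$, but the recursion, its closing inequality, and the stopping bookkeeping are the same. The only slip is in your base case: balancing curvature $\asymp n^2K(0)\lvert x_{T(1)}\rvert$ against leakage $\asymp x_{\max}K(0)(n\Delta)^{-4}$ gives $\varepsilon_1^2\lesssim n^{-2}(n\Delta)^{-4}$, i.e. $\varepsilon_1\lesssim\frac{1}{n(n\Delta)^2}$ (as in the paper's Proposition~\ref{prop-improved-localization}), not $\frac{1}{n(n\Delta)^4}$ — a harmless typo since $\frac{1}{n(n\Delta)^2}$ is exactly the invariant your induction needs.
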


This result gives theoretical guarantee to the algorithm proposed in \cite{Benard2022},  and it is comparable to another greedy-based initialization algorithm proposed in \cite{Eftekhari2015} (note that the condition $\liminf \frac{log n}{n\Delta}  > c$ implies our separation gap condition).  
Finally, the separation condition $n\Delta > c \dynx$ for continuous OMP has an additional term $\dynx$, comparing to the state-of-the-art separation gap condition $n\Delta > c$.
The dependency on $\dynx $ is fundamental:  for any constant $c>0$, we construct an instance in Theorem~\ref{thm-impossible-OMP}  with large enough $\dynx$  so that $n\Delta >c$ while the continuous OMP algorithm fails to recover true frequencies. 

We will summarize all aforementioned results for continuous OMP in section~\ref{sec-byproduct-OMP}.

\subsection{Organization of the Paper}
The paper is organized as follows. In section~\ref{sec-proof-main-results} we prove our main results . In section~\ref{sec-analysis-incomplete-sample} we analyze the incomplete measurement setting. 
In Section~\ref{sec-byproduct-OMP}, we show results for continuous OMP. In section~\ref{sec-experiment} we provide an efficient implementation of the algorithm and numerical experiments. In section~\ref{sec-discuss} we discuss other related works and some future directions.

\section{Proof of Main Results}\label{sec-proof-main-results}
\paragraph{Notation:} Throughout the paper, we use $c, C$ to denote positive absolute constants, whose value may change from one line to the next. We denote $a = O(b)$ or $a\lesssim b$( $a = \Omega(b)$ or $a \gtrsim b$) if there exists some  $c>0$ so that $\lvert a \rvert \leq c \cdot \lvert b \rvert$( $\lvert a \rvert \geq c\cdot \lvert b \rvert $). For any $\tau_1,\tau_2\in [0,1),$ we set $\tau_1\pm \tau_2$ as $\tau_1\pm \tau_2 \mod  1$ . For $\tau \in [0,1)$, we denote $\mS(\tau): = [\tau-\frac{1}{2n+4},\tau + \frac{1}{2n+4} ].$  

\subsection{Warm-Up: Analysis in the first step for un-preconditioned setting}\label{sec-warmup}
We briefly illustrate the ideas and motivate the necessity of pre-conditioning by providing the first-step analysis of \textbf{un-preconditioned} OMP algorithm in this section. 
We would note that  the first steps of Algorithm~\ref{alg-COMP} and Algorithm~\ref{alg-SOMP} are coincident because the sliding loss $\mL_1(\bomega)$  is equivalent to the negative correlation $-\lvert \w(\omega_1)^* \y \rvert$. 

\begin{property}\label{prop-first-step-unprecondition}
	 As long as $n\Delta > C\log s,$ we have the first step output $\omega_1$ of Algorithm~\ref{alg-COMP} and Algorithm~\ref{alg-SOMP} without preconditioning satisfies $\omega_1\in \mS(\tau_{T(1)})$ for some $\tau_{T(1)} \in \mT$ and $\lvert x_{T(1)}\rvert \geq (1-  \frac{c}{n\Delta }\log s ) \lVert \x \rVert_\infty$.
\end{property}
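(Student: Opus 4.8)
The plan is to analyze directly the first-step correlation and exploit only elementary properties of the Dirichlet kernel. Since the first iterations of Algorithm~\ref{alg-COMP} and Algorithm~\ref{alg-SOMP} coincide, it suffices to study $g(\tau):=\w(\tau)^{*}\y$, of which $\omega_{1}$ is a maximizer over $[0,1)$. Expanding the inner product, $\w(\tau)^{*}\w(\tau')=\sum_{\ell=-n}^{n}e^{j2\pi\ell(\tau'-\tau)}=:D_{n}(\tau'-\tau)$ is the Dirichlet kernel, so $g(\tau)=\sum_{k=1}^{s}x_{k}D_{n}(\tau-\tau_{k})$. I will use three facts: (i) $D_{n}(0)=2n+1$ and $\lvert D_{n}(t)\rvert\le 2n+1$ for all $t$; (ii) $\lvert D_{n}(t)\rvert\le\frac{1}{2\lVert t\rVert}$, where $\lVert t\rVert$ denotes the distance from $t$ to $\mathbb{Z}$, which follows from $\lvert\sin((2n+1)\pi t)\rvert\le 1$ together with $\lvert\sin\pi t\rvert\ge 2\lVert t\rVert$; and (iii) a harmonic-sum estimate: since $\mT$ is $\Delta$-separated, at most two of its points lie in any arc of length $\Delta$, so $\sum_{k}\frac{1}{2\lVert\tau_{0}-\tau_{k}\rVert}\lesssim\frac{\log s}{\Delta}$ for any reference point $\tau_{0}$ lying at distance $\gtrsim\Delta$ from all but at most one spike (in the exceptional case we simply omit the single nearest spike from the sum).

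First I would lower bound the correlation peak. Let $k^{\star}$ attain $\lVert\x\rVert_{\infty}=\max_{k}\lvert x_{k}\rvert$ and evaluate $g$ at $\tau_{k^{\star}}$: the diagonal term is $x_{k^{\star}}D_{n}(0)$, of modulus $(2n+1)\lVert\x\rVert_{\infty}$, while by (ii) and (iii) the off-diagonal terms have total modulus at most $\lVert\x\rVert_{\infty}\sum_{k\ne k^{\star}}\frac{1}{2\lVert\tau_{k^{\star}}-\tau_{k}\rVert}\le C\frac{\log s}{\Delta}\lVert\x\rVert_{\infty}$. Hence $\lvert g(\omega_{1})\rvert\ge\lvert g(\tau_{k^{\star}})\rvert\ge(2n+1-C\tfrac{\log s}{\Delta})\lVert\x\rVert_{\infty}$. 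Next I would show the peak cannot be far from $\mT$: suppose $\lVert\tau-\tau_{k}\rVert>\frac{1}{2n+4}$ for every $k$ and let $\tau_{k_{0}}$ be the nearest spike, at distance $d_{1}\in(\frac{1}{2n+4},\tfrac12]$. By (ii) the $k_{0}$-term is at most $\frac{1}{2d_{1}}\lVert\x\rVert_{\infty}\le(n+2)\lVert\x\rVert_{\infty}$; every other spike is at distance $\ge\lVert\tau_{k_{0}}-\tau_{k}\rVert-d_{1}\ge\Delta-d_{1}$ and also $\ge d_{1}$, hence $\ge\Delta/2$, from $\tau$, so by (iii) the remaining terms contribute at most $C\frac{\log s}{\Delta}\lVert\x\rVert_{\infty}$. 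Thus $\lvert g(\tau)\rvert\le(n+2+C\tfrac{\log s}{\Delta})\lVert\x\rVert_{\infty}$ off $\bigcup_{k}\mS(\tau_{k})$. If $\omega_{1}$ lay there, combining the two displays would force $2n+1-C\tfrac{\log s}{\Delta}\le n+2+C\tfrac{\log s}{\Delta}$, i.e. $n\Delta\le C'\log s$, contradicting the hypothesis once its constant exceeds $C'$. Hence $\omega_{1}\in\mS(\tau_{T(1)})$ for some index $T(1)$.

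It remains to pin down the amplitude. Write $\lvert g(\omega_{1})\rvert\le\lvert x_{T(1)}\rvert\,\lvert D_{n}(\omega_{1}-\tau_{T(1)})\rvert+\sum_{k\ne T(1)}\lvert x_{k}\rvert\,\lvert D_{n}(\omega_{1}-\tau_{k})\rvert$; bound the first $\lvert D_{n}\rvert$ by $2n+1$ using (i), and observe $\lVert\omega_{1}-\tau_{T(1)}\rVert\le\frac{1}{2n+4}<\Delta/2$ (valid since $n\Delta\gtrsim 1$), so each $k\ne T(1)$ has $\lVert\omega_{1}-\tau_{k}\rVert\ge\tfrac12\lVert\tau_{T(1)}-\tau_{k}\rVert$ and therefore, by (iii), the sum is at most $C\frac{\log s}{\Delta}\lVert\x\rVert_{\infty}$. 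Comparing with the lower bound $\lvert g(\omega_{1})\rvert\ge(2n+1-C\tfrac{\log s}{\Delta})\lVert\x\rVert_{\infty}$ from the previous paragraph and dividing through by $2n+1$ gives $\lvert x_{T(1)}\rvert\ge(1-\frac{c\log s}{n\Delta})\lVert\x\rVert_{\infty}$, which is the assertion.

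I expect the only delicate point to be the comparison used to locate $\omega_{1}$: the bound $\lvert D_{n}(t)\rvert\le\frac{1}{2\lVert t\rVert}$ on the nearest-spike contribution is only $\approx n+2$ at the boundary of $\mS(\tau)$, which must be \emph{strictly} beaten by the peak value $\approx 2n+1$, so one has merely a factor-of-two of slack, and every ``error'' term — the off-diagonal cross terms and the second-nearest spike — must genuinely be of order $\frac{\log s}{\Delta}\ll n$. This is exactly where the separation hypothesis $n\Delta>C\log s$ with a sufficiently large absolute constant is consumed; everything else is routine bookkeeping with the Dirichlet kernel and with harmonic-type sums over $\Delta$-separated points on the torus.
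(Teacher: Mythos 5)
Your proposal is correct and takes essentially the same approach as the paper: expand $\w(\tau)^*\y$ in terms of the Dirichlet kernel, use the pointwise decay bound $\lvert D_n(t)\rvert \lesssim 1/\lVert t\rVert$ together with a harmonic-sum estimate over the $\Delta$-separated spikes to control off-diagonal leakage, lower-bound the correlation at the largest spike, upper-bound it off $\bigcup_k \mS(\tau_k)$, and then reuse the comparison $\lvert g(\omega_1)\rvert \ge \lvert g(\tau_{k^\star})\rvert$ to pin down the amplitude. The only superficial difference is normalization (you keep $D_n(0)=2n+1$ while the paper normalizes to $1$) and you derive the near-boundary bound $\lvert D_n\rvert \le n+2$ explicitly rather than asserting the normalized bound $\lvert D_n\rvert < 0.5$, which is a minor tightening of the same estimate.
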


\begin{proof}
We have \begin{align*}
	\w(\tau)^*\mr_{0} = \w(\tau)^*\y  = \sum_{i=1}^s x_i (\sum_{\ell = -n}^n e^{2\pi j \ell (\tau- \tau_i)})  = \sum_{i=1}^{s} x_i D_n(\tau-\tau_i).
\end{align*} 
Thus the value of the inner product at $\tau$ can be interpreted as the value of suppression of Dirichlet kernels at $\tau. $  
Due to the tail effect of the Dirichlet kernel, its value near every true frquency $\tau_i$ will be perturbed by a noise signal $ \sum_{k\neq i} x_k D_n(\tau_i- \tau_k) $. 
We can further bound such perturbation magnitude by the decay-rate bound for $D_n:$ \begin{align*}
	\lvert D_n(\tau) \rvert =\lvert \dfrac{\sin (2n\pi \tau )}{2n \sin (\pi \tau)}\rvert \leq \dfrac{1}{2n\lvert \tau \rvert}, \quad \tau \in [-\dfrac{1}{2},\dfrac{1}{2}].
\end{align*}  
Thus \begin{align*}
	\lvert \sum_{k\neq i} x_k D_n(\tau_i -\tau_k) \rvert &\leq \dfrac{\lVert \x_{-i}\rVert_{\infty}}{2n} \sum_{k\neq i}  \dfrac{1}{\lvert \tau_i - \tau_k\rvert}\leq \dfrac{\lVert \x_{-i}\rVert_\infty}{2n}\sum_{k = 1}^{\lceil s/2\rceil } \dfrac{1}{\Delta k}   \leq c'\dfrac{\lVert \x_{-i}\rVert_\infty}{n\Delta} \cdot \log s. 
\end{align*}
As a result, for $i_1 = \text{argmax}_{i}\lvert x_i\rvert,$ we have \begin{align*}
	\lvert\w (\tau_{i_1})^* \y \rvert \geq \lvert x_{i_1} \rvert \big(1-\dfrac{c'\log s}{n\Delta} \big). 
\end{align*}
On the other hand, for every $\tau \notin \mS(\mT),$ and $i(\tau): = \text{argmin}_{i\in [s]} \lvert \tau_i- \tau\rvert,$ then \begin{align*}
	\lvert \w(\tau)^* \y\rvert &\leq \lvert x_{i(\tau)} \cdot D_n(\tau - \tau_{i(\tau)}) \rvert + \sum_{k\neq i(\tau)} \lvert x_k D_n(\tau - \tau_{k})\rvert \\
	&\leq \lvert x_{i_1}\rvert \big(\lvert D_n(\tau - \tau_{i(\tau)}) \rvert + \sum_{k\neq i(\tau)} \dfrac{1}{2n \lvert \tau-\tau_k\rvert}  \big)\\
	&\leq \lvert x_{i_1}\rvert \big(\lvert D_n(\tau - \tau_{i(\tau)}) \rvert +  \dfrac{2c'\log s}{n\Delta} \big)
\end{align*}
where we have used $\lvert \tau-\tau_k \rvert \geq  \frac{\Delta}{2}$ for $k\neq i(\tau)$ in last inequality. Now by $\tau \notin \mS(\tau_{i(\tau)}),$ we have $\lvert D_n(\tau-\tau_{i(\tau)})\rvert < 0.5,$ that gives the upper bound on $\lvert\w (\tau)^*\y \rvert$ when $\tau \notin \mS(\mT)$. Combining such upper bound and the lower bound on $\lvert \w(\tau_{i_1})^*\y \rvert,$ we now derive a sufficient condition for  approximate localization at the first step: \begin{align*}
	\dfrac{3c'\log s}{n\Delta}< 0.5 \implies  \omega_1 \in \mS(\mT). 
	\end{align*}  
	We can further argue that $\omega_1\in \mS(\tau_1)$ for some $\tau_1$ with sufficiently large magnitude via similar argument.  
Suppose  $\mS(\tau_{T(1)})$ , then we have  
\begin{align*} 
	\lvert \w(\omega_1)^*\y \rvert > \lvert \w(\tau_{i_1})^*\y \rvert &\implies  \lvert x_{T(1)} \rvert  + \lvert x_{i_1} \rvert \dfrac{c\log s}{n\Delta} > \lvert x_{i_1} \rvert(1- \dfrac{c\log s}{ n\Delta}),
\end{align*}
which leads to \begin{align*}
	\lvert x_{T(1)} \rvert \geq \big( 1-\dfrac{2c' \log s}{n\Delta } \big) \lvert x_{i_1} \rvert . \end{align*}
That finishes the proof with $C = 6c',c = 2c'$ .
\end{proof}

\subsection{Enforcing the Concentration via Preconditioning}\label{sec-improved-kernel}
 We find that the analysis in section~\ref{sec-warmup} leads to a  $\log s$ term due to the summation of the tail value of $s$ Dirichlet kernels, and the  $\dfrac{1}{nt}$ decay rate of $D_n.$ And the preconditioning procedure turns $\w(\tau)^*\mr_0$  to  $
	 \sum_{i=1}^s x_i K_n(\tau-\tau_i)$ 
with more general concentration kernels $K_n$. Suppose $\lvert K_n(\tau) \rvert\leq \frac{1}{(n\tau )^{1+\alpha}}$ for some $\alpha >0$, then the $\frac{\log s}{n\Delta}$ term will turn to $\frac{c_\alpha}{(n\Delta)^{1+\alpha}}$ for some constant $c_\alpha$ depending only on $\alpha.$ 

For a non-negative vector  $\sigma\in \R^{2n+1},$ denote $$K_{n,\sigma}(\tau ): = \sum_{\ell = -n}^n \sigma_\ell e^{2\pi j \ell \tau  } ,  $$ 
 if we take the pre-conditioning operation  $\w(\tau)\leftarrow \sqrt \sigma\odot \w(\tau)$  and $\y \leftarrow \sqrt{\sigma}\odot \y,$ then we have \begin{align}\label{eq-squared-Fejer}
 	\w(\tau)^*  \y  = \sum_{i=1}^{s} x_i K_{n,\sigma}(\tau-\tau_i).
 \end{align}
thus all analysis involving $D_n$ can be replaced by $K_{n,\sigma. } $ Now it remains to select the preconditioner $\{\sigma_\ell\}$. Such a topic has been well-studied in Fourier edge detection area  \cite{Gelb1999, Tadmor2007}.  Among large class of possible conditioners, our choice in \eqref{eq-particular-preconditioner}  leads to the squared Fej\'er kernel $$K_{n,\sigma}(t) =  \bigg[ \dfrac{\sin\big((\lfloor \frac{n}{2}\rfloor +1)\pi t\big) }{\big(\lfloor \frac{n}{2}\rfloor+1 \big) \sin(\pi t) } \bigg]^4, $$  a well-concentrated  kernel used to construct sharply-peaked trigonometric polynomials in \cite{Candes2014,Tang2013}.   For simplicity of notation, we assume W.L.O.G. that $n$ is even, thus $\lfloor n/2\rfloor = n/2.$ When $n$ is odd, all theoretical analysis and results still hold by replacing $n$ with $n-1$.   
 \begin{property}\label{prop-squared-fejer-kernel}
	For 
\begin{align*}
	K_{n,\sigma }(t) = \bigg[ \dfrac{\sin\big((\frac{n}{2}+1)\pi t\big) }{\big(\frac{n}{2}+1 \big) \sin(\pi t) } \bigg]^4.
\end{align*}
We have \begin{align*}
		 &\lvert	K_n (\tau) \rvert  \leq  \min\{ 0.7, \dfrac{1}{\big ((n+2)\tau \big )^4} \},\quad \forall \dfrac{1}{2n+4} \leq  \lvert  \tau \rvert \leq 0.5  \\
		   & \dfrac{\pi^2}{6} (n+2)^2 \tau^2  \geq \lvert 1 - K_n(\tau) \rvert \geq  (n+2)^2  \tau^2 \quad \forall  \lvert  \tau \rvert   \leq \frac{1}{2n+4}. \\
		  &\lvert K_n'(\tau) \rvert \leq \left\{\begin{matrix}
		  	\frac{\pi^2}{3} (n+2), & \lvert \tau \rvert \leq \frac{1}{2n+4} \\
		  	\dfrac{\pi^2}{(n+2)^3 \lvert \tau \rvert^4}  \} , &\frac{1}{2n+4} \leq \lvert \tau \rvert \leq \frac{1}{2}
		  \end{matrix}\right.  \\
		& \frac{\pi^2}{3} (n+2)^2\tau^2 \geq -K_n'(\tau) \tau  \geq 1.9(n+2)^2 \tau^2, \quad  \forall  \lvert \tau \rvert\leq\dfrac{1}{2n+4} \\
		&\lvert K_n{''}(\tau) \rvert \leq \left\{\begin{matrix}
		  	\frac{\pi^2}{3} (n+2)^2, & \lvert \tau \rvert \leq \frac{1}{2n+4} \\
		  	\dfrac{4\pi^4}{(n+2)^2t^4} , &\frac{1}{2n+4} \leq \lvert \tau \rvert \leq \frac{1}{2}
		  \end{matrix}\right.  \\
	\end{align*}
\end{property}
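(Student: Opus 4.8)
The plan is to establish each of the five claimed inequalities in Proposition~\ref{prop-squared-fejer-kernel} by writing $K_n$ in terms of the normalized Fej\'er-type kernel $\phi(t):=\sin\big((\frac n2+1)\pi t\big)/\big((\frac n2+1)\sin(\pi t)\big)$, so that $K_n=\phi^4$, and then bounding $\phi$ and its derivatives on the two regimes $|\tau|\le\frac{1}{2n+4}$ and $\frac{1}{2n+4}\le|\tau|\le\frac12$. Write $m:=\frac n2+1$ for brevity. On the ``tail'' regime I would use the elementary bounds $|\sin(m\pi t)|\le 1$ and $|\sin(\pi t)|\ge 2|t|$ for $|t|\le\frac12$, giving $|\phi(t)|\le\frac{1}{2m|t|}\le\frac{1}{(n+2)|t|}$ and hence $|K_n(\tau)|\le\big((n+2)|\tau|\big)^{-4}$; the constant $0.7$ comes from checking that $\phi(t)^4$ is decreasing on $[\frac{1}{2n+4},\ \cdot\,]$ away from the peak (or simply evaluating the $\big((n+2)\tau\big)^{-4}$ bound at the left endpoint $\tau=\frac{1}{2n+4}$, where it equals $2^4$—so one actually needs the monotonicity/peak argument, or a slightly sharper sine estimate near the first side-lobe, to land at $0.7$). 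On the ``central'' regime $|\tau|\le\frac{1}{2n+4}$, the key is a two-sided estimate on $1-\phi(t)$: expanding, $\phi(t)=\frac{\sin(m\pi t)}{m\sin(\pi t)}$, and using $\sin x = x - \frac{x^3}{6}+O(x^5)$ one gets $\phi(t)=1-\frac{(m^2-1)\pi^2 t^2}{6}+O(t^4)$, from which $1-K_n(\tau)=1-\phi^4=4(1-\phi)+O((1-\phi)^2)$ inherits two-sided quadratic bounds; the clean constants $(n+2)^2\tau^2\le|1-K_n(\tau)|\le\frac{\pi^2}{6}(n+2)^2\tau^2$ should follow from the sharp inequalities $1-\frac{x^2}{6}\le\frac{\sin x}{x}\le 1$ applied carefully with $x=m\pi t$ and $x=\pi t$, together with $n+2=2m$.

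For the derivative bounds I would differentiate $K_n=\phi^4$ to get $K_n'=4\phi^3\phi'$ and $K_n''=12\phi^2(\phi')^2+4\phi^3\phi''$, and bound $\phi,\phi',\phi''$ separately on each regime. On the tail regime, $\phi'$ and $\phi''$ each pick up one or two extra factors of $1/|\sin(\pi t)|\sim 1/|t|$ relative to $\phi$ (differentiating either the numerator, which keeps $|\cdot|\le m\pi$, or the denominator $1/\sin(\pi t)$, which produces $\cos(\pi t)\pi/\sin^2(\pi t)$), so after combining with $|\phi|\le\frac{1}{2m|t|}$ one reaches the stated $\frac{\pi^2}{(n+2)^3|\tau|^4}$ for $|K_n'|$ and $\frac{4\pi^4}{(n+2)^2 t^4}$ for $|K_n''|$—the bookkeeping is where the explicit $\pi$-powers come from. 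On the central regime one instead Taylor-expands: $\phi'(t)=-\frac{(m^2-1)\pi^2}{3}t+O(t^3)$ gives $|K_n'(\tau)|\le\frac{\pi^2}{3}(n+2)$ and $-K_n'(\tau)\tau = -4\phi^3\phi'\tau$, which to leading order is $\frac{4(m^2-1)\pi^2}{3}\tau^2=\frac{(n+2)^2\pi^2}{3}\tau^2(1-\tfrac{1}{m^2})$; extracting the two-sided bound $\frac{\pi^2}{3}(n+2)^2\tau^2\ge -K_n'(\tau)\tau\ge 1.9(n+2)^2\tau^2$ requires controlling the $O(\tau^3)$ remainder on the interval $|\tau|\le\frac1{2n+4}$, where $m\pi|t|\le\frac\pi2\cdot\frac{m}{n+2}\approx\frac\pi4$, so the remainder is a genuinely small fraction—this is the place one must be quantitatively honest, since $1.9<\frac{\pi^2}{6}\cdot\ldots$ leaves only a modest margin. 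The $K_n''$ central bound $\frac{\pi^2}{3}(n+2)^2$ comes similarly from $|\phi|\le1$, $|\phi'|\lesssim (n+2)t$, $|\phi''|\lesssim (n+2)^2$ on that interval.

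The main obstacle, I expect, is not any single inequality but getting the \emph{explicit constants} simultaneously sharp on both regimes and, crucially, matching at the junction point $\tau=\frac1{2n+4}$: the $0.7$ bound in the first claim, and the lower constants $1$ and $1.9$ in the two-sided estimates, have little slack, so the naive bounds $|\sin|\le1$, $|\sin(\pi t)|\ge 2|t|$ may not suffice and one may need the sharper $|\sin(\pi t)|\ge 2|t|\cdot\frac{\cos(\pi t/2)}{\,\cdot\,}$-type refinements or to verify a monotonicity claim for $\phi^4$ on part of the tail regime. A secondary nuisance is handling the periodic wrap-around and the points where $\sin(\pi t)=0$ is approached (only at $t=0$ in $[-\frac12,\frac12]$, where $\phi\to1$ is removable, so this is benign). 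I would organize the write-up as: (i) the two sine lemmas $1-\frac{x^2}{6}\le\frac{\sin x}{x}\le1$ and $\frac{2}{\pi}|x|\le|\sin x|$-type bounds; (ii) central-regime expansion of $\phi,\phi',\phi''$ with explicit remainder control; (iii) tail-regime bounds via $|\sin(\pi t)|\ge 2|t|$; (iv) assemble $K_n=\phi^4$ and its derivatives, verifying the junction and the stated constants.
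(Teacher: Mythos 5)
Your outline is sound and in its essentials matches the paper's proof: the tail-regime decay $|K_n(\tau)|\le ((n+2)|\tau|)^{-4}$ follows from $|\sin(m\pi t)|\le 1$ and $|\sin(\pi t)|\ge 2|t|$ with $m=\tfrac{n}{2}+1$; the derivative bounds are obtained via $K_n'=4\phi^3\phi'$ and $K_n''=12\phi^2(\phi')^2+4\phi^3\phi''$; and you correctly diagnose that the $0.7$ bound cannot come from the decay estimate alone (which gives only $2^4$ at the junction $\tau=\tfrac{1}{2n+4}$) and requires a separate monotonicity argument --- the paper does exactly that, devoting most of its proof to showing that the ratio $\sin((\tfrac{n}{2}+1)\pi t)/\sin(\pi t)$ is decreasing on the first side-lobe, through a delicate case analysis that also quietly assumes $n$ is moderately large (e.g.\ $n>20\pi$ in one case, a hypothesis you should make explicit). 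The one genuine structural difference is the central regime $|\tau|\le\tfrac{1}{2n+4}$: you propose to derive the two-sided bounds on $1-K_n$, $-K_n'\tau$ and $K_n''$ by Taylor-expanding $\phi(t)=\sin(m\pi t)/(m\sin(\pi t))$ with explicit remainder control, whereas the paper instead quotes a lemma of Cand\`es and Fernandez-Granda (2014) giving two-sided quadratic and linear inequalities on $K_n$, $K_n'$, $K_n''$ near the peak, and then integrates $K_n''$ once and twice to obtain the remaining bounds (e.g.\ the lower bound $1-K_n(\tau)\ge(n+2)^2\tau^2$ comes from $-\int_0^\tau\int_0^u K_n''(v)\,dv\,du$ together with the quoted bound $K_n''(v)\le -\tfrac{\pi^2}{3}n(n+4)+\tfrac{\pi^4}{6}(n+2)^4 v^2$). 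Both routes work, but the Cand\`es-lemma route is cleaner precisely because it outsources the remainder bookkeeping that you correctly flag as the sticking point: at the junction one has $m\pi|\tau|\approx\pi/4$, so the quartic corrections are not a small fraction of the quadratic leading terms, and the margins in the constants $1$, $1.9$ and $0.7$ are tight enough that a cavalier $O(\tau^4)$ estimate will not close. If you pursue the from-scratch Taylor route, you would essentially be re-deriving that cited lemma, so importing it is the pragmatic choice.
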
 
\begin{remark}
\cite{Eftekhari2015} also incorporates the preconditioning technique into the design of their algorithm. They choose the \textit{discrete prolate spheroidal wave function}(DPSWF) as the preconditioner and  assert several asymptotic properties of DPSWF that similar to our Proposition~\ref{prop-squared-fejer-kernel}. However, they leave the verification of their assertion as an open problem while we provide a rigorous proof of the Proposition~\ref{prop-squared-fejer-kernel} in Appendix~\ref{appendix-kernel-inequality}.
\end{remark}
\begin{remark}
	In the remaining part of the context, we replace the $n+2$ term with the $n$ term in the upper and lower bounds of $K_n, K_n', K_n''$ for simplicity of notation. Such replacement only leads to a minor change by multiplying the constant in all upper bounds by a scalar that is close to $1$.
\end{remark}

With such $K_n$, we can get the similar approximate localization guarantee for the first step by an easy modification of arguments in section~\ref{sec-warmup}: \begin{property}~\label{prop-first-step-fejer}  We have the first step output $\omega_1$ of Algorithm~\ref{alg-COMP} and Algorithm~\ref{alg-SOMP} with preconditioner $\{\sigma_\ell\}$ specified in \eqref{eq-particular-preconditioner}  satisfies $\omega_1\in \mS(\tau_{T(1)})$ with $\tau_{T(1)}\in \mT$ and  $\lvert x_{T(1)}\rvert \geq (1 - \frac{c}{n^4\Delta^4} ) \lVert \x \rVert_\infty$ as long as  $n\Delta > C$.
\end{property}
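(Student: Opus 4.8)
The plan is to repeat the Dirichlet-kernel argument of Proposition~\ref{prop-first-step-unprecondition} verbatim, but with $D_n$ replaced by the squared Fej\'er kernel $K_n$ supplied by Proposition~\ref{prop-squared-fejer-kernel}. After the preconditioning step of Algorithm~\ref{alg-precondition operation}, identity~\eqref{eq-squared-Fejer} gives $\w(\tau)^*\mr_0 = \w(\tau)^*\y = \sum_{i=1}^s x_i K_n(\tau-\tau_i)$, so the first-step correlation is again a superposition of shifted concentration kernels, one peaked at each $\tau_i$. The only genuine change from the warm-up is that the $\tfrac{1}{2n|\tau|}$ decay of $D_n$ is upgraded to the $\tfrac{1}{(n\tau)^4}$ decay of $K_n$, which is what removes the $\log s$ loss: where the warm-up summed $\sum_{k\ge1}\tfrac{1}{\Delta k}\asymp\tfrac{\log s}{\Delta}$, we now sum $\sum_{k\ge1}\tfrac{1}{(n\Delta k)^4}\le \tfrac{2}{(n\Delta)^4}\cdot\tfrac{\pi^4}{90}$, a convergent series, so the perturbation term is $O\big(\tfrac{1}{(n\Delta)^4}\big)\lVert\x\rVert_\infty$ regardless of $s$.

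Concretely I would carry out the following steps. First, for any $\tau$ within $\mS(\tau_{i})$ of a true frequency $\tau_i$, bound the off-target contribution: using $\Delta$-separation on the torus, at most two frequencies $\tau_k$ lie within distance $\approx m\Delta$ of $\tau$ for each $m$, so $\big|\sum_{k\neq i}x_k K_n(\tau-\tau_k)\big|\le \lVert\x_{-i}\rVert_\infty\sum_{k\neq i}\tfrac{1}{(n|\tau-\tau_k|)^4}\le \tfrac{c}{(n\Delta)^4}\lVert\x_{-i}\rVert_\infty$ by Proposition~\ref{prop-squared-fejer-kernel} (here I use $n\Delta>C$ so that $\Delta/2>\tfrac{1}{2n+4}$ and the decay bound applies). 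Second, taking $i_1=\mathrm{argmax}_i|x_i|$ and $\tau=\tau_{i_1}$, since $K_n(0)=1$ this yields the lower bound $|\w(\tau_{i_1})^*\y|\ge |x_{i_1}|\big(1-\tfrac{c}{(n\Delta)^4}\big)=\lVert\x\rVert_\infty\big(1-\tfrac{c}{(n\Delta)^4}\big)$. Third, for $\tau\notin\mS(\mT)$, the nearest frequency $\tau_{i(\tau)}$ satisfies $|\tau-\tau_{i(\tau)}|\ge\tfrac{1}{2n+4}$, hence $|K_n(\tau-\tau_{i(\tau)})|\le 0.7$ by Proposition~\ref{prop-squared-fejer-kernel}, and every other $\tau_k$ is at distance $\ge\Delta/2$, so $|\w(\tau)^*\y|\le\lVert\x\rVert_\infty\big(0.7+\tfrac{c}{(n\Delta)^4}\big)$. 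For $n\Delta>C$ with $C$ large enough this is strictly below the lower bound at $\tau_{i_1}$, forcing $\omega_1\in\mS(\mT)$, say $\omega_1\in\mS(\tau_{T(1)})$. Finally, to bound $|x_{T(1)}|$: split $|\w(\omega_1)^*\y|\le |x_{T(1)}||K_n(\omega_1-\tau_{T(1)})|+\sum_{k\neq T(1)}|x_k K_n(\omega_1-\tau_k)|\le |x_{T(1)}|+\tfrac{c}{(n\Delta)^4}\lVert\x\rVert_\infty$ (using $|K_n|\le 1$ and, for $k\neq T(1)$, $|\omega_1-\tau_k|\ge\Delta/2$), and combine with $|\w(\omega_1)^*\y|\ge|\w(\tau_{i_1})^*\y|\ge\lVert\x\rVert_\infty\big(1-\tfrac{c}{(n\Delta)^4}\big)$ to conclude $|x_{T(1)}|\ge\big(1-\tfrac{c'}{n^4\Delta^4}\big)\lVert\x\rVert_\infty$.

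I do not expect a serious obstacle here: the argument is essentially bookkeeping, and the substantive analytic input (the decay and value bounds for $K_n$) is already packaged in Proposition~\ref{prop-squared-fejer-kernel}. The points that require a little care are (i) measuring all frequency differences modulo $1$ on the torus and using $\Delta$-separation to convert the tail sums into $\sum_k k^{-4}$ with the correct multiplicity factor of $2$; (ii) checking that $n\Delta>C$ (with no $\log s$) already makes the intervals $\mS(\tau_i)$ disjoint and makes $0.7+O((n\Delta)^{-4})$ beat $1-O((n\Delta)^{-4})$; and (iii) keeping the absolute constants consistent so that the same threshold $C$ works for both the localization claim and the amplitude claim. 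None of these is deep, so the bulk of the write-up is transcribing the warm-up proof with the improved kernel estimates substituted in.
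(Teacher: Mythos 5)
Your proposal is correct and matches the paper's intent exactly: the paper does not spell out a separate proof of Proposition~\ref{prop-first-step-fejer} but simply declares it an ``easy modification of arguments in section~\ref{sec-warmup}'', and your write-up carries out precisely that substitution, replacing the $\tfrac{1}{2n|\tau|}$ tail of $D_n$ by the $\tfrac{1}{((n+2)\tau)^4}$ tail of $K_n$ from Proposition~\ref{prop-squared-fejer-kernel} so that the tail sum converges and the $\log s$ factor disappears. Your three flagged bookkeeping points (torus arithmetic with multiplicity $2$, disjointness of the $\mS(\tau_i)$, and a single threshold $C$ serving both the localization and amplitude bounds) are exactly the places where care is needed, and none of them creates an obstacle.
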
 

\subsection{Approximate Localization of Continuous OMP}\label{sec-approximate-localization}

\subsubsection{Approximate Localization Lemma}

\begin{lemma}\label{lem-error-formula}  Supposing $\varepsilon_{t}:=\lVert \bomega_{\leq t} - \btau(\mT_t)\rVert_\infty  < \dfrac{1}{2n+4}$. Then at time-step $t+1$, as long as we have $\mathcal C_t>0$ with  \begin{align}\label{eq-error-accumulation}  \mathcal C_t = 0.3 - O\big(\dfrac{1}{(n\Delta)^4} +\dfrac{n\varepsilon_{\x,t} }{\lVert \x(\mT_t^c) \rVert_\infty} \big) ,  \end{align}
we have then $\omega_{t+1}$ is guaranteed to locate at $\mS(\tau_{T(t+1)})$ with  $T(t+1) \in \mT_{t}^c$ and  $$\lvert x_{T(t+1)} \rvert \geq (1-\lambda_{t+1}) \lVert \x(\mT_t^c) \rVert_{\infty}, $$
 with some $\lambda_t$ such that 
\begin{align*}
	\lambda_{t} = \left\{\begin{matrix} O(\dfrac{n\varepsilon_{\x,t}}{\lVert \x(\mT_t^c) \rVert_\infty (n\Delta)^4})   & \text{, }t\geq 1 \\
	 O( {1}/{(n\Delta)^4} )& \text{, }t=0 
	\end{matrix}\right.
\end{align*}  
\end{lemma}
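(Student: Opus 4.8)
The plan is to run the same "Dirichlet/Fejér kernel suppression" argument as in Propositions 2.1 and 2.5, but now accounting for the fact that after $t$ rounds the residual $\bm r_t$ is not $\y$ itself but the projection of $\y$ onto the orthogonal complement of the span of $\{\w(\omega_i)\}_{i\le t}$, where each $\omega_i$ is only an \emph{approximate} copy of $\tau_{T(i)}$ with weighted error controlled by $\varepsilon_{\x,t}$. First I would write $\bm r_t = (\mathbf I - \mathbf P(\bomega_{\le t}))\y$ and decompose $\y = \sum_{i\le t} x_{T(i)}\w(\tau_{T(i)}) + \sum_{j\in\mT_t^c} x_j\w(\tau_j)$. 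The first sum is "almost" in the range of $\mathbf P(\bomega_{\le t})$: since $\lvert\omega_i-\tau_{T(i)}\rvert$ is tiny, $\w(\tau_{T(i)})$ is close (after preconditioning, in the relevant weighted inner product) to $\w(\omega_i)$, so $(\mathbf I-\mathbf P)\sum_i x_{T(i)}\w(\tau_{T(i)})$ has norm $O(n\varepsilon_{\x,t})$ — this is where the $\frac{n\varepsilon_{\x,t}}{\lVert\x(\mT_t^c)\rVert_\infty}$ term in $\mathcal C_t$ comes from. For the second sum, $(\mathbf I - \mathbf P)$ is a contraction, so $\w(\tau)^*\bm r_t$ is within $O(\text{perturbation})$ of $\sum_{j\in\mT_t^c} x_j K_n(\tau-\tau_j)$.

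Next I would reproduce the two-sided estimate of Proposition 2.1 with $D_n$ replaced by the squared Fejér kernel $K_n$ (using the decay and peak bounds from Proposition 2.4) and with the sum now restricted to $\mT_t^c$: for $j^\star = \operatorname{argmax}_{j\in\mT_t^c}\lvert x_j\rvert$ one gets $\lvert\w(\tau_{j^\star})^*\bm r_t\rvert \ge \lVert\x(\mT_t^c)\rVert_\infty\big(1 - O((n\Delta)^{-4}) - O(n\varepsilon_{\x,t}/\lVert\x(\mT_t^c)\rVert_\infty)\big)$, while for any $\tau\notin\mS(\mT_t^c)$ (in particular $\tau$ near some already-found $\tau_{T(i)}$, where the contribution of $x_{T(i)}\w(\tau_{T(i)})$ has been almost entirely removed by $\mathbf P$, or $\tau$ far from all frequencies) one gets $\lvert\w(\tau)^*\bm r_t\rvert \le \lVert\x(\mT_t^c)\rVert_\infty\big(0.7 + O((n\Delta)^{-4}) + O(n\varepsilon_{\x,t}/\lVert\x(\mT_t^c)\rVert_\infty)\big)$. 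The hypothesis $\mathcal C_t = 0.3 - O((n\Delta)^{-4} + n\varepsilon_{\x,t}/\lVert\x(\mT_t^c)\rVert_\infty) > 0$ is exactly what forces the lower bound to exceed the upper bound, so the maximizer $\omega_{t+1}$ must lie in $\mS(\tau_{T(t+1)})$ for some $T(t+1)\in\mT_t^c$. The magnitude bound $\lvert x_{T(t+1)}\rvert \ge (1-\lambda_{t+1})\lVert\x(\mT_t^c)\rVert_\infty$ follows by the same comparison trick used at the end of Proposition 2.1: plug $\tau=\omega_{t+1}\in\mS(\tau_{T(t+1)})$ and $\tau=\tau_{j^\star}$ into the inequality $\lvert\w(\omega_{t+1})^*\bm r_t\rvert \ge \lvert\w(\tau_{j^\star})^*\bm r_t\rvert$, and solve for $\lvert x_{T(t+1)}\rvert$; the resulting slack is the sum of the two error terms, which is $O(n\varepsilon_{\x,t}/(\lVert\x(\mT_t^c)\rVert_\infty))$ in the $t\ge1$ case (the $(n\Delta)^{-4}$ being absorbed when $\varepsilon_{\x,t}$ dominates after the first step) and $O((n\Delta)^{-4})$ at $t=0$ where $\varepsilon_{\x,0}=0$.

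The main obstacle I anticipate is the careful bookkeeping in the "$\mathbf P$ almost kills the found components" step: one must quantify how much of $x_{T(i)}\w(\tau_{T(i)})$ survives the projection $\mathbf I - \mathbf P(\bomega_{\le t})$ in terms of $\lvert\omega_i - \tau_{T(i)}\rvert$ and, crucially, in terms of the \emph{weighted} error $\lvert x_{T(i)}(\omega_i-\tau_{T(i)})\rvert$ rather than the raw error, since only the weighted quantity $\varepsilon_{\x,t}$ appears in the statement. This requires controlling the smallest singular value / near-orthogonality of $\{\w(\omega_i)\}_{i\le t}$ under the separation condition $n\Delta>c$ (so that the Gram matrix is well-conditioned, via the preconditioned kernel decay), then writing $\w(\tau_{T(i)}) = \w(\omega_i) + (\omega_i-\tau_{T(i)})\,\partial\w + \text{(second order)}$ and bounding $\|\partial\w\|$ and the projection of the correction term — a computation of the same flavor as, but more delicate than, the first-step analysis. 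I would also need to double-check the assumption $\varepsilon_t < \frac{1}{2n+4}$ is used precisely where the peak bounds of $K_n$ on $\mS(\tau)$ are invoked, and that the case split in the definition of $\lambda_t$ is consistent with $\varepsilon_{\x,0}=0$.
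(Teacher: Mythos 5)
Your proposal takes essentially the same route as the paper: decompose $\w(\tau)^*\mr_t$ into a "found components" part controlled by $n\varepsilon_{\x,t}$ (the paper's $J_{1,t}$) and a "remaining components" part that gives the kernel-suppression lower/upper bound on and off $\mS(\mT_t^c)$, then close with the comparison trick $\lvert\w(\hat\omega_{t+1})^*\mr_t\rvert\ge\lvert\w(\tau_{j^\star})^*\mr_t\rvert$ to get the magnitude bound on $\lvert x_{T(t+1)}\rvert$. The one organizational difference is that the paper splits your second term further via an intermediate oracle projection $\mP_t=\mP(\btau(\mT_t))$ into $J_{2,t}$ (clean Cai2011-type diagonal-dominance bound with true frequencies) plus a perturbation $J_{3,t}=\w(\tau)^*(\mP_t-\tilde\mP_t)\w(\mT_t^c)\x(\mT_t^c)$; your remark "$\mathbf I-\mathbf P$ is a contraction, so it is within $O(\text{perturbation})$ of $\sum_j x_j K_n(\tau-\tau_j)$" is where that insertion would have to appear once you filled in the details, since the projection onto $\{\w(\omega_i)\}$ does contribute a non-negligible term that must be bounded by separation and the Gram-matrix $\ell^\infty$-conditioning you already flag.
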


\begin{proof}[Proof of the Lemma] \quad \\
For $t = 0$, the claim is covered by Proposition~\ref{prop-first-step-fejer}.\\
 For $t\geq 1$, we have suppose W.L.O.G. that $x_{t+1} \in \mT_{t}^c$ and $\lvert x_{t+1}\rvert = \lVert \x(\mT_t^c)\rVert_\infty,$ then for any $\tau \in [0,1),$ denoting $\bm P_t: = \bm P(\btau(\mT_t)), \tilde{\bm P}_t : = \bm P(\bomega_t),$ \begin{align*}
	&\w(\tau)^* (\bm{I}- \tilde{\bm{P}}_t) \y \\
	 =& \w(\tau)^* (\mI - \tilde{\bm{P}}_t) [\w(\mT_t)\x(\mT_t)+\w(\mT_t^c)\x(\mT_t^c) ]\\
	 =& \underbrace{\w(\tau)^* (\mI - \tilde{\bm{P}}_t) \w(\mT_t)\x(\mT_t)}_{J_{1,t}(\tau)} + \underbrace{\w(\tau)^* (\mI - {\bm{P}}_t)\w(\mT_t^c)\x(\mT_t^c)}_{J_{2,t}(\tau)} +\underbrace{\w(\tau)^* ({\bm{P}}_t- \tilde{\mP}_t )\w(\mT_t^c)\x(\mT_t^c)}_{J_{3,t}(\tau)} .
	 \end{align*}
As a result, we have
 \begin{equation}\label{eq-condition-hold} \max_{i\in \mT_t^c} \lvert J_{2,t}(\tau)\rvert  - \sup_{\tau\notin \mS(\mT_t^c)} \lvert J_{2,t}(\tau) \rvert  > 2 \big( \sup_\tau  \lvert J_{1,t}(\tau)\rvert + \sup_\tau  \lvert J_{3,t}(\tau)\rvert \big) \implies  \omega_{t+1}\in \mS(\mT_t^c),\end{equation} 
To guarantee the left-hand side inequality in \eqref{eq-condition-hold}, we build the following bounds on $J_{1,t},J_{2,t},J_{3,t}$:	 
 \begin{lemma}\label{lem-J-bounds}
	As long as $(n+2)\Delta > C$ and $\lvert \omega_i - \tau_{T(i)}\rvert \leq \frac{1}{2n+4},$ we have  
	\begin{align*}
&\max_{i\in \mT_t^c}  \lvert J_{2,t}(\tau) \rvert  \geq (1-\frac{c}{(n+2)^4\Delta^4} )\lvert x_{t+1
}\rvert, \\
&\sup_{\tau \notin \mS(\mT_t^c) }	\lvert J_{2,t}(\tau) \rvert \leq   \lvert x_{t+1} \rvert \bigg( 0.7 +\nu_2  \bigg) , \\
&\sup_{-1/2\leq \tau \leq 1/2 }	\lvert J_{1,t}(\tau) \rvert \leq   n\varepsilon_{\x,t} \cdot \nu_1 ,  \\
&\sup_{-1/2\leq \tau \leq 1/2 }	\lvert J_{3,t}(\tau) \rvert \leq \lvert x_{t+1}\rvert  \cdot \nu_3 .
	\end{align*}
	Where $\nu_1 =O(1), \nu_2 , \nu_3 = O(\dfrac{1}{(n\Delta)^4}). $ \end{lemma}  
Now we have for  $\tau \notin \mS(\mT_t^c), $   \begin{align*}
	\lvert \w(\tau)^*(\mI-\tilde{\mP}_t  ) \y \rvert 
	& \leq \sup_{\tau\notin \mS(\mT_t^c)} \lvert J_{2,t}(\tau) \rvert+ \sup_{\tau }\lvert J_{1,t}(\tau)\rvert+\sup_{\tau}\lvert J_{3,t}(\tau)\rvert\\
	&\leq  \lvert x_{t+1}\rvert (0.7+\nu_2+\nu_3) + n\varepsilon_{\x, t}\cdot \nu_1 . 
\end{align*}
And by \begin{align*}
	\max_{i\in \mT_t^c} \lvert \w(\tau_i)^*(\mI-\tilde{\mP}_t)\y \rvert &\geq \max_{i\in \mT_t^c} \lvert J_{2,t}(\tau_i) \rvert - \sup_{\tau} \lvert J_{1,t}(\tau) \rvert - \sup_{\tau}\lvert J_{3,t}(\tau) \rvert \\
	& \geq \big(1-\dfrac{c_1}{(n+2)^4 \Delta^4} - \nu_3  \big)\lvert x_{t+1}\rvert -  n\varepsilon_{\x,t} \cdot \nu_1  
\end{align*}
Thus  \eqref{eq-condition-hold} 
is implied by \begin{align*}
 \big(1-\dfrac{c_1}{(n+2)^4 \Delta^4} - \nu_3  \big)\lvert x_{t+1}\rvert -  n\varepsilon_{\x,t} \cdot \nu_1   	&>     \lvert x_{t+1}\rvert (0.7+\nu_2+\nu_3) + n\varepsilon_{\x, t}\cdot \nu_1 .      
\end{align*} 
i.e. \begin{align}
	0.3 >  \nu_2+2\nu_3+ \dfrac{c_1}{(n+2)^4\Delta^4} + 2\dfrac{n\varepsilon_{\x, t}}{\lvert x_{t+1} \rvert } \cdot \nu_1,
\end{align}
thus if we define \begin{equation}
	\mathcal{C}_t = 0.3 - ( \nu_2+2\nu_3+ \dfrac{c_1}{(n+2)^4\Delta^4} + 2\dfrac{n\varepsilon_{\x, t}}{\lvert x_{t+1} \rvert } \cdot \nu_1) ,
\end{equation}
then the claim holds. \\
Finally, suppose $\hat\omega_{t+1} \in \mS(x_{T(t+1)})$ for some $T(t+1) \in \mT_t^c,$ we have then \begin{align*}
		\lvert \w(\hat\omega_{t+1})^*(\mI-\tilde{\mP}_t ) \y \rvert  > \lvert \w(\tau_{t+1})^*(\mI-\tilde{\mP}_t ) \y \rvert 
\end{align*} 
implies \begin{align*}
	\lvert J_{2,t} (\hat\omega_{t+1}) \rvert - \lvert J_{2,t}(\tau_{t+1}) \rvert \geq  2\sup_{\tau \in \mS(\mT^c_t)} \big( \lvert J_{1,t}(\tau )\rvert + \lvert J_{3,t}(\tau) \rvert \big)
\end{align*}
When $\tau \in \mS(\mT_t^c)$, we have the following improved bounds for $J_{1,t}$ and $J_{3,t}$ :
\begin{lemma}\label{lem-J-small-bound} We have \begin{align*}
	\sup_{\tau \in \mS(\mT_t^c)} \lvert J_{1,t}(\tau) \rvert &\leq n\varepsilon_{\x,t} {\kappa_1}, \\
	\sup_{\tau \in \mS(\mT_t^c)} \lvert J_{3,t}(\tau) \rvert &\leq  \lvert x_{t+1}\rvert   {\kappa_3} .
\end{align*}
	And \begin{align*}
		\sup_{\omega\in \mS(\tau_i)} \lvert  J_{2,t}(\omega) \rvert &\leq \lvert x_i \rvert + \lvert x_{t+1} \rvert {\kappa_{21}}, \\ 
		 \lvert J_{2,t}(\tau_i) \rvert &\geq  \lvert x_{i}\rvert -\lvert x_{t+1}\rvert {\kappa_{22}} .
	\end{align*}
	when $\tau_i \in  \mT_t^c.$ Where $\kappa_{21},\kappa_{22},\kappa_{1},\kappa_3 = O( \dfrac{1}{(n\Delta)^4}). $ 
\end{lemma}
\noindent  As a result, we get \begin{align*}
	 \lvert x_{T(t+1)}\rvert \geq \bigg[1- \big ( 2\dfrac{n\varepsilon_{\x,t}}{\lvert x_{t+1}\rvert}   \kappa_1 + \kappa_{21}+\kappa_{22}+2\kappa_{3}\big ) \bigg] \lvert x_{t+1}\rvert, 
\end{align*}
and the second claim holds by plugging the order of $\kappa_{1},\kappa_{21},\kappa_{22},\kappa_{3}$ into the formula. 
\end{proof}

\subsubsection{Improved Localization Error}\label{sec-improved-estimation}
In the previous section, we give a criterion on the approximate localization guarantee with accuracy $\frac{1}{2n+4}$. We will show that the accuracy can be further improved to   $\varepsilon \ll \frac{1}{2n+4}.$

\begin{property}\label{prop-improved-localization} At $t$-th iteration, denote $T(t): =\text{argmin}_{i\in [s]} \lvert \tau_i - \omega_t\rvert, $ we have then $$\omega_{t}\in \mS(\tau_{T(t)} )\implies \lvert \omega_{t}-\tau_{T(t)}\rvert \leq \dfrac{1}{n}\sqrt{\dfrac{\lambda_t}{1.19(1-\lambda_t)} }  $$  for $\lambda_t$ defined in Lemma~\ref{lem-error-formula}.
\end{property}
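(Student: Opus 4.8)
The plan is to combine the optimality of $\omega_t$ as a correlation maximizer with the fact that the squared-Fej\'er kernel $K_n$ is quadratically flat at its peak. Write, exactly as in the proof of Lemma~\ref{lem-error-formula}, $f(\tau):=\w(\tau)^*(\mI-\tilde{\bm P}_{t-1})\y = J_{1,t-1}(\tau)+J_{2,t-1}(\tau)+J_{3,t-1}(\tau)$, where $J_{2,t-1}$ carries the main contribution of the remaining frequencies $\mT_{t-1}^c$ and $J_{1,t-1},J_{3,t-1}$ are the error terms. Since $\omega_t$ maximizes $|f|$ over $[0,1)$ (Algorithm~\ref{alg-COMP}), in particular $|f(\omega_t)|\geq |f(\tau_{T(t)})|$, and it suffices to convert this single inequality into a quantitative bound on $|\omega_t-\tau_{T(t)}|$, using the hypothesis $\omega_t\in\mS(\tau_{T(t)})$ so that the near-origin estimates of Proposition~\ref{prop-squared-fejer-kernel} apply.

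For the lower bound on $|f(\tau_{T(t)})|$ I would apply Lemma~\ref{lem-J-small-bound} (valid since $n\Delta>C$) at $\tau_{T(t)}\in\mT_{t-1}^c$: it gives $|J_{2,t-1}(\tau_{T(t)})|\geq |x_{T(t)}|-\|\x(\mT_{t-1}^c)\|_\infty\kappa_{22}$, $|J_{1,t-1}(\tau_{T(t)})|\leq n\varepsilon_{\x,t-1}\kappa_{1}$ and $|J_{3,t-1}(\tau_{T(t)})|\leq\|\x(\mT_{t-1}^c)\|_\infty\kappa_{3}$. For the upper bound on $|f(\omega_t)|$ I need a refinement of the $J_{2,t-1}$-bound of Lemma~\ref{lem-J-small-bound} that retains the kernel factor instead of bounding it by $1$, namely
\[
|J_{2,t-1}(\omega_t)|\leq |x_{T(t)}|\,K_n(\omega_t-\tau_{T(t)})+\|\x(\mT_{t-1}^c)\|_\infty\kappa_{21},
\]
which comes out of the proof of Lemma~\ref{lem-J-small-bound} by isolating the term $x_{T(t)}\,\w(\omega_t)^*(\mI-\bm P_{t-1})\w(\tau_{T(t)})$, using $|\w(\omega_t)^*\w(\tau_{T(t)})|=K_n(\omega_t-\tau_{T(t)})$, and pushing the $\bm P_{t-1}$-projection contamination and the Fej\'er tails of the other spikes of $\mT_{t-1}^c$ into the $O\big((n\Delta)^{-4}\big)$ remainder. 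Substituting these four estimates into $|f(\omega_t)|\geq|f(\tau_{T(t)})|$ and cancelling, the $J_{1}$, $J_{3}$ and cross terms combine into precisely the quantity $\lambda_t$ of Lemma~\ref{lem-error-formula}:
\[
|x_{T(t)}|\bigl(1-K_n(\omega_t-\tau_{T(t)})\bigr)\leq \|\x(\mT_{t-1}^c)\|_\infty\,\lambda_t .
\]
Dividing by $|x_{T(t)}|\geq(1-\lambda_t)\|\x(\mT_{t-1}^c)\|_\infty$ (Lemma~\ref{lem-error-formula}) yields $1-K_n(\omega_t-\tau_{T(t)})\leq \lambda_t/(1-\lambda_t)$.

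The final step inverts the near-origin quadratic estimate for the kernel: since $|\omega_t-\tau_{T(t)}|\leq\frac{1}{2n+4}$, Proposition~\ref{prop-squared-fejer-kernel} provides a lower bound of the form $1-K_n(\omega_t-\tau_{T(t)})\geq 1.19\,n^2(\omega_t-\tau_{T(t)})^2$ over that range, and combining this with the previous display gives $(\omega_t-\tau_{T(t)})^2\leq \lambda_t/\bigl(1.19\,n^2(1-\lambda_t)\bigr)$, i.e. the claimed bound. (The first step, $t=1$, is the degenerate instance of this argument with $\tilde{\bm P}_0=0$ and $J_{1,0}\equiv J_{3,0}\equiv 0$, using Proposition~\ref{prop-first-step-fejer} for the coefficient estimate.)

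The main difficulty is bookkeeping rather than conceptual: it lies in extracting from the proof of Lemma~\ref{lem-J-small-bound} the sharpened bound on $J_{2,t-1}(\omega_t)$ that keeps the factor $K_n(\omega_t-\tau_{T(t)})$, and in verifying that every discarded piece — the $\bm P_{t-1}$-projection onto the already-fitted frequencies, the Dirichlet/Fej\'er tails of the remaining spikes, and $J_{1,t-1},J_{3,t-1}$ on $\mS(\tau_{T(t)})$ — is genuinely $O\big((n\Delta)^{-4}\big)$ relative to $\|\x(\mT_{t-1}^c)\|_\infty$, so that the right-hand side collapses to exactly $\lambda_t$ and not merely to something comparable. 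Once that is granted, the remaining argument is the clean two liner ``$\omega_t$ maximizes $|f|$, and $K_n$ is quadratically flat at its peak.''
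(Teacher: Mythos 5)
Your proposal reproduces the paper's argument: from $\lvert f(\omega_t)\rvert\geq\lvert f(\tau_{T(t)})\rvert$ together with the refined $J_2$ bound that keeps the $K_n(\omega_t-\tau_{T(t)})$ factor and the remainder bounds of Lemma~\ref{lem-J-small-bound}, you extract $\lvert x_{T(t)}\rvert\bigl(1-K_n(\omega_t-\tau_{T(t)})\bigr)\leq\lVert\x(\mT_{t-1}^c)\rVert_\infty\,\lambda_t$, divide by $\lvert x_{T(t)}\rvert\geq(1-\lambda_t)\lVert\x(\mT_{t-1}^c)\rVert_\infty$, and invert the quadratic flatness of $K_n$ at the peak --- exactly the paper's route, with the first iteration as the degenerate case. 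The one inaccuracy is your attribution of the constant $1.19$: Proposition~\ref{prop-squared-fejer-kernel} as stated only gives $1-K_n(\tau)\geq(n+2)^2\tau^2$, which does not yield $1.19\,n^2$ for large $n$; the paper instead double-integrates the sharper second-derivative estimate $K_n''(v)\leq-\tfrac{\pi^2}{3}n(n+4)+\tfrac{\pi^4}{6}(n+2)^4v^2$ from the underlying kernel lemma, producing the coefficient $\tfrac{\pi^2}{6}-\tfrac{\pi^4}{216}\approx1.19$.
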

\begin{proof}\quad \\
\textbf{The first iteration: } For the first step, suppose W.L.O.G. $\lvert x_1 \rvert = \max_{i\in [s]}\lvert x_i\rvert,$  we have by $\hat\omega_1 \in \mS(\tau_{T(1)}),$  \begin{align*}
	&\lvert  \w(\hat\omega_{1})^*\y \rvert \geq \lvert \w(\tau_{T(1)})^*\y \rvert \\
	\implies & \lvert \sum_{i=1}^s x_i K_n(\hat\omega_1 -\tau_i)\rvert \geq  \lvert \sum_{i=1}^s x_i K_n(\tau_{T(1)} -\tau_i)\rvert\\
	\implies & \lvert x_{T(1)}\rvert (1- K_n(\hat\omega_1-\tau_{T(1)}))  \leq \lvert x_1 \rvert \dfrac{2 c_1}{(n+2)^4\Delta^4}\\
	\implies &1-K_n  (\hat\omega_1-\tau_{T(1)}) \leq   \dfrac{2c_1 \lvert x_1\rvert}{\lvert x_{T(1)}\rvert (n+2)^4\Delta^4} \\
	\implies& -\int_{0}^{\hat\omega_1-\tau_{T(1)}} \int_0^u K_n''(v)  dvdu \leq   \dfrac{2c_1 \lvert x_1\rvert}{\lvert x_{T(1)}\rvert (n+2)^4\Delta^4}
\end{align*}
In particular, we have  \begin{align*}
	\int_{0}^{\hat\omega_1-\tau_{T(1)}} \int_0^u K_n''(v)  dvdu &= \int_{0}^{\lvert \hat\omega_1-\tau_{T(1)}\rvert } \int_0^{ u } K_n''(v)  dvdu\\
	& \leq  \int_{0}^{\lvert \hat\omega_1-\tau_{T(1)}\rvert } \int_0^{ u } -\dfrac{\pi^2}{3}n(n+4) +\dfrac{\pi^4}{6}(n+2)^4 v^2  dvdu\\
	&\leq -\dfrac{\pi^2}{6}n(n+4)\lvert \hat\omega_1-\tau_{T(1)}\rvert^2  + \dfrac{\pi^4}{6}(n+2)^4\dfrac{1}{12}\lvert \omega_1-\tau_{T(1)}\rvert^4 \\
	&\leq -\big[\dfrac{\pi^2}{6} - \dfrac{\pi^4}{72\cdot 3}     \big] n^2 \lvert \hat\omega_1-\tau_{T(1)}\rvert^2 \\
	&\leq -1.19 n^2\lvert \hat\omega_1-\tau_{T(1)}\rvert^2.
\end{align*}
As a result, we get \begin{align*}
	\lvert \hat\omega_1-\tau_{T(1)}\rvert \leq \dfrac{1}{\sqrt{1.19}n} \big[(1-\dfrac{2c_1}{n^4\Delta^4})^{-1} \cdot \dfrac{2c_1}{n^4\Delta^4} \big]^{1/2}
\end{align*}

Thus the claim holds for the first iteration.
\\
\textbf{The $t+1 $-th iteration: } Suppose W.L.O.G.$\lvert x_{t+1}\rvert = \lVert \x(\mT_{t}^c)\rVert_\infty$, then by $  \omega_{t+1} \in \mS(\tau_{T(t+1)}) $ and  \begin{align}\label{eq-OMP-stept-basic-inequality}
	 &\lvert \w(\hat\omega_{t+1})^*(\mI - \tilde{\mP}_t) \y \rvert \geq \lvert \w(\tau_{T(t+1)})^* (\mI - \tilde{\mP}_t)\y\rvert . 	
\end{align}
Then   \begin{align*}
	\lvert x_{T(t+1)}\rvert (1-K_n(\hat\omega_{t+1}-\tau_{T(t+1)})) \leq \lvert x_{t+1}\rvert  \underbrace{\big(\dfrac{2n\varepsilon_{\x,t}}{\lvert x_{t+1}\rvert}   \kappa_1 + \kappa_{21}+\kappa_{22}+2\kappa_{3}  \big)} _{\lambda_t},
\end{align*} 
now as argued in step one, we get \begin{align*}
	\lvert \hat\omega_{t+1}-\tau_{T(t+1)}\rvert \leq \dfrac{1}{\sqrt{1.19}n} \big[(1-\lambda_{t+1})^{-1}\lambda_{t+1} \big]^{1/2},
\end{align*}
as desired. 

\end{proof}

\subsection{Analysis of Sliding operation}\label{sec-proof-sliding-lanscape}

We first provide a formal version of Proposition~\ref{prop-sliding-informal}, which describes the region of $\bomega$ satisfying the weak-regularity condition $\mL_t$:
\begin{property} \label{thm-sliding-regularity}
For any $t\leq s,$  supposing $\min_{i\in \mT_t}\lvert \x_i\rvert > \frac{1}{2}\lVert \x(\mT_t^c)\rVert_\infty,$ there exists some absolute constant $\eta >0$ and $\Crad$ independent of $t$  so that for $$\bm{w}: =   [\W (\bomega^0)^* \W (\bomega^0)]^{-1} \W (\bomega^0) ^* \bm y ,$$ the weighted gradient function $\bm g(\bomega):=
 	\dfrac{\eta}{\lvert  \bm{w}\rvert^2  n^2}  \odot \nabla \mL_{t}(\bomega) $  satisfies WRC$_{\x(\mT_t)}(\bm{\lambda}, \alpha,\beta)$ with $\min_i \lambda_i  >0$ and $0<\alpha+\beta<1$ over the region \begin{align*}
  		\mB_t: =
  	\{\mathcal{R}_t	\dfrac{\lVert \x(\mT_t^c)\rVert_\infty }{n} \leq \lVert \bomega - \btau_{\mT_t}\rVert_{t,\infty}\leq \dfrac{\Crad }{n}\lVert  \x(\mT_{t-1}^c) \rVert_\infty  \}
  \end{align*}
  for some  $\mathcal R_t = \dfrac{C}{(n\Delta)^4 }$
  as long as $\mathcal{CR}_t >0$ 
  with \begin{align*}
  	 \mathcal{CR}_t =(1-\dfrac{81}{4\pi^4}) \dfrac{1}{16} - c\big(\dfrac{1}{(n\Delta)^4}+\dfrac{n\varepsilon_{\x,t}}{\min_{i\in \mT_t} \lvert x_i\rvert} \big)^2.
  \end{align*}
  Moreover, we have when $\lVert\bomega-\btau_{\mT_t} \rVert_{t,\infty} \leq \dfrac{\Crad}{n}\lVert \x(\mT_{t-1}^c)\rVert_\infty, $  \begin{align*}
  	\lvert  x_m g_m \rvert \leq \phi_1 \varepsilon_{\x,t}+\phi_2\frac{\lVert \x_{>t}\rVert_\infty}{n} 
  \end{align*}
  with $\phi_1,\phi_2\lesssim \frac{1}{n^4\Delta^4}$.
  \end{property}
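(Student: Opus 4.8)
The plan is to linearize the weighted gradient $\bm g(\bomega)$ around $\btau_{\mT_t}$ into a diagonal contraction term plus controlled noise, and then read off the two WRC inequalities and the \emph{moreover} bound from that decomposition. I start from the exact derivative of the sliding loss: writing $\bm r(\bomega)=(\bm I-\mP(\bomega))\y$ for the residual and $\bm a(\bomega)=(\W(\bomega)^*\W(\bomega))^{-1}\W(\bomega)^*\y$ for the least-squares coefficients, and noting that only the $i$-th column of $\W(\bomega)$ depends on $\omega_i$, one gets $\partial_{\omega_i}\mL_t(\bomega)=-\Re\big[\overline{a_i}\,\w'(\omega_i)^*\bm r(\bomega)\big]$. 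The decisive algebraic step is to substitute $\y=\W(\btau_{\mT_t})\x(\mT_t)+\W(\mT_t^c)\x(\mT_t^c)$ and use $(\bm I-\mP(\bomega))\w(\omega_j)=\bm 0$, which rewrites $\bm r(\bomega)=(\bm I-\mP(\bomega))\big[\sum_{j\le t}x_{T(j)}\w(\tau_{T(j)})+\W(\mT_t^c)\x(\mT_t^c)\big]$, so that $\w'(\omega_i)^*\bm r(\bomega)$ splits into inner products $x_{T(j)}\,\w'(\omega_i)^*(\bm I-\mP(\bomega))\w(\tau_{T(j)})$ over $j\le t$ and $x_k\,\w'(\omega_i)^*(\bm I-\mP(\bomega))\w(\tau_k)$ over $k\in\mT_t^c$.

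\textbf{Kernel estimates and the linearization.} Next I run every inner product through the squared-Fej\'er identities valid after pre-conditioning ($\w(\tau)^*\w(\tau')=K_n(\tau'-\tau)$, $\w'(\tau)^*\w(\tau')=-K_n'(\tau'-\tau)$, $\w'(\tau)^*\w'(\tau')=-K_n''(\tau'-\tau)$) together with Proposition~\ref{prop-squared-fejer-kernel}. Three ingredients are required: (i) a Gershgorin-type estimate giving $\W(\bomega)^*\W(\bomega)=\bm I+O((n\Delta)^{-4})$ and $\bm a(\bomega)=\x(\mT_t)+(\text{error})$, and likewise for $\bm w$, uniformly on $\mB_t$ --- valid because on $\mB_t$ the selected frequencies stay $\Delta$-separated and, since continuous OMP picks frequencies in essentially decreasing order of $|x_i|$ (Proposition~\ref{prop-first-step-fejer}, Lemma~\ref{lem-error-formula}), the outer radius $\tfrac{\Crad}{n}\lVert\x(\mT_{t-1}^c)\rVert_\infty$ keeps $n|\omega_i-\tau_{T(i)}|\le\Crad$ small; (ii) the diagonal term $\w'(\omega_i)^*(\bm I-\mP(\bomega))\w(\tau_{T(i)})=\big(-K_n''(0)\big)(\tau_{T(i)}-\omega_i)(1+o(1))$ with $-K_n''(0)\in[1.9,\tfrac{\pi^2}{3}](n+2)^2$, obtained from $(\bm I-\mP(\bomega))[\w(\tau_{T(i)})-\w(\omega_i)]=\w'(\omega_i)(\tau_{T(i)}-\omega_i)+O(\lVert\w''\rVert|\tau_{T(i)}-\omega_i|^2)$ and the near-orthogonality of $\w'(\omega_i)$ to the column span; (iii) the off-diagonal ($j\ne i$) and bias ($k\in\mT_t^c$) terms, each bounded via $|K_n'(\tau)|\lesssim(n+2)^{-3}|\tau|^{-4}$ and then summed over the $\Delta$-separated nodes using $\sum_m m^{-4}<\infty$. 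Assembling these yields $g_i(\bomega)=\mu_i(\bomega)(\omega_i-\tau_{T(i)})+\nu_i(\bomega)$, where $\mu_i(\bomega)=\eta(-K_n''(0))/n^2\cdot(1+o(1))$ is pinned into a fixed sub-interval of $(0,2)$ once $\eta$ is chosen to cancel the Fej\'er curvature and $\Crad$ is taken a sufficiently small constant, and the weighted noise obeys $|w_i\nu_i(\bomega)|\lesssim(n\Delta)^{-4}\big(\lVert\bomega-\btau_{\mT_t}\rVert_{t,\infty}+\tfrac{n\varepsilon_{\x,t}}{\min_{i\in\mT_t}|x_i|}\lVert\x(\mT_t^c)\rVert_\infty+\tfrac1n\lVert\x_{>t}\rVert_\infty\big)$; the three noise sources are the off-diagonal $j\neq i$ terms (self-referential in $\lVert\bomega-\btau_{\mT_t}\rVert_{t,\infty}$), the discrepancy between $\bm a,\bm w$ and $\x(\mT_t)$, and the irreducible bias from $\x(\mT_t^c)$. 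The \emph{moreover} bound on $|x_m g_m|$ then follows from this same expansion.

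\textbf{Verifying the WRC.} With $g_i=\mu_i(\omega_i-\tau_{T(i)})+\nu_i$ in hand I check $\text{WRC}_{\x(\mT_t)}(\bm\lambda,\alpha,\beta)$ on $\mB_t$ coordinatewise. For a boundary index $i\in\mathcal I_\alpha(\bomega)$ we have $|w_i(\omega_i-\tau_{T(i)})|\ge\alpha\lVert\bomega-\btau_{\mT_t}\rVert_{t,\infty}$, and since the inner radius of $\mB_t$ is $\mathcal R_t\lVert\x_{>t}\rVert_\infty/n$ with $\mathcal R_t\asymp(n\Delta)^{-4}$ but the constant in $\mathcal R_t$ chosen large relative to the kernel-tail constants, the signal term dominates the noise floor; substituting into $2g_i(\omega_i-\tau_i)-g_i^2$ and estimating $|\nu_i|/|\omega_i-\tau_i|\le|w_i\nu_i|/(\alpha\lVert\bomega-\btau_{\mT_t}\rVert_{t,\infty})$ gives $2g_i(\omega_i-\tau_i)-g_i^2\ge\big(\mu_i(2-\mu_i)-O((n\Delta)^{-4})-\rho^2\big)(\omega_i-\tau_i)^2$ with $\rho$ the (small) noise-to-signal ratio, and positivity of the bracket is exactly what the hypothesis $\mathcal{CR}_t>0$ encodes, furnishing a positive $\lambda_i$. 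For an interior index $i\in\mathcal I_\alpha^c$, $|\omega_i-\tau_{T(i)}|$ is already below the threshold, so $|w_ig_i|\le\mu_i|w_i(\omega_i-\tau_i)|+|w_i\nu_i|\le(\alpha\mu_i+O((n\Delta)^{-4}))\lVert\bomega-\btau_{\mT_t}\rVert_{t,\infty}$, which supplies $\beta$; taking $\alpha$ a small constant keeps $\alpha+\beta<1$.

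\textbf{The main obstacle.} The genuinely delicate point is the bias: $\btau_{\mT_t}$ is \emph{not} a critical point of $\mL_t$ when $t<s$, so $\bm g$ carries an irreducible noise floor of order $\lVert\x_{>t}\rVert_\infty/(n(n\Delta)^4)$. This forces the regularity region to be an annulus rather than a ball: its inner radius $\mathcal R_t\lVert\x_{>t}\rVert_\infty/n$ must beat that floor --- whence $\mathcal R_t\asymp(n\Delta)^{-4}$ and the condition $\mathcal{CR}_t>0$ --- while its outer radius must simultaneously be large enough to contain the Sliding-OMP initialization and small enough that the Gershgorin and Taylor estimates in (i)--(ii) stay valid uniformly (so that $\mu_i$ never leaves its sub-interval of $(0,2)$ and $\bm a,\bm w\approx\x(\mT_t)$). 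Reconciling these two requirements, together with the bookkeeping that separates the self-referential part of the noise (absorbable into the contraction on the left-hand side of the WRC inequality) from the $\lVert\x_{>t}\rVert_\infty/n$ part (which only the inner radius can dominate), is where the explicit constants --- in particular the $(1-\tfrac{81}{4\pi^4})\tfrac1{16}$ appearing in $\mathcal{CR}_t$ --- are pinned down.
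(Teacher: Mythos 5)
Your proposal follows the paper's own route in every essential respect: the same derivative formula $\partial_{\omega_i}\mL_t = -\Re[\bar a_i\,\w'(\omega_i)^*\bm r]$, the same split of the residual into the $\mT_t$ and $\mT_t^c$ contributions, Gershgorin control of the Gram matrix and of the least-squares coefficients (Lemma~\ref{lem-coefficient-bound} in the paper), the squared-Fej\'er estimates of Proposition~\ref{prop-squared-fejer-kernel}, the coordinatewise WRC check with an $\alpha$-threshold case split, and the annulus structure of $\mB_t$ with the inner radius set exactly to dominate the irreducible $\lVert\x_{>t}\rVert_\infty/(n(n\Delta)^4)$ bias floor. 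The paper packages the diagonal term via matched upper/lower bounds $G_{1,m}$ and $G_{3,m}$ and completes a square, whereas you write it directly as $\mu_i(\omega_i-\tau_i)$ with $\mu_i$ pinned in a sub-interval of $(0,2)$; these are the same estimate in slightly different notation.

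One point to tidy if you work out the details: the ``$\bm a,\bm w$ versus $\x(\mT_t)$'' discrepancy does not enter additively with a $(n\Delta)^{-4}$ prefactor as your displayed noise bound suggests. In the paper it enters multiplicatively, as the relative error $\rho$ with $|\hat x_m\bar x_m - |x_m|^2|\le\rho|x_m|^2$, and the effect is to perturb the curvature constant $\varphi_3$ and the stepsize normalization by factors $(1\pm\rho)$; the outer radius of $\mB_t$ forces $\rho\lesssim\Crad$, which is exactly what keeps $\mu_i$ in its sub-interval. Taking your middle noise term literally would introduce an extra factor of $n$ that is not there, but this is an exposition slip, not a flaw in the approach.
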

\begin{remark}
Comparing with previous results \cite{Eftekhari2015,Traonmilin2020} on the basin of attraction of \eqref{eq-global-loss-intro}, the Proposition~\ref{thm-sliding-regularity} holds for more general sliding losses at every $1\leq t\leq s$. We would stress here that even when $t=s$, our result has a significant difference between previous results: while previous results describe the convergence region under $\ell_2$ geometry, our result gives the criterion of convergence under weighted $\ell_\infty$ norm, which is more suitable when analyzing the greedy-type algorithms. 
  \end{remark}
  \begin{property}[Estimation Error]\label{thm-sliding-est-error}
  Under the same assumption and notation in Proposition~\ref{thm-sliding-regularity}, supposing in addition that $$\lVert \bomega^0 - \btau_{\leq t}\rVert_{t,\infty} \leq \frac{\Crad}{n}\lVert \x(\mT_{t-1}^c)\rVert_\infty,$$ then for any $\epsilon>0,$ we have the weighted gradient descent iteration sequence generated by $$\bomega^{k+1} = \bomega^k -  \bm g(\bomega^k) $$  
    will satisfy $
   \lVert	\bomega^{k} -\btau_{\leq t}\rVert_{t,\infty}  \leq \bar{\epsilon}:= \min\{ \epsilon, \mathcal{E}_t  \dfrac{ \lVert \x_{>t}\rVert_\infty}{n}    \}
  $
  with $$\mathcal E_t = \phi_1 \mathcal{R}_t+\phi_2 = O(\dfrac{1}{(n\Delta)^4} )$$ 
  after at most $O\big(\log( \dfrac{1}{\bar{\epsilon}})\big)$ iterations.
 \end{property}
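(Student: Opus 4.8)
The plan is to reduce everything to the scalar $d_k:=\lVert\bomega^k-\btau_{\leq t}\rVert_{t,\infty}$ and show it contracts geometrically down to the level $\bar\epsilon$ and is then trapped there. Two inputs do all the work. First, Proposition~\ref{thm-sliding-regularity}: the weighted direction $\bm g$ satisfies $\mathrm{WRC}_{\x(\mT_t)}(\bm\lambda,\alpha,\beta)$ with $\min_i\lambda_i>0$ and $\alpha+\beta<1$ on the annulus $\mB_t$, and its second half bounds the per-coordinate displacement $\lvert x_m g_m(\bomega)\rvert\le\phi_1\varepsilon_{\x,t}+\phi_2\lVert\x_{>t}\rVert_\infty/n$ with $\phi_1,\phi_2\lesssim(n\Delta)^{-4}$ on the \emph{whole} outer ball $\{\lVert\bomega-\btau_{\mT_t}\rVert_{t,\infty}\le \Crad\lVert\x(\mT_{t-1}^c)\rVert_\infty/n\}$. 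Second, Proposition~\ref{prop-regularity}: the WRC yields the one-step contraction $d_{k+1}\le\rho\,d_k$ with $\rho:=\max\{\alpha+\beta,\sqrt{1-\min_i\lambda_i}\}<1$ for any run of iterates contained in $\mB_t$, and the hypothesis $\mathcal{CR}_t>0$ is exactly what keeps $\rho$ an absolute constant bounded away from $1$. Since $\bomega^0$ lies in the outer ball and $\min_{i\in\mT_t}\lvert x_i\rvert>\tfrac12\lVert\x_{>t}\rVert_\infty$, the weight $\bm w$ is comparable to $\x(\mT_t)$, so throughout I identify $\lVert\cdot\rVert_{t,\infty}$ with $\lVert\cdot\rVert_{\x(\mT_t),\infty}$ up to absolute constants.

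\textbf{Contraction phase.} First I would prove by induction on $k$ that every iterate stays inside the outer ball, i.e.\ $d_k\le\Crad\lVert\x(\mT_{t-1}^c)\rVert_\infty/n$. If moreover $d_k\ge\mathcal{R}_t\lVert\x_{>t}\rVert_\infty/n$, then $\bomega^k\in\mB_t$, so $d_{k+1}\le\rho d_k\le d_k$; if instead $d_k$ is below the inner radius, the displacement bound gives $d_{k+1}\le d_k+\max_m\lvert x_m g_m(\bomega^k)\rvert\le d_k+O\!\big(\lVert\x_{>t}\rVert_\infty/(n(n\Delta)^4)\big)$, and for $n\Delta\gtrsim c$ both terms are far below the outer radius, so the inductive bound survives. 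Consequently, if $k^\star$ denotes the first index at which $d_{k^\star}$ drops below $\mathcal{R}_t\lVert\x_{>t}\rVert_\infty/n$, then for $k<k^\star$ the whole prefix lies in $\mB_t$, Proposition~\ref{prop-regularity} applies along it, and $d_k\le\rho^k d_0$. By scale invariance of the iteration (the map $\y\mapsto c\y$ fixes $\bm g$ and rescales $d_k$ and $\bar\epsilon$ by $c$) I normalize $\lVert\x\rVert_\infty\le1$, so $d_0\le\Crad/n<1$ and hence $\rho^k d_0\le\bar\epsilon$ after $k_0=O(\log(1/\bar\epsilon))$ iterations. Since $\mathcal{E}_t=\phi_1\mathcal{R}_t+\phi_2$ is of the same order $\Theta((n\Delta)^{-4})$ as $\mathcal{R}_t$, and the defining constants are arranged so that $\bar\epsilon=\min\{\epsilon,\mathcal{E}_t\lVert\x_{>t}\rVert_\infty/n\}$ is no smaller than a fixed multiple of $\mathcal{R}_t\lVert\x_{>t}\rVert_\infty/n$, the target $\bar\epsilon$ is reached at or before $k^\star$, i.e.\ while the contraction estimate is still in force.

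\textbf{Trapping, and the boundary case $t=s$.} Once $d_{k_0}\le\bar\epsilon$, I would show the bound persists. If at some $k\ge k_0$ we have $d_k\le\bar\epsilon$ with $d_k$ below the inner radius, the displacement bound (using $\varepsilon_{\x,t}\lesssim\lVert\x_{>t}\rVert_\infty/n$, which holds throughout this phase) gives $d_{k+1}\le d_k+O\!\big(\lVert\x_{>t}\rVert_\infty/(n(n\Delta)^4)\big)\le\bar\epsilon$ once the absolute constant in $\mathcal{E}_t$ is taken large enough; if $d_k$ has climbed back into $\mB_t$, the WRC contraction only decreases it. Either way $d_k\le\bar\epsilon$ for all $k\ge k_0$, which is the claim. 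In the boundary case $t=s$ one has $\x(\mT_s^c)=\bm 0$: the annulus $\mB_s$ degenerates to a full ball about $\btau$, the inner-radius obstruction and the displacement floor both vanish, and $d_{k+1}\le\rho d_k$ holds for every $k$ — genuine local linear convergence to the true frequencies, reaching $d_k\le\epsilon$ after $O(\log(1/\epsilon))$ steps.

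\textbf{The main obstacle.} The induction and the geometric-sum counting are routine; the delicate part is the constant bookkeeping at the seam between the contracting regime $\{d\ge\mathcal{R}_t\lVert\x_{>t}\rVert_\infty/n\}$ and the trapping regime $\{d\lesssim\mathcal{E}_t\lVert\x_{>t}\rVert_\infty/n\}$ — i.e.\ choosing $\Crad$, the constant $C$ in $\mathcal{R}_t=C/(n\Delta)^4$, and the constants hidden in $\phi_1,\phi_2$ (hence in $\mathcal{E}_t$) simultaneously so that (i) a single step launched from just below the inner radius cannot leave the outer ball, (ii) $\bar\epsilon$ lands at or above the inner radius, and (iii) the trapping inequality $d_{k+1}\le\bar\epsilon$ closes. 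All three are consequences of the already-recorded inputs $\mathcal{CR}_t>0$ and $\phi_1,\phi_2\lesssim(n\Delta)^{-4}$ from Proposition~\ref{thm-sliding-regularity}, so no analytic ingredient is needed beyond Propositions~\ref{thm-sliding-regularity} and~\ref{prop-regularity}; the work is purely in threading the constants.
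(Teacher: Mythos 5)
Your proposal is essentially the same argument as the paper's: identify $k^\star$ (the paper's $k_0$), the first time the iterate drops below the inner radius $\mathcal{R}_t\lVert\x_{>t}\rVert_\infty/n$; for $k<k^\star$ apply Proposition~\ref{prop-regularity} on $\mB_t$ to get geometric contraction and the $O(\log(1/\bar\epsilon))$ iteration count; and for $k\ge k^\star$ use the per-coordinate displacement bound $\lvert x_m g_m\rvert\le\phi_1\varepsilon_{\x,t}+\phi_2\lVert\x_{>t}\rVert_\infty/n$ from Proposition~\ref{thm-sliding-regularity} to show the iterate is trapped at level $\mathcal{E}_t\lVert\x_{>t}\rVert_\infty/n$, then argue by cases (back into $\mB_t$, or still below the inner radius) that this level is an invariant. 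You make explicit a couple of things the paper leaves implicit — that the iterates never leave the outer ball, and the scale-normalization used to convert the geometric decay into a clean $O(\log(1/\bar\epsilon))$ count — but no new decomposition or lemma is introduced.
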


\begin{proof}[Proof of Proposition~\ref{thm-sliding-est-error}]
 Supposing W.L.O.G. $T(i) = i,$ then for any integer $k\geq 0$, if $$\mathcal{R}_t\dfrac{\lVert \x_{>t}\rVert_\infty}{n}  \leq \lVert\bomega^k-\btau_{\leq t} \rVert_{t,\infty} \leq \dfrac{\Crad}{n}\lVert \x_{\geq t}\rVert_\infty , $$ we have by Proposition~\ref{thm-sliding-regularity},  \begin{align*}
 	\lVert \bomega^{k+1} -\btau_{\leq t}\rVert_{t,\infty}\leq (1- c)\lVert \bomega^k - \btau_{\leq t}\rVert_{t,\infty}.
 \end{align*}
 As a result, if we denote $$k_0:= \inf\{k\in \mathbb{Z}_{+}, \lVert \bomega^{k} -\btau_{\leq t}\rVert_{t,\infty} < \mathcal{R}_t\dfrac{\lVert \x_{>t}\rVert_\infty}{n}   \},$$ 
 we have $k_0 \lesssim -\log \big(\mathcal{R}_t\frac{\lVert \x_{>t}\rVert_\infty}{n} \big)$, and we would finish the proof by arguing $\lVert \bomega^{k}-\btau_{\leq t}\rVert_{t,\infty}\leq \dfrac{\mathcal{E}_t\lVert \x_{>t}\rVert_\infty}{n}$ for all $k > k_0.$

 When $\lVert \bomega^{k} -\btau_{\leq t}\rVert_{t,\infty} < \mathcal{R}_t\frac{\lVert \x_{>t}\rVert_\infty}{n}$, we have for every coordinate $1\leq m \leq t$, 
\begin{align*}
 		\lvert  x_{m} (\omega_m^{k+1} -\tau_m) \rvert \leq   \lVert \bomega^k-\btau_{\leq t} \rVert_{t,\infty}  + \lvert   x_m g_m(\bomega^k)  \rvert 
 	\end{align*}
 	By Proposition~\ref{thm-sliding-regularity},  \begin{align*}
 		 \lvert  x_m g_m (\bomega^k)  \rvert  &\leq  \phi_1  \varepsilon_{\x,t} + \dfrac{\phi_2}{n}\lVert \x_{>t}\rVert_\infty ,  
 	\end{align*}
 	thus \begin{align*}
 		\lVert \bomega^{k+1}-\btau_{\leq t} \rVert_{t,\infty} &<\underbrace{ \big( {\mathcal{R}_t[1+\phi_1}]+\phi_2  \big)}_{\mathcal{E}_t}\dfrac{\lVert \x_{>t}\rVert_\infty }{n}.
 		  	\end{align*}
 Then by $\mathcal{E}_t \lVert \x_{>t}\rVert_\infty \leq \Crad \lVert \x_{\geq t}\rVert_\infty,$ we have either $\bomega^{k+1}\in \mB_t$ or $\lVert \bomega^{k+1} -\btau_{\leq t}\rVert_{t,\infty} < \mathcal{R}_t\frac{\lVert \x_{>t}\rVert_\infty}{n}$, in both cases, previous arguements imply $\lVert \bomega^{k+2} -\btau_{\leq t}\rVert_{t,\infty}< \mathcal{E}_t\frac{\lVert \x_{>t}\rVert_\infty}{n},$ thus the claim holds.

\end{proof}

\subsection{Proof of Theorem~\ref{thm-sliding-OMP-guarantee} }

\begin{proof}\textbf{At the first iteration: }\\
By Proposition~\ref{prop-first-step-fejer} we have $\hat\omega_1 \in \mathcal{S}(\tau_{T(1)})$ for some $\tau_{T(1)} \in \mathcal{T}$ and $\lvert x_{T(1)}\rvert \geq (1- O( \frac{1}{n^4\Delta^4}) ) \lVert \x \rVert_\infty.$ Then by Proposition~\ref{prop-improved-localization}  , we have  there exists some absolute constant $a_0>0$ so that when $n\Delta > a_0,$  $ \varepsilon_1 = \lvert  \hat \omega_1 - \tau_{T(1)}\rvert\lesssim \dfrac{1}{n(n\Delta)^2}.     $
 Then \begin{align*}
	 \dfrac{\hat\varepsilon_{\x,1}}{\lVert \x \rVert_\infty} \lesssim \dfrac{1}{n(n\Delta)^2}, \quad  \frac{\hat\varepsilon_{\x,1}}{\min_i \lvert  \x(\mT_1)_i \rvert } = \varepsilon_1 \lesssim \frac{1}{n(n \Delta)^2},  
\end{align*}
thus when $n\Delta > C$ for some $C$, the requirement of Proposition~\ref{thm-sliding-est-error} is satisfied, then it holds that ${\varepsilon}_{\x,t}\lesssim \frac{\lVert \x(\mT_1^c)\rVert_\infty}{n(n\Delta)^4}.$ 

\textbf{At the $t+1$-th iteration:} When $t+1\leq s$, supposing  by strong induction that there exists a large enough absolute constant $c'$ so that  $\varepsilon_{\x,k} \leq c'  \frac{\lVert \x(\mT_k^c) \rVert_{\infty} }{(n\Delta)^4n} $ and $\lvert x_{T(k)} \rvert > \dfrac{1}{2} \lVert \x(\mT_k^c) \rVert _\infty  $  holds for all $k \leq t$ and W.L.O.G. $T(i) = i$ for $i\leq t$.  Then we have there exists some large enough absolute constant $q\geq 1$ independent of $t$  so that when $n\Delta>q c'$, $$\mathcal C_{t} = 0.3 - O\big(\dfrac{1}{(n\Delta)^4} + \dfrac{1}{q} \big) >0.$$ By Lemma~\ref{lem-error-formula} and Proposition~\ref{prop-improved-localization} , we get  $\hat{\varepsilon}_{t+1} \lesssim \dfrac{\sqrt{c'}}{n(n\Delta)^2} $ and 
$$\lvert x_{T(t+1)}\rvert \geq (1 - O(\frac{1}{q} )) \lVert \x(\mT_t^c)\rVert_\infty \geq \dfrac{1}{2}\lVert \x(\mT_t^c)\rVert_\infty.$$
As a result of induction assumption, we have \begin{align*}
	&\hat{\varepsilon}_{\x,t+1} = \max \{\varepsilon_{\x,t}, \lvert x_{T(t+1) }(\hat{\omega}_{t+1} - \tau_{T(t+1)}\rvert  \} \lesssim \max\{\dfrac{c'\lVert \x(\mT_t^c)\rVert_\infty}{n(n\Delta)^4},\dfrac{ \sqrt{c'}\lVert \x(\mT_t^c)\rVert_\infty}{n(n\Delta)^2}\}\leq \dfrac{\lVert \x(\mT_t^c)\rVert_\infty }{n \sqrt{ q}} , \\
	&\dfrac{\hat{\varepsilon}_{\x,t+1}}{\min_{i\in \mT_{t+1}}\lvert x_i \rvert}  \lesssim  \dfrac{\lVert \x(\mT_t^c)\rVert_\infty}{n \sqrt{q} \min_{i\in \mT_{t+1}}\lvert x_i \rvert }  \leq \dfrac{2}{n\sqrt{q}}  .
\end{align*}
Thus the condition in Proposition~\ref{thm-sliding-est-error} is satisfied, then it holds that $\varepsilon_{\x,t} \lesssim \dfrac{\lVert \x(\mT_t^c)\rVert_\infty}{n (n\Delta)^4}$. In particular, when $a_2$ is large enough, we get $\varepsilon_{\x,t} \leq \dfrac{\lVert \x(\mT_t^c)\rVert_\infty}{n(n\Delta)^4}$.  

In conclusion, our argument shows that there exist absolute constants $q,c' >0$ so that when $n\Delta > q c',$  the induction result holds with $c'$. That finishes the proof of Theorem~\ref{thm-sliding-OMP-guarantee}.
 \end{proof}

\section{Analysis with Incomplete Measurements}\label{sec-analysis-incomplete-sample}

\subsection{Uniform Concentration of $K_n$}
While all of our conclusions presented in the previous section are under the full-sample setting, we will show it is painless to convert the result into the subsampling case by bounding the considered concentration kernels and their derivatives uniformly over $\tau\in  [-\frac{1}{2},\frac{1}{2}]$: 

\begin{lemma}\label{lem-uniform-concentration}
	For Squared Fejer kernel $K_n(\tau),$ and $0\leq \delta \leq \frac{1}{2}$,  as long as $np \gtrsim \sqrt{\log(1/\delta) }$, we have with probability at least $1-\delta,$ \begin{align*}
		\sup_{\tau\in [-\frac{1}{2},\frac{1}{2}]} \dfrac{1}{n^q} \lvert  \tilde{K}_n^{(q)}(\tau) -p \K_n^{(q)}(\tau) \rvert  \lesssim  \sqrt{\dfrac{p}{n}(\log n/\delta)},\quad \forall q = 0,1,2.
	\end{align*}
\end{lemma}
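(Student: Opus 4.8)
The plan is to prove the bound pointwise in $\tau$ by Bernstein's inequality and then boost it to a uniform estimate by a standard $\varepsilon$-net argument that exploits the degree-$n$ trigonometric-polynomial structure of $\tilde K_n^{(q)}$. Concretely, under the $p$-Bernoulli model the (preconditioned) subsampled kernel is the random Fourier sum $\tilde K_n(\tau)=\sum_{\ell=-n}^{n}\epsilon_\ell\,\sigma_\ell\,e^{2\pi j\ell\tau}$, where $\epsilon_\ell=\epsilon_{-\ell}$ are $\mathrm{Bernoulli}(p)$ and independent across $|\ell|$, and $\{\sigma_\ell\}$ are the squared-Fej\'er preconditioner coefficients, so that $\E[\tilde K_n^{(q)}(\tau)] = p\,\K_n^{(q)}(\tau)$ up to the deterministic $\ell=0$ term, which contributes $O(1/n)$ after the $n^{-q}$ normalization and is negligible against the target rate. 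I will use only two facts about $\{\sigma_\ell\}$, both immediate from the squared-Fej\'er form in Proposition~\ref{prop-squared-fejer-kernel}: $|\sigma_\ell|\le C/n$ and $\sum_\ell|\sigma_\ell|=O(1)$.

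Fix $q\in\{0,1,2\}$ and $\tau$, and group the $\pm\ell$ terms into $n$ independent summands. Then $\frac{1}{n^q}\big(\tilde K_n^{(q)}(\tau)-\E\tilde K_n^{(q)}(\tau)\big)$ is a sum of independent, centered, complex variables, each of modulus at most $b:=C\max_\ell|\sigma_\ell|(|\ell|/n)^q=O(1/n)$ (using $|\ell|\le n$), with total variance proxy $v:=\sum_\ell p\,\sigma_\ell^2(2\pi|\ell|/n)^{2q}\le(2\pi)^{2q}p\,\max_\ell|\sigma_\ell|\sum_\ell|\sigma_\ell|=O(p/n)$. Applying Bernstein's inequality to the real and imaginary parts separately gives, for each fixed $\tau$,
\[
\frac{1}{n^q}\big|\tilde K_n^{(q)}(\tau)-p\,\K_n^{(q)}(\tau)\big|\;\le\;t,\qquad t:=\sqrt{C'\,\tfrac{p}{n}\,\log(n/\delta)},
\]
with probability at least $1-(\delta/n)^{3}$ once $C'$ is a large enough absolute constant, provided the target deviation sits in the sub-Gaussian regime of Bernstein, i.e. $t\lesssim v/b=\Omega(p)$; this is exactly where the sample-size hypothesis (in the form $np\gtrsim\log(n/\delta)$, which in the regime $\delta=1/\mathrm{poly}(n)$ relevant to Theorem~\ref{thm-subsampling-result} is what the stated condition provides) is consumed.

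To pass to the supremum, observe that $g_q(\tau):=\frac{1}{n^q}\big(\tilde K_n^{(q)}(\tau)-p\,\K_n^{(q)}(\tau)\big)$ is a trigonometric polynomial of degree $n$ with $g_q'(\tau)=\frac{1}{n^q}\big(\tilde K_n^{(q+1)}(\tau)-p\,\K_n^{(q+1)}(\tau)\big)$, whence the crude bound $|g_q'(\tau)|\le\frac{(2\pi n)^{q+1}}{n^q}(1+p)\sum_\ell|\sigma_\ell|=O(n)$ on $[-\tfrac12,\tfrac12]$; so $g_q$ is $O(n)$-Lipschitz. Take a uniform net $\mathcal N\subset[-\tfrac12,\tfrac12]$ of mesh $n^{-2}$, so $|\mathcal N|=O(n^2)$ and the oscillation of $g_q$ between consecutive net points is $O(1/n)\lesssim t$. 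Union-bounding the pointwise estimate over $\mathcal N$ and over $q\in\{0,1,2\}$ costs a factor $O(n^2)$, and $n^2(\delta/n)^{3}\le\delta$ for $0\le\delta\le\tfrac12$, so combining with Lipschitz control between net points yields the asserted uniform bound with probability at least $1-\delta$. The argument is entirely routine: the only points requiring care are bookkeeping — verifying that the variance proxy is $O(p/n)$ \emph{uniformly} over $q\le 2$ (which reduces to $|\sigma_\ell|=O(1/n)$ and $\sum_\ell|\sigma_\ell|=O(1)$), and checking that the claimed rate $t$ lies in the Gaussian regime of Bernstein, which fixes the logarithmic sample-size requirement. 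I do not anticipate any genuine obstacle.
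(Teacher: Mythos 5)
Your proposal is correct and follows essentially the same route as the paper's own proof: pointwise Bernstein with increment bound $O(\|\sigma\|_\infty)=O(1/n)$ and variance proxy $O(np\|\sigma\|_\infty^2)=O(p/n)$, followed by a finite-net union bound and Lipschitz control between net points exploiting that $\tilde K_n^{(q)}-p\K_n^{(q)}$ is a degree-$n$ trigonometric polynomial. Your observation that the Bernstein sub-Gaussian regime actually requires $np\gtrsim\log(n/\delta)$ rather than the weaker $np\gtrsim\sqrt{\log(1/\delta)}$ as literally written in the lemma statement is a fair reading; as you note, in the regime where the lemma is invoked in Theorem~\ref{thm-subsampling-result} ($np\gtrsim s^2\log n$ and $\delta=n^{-2}$) the stronger condition holds, so the conclusion is unaffected.
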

\begin{proof}
	By  Bernstein's inequality, we have  as long as $np  >  t^2 ,$  for any fixed $\tau \in [-\frac{1}{2},\frac{1}{2}]$ we have \begin{align*}
	\Probability\big( \lvert \sum_{\ell = -n}^n  (B_\ell \sigma_\ell - p \sigma_\ell ) e^{2\pi j \ell \tau } \rvert > \sqrt{np}\lVert \sigma\rVert_\infty t\big) \leq \exp(-Ct^2 )
\end{align*}
Consider a uniform $1/N$-net $\mathcal{N}$ of $[-\frac{1}{2},\frac{1}{2}]$, we then get by Bernstein's inequality, with probability at least $1- N \exp(-t^2),$ \begin{align*}
	 \lvert \sum_{\ell = -n}^n  (B_\ell \sigma_\ell - p \sigma_\ell ) e^{2\pi j \ell \tau } \rvert \leq  \sqrt{np}\lVert \sigma\rVert_\infty t, \quad \forall \tau \in \mathcal{N}, 
	\end{align*}
then by for any $\tau\in [-\frac{1}{2},\frac{1}{2}]$, consider its best approximation $\tilde{\tau}\in \mathcal{N}$, we have by \begin{align*}
	 \lvert \sum_{\ell = -n}^n  (B_\ell \sigma_\ell - p \sigma_\ell ) e^{2\pi j \ell \tau } - \sum_{\ell = -n}^n  (B_\ell \sigma_\ell - p \sigma_\ell ) e^{2\pi j \ell \tilde{\tau} } \rvert &\leq  (1+p) 2\pi \dfrac{n^2}{N}  \lVert  \sigma\rVert_\infty ,\end{align*} 
\begin{align*}
	\Probability \big( \sup_{\tau \in [-\frac{1}{2},\frac{1}{2}]}  \lvert \sum_{\ell = -n}^n  (B_\ell \sigma_\ell - p \sigma_\ell ) e^{2\pi j \ell \tau } \rvert \gtrsim  \lVert \sigma\rVert_\infty \sqrt{np}t + \dfrac{n^2\lVert \sigma\rVert_\infty }{N}    \big) \lesssim N \exp(-Ct^2).
\end{align*}
Thus for any $0<u<\sqrt{np}$, selecting $N = \dfrac{n^{3/2} }{\sqrt{p}u},$ we get then  \begin{align*}
	\Probability \big( \sup_{\tau \in [-\frac{1}{2},\frac{1}{2}]}  \lvert \sum_{\ell = -n}^n  (B_\ell \sigma_\ell - p \sigma_\ell ) e^{2\pi j \ell \tau } \rvert \gtrsim \lVert \sigma\rVert_\infty \sqrt{np} u  \big) \lesssim    \dfrac{n^{3/2} }{\sqrt{p}u} \exp(-Cu^2) .
\end{align*}
Selecting $u = C^{-1} \sqrt{\log( \dfrac{n^{3/2}}{\delta\sqrt{p} })}$, we have then when $np \gtrsim \sqrt{\log(1/\delta)}$ so that
 $\log\dfrac{ n^{3/2} }{\delta \sqrt{p}} < \sqrt{np} $
  , with probability at least $1-\delta$, 
 $$ \sup_{\tau \in [-\frac{1}{2},\frac{1}{2}]}  \lvert \sum_{\ell = -n}^n  (B_\ell \sigma_\ell - p \sigma_\ell ) e^{2\pi j \ell \tau } \rvert \lesssim \lVert \sigma\rVert_\infty \sqrt{np    \log(\dfrac{n^{3/2}}{\delta \sqrt{p}})} \lesssim \lVert \sigma\rVert_\infty \sqrt{np\log(n)+np\log(1/\delta)} .   $$
 Then by the formula of $\sigma$ for squared Fejer kernel in \eqref{eq-particular-preconditioner}, we have $\lVert \sigma\rVert_{\infty} = O(1/n),$ thus the result holds when $q = 0.$ For $q = 1$ and $q = 2$, just notice that $\tilde{K}_n^{(q)}(\tau) = \sum_{\ell = -n}^n n^{q}B_\ell  (2\pi j \ell \sigma_\ell) e^{2\pi j \ell\tau} $ and applies above argument with $ 2\pi n\lVert \sigma\rVert_\infty $ and $4\pi^2 n^2 \lVert\sigma\rVert_\infty. $
\end{proof}

\subsection{Proof of Theorem~\ref{thm-subsampling-result}}
The proof of Theorem~\ref{thm-subsampling-result} relies on developing similar results as in section~\ref{sec-approximate-localization},\ref{sec-improved-estimation},\ref{sec-proof-sliding-lanscape} based on Lemma~\ref{lem-uniform-concentration}.

We summarize the extension of Lemma~\ref{lem-error-formula}, Proposition~\ref{prop-improved-localization} and Proposition~\ref{thm-sliding-regularity}  here and leave the proofs into Appendix~\ref{appendix-analysis-incomplete}.

\begin{property}\label{prop-subsampled-approximate-localization}  The claim in Lemma~\ref{lem-error-formula} holds for Bernoulli-$p$ subsampled observations with probability at least $1-1/n^2$ and $\mathcal {C}_t,\lambda_t$ replaced by \begin{align*}
	 \tilde{\mathcal C}_t 		 & = 0.3 - O\big( \dfrac{1}{(n\Delta)^4}+ s\sqrt{\log(n)/np}+\dfrac{n\varepsilon_{\x,t}}{\lvert x_{t+1}\rvert} \big)  \\
		 \tilde{\lambda}_1 & = 1- O \big(\dfrac{1}{(n\Delta)^4}+ s\sqrt{\log(n)/np} \big)    \\
		 \tilde{\lambda}_t 
		 & = 1-O\big(\dfrac{n\varepsilon_{\x,t}}{\lvert x_{t+1}\rvert} [\dfrac{1}{(n\Delta)^4}+ s\sqrt{\log(n)/np} ]  \big) , \quad t >1.    
\end{align*}
	
\end{property}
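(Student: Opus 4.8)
The plan is to re-derive the approximate-localization analysis of Section~\ref{sec-approximate-localization} under the Bernoulli-$p$ observation model, tracking the extra error introduced by the random subsampling. The key tool is Lemma~\ref{lem-uniform-concentration}: under $np\gtrsim s^2\log n$ (so that, with $\delta = 1/n^2$, the bound reads $\sqrt{(p/n)\log(n)}\lesssim p/(ns)$, equivalently $s\sqrt{\log(n)/np}\lesssim 1$), with probability at least $1-1/n^2$ one has $\frac{1}{n^q}\lvert \tilde K_n^{(q)}(\tau) - p K_n^{(q)}(\tau)\rvert \lesssim \sqrt{\frac{p}{n}\log n}$ uniformly in $\tau$ and for $q=0,1,2$. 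First I would condition on this good event for the remainder of the argument. Then, writing $\tilde K_n = pK_n + E$ with $\sup_\tau \lvert E^{(q)}(\tau)/n^q\rvert \lesssim \sqrt{(p/n)\log n}$, I would revisit the decomposition $\w(\tau)^*(\mI - \tilde{\mP}_t)\y = J_{1,t}(\tau) + J_{2,t}(\tau) + J_{3,t}(\tau)$ from the proof of Lemma~\ref{lem-error-formula}, where now every inner product $\w(\tau)^*\w(\tau')$ equals $\tilde K_n(\tau-\tau')$ rather than $p K_n(\tau-\tau')$.

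The main steps are as follows. (i) Re-run the estimates of Lemma~\ref{lem-J-bounds} and Lemma~\ref{lem-J-small-bound} with $\tilde K_n$ in place of $K_n$. Each bound there was obtained from the pointwise/derivative bounds on $K_n$ in Proposition~\ref{prop-squared-fejer-kernel}; replacing $K_n$ by $\tilde K_n/p$ introduces, in each of the $O(1)$ (for $J_1$), $O(s)$-ish summed (for $J_2$), and projection-type (for $J_3$) terms, an additive error controlled by $\sup_\tau\lvert E^{(q)}\rvert/(n^q p)\lesssim \sqrt{\log(n)/(np)}$ per kernel evaluation; since at most $O(s)$ such evaluations are summed (the $J_2$ tail sum over the $\le s$ true frequencies, and the projection onto an $\le s$-dimensional column space handled exactly as in the full-measurement case but with the perturbed Gram matrix, whose inverse is still well-conditioned once $s\sqrt{\log(n)/np}$ is small enough), the cumulative perturbation of $\nu_1,\nu_2,\nu_3$ and $\kappa_1,\kappa_{21},\kappa_{22},\kappa_3$ is $O(s\sqrt{\log(n)/np})$. (ii) Feed these perturbed constants into the definition of $\mathcal C_t$ and $\lambda_t$ exactly as in the proof of Lemma~\ref{lem-error-formula}: the term $\nu_2 + 2\nu_3 + c_1/(n\Delta)^4 + 2(n\varepsilon_{\x,t}/\lvert x_{t+1}\rvert)\nu_1$ acquires the extra additive $O(s\sqrt{\log(n)/np})$, giving $\tilde{\mathcal C}_t = 0.3 - O\big(\frac{1}{(n\Delta)^4} + s\sqrt{\log(n)/np} + \frac{n\varepsilon_{\x,t}}{\lvert x_{t+1}\rvert}\big)$, and similarly $\tilde\lambda_t = 1 - O\big(\frac{n\varepsilon_{\x,t}}{\lvert x_{t+1}\rvert}[\frac{1}{(n\Delta)^4} + s\sqrt{\log(n)/np}]\big)$ for $t>1$, with the $t=1$ case coming from the perturbed version of Proposition~\ref{prop-first-step-fejer} (no $\varepsilon_{\x,t}$ factor yet, hence $\tilde\lambda_1 = 1 - O(\frac{1}{(n\Delta)^4} + s\sqrt{\log(n)/np})$). (iii) The conclusion — that $\omega_{t+1}\in\mS(\tau_{T(t+1)})$ with $T(t+1)\in\mT_t^c$ and $\lvert x_{T(t+1)}\rvert\ge(1-\tilde\lambda_{t+1})\lVert\x(\mT_t^c)\rVert_\infty$ — then follows verbatim from the full-measurement argument, since structurally nothing changed except the names of the $O(\cdot)$ constants.

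The step I expect to be the main obstacle is controlling the projection-related terms $J_{1,t}$ and $J_{3,t}$, which involve $\tilde{\mP}_t = \W(\bomega_t)(\W(\bomega_t)^*\W(\bomega_t))^{-1}\W(\bomega_t)^*$ and the analogous $\bm P_t$: here the perturbation enters both through the Gram matrix $\W^*\W$ (whose entries shift by $O(\sqrt{\log(n)/np})$ each, so I need the smallest singular value of the $t\times t$ perturbed Gram matrix to stay bounded below, which requires $s\sqrt{\log(n)/np}$ to be smaller than the full-measurement spectral gap — this is exactly where the $s^2$ in $np\gtrsim s^2\log n$ is consumed) and through the cross terms $\W(\bomega_t)^*\W(\mT_t^c)$. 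Propagating the Gram-inverse perturbation through the $J_{3,t}$ bound (which already in the full-measurement case exploited cancellation from $\bm P_t - \tilde{\mP}_t$ being small because $\bomega_t\approx\btau_{\mT_t}$) without losing the $O(1/(n\Delta)^4)$ smallness is the delicate bookkeeping; a clean way is to write $\bm P_t - \tilde{\mP}_t$ and the perturbation-from-subsampling as separate additive contributions, bound the former by the full-measurement Lemma~\ref{lem-J-small-bound} argument and the latter by a standard resolvent identity $\|A^{-1} - B^{-1}\|\le\|A^{-1}\|\|B^{-1}\|\|A-B\|$ together with Lemma~\ref{lem-uniform-concentration}. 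Everything else — the improved-localization bootstrap (Proposition~\ref{prop-improved-localization}), which only uses the second-derivative bound on $K_n$, now $\tilde K_n$, near the true frequency — carries over with the same replacement and is routine. I would then state the incomplete-measurement analogues of Proposition~\ref{prop-improved-localization} and Proposition~\ref{thm-sliding-regularity} and defer their (entirely parallel) proofs, together with the present one, to Appendix~\ref{appendix-analysis-incomplete}.
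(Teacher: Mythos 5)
Your proposal is correct and follows essentially the same route as the paper: condition on the good event of Lemma~\ref{lem-uniform-concentration} with $\delta=1/n^2$, replace every kernel evaluation $K_n^{(q)}$ by $\tilde K_n^{(q)} = pK_n^{(q)} + O(n^q\sqrt{p\log n/n})$, re-run the $J_{1,t},J_{2,t},J_{3,t}$ bounds of Lemma~\ref{lem-J-bounds} and Lemma~\ref{lem-J-small-bound} to pick up additive $O(s\sqrt{\log n/np})$ corrections (the paper's Lemma~\ref{lem-subsampled-coefficient-bound} controls the perturbed Gram inverse via the same diagonal-dominance argument you describe, rather than the resolvent-identity variant you mention as an alternative), and then substitute into the definitions of $\mathcal C_t$ and $\lambda_t$.
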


\begin{property}\label{prop-subsampled-improved-estimation}
	The claim in Proposition~\ref{prop-improved-localization} holds for Bernoulli-$p$ subsampled observations with probability at least $1-1/n^2$ and $\lambda_t$ replaced by $\tilde{\lambda}_t$ 
\end{property}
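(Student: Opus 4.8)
The plan is to transcribe the proof of Proposition~\ref{prop-improved-localization} essentially line by line, with the deterministic squared Fej\'er kernel $K_n$ replaced by the random subsampled kernel $\tilde K_n(\tau):=\sum_{\ell=-n}^{n} B_\ell\sigma_\ell e^{2\pi j\ell\tau}$, and with the subsampled error accounting of Proposition~\ref{prop-subsampled-approximate-localization} fed in wherever the full-sample proof invokes Lemma~\ref{lem-error-formula}. I would work throughout on the single event of Lemma~\ref{lem-uniform-concentration} taken with $\delta=n^{-2}$, which holds with probability at least $1-1/n^2$ and simultaneously gives $\sup_{\tau}|\tilde K_n^{(q)}(\tau)-p K_n^{(q)}(\tau)|\lesssim n^{q}\sqrt{\tfrac{p}{n}\log n}$ for $q=0,1,2$; because this event is uniform in $\tau$, it is the same event that underlies Proposition~\ref{prop-subsampled-approximate-localization}, so no union bound over the iteration index $t$ is incurred. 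The first point to record is that, after preconditioning, discarding the unobserved (symmetric) coordinates turns $\w(\tau)^*\y$ into $\sum_{i} x_i\,\tilde K_n(\tau_i-\tau)$; hence the correlation maximizer $\hat\omega_{t+1}$ and the projector $\tilde{\mP}_t$ are literally the full-sample objects with $K_n$ replaced by $\tilde K_n$, and in particular are invariant under the global rescaling $\tilde K_n\mapsto \tilde K_n/p$, so the algorithm itself needs no modification.

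Granting this, I would rerun the argument of Lemma~\ref{lem-error-formula}: starting from the basic inequality $|\w(\hat\omega_{t+1})^*(\mI-\tilde{\mP}_t)\y|\ge |\w(\tau_{T(t+1)})^*(\mI-\tilde{\mP}_t)\y|$ and using the subsampled bounds on $J_{1,t},J_{2,t},J_{3,t}$ proved inside Proposition~\ref{prop-subsampled-approximate-localization}, everything collapses, exactly as in the full-sample case, to the scalar estimate
\[
|x_{T(t+1)}|\cdot\tfrac{1}{p}\big(\tilde K_n(0)-\tilde K_n(\hat\omega_{t+1}-\tau_{T(t+1)})\big)\ \le\ |x_{t+1}|\,\tilde\lambda_t ,
\]
together with $|x_{T(t+1)}|\ge (1-\tilde\lambda_t)\,|x_{t+1}|$; the case $t=0$ is covered, as before, by the subsampled analogue of Proposition~\ref{prop-first-step-fejer}, which is itself part of Proposition~\ref{prop-subsampled-approximate-localization}.

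It then remains to supply the curvature estimate that replaces $1-K_n(\tau)\ge 1.19\,n^2\tau^2$. Writing $\tilde K_n(0)-\tilde K_n(\tau)=-\int_0^{\tau}\!\int_0^{u}\tilde K_n''(v)\,dv\,du$ and splitting $\tilde K_n''=pK_n''+(\tilde K_n''-pK_n'')$, the first piece integrates by the $K_n''$ bounds of Proposition~\ref{prop-squared-fejer-kernel} to at least $1.19\,p\,n^2\tau^2$ (the full-sample computation, scaled by $p$), while the second is dominated by $\tfrac12\tau^2\sup_v|\tilde K_n''-pK_n''|\lesssim n^2\tau^2\sqrt{\tfrac{p}{n}\log n}$; dividing by $p$,
\[
\tfrac{1}{p}\big(\tilde K_n(0)-\tilde K_n(\tau)\big)\ \ge\ 1.19\,n^2\tau^2\Big(1-O\big(\sqrt{\tfrac{\log n}{np}}\big)\Big),\qquad |\tau|\le\tfrac{1}{2n+4}.
\]
Combining the two displays and solving the resulting quadratic for $|\hat\omega_{t+1}-\tau_{T(t+1)}|$ reproduces the bound of Proposition~\ref{prop-improved-localization} with $\lambda_t$ replaced by $\tilde\lambda_t$; the extra multiplicative factor $1-O(\sqrt{\log n/np})$ in the curvature constant is $\ge 1/2$ once $np\gtrsim s^2\log n$ (as assumed in Theorem~\ref{thm-subsampling-result}) and is absorbed into the constants.

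The only genuine subtlety I anticipate is the bookkeeping of the $1/p$ normalization. One must check that $\tilde K_n(0)=\sum_\ell B_\ell\sigma_\ell$ equals $p\,(1+O(\sqrt{\log n/np}))$ — the $q=0$ instance of Lemma~\ref{lem-uniform-concentration}, using $\sum_\ell\sigma_\ell=K_n(0)=1$ — so that the constant ``$1$'' in $1-K_n$ is faithfully replaced by $\tilde K_n(0)/p\approx 1$, and that this is the normalization with respect to which $\tilde\lambda_t$ is defined in Proposition~\ref{prop-subsampled-approximate-localization}, so that the two inputs are mutually consistent. Beyond this, the proof is a verbatim copy of Section~\ref{sec-improved-estimation} with $K_n$ replaced by $\tilde K_n$, and no probabilistic input beyond Lemma~\ref{lem-uniform-concentration} is needed.
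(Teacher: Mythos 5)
Your proposal is correct and follows essentially the same route as the paper: both rerun the argument of Proposition~\ref{prop-improved-localization} with the subsampled $J$-bounds from Proposition~\ref{prop-subsampled-approximate-localization} supplying $\tilde\lambda_t$, and both reduce, via the uniform concentration of Lemma~\ref{lem-uniform-concentration}, to the deterministic quadratic curvature estimate $1-K_n(\tau)\ge 1.19\,n^2\tau^2$. The only cosmetic difference is in bookkeeping: the paper first converts $\tilde K_n(0)-\tilde K_n(\tau)$ to $p\big(1-K_n(\tau)\big)$ and pushes the $\tilde K_n-pK_n$ fluctuation onto the right-hand side (into the $cs\sqrt{\log n/np}$ term inside $\tilde\lambda_t$) before invoking the $K_n''$ integral bound, whereas you keep $\tilde K_n$ on the left and perturb the curvature directly to $1.19\,n^2\tau^2\big(1-O(\sqrt{\log n/np})\big)$; the two are equivalent up to constants.
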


\begin{property}\label{prop-subsampled-sliding} 
	The claim in Propositoin~\ref{thm-sliding-regularity} and Proposition~\ref{thm-sliding-est-error} holds for symmetric Bernoulli-$p$ subsampled observations with probability at least $1-1/n^2$ and $\mathcal{R}_t ,\mathcal {CR}_t,\mathcal{E}_t$ replaced by \begin{align*}
		\tilde{\mathcal{R}}_t:=& \big[ \dfrac{1}{2} - O(\dfrac{1}{(n\Delta)^4}+ s\sqrt{\dfrac{\log n}{np}}) \big]  \cdot O\big( \dfrac{1}{(n\Delta)^4} + s \sqrt{\dfrac{\log n}{np}} \big)   ,\\
		\tilde{\mathcal{CR}}_t:= & \dfrac{1-18/\pi^4}{16} -  O( \dfrac{1}{(n\Delta)^4} + s \sqrt{\dfrac{\log n}{np}}+ n\varepsilon_t)^2  ,\\
		\tilde{\mathcal{E}}_t:=& \big(1 + [\dfrac{1}{2}-O(  \dfrac{1}{(n\Delta)^4} + s \sqrt{\dfrac{\log n}{np}}  ) ]^{-1}   \big)   \cdot O( \dfrac{1}{(n\Delta)^4} + s \sqrt{\dfrac{\log n}{np}}) .
	\end{align*} 
\end{property}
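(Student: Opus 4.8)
The plan is to re-run the full-measurement proofs of Proposition~\ref{thm-sliding-regularity} and Proposition~\ref{thm-sliding-est-error} essentially line by line, replacing every occurrence of the squared Fej\'er kernel $K_n$ and its derivatives $K_n',K_n''$ by the rescaled subsampled kernel $\tfrac1p\tilde K_n$ and $\tfrac1p\tilde K_n',\tfrac1p\tilde K_n''$ --- legitimate since the argmax, projection, and gradient steps in Algorithm~\ref{alg-SOMP} and its sliding subroutine are scale-invariant, so $p$ need not be known --- and absorbing the discrepancy via Lemma~\ref{lem-uniform-concentration}. Concretely, I would first invoke Lemma~\ref{lem-uniform-concentration} with $\delta\asymp n^{-2}$, whose hypothesis $np\gtrsim\sqrt{\log n}$ is implied by $np\gtrsim s^2\log n$, to obtain on a single event of probability at least $1-1/n^2$ the uniform bound $\sup_{\tau\in[-\frac12,\frac12]}n^{-q}\lvert\tfrac1p\tilde K_n^{(q)}(\tau)-K_n^{(q)}(\tau)\rvert\leq\rho$ for $q=0,1,2$, with $\rho\asymp\sqrt{\log n/np}$. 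We condition on this event; everything afterward is deterministic. (The symmetry of the observation model enters only through $\sigma_\ell=\sigma_{-\ell}$ and $B_\ell=B_{-\ell}$, which keeps $\tilde K_n$ real-valued so that the full-measurement algebra transfers verbatim.)

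Next I would track how $\rho$ propagates through the proofs. Every scalar that the full-measurement analysis estimates --- the entries of the Gram matrix $\W(\bomega)^*\W(\bomega)$, the correlations $\w(\tau)^*\mr_t$ and their $\bomega$-gradients, the energies $\y^*\bm P(\bomega)\y$ --- is built, modulo the inverse Gram matrices defining $\bm P(\bomega)$ and the weight $\bm w=[\W(\bomega^0)^*\W(\bomega^0)]^{-1}\W(\bomega^0)^*\y$, from at most $O(s)$ kernel values $K_n^{(q)}(\cdot)$ with coefficients among the $x_k$ (or the $a_k$ in the sliding loss). Two points need care. First, the Gram matrices: their off-diagonal entries are kernel values at separations $\geq\Delta$, and the fourth-order decay of $K_n$ makes the Gershgorin row sums converge independently of $s$, so $\lVert\bm I-\W(\bomega)^*\W(\bomega)\rVert_{\text{op}}\lesssim(n\Delta)^{-4}+s\rho$; this is below $\tfrac12$ exactly when $np\gtrsim s^2\log n$, which keeps every inverse Gram matrix $O(1)$-conditioned with an $O(s\rho)$ perturbation. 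Second, with the relevant coefficients normalized so that at most $s$ of them are $O(1)$, swapping $K_n^{(q)}\mapsto\tfrac1p\tilde K_n^{(q)}$ moves each scalar by $O(s\rho)$ times its natural scale ($\lVert\x(\mT_t^c)\rVert_\infty$ or $\lVert\bm w\rVert_\infty$), the finer cross-terms carrying extra $(n\Delta)^{-4}$ factors and being absorbed. Feeding these into the weak-regularity bookkeeping of Proposition~\ref{thm-sliding-regularity} shifts each $\lambda_i,\alpha,\beta$ by $O(s\rho)$ and inflates the curvature/second-derivative estimate, which after collecting terms yields exactly $\tilde{\mathcal R}_t$, $\tilde{\mathcal{CR}}_t$ --- its leading constant $\tfrac{1-18/\pi^4}{16}$ coming out of the re-done curvature estimate --- and the correspondingly weakened bound on $\lvert x_m g_m\rvert$.

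Given the subsampled weak-regularity statement, the contraction argument behind Proposition~\ref{thm-sliding-est-error} needs no change: one substitutes $\tilde{\mathcal R}_t,\tilde{\mathcal E}_t$ for $\mathcal R_t,\mathcal E_t$, checks that $\tilde{\mathcal{CR}}_t>0$ --- which again amounts to $s\rho\lesssim1$, that is $np\gtrsim s^2\log n$ --- so that $\bm g$ remains contractive on the annulus $\mathcal B_t$, and observes that the logarithmic iteration count is untouched. Since every step uses the single event of the first paragraph, the total failure probability is at most $1/n^2$, which is the claim.

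I expect the main obstacle to be the bookkeeping of the second step: checking that the uniform \emph{absolute} perturbation $\rho$ does not blow up when pushed through the Gram-matrix inversions --- matrix inversion can amplify errors --- and that accumulating perturbations over the $s$ spikes costs a factor $s$ rather than $s^2$. Both reduce to the Gram matrices staying well-conditioned in the subsampled model, which is exactly what the threshold $np\gtrsim s^2\log n$ secures; this is also what lets the separation requirement remain $n\Delta>c$, so that the localization and weak-regularity estimates degrade only by the announced additive $s\sqrt{\log n/np}$ terms.
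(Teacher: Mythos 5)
Your proposal is correct and follows essentially the same route as the paper's proof: condition on the uniform-concentration event from Lemma~\ref{lem-uniform-concentration}, substitute $\tfrac1p\tilde K_n^{(q)}$ for $K_n^{(q)}$ throughout the full-measurement analysis, track the $O(s\sqrt{\log n/np})$ perturbation through the Gram-matrix inverses and the $G_{1,m},G_{2,m},G_{3,m}$ bounds to produce subsampled analogues $\tilde\varphi_1,\tilde\varphi_{21},\tilde\varphi_{22},\tilde\varphi_3$, and then re-run the weak-regularity and contraction bookkeeping unchanged. The paper carries this out explicitly via Lemma~\ref{lem-subsampled-G-bound} and Lemma~\ref{lem-subsampled-coefficient-bound}/\ref{lem-subsampled-xhat-bound}; your scale-invariance remark and the Gershgorin/conditioning observation ($s\rho\lesssim1$ precisely at $np\gtrsim s^2\log n$) are the same ingredients stated at a higher level.
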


The claim then follows the same routine as in Theorem~\ref{thm-sliding-OMP-guarantee}.

\section{Results for Continuous OMP} \label{sec-byproduct-OMP}

In this section, we summarize the by-product results for continuous OMP.

\begin{theorem}\label{thm-omp-naive} There exists $c<c',C,C'$ so that
	as long as $$n\Delta >\zeta\cdot \dynx , $$
	for $\zeta\geq C,$ we have Algorithm~\ref{alg-COMP} with the preconditioner  and stopping threshold $c\cdot  \min_i\lvert x_i\rvert <\gamma <c' \min_i\lvert x_i\rvert $ will stop exactly after finding $s$ frequenceis $\omega_1,\dots,\omega_s$. Moreover, there exists unique $T(1),\dots, T(t) \in [s]$ so that $\tau_{T(i)} =\text{argmin}_{\tau \in \mT} \lvert \tau - \omega_i\rvert$ 
	and   \begin{align*}
		\lvert \omega_{i}-\tau_{T(i)}\rvert \leq \dfrac{C'}{n} \cdot(\dfrac{1}{\dynx \zeta})^2.
	\end{align*}
\end{theorem}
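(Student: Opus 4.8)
The plan is to run an induction on the iteration index $t$, parallel to the one used for Theorem~\ref{thm-sliding-OMP-guarantee} but with the sliding refinement switched off, so that after the $(t+1)$-st step the only available localization information is what Lemma~\ref{lem-error-formula} and Proposition~\ref{prop-improved-localization} already deliver. The induction hypothesis at step $t$ is: there are distinct indices $T(1),\dots,T(t)\in[s]$ with $\omega_i\in\mS(\tau_{T(i)})$ and a uniform bound $\varepsilon_t:=\max_{i\le t}|\omega_i-\tau_{T(i)}|\le\frac{c_0}{n(n\Delta)^2}$, where $c_0$ is an absolute constant fixed large enough to dominate the constants emitted by the cited lemmas (in particular this forces $\varepsilon_t<\frac1{2n+4}$ once $n\Delta$ is large, which is the range of validity of Lemma~\ref{lem-error-formula} and Proposition~\ref{prop-improved-localization}). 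The base case $t=1$ is Proposition~\ref{prop-first-step-fejer} (giving $\omega_1\in\mS(\tau_{T(1)})$ with $|x_{T(1)}|\ge(1-O((n\Delta)^{-4}))\|\x\|_\infty$) followed by Proposition~\ref{prop-improved-localization} at the first step with $\lambda=O((n\Delta)^{-4})$, which yields $\varepsilon_1\lesssim\frac{1}{n(n\Delta)^2}$.

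For the inductive step I would first verify the accumulation condition of Lemma~\ref{lem-error-formula}. Using $\varepsilon_{\x,t}\le\|\x\|_\infty\varepsilon_t$ and $\|\x(\mT_t^c)\|_\infty\ge\min_i|x_i|$, the controlling ratio satisfies $\frac{n\varepsilon_{\x,t}}{\|\x(\mT_t^c)\|_\infty}\le n\dynx\,\varepsilon_t\le\frac{c_0\dynx}{(n\Delta)^2}\le\frac{c_0}{\zeta^2\dynx}\le\frac{c_0}{\zeta^2}$ because $n\Delta>\zeta\dynx$ and $\dynx\ge1$; hence $\mathcal C_t=0.3-O\big((n\Delta)^{-4}+\zeta^{-2}\big)>0$ once $\zeta$ (hence $C$) is a large enough absolute constant. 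Lemma~\ref{lem-error-formula} then produces $\omega_{t+1}\in\mS(\tau_{T(t+1)})$ with $T(t+1)\in\mT_t^c$ (keeping the indices distinct) and $|x_{T(t+1)}|\ge(1-\lambda)\|\x(\mT_t^c)\|_\infty$ with $\lambda=O\big(\frac{n\varepsilon_{\x,t}}{\|\x(\mT_t^c)\|_\infty(n\Delta)^4}\big)$, and Proposition~\ref{prop-improved-localization} upgrades this to
\[
|\omega_{t+1}-\tau_{T(t+1)}|\;\lesssim\;\tfrac1n\sqrt{\lambda}\;\lesssim\;\frac{1}{n(n\Delta)^2}\sqrt{\frac{n\varepsilon_{\x,t}}{\|\x(\mT_t^c)\|_\infty}}\;\le\;\frac{1}{n(n\Delta)^2}\sqrt{n\dynx\,\varepsilon_t}\;\le\;\frac{\sqrt{c_0\dynx}}{n(n\Delta)^3}\;\le\;\frac{c_0}{n(n\Delta)^2},
\]
where the last step uses $n\Delta>\zeta\dynx\ge\zeta\sqrt{\dynx}$ with $\zeta\ge C$ (and $c_0$ large). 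Thus $\varepsilon_{t+1}=\max(\varepsilon_t,|\omega_{t+1}-\tau_{T(t+1)}|)\le\frac{c_0}{n(n\Delta)^2}$ and the induction closes; carrying it to $t=s$ exhibits a permutation $T$ of $[s]$ with $|\omega_i-\tau_{T(i)}|\le\frac{c_0}{n(n\Delta)^2}\le\frac{c_0}{n}\big(\tfrac1{\zeta\dynx}\big)^2$, the asserted error bound with $C'=c_0$.

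To close, I would pin down the stopping behaviour and the uniqueness of $T(\cdot)$. For $t<s$ the residual still carries the energy of $\x(\mT_t^c)$: the lower bound proved inside Lemma~\ref{lem-error-formula} reads $\max_{i\in\mT_t^c}|\w(\tau_i)^*\mr_t|\ge(1-O((n\Delta)^{-4}))\|\x(\mT_t^c)\|_\infty-O(n\varepsilon_{\x,t})$, and since $n\varepsilon_{\x,t}\lesssim\frac{\|\x\|_\infty}{(n\Delta)^2}\le\frac{\min_i|x_i|}{\zeta^2}$ this is $\ge\tfrac12\min_i|x_i|>\gamma$ for $\gamma<c'\min_i|x_i|$ with $c'\le\tfrac12$, so the loop does not terminate before step $s$. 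At $t=s$ we have $\mT_s^c=\emptyset$, so $\w(\tau)^*\mr_s=\w(\tau)^*(\mI-\mP(\bomega_{\le s}))\W(\btau)\x$ coincides with the term $J_{1,s}(\tau)$ of Lemma~\ref{lem-error-formula}, and Lemma~\ref{lem-J-bounds} gives $\sup_\tau|\w(\tau)^*\mr_s|\lesssim n\varepsilon_{\x,s}\lesssim\frac{\|\x\|_\infty}{(n\Delta)^2}\le\frac{\min_i|x_i|}{\zeta^2}<\gamma$ for $\gamma>c\min_i|x_i|$ and $\zeta$ large; hence the algorithm halts exactly after $s$ frequencies. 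Finally $|\omega_i-\tau_{T(i)}|<\frac1{2n+4}\ll\Delta$ forces $\tau_{T(i)}=\text{argmin}_{\tau\in\mT}|\tau-\omega_i|$ and makes this minimizer unique.

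The main obstacle lies not in any single cited lemma — Lemmas~\ref{lem-error-formula}--\ref{lem-J-small-bound} and Proposition~\ref{prop-improved-localization} carry the analytic weight — but in bookkeeping the error amplification across iterations and exhibiting the precise role of $\dynx$ and $\zeta$. The crux is the square-root recursion $\varepsilon_{t+1}\lesssim\frac{1}{n(n\Delta)^2}\sqrt{n\dynx\,\varepsilon_t}$: it neither blows up nor improves past $O\big(\frac1{n(n\Delta)^2}\big)$ precisely when $n\Delta\gtrsim\dynx$, which matches the impossibility construction of Theorem~\ref{thm-impossible-OMP}; and one must keep the three small quantities $O((n\Delta)^{-4})$, $O(\zeta^{-2})$ and $\lambda$ simultaneously below the fixed slack $0.3$ in $\mathcal C_t$ for every $t\le s$, which is what dictates how large the absolute constant $C$ must be.
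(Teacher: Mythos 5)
Your proposal is correct and follows essentially the same route as the paper's own proof in Appendix~D: the same induction driven by Lemma~\ref{lem-error-formula} and Proposition~\ref{prop-improved-localization}, the same stopping analysis via the upper and lower bounds on the residual correlation, and the same exploitation of $n\Delta>\zeta\dynx$ to keep $\mathcal C_t$ bounded away from zero. The only cosmetic difference is that your inductive hypothesis is phrased as $\varepsilon_t\lesssim\frac{1}{n(n\Delta)^2}$ rather than the paper's $\frac{c}{n\dynx^2\zeta^2}$; under $n\Delta>\zeta\dynx$ these are interchangeable and both close the recursion in the same way.
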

\begin{remark}
Our result implies that when $n\Delta \gtrsim \dynx,$ the OMP algorithm with suitable stopping threshold can find exactly $s$ frequencies $\bomega$ with $\lVert \bomega - \btau \rVert_\infty \leq \dfrac{1}{n(\dynx)^2}$ up to permutation over $\btau$. Such a result can be combined with the basin-of-attraction result for the loss \eqref{eq-global-loss-intro}  in \cite{Eftekhari2015, Traonmilin2020}  to get the performance guarantee of the two-staged algorithm with OMP initialization.
\end{remark}
\begin{remark}
While the Theorem is stated for pre-conditioned continuous OMP, it also holds for naive continuous OMP with possibly different absolute constants, worse dependency on $\dynx$, and an additional $\log s$ term in separation condition and the estimation error bound. 
\end{remark}
\begin{remark}
	Since the proof of Theorem~\ref{thm-omp-naive}  relies mostly on Proposition~\ref{prop-improved-localization}, whose sub-sampled analogue has been developed in section~\ref{sec-analysis-incomplete-sample}, the result of Theorem~\ref{thm-omp-naive} also holds with $O(s^2\log n)$ measurements. 
\end{remark}

\begin{theorem}[Impossible result for OMP]\label{thm-impossible-OMP}
	For any positive number $c>0,L>0$ there exists some constant $c_0>c$ and positive number $C_1 $ depends on $c$ so that:  there exists a $3$-spike instance with $ x_1>x_2>x_3>0$, $ \dfrac{x_1}{ x_3}>c,$ and $n\Delta = c_0$ such that when running OMP algorithm over such instance, we will get $\lvert \omega_3 - \tau_3\rvert \geq   {L}/{n} $. 
\end{theorem}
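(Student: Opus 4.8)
The plan is to realize concretely the accumulation mechanism discussed after Proposition~\ref{prop-improved-localization}: a single spike of enormous magnitude, even when localized almost perfectly, leaves a residual ``ghost'' whose correlation profile is of the same order as that spike's \emph{neighbouring} coefficients, and a neighbour can be chosen arbitrarily larger than a third, tiny spike. I would take a $3$-spike instance with $x_1=2$, $x_2=1$, $x_3=\rho$ (a small constant fixed at the very end), equispaced frequencies $\tau_1=0,\ \tau_2=\Delta,\ \tau_3=2\Delta$, and $n\Delta=c_0$, where $c_0:=\max\{c,C,L\}+C_0$ for a suitable absolute constant $C_0$ and $C$ the absolute constant making Proposition~\ref{prop-first-step-fejer} and Lemma~\ref{lem-error-formula} applicable. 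For each large $n$ I would pick $\Delta=c_0/n$ inside a fixed bounded window so that neither $\Delta$ nor $2\Delta$ is a critical point of $K_n$ (only finitely many values excluded); the threshold ``$n$ large enough for the finite-$n$ kernel to obey its limiting bounds'' is exactly the role played by $C_1=C_1(c)$ in the statement. Then $n\Delta=c_0>c$, $x_1>x_2>x_3>0$, and $x_1/x_3=2/\rho>c$ once $\rho<2/c$, so $\dynx=2/\rho$. By Proposition~\ref{prop-squared-fejer-kernel} and continuity one has $|K_n''(0)|\asymp n^2$ and $|K_n'(\tau)|\asymp n^2|\tau|$ on $\mS(\tau_1)$, together with the two-sided estimate $|K_n'(\Delta)|,|K_n'(2\Delta)|\asymp n/c_0^4$ — the lower bounds here are what force the generic choice of $c_0$ (for the upper bounds, the decay estimates of Proposition~\ref{prop-squared-fejer-kernel} suffice; below, $\asymp$, $\lesssim$, $O(\cdot)$ with no subscript hide only absolute constants, with a subscript $c_0$ they may depend on $c_0$, hence ultimately only on $c$ and $L$).

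\textbf{First two iterations are benign.} Since $x_1$ is strictly dominant and $n\Delta=c_0>C$, Proposition~\ref{prop-first-step-fejer} gives $\omega_1\in\mS(\tau_1)$ (the largest coefficient, i.e. $T(1)=1$). From the stationarity identity $x_1K_n'(\omega_1-\tau_1)=-x_2K_n'(\omega_1-\tau_2)-x_3K_n'(\omega_1-\tau_3)$, the estimate $|K_n'(\delta_1)|\asymp n^2\delta_1$ on $\mS(\tau_1)$, and $x_3\ll x_2$, I get the two-sided bound $\delta_1:=|\omega_1-\tau_1|\asymp_{c_0}\frac{x_2}{x_1}\cdot\frac{1}{nc_0^4}$. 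Hence $n\varepsilon_{\x,1}/\lVert\x(\mT_1^c)\rVert_\infty=n x_1\delta_1/x_2\asymp_{c_0}1/c_0^4$, so $\mathcal C_1>0$ in Lemma~\ref{lem-error-formula}, which yields $\omega_2\in\mS(\tau_{T(2)})$ with $T(2)\in\{2,3\}$ and $|x_{T(2)}|\ge(1-O(1/c_0^4))x_2>x_3$; since $x_3\ll x_2$ this forces $T(2)=2$, i.e. $\omega_2\in\mS(\tau_2)$, and Proposition~\ref{prop-improved-localization} gives $\delta_2:=|\omega_2-\tau_2|\lesssim_{c_0}1/n$.

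\textbf{The ghost captures the third iterate.} Write $\mr_2=(\mI-\mP(\omega_1,\omega_2))\y$ and $g_3(\tau)=\w(\tau)^*\mr_2$. Since $(\mI-\mP(\omega_1,\omega_2))\w(\omega_i)=0$, substituting $\w(\tau_i)=\w(\omega_i)+[\w(\tau_i)-\w(\omega_i)]$ for $i=1,2$ shows that $g_3$ equals, up to the rank-$2$ projection corrections, the sum of the ghost profiles $x_1[K_n(\tau_1-\tau)-K_n(\omega_1-\tau)]\approx-x_1\delta_1K_n'(\tau_1-\tau)$, $x_2[K_n(\tau_2-\tau)-K_n(\omega_2-\tau)]$, and the genuine bump $x_3K_n(\tau-\tau_3)$. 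At the point $\tau^\star$ near $\tau_1$ (with $|\tau^\star-\tau_1|\lesssim 1/n$) where $|K_n'|$ attains its maximum, so $|K_n'(\tau^\star-\tau_1)|\asymp n$, the first ghost alone contributes $\asymp x_1\delta_1 n\asymp_{c_0}x_2/c_0^4$, whereas the projection corrections (whose size is governed by the off-diagonal Gram entry $K_n(\omega_1-\omega_2)\asymp 1/c_0^4$), the second ghost, and the $x_3$-bump are all $O_{c_0}(x_2/c_0^6)+O(x_3)$, i.e. $o_{c_0}(x_2/c_0^4)$ for $c_0$ large and $\rho$ small; hence $|g_3(\tau^\star)|\gtrsim_{c_0}x_2/c_0^4$. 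On the other hand, for every $\tau$ with $|\tau-\tau_3|\le L/n$ — such a $\tau$ is at distance $\gtrsim\Delta$ from all of $\tau_1,\omega_1,\tau_2,\omega_2$ — the decay estimates of Proposition~\ref{prop-squared-fejer-kernel} give $|g_3(\tau)|\lesssim_{c_0}x_3+x_2/c_0^6$. Choosing finally $\rho=\rho(c_0)>0$ small enough that this upper bound is $<|g_3(\tau^\star)|$, and noting $|\tau^\star-\tau_3|\ge 2\Delta-O(1/n)=(2c_0-O(1))/n>L/n$ by our choice of $C_0$, we conclude that the argmax $\omega_3$ of $|g_3|$ cannot lie within $L/n$ of $\tau_3$, i.e. $|\omega_3-\tau_3|>L/n$. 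Taking $C_1:=\max\{2/\rho,\ \text{threshold on }n\}$ completes the proof.

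\textbf{Main obstacle.} The delicate ingredient is the genuine \emph{lower} bound $|g_3(\tau^\star)|\gtrsim_{c_0}x_2/c_0^4$ on the surviving ghost of the huge spike. It requires (i) a lower bound $\delta_1\gtrsim_{c_0}\frac{x_2}{x_1 n c_0^4}$ — hence the generic choice of $c_0$ avoiding critical points of $K_n$, together with the threshold $n\ge C_1(c)$ so the finite-$n$ kernel matches the bounds of Proposition~\ref{prop-squared-fejer-kernel}; and (ii) a careful accounting of the rank-$2$ orthogonal projection $\mP(\omega_1,\omega_2)$, whose cross terms of relative size $\asymp 1/c_0^4$ must be shown not to cancel the ghost. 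In effect this is the place where the one-sided (upper) estimates of Lemmas~\ref{lem-J-bounds} and \ref{lem-J-small-bound} have to be complemented by matching lower bounds for this specific adversarial instance; everything else is a routine application of Proposition~\ref{prop-squared-fejer-kernel}. The same construction works verbatim for the un-preconditioned continuous OMP (Algorithm~\ref{alg-COMP} without preconditioning) with $K_n$ replaced by the Dirichlet kernel $D_n$, at the cost of enlarging $c_0$ by a factor $\log s=O(1)$; the slower tail decay of $D_n$ only makes the ghost larger.
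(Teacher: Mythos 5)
Your proposal realizes the same essential mechanism as the paper — a ghost created by the unavoidable mislocalization of the dominant spike survives the orthogonal projection and steals the third iterate from $\tau_3$ — but the concrete realization differs in two substantive ways. First, the paper's proof works with the \emph{un-preconditioned} Dirichlet kernel throughout (indeed it opens by stating decay/curvature bounds for $D_n$ and expands $D_n'(k/n)$, $D_n'((k+\varepsilon)/n)$ directly), whereas your primary construction is for the preconditioned squared-Fej\'er kernel, with the Dirichlet case relegated to a remark; the paper does the opposite. Second, and more interestingly, the paper lower-bounds the ghost by evaluating the residual correlation $h_2$ at the \emph{true} location $\tau_1$, where the linear Taylor term cancels because $D_n'(0)=0$, obtaining a ghost of size $\asymp x_1 n^2\epsilon_1^2$ (quadratic in the mislocalization; this is the content of the $H_1(\tau_1)$ lower bound). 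You instead evaluate at a point $\tau^\star$ at distance $\asymp 1/n$ from $\tau_1$ where $|K_n'|\asymp n$, giving a ghost $\asymp x_1 n\delta_1$ — linear rather than quadratic in $\delta_1$, hence a factor $(n\delta_1)^{-1}\gg 1$ larger. This makes the comparison against the $x_3$-bump and the $O(1/c_0^8)$ projection cross-terms more comfortable. Both routes hinge on the same delicate ingredient you flag: a genuine \emph{lower} bound on $\delta_1$, coming from stationarity of the first correlation maximizer, which in turn needs a lower bound on $|K_n'(\Delta)|$ (the paper secures it by taking $n\Delta_i$ to be integers and Taylor-expanding $D_n'$; your ``generic $c_0$ avoiding critical points of $K_n$'' argument is more hand-wavy but fixable). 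A minor bookkeeping caveat: $\delta_2$ is actually $\lesssim 1/(nc_0^4)$ rather than merely $\lesssim 1/n$, and the projection corrections come out $O(x_2/c_0^8)$ in a careful expansion of $(\mathbf{F}^*\mathbf{F})^{-1}$ — both strengthen rather than weaken your estimates, so the conclusion stands. Overall, the proposal is sound and slightly tighter at the evaluation-point step, but it establishes the statement for the preconditioned variant whereas the paper's own proof is for the un-preconditioned one; the remark that the argument transfers to $D_n$ is not an add-on but in fact \emph{is} the paper's proof.
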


\section{Numerical Experiments}\label{sec-experiment}

\subsection{Implementation and Computational Complexity	}
We show that in the incomplete measurement setting with $O(s^2\log n)$ measurement number, the complexity of the Sliding-OMP algorithm and the two-stage OMP algorithm are $O\big(sn\log n + \text{poly}(s,\log n)\big)$.

\paragraph{Theoretical guarantee of grid-based implementation} While in the algorithm description and previous analysis, the correlation-maximization procedure is taken over the continuous region $\tau \in [0,1)$, it is non-trivial to solve this continuous-optimization problem due to the non-convexity. In our implementation, we consider discretizing $[0,1)$ into $N_{\text{grid}}$ grids uniformly and search the maximizer over the grids.  We would show that  $N_{\text{grid}} \asymp n$ is sufficient to provide the same guarantee as the continuous setting.

 Denoting $\hat\omega_{t,\text{Grid}}$ as the correlation maximizer at $t$-th step, we show the following lemma: \begin{lemma}\label{lem-grid-bound}
	Denote $\hat{\omega}_t$ the $t$-th step correlation-maximizer in Algorithm~\ref{alg-SOMP} and  $\hat\omega_{t,\text{grid}}$ the  $t$-th step correlation maximizer on the grid. We have $\hat\omega_{t,\text{grid}} \in \mathcal{S}(\tau_{T(t)})$ for the same $T(t)$ as in Lemma~\ref{lem-error-formula} and  \begin{align*}
	\lvert	\w^*(\hat\omega_{t,\text{grid}}) \bm r_{t-1} \rvert > \lvert	\w^*(\tau_{T(t)}) \bm r_{t-1} \rvert - O( \dfrac{n}{N_{\grid}}\big[\dfrac{n\varepsilon_{\x,t-1}}{n^4\Delta^4}+\lVert \x(\mT_{t-1}^c)\rVert_\infty \big]  ).
	\end{align*}
\end{lemma}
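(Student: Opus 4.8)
The plan is to control the gap between searching over the grid and searching over the continuum by a standard perturbation argument on the correlation function $\tau \mapsto |\w(\tau)^* \bm r_{t-1}|$. The key structural input is that, by Lemma~\ref{lem-error-formula} (or rather its proof), at the $t$-th step the true correlation maximizer $\hat\omega_t$ lies in $\mS(\tau_{T(t)})$, and the analysis there already produces quantitative lower/upper bounds on $|\w(\tau)^* \bm r_{t-1}|$ that separate $\mS(\mT_{t-1}^c)$ from its complement. So the first step is to argue that the nearest grid point $\hat\omega_{t,\grid}$ to $\tau_{T(t)}$ — which exists within distance $1/N_{\grid}$ — already achieves a correlation close to $|\w(\tau_{T(t)})^* \bm r_{t-1}|$, hence close to the maximum; consequently the true grid maximizer $\hat\omega_{t,\grid}$ also has correlation within the stated additive slack of $|\w(\tau_{T(t)})^* \bm r_{t-1}|$. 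That handles the displayed inequality directly.

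Second, to conclude $\hat\omega_{t,\grid} \in \mS(\tau_{T(t)})$, I would combine this near-optimality with the separation bound from the proof of Lemma~\ref{lem-error-formula}: outside $\mS(\mT_{t-1}^c)$ the correlation is at most roughly $0.7\,\|\x(\mT_{t-1}^c)\|_\infty$ plus lower-order terms, while near $\tau_{T(t)}$ it is at least $(1-o(1))\|\x(\mT_{t-1}^c)\|_\infty$; since the grid slack $O(\tfrac{n}{N_{\grid}}[\cdots])$ is $o(\|\x(\mT_{t-1}^c)\|_\infty)$ when $N_{\grid} \asymp n$, the grid maximizer cannot sit outside $\mS(\mT_{t-1}^c)$. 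A further short argument (as in Lemma~\ref{lem-error-formula}) pins it down to the particular component $\mS(\tau_{T(t)})$ rather than some other element of $\mS(\mT_{t-1}^c)$, using that the magnitude $|x_{T(t)}|$ is the dominant one.

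The quantitative heart is the Lipschitz-type estimate
\[
\bigl|\,|\w(\tau)^* \bm r_{t-1}| - |\w(\tau')^* \bm r_{t-1}|\,\bigr| \;\le\; \|\bm r_{t-1}\|_2 \,\|\w(\tau) - \w(\tau')\|_2 \;\lesssim\; n \|\bm r_{t-1}\|_2 \,|\tau - \tau'|,
\]
since each entry of $\w(\cdot)$ is $e^{2\pi j \ell \tau}$ with $|\ell| \le n$, so $\|\w(\tau)-\w(\tau')\|_2 \lesssim n^{3/2}|\tau-\tau'|$ after accounting for the $2n{+}1$ coordinates — actually, using the preconditioned inner product $\w(\tau)^*\bm r_{t-1} = \sum_i x_i K_n(\tau-\tau_i) - (\text{projection terms})$ and $|K_n'| \lesssim n$, the derivative in $\tau$ of $\w(\tau)^* \bm r_{t-1}$ is $O(n\,[\,\|\x(\mT_{t-1}^c)\|_\infty + n\varepsilon_{\x,t-1}/(n\Delta)^4\,])$, which is exactly the bracket appearing in the lemma. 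Multiplying by the grid spacing $1/N_{\grid}$ and writing $N_{\grid} \asymp n$ gives the claimed $O(\tfrac{n}{N_{\grid}}[\cdots])$ term.

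The main obstacle is bookkeeping rather than conceptual: one must express the residual correlation $\w(\tau)^*\bm r_{t-1}$ in the decomposed form $J_{1,t-1}+J_{2,t-1}+J_{3,t-1}$ from the proof of Lemma~\ref{lem-error-formula} and bound the $\tau$-derivative of \emph{each} piece using Proposition~\ref{prop-squared-fejer-kernel} — in particular controlling the derivative of the projection-difference term $J_{3,t-1}$, which involves $\nabla_\tau$ of products of $\w$-columns and projection operators onto nearby frequencies, and whose size is governed by $\varepsilon_{\x,t-1}$. I would reuse the same kernel-derivative inequalities ($|K_n'|, |K_n''|$ bounds) already established, so the estimate should close cleanly, but care is needed to ensure the constants and the dependence on $\varepsilon_{\x,t-1}$ match the bracket $[\,n\varepsilon_{\x,t-1}/(n^4\Delta^4) + \|\x(\mT_{t-1}^c)\|_\infty\,]$ stated in the lemma, and to verify the derivative bound holds uniformly over the relevant range of $\tau$ (both inside $\mS(\mT_{t-1}^c)$, where the $K_n''$-type bounds are used, and outside, where the decay bounds suffice).
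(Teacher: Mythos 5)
Your proposal is correct and takes essentially the same route as the paper: fix the nearest grid point to $\tau_{T(t)}$, compare it to $\tau_{T(t)}$ via a mean-value/derivative estimate on $\tau \mapsto \w(\tau)^*\bm r_{t-1}$, decompose the residual into the same $J_{1},J_{2},J_{3}$-type pieces and bound each derivative with the kernel bounds of Proposition~\ref{prop-squared-fejer-kernel}, then note that the grid maximizer does at least as well. You correctly discard the crude $\|\bm r_{t-1}\|_2\|\w(\tau)-\w(\tau')\|_2$ Lipschitz bound in favor of the kernel-level estimate, and you additionally spell out the derivation of $\hat\omega_{t,\grid}\in\mS(\tau_{T(t)})$ from the displayed inequality plus the separation bounds of Lemma~\ref{lem-error-formula}, which the paper leaves implicit.
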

With Lemma~\ref{lem-grid-bound}, we can show that \begin{equation}\label{eq-grid-error} \lvert\hat{\omega}_{t,\text{grid}} - \tau_{T(t)} \rvert \lesssim  \big[ \dfrac{1}{n (n\Delta)^2 } + \dfrac{1}{\sqrt{nN_{\text{grid} } } } \big]\cdot \big[\sqrt{\dfrac{n\varepsilon_{\x,t-1}}{\lVert \x(\mT_t^c)\rVert(n\Delta)^2}}+ 1 \big]  \end{equation} following the same routine as Proposition~\ref{prop-improved-localization}.
 Now if we plug in the bound \eqref{eq-grid-error} into the proof of Theorem~\ref{thm-sliding-OMP-guarantee}, we can find that there exists some  $C$ such that as long as $(n\Delta)^2>C, N_{\text{grid}}> Cn,$ we have $
 	\lvert x_{T(t)} (\hat{\omega}_{t,\text{grid}} - \tau_{T(t)}) \rvert \leq C_{\text{radius}}.
$ The additional ${N_{\text{grid}}}$ term in \eqref{eq-grid-error} will not affect the proof of Theorem~\ref{thm-sliding-OMP-guarantee}, thus the same performance guarantee holds for $N_{\text{grid}} \gtrsim n.$ The same argument also holds for continuous OMP and the incomplete measurements.

\paragraph{Complexity of grid-based implementation} Firstly, for any $\bm r_t,$ we have $O(N_{\text{grid}} \log N_{\text{grid}})$ implementations for the correlation-maximization step $\max_{\tau \in \text{Grid}}\lvert \bm{f}(\tau)^* \bm r_t\rvert $ over grids via Fast-Fourier-Transformation.  On the other hand, both updating $\bm r_t$ and optimizing sliding loss are problems of scale $s^2\log n$, which have at most $\text{poly}(s,\log n)$ complexity. Finally, since the algorithm stops after $s$ iterations and $N_{\text{grid}}\asymp n$, the total complexity is $O(sn\log n + s^2\log n ).$

\subsection{Preconditioning}\label{sec-precondition-experiment}

In previous sections, we have presented our results with the preconditioner \eqref{eq-particular-preconditioner} , which corresponds to the squared Fej\'er kernel $K_{n,\sigma}$ \eqref{eq-squared-Fejer}. Our bounds developed for $K_{n,\sigma} $ in Proposition~\ref{prop-squared-fejer-kernel} can be generalized (with different  absolute constants) to trigonometric concentration kernels of type\begin{align}\label{eq-general-concentration}
	 K_{n,\sigma; \alpha}(t):= \big[\dfrac{\sin((\frac{2n}{\alpha})\pi t )}{(\frac{2n}{\alpha}+1)\sin (\pi t)}  \big]^{\alpha}
\end{align} for arbitrary $\alpha\in \mathbb{Z}_+$ with replacing $(n\Delta)^4$ by $(n\Delta)^{\alpha}$.  In particular, \eqref{eq-general-concentration}  is provable to be a 
polynomial of $\{\exp(2\pi j  kt)\}_{k = -n}^n$ when $n/\alpha\in \mathbb{Z}_+$ so there exists an corresponding preconditioner. In this section, we compare three normalized concentration kernels:  Dirichlet Kernel ($\alpha = 1$, no preconditioning), Fej\'er Kernel ($\alpha = 2$), Squared Fej\'er Kernel ($\alpha = 4$). Figure~\ref{fig-kernel} shows the behaviour of these kernels when $n = 100$ .  \begin{figure}[H]
 \centering
\begin{subfigure}[b]{0.3\textwidth}\label{fig-sliding-bias}
 \centering
 	\includegraphics[width=\textwidth]{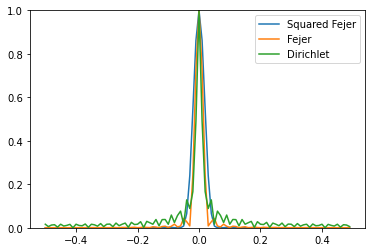}
 \caption{Various kernels}
 \end{subfigure}
  \begin{subfigure}[b]{0.3\textwidth}
 \centering
 \hspace{0cm}\includegraphics[width=\textwidth]{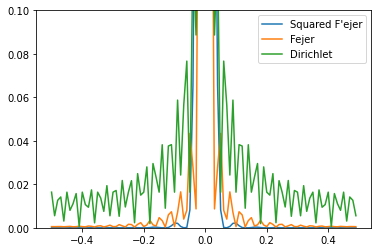}
 \caption{Tail decay rates}
 \end{subfigure}
   \begin{subfigure}[b]{0.3\textwidth}
 \centering
 \hspace{0cm}\vspace{0.1cm}	\includegraphics[width=\textwidth]{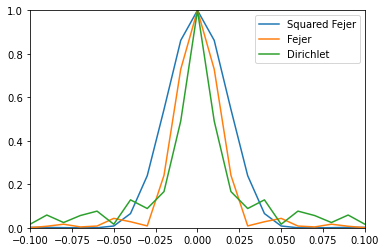}
 \caption{Concentration near the peak}
 \end{subfigure}
 \caption{Various Concentration Kernels}\label{fig-kernel}
 \end{figure}
\subsubsection*{Effect of preconditioning}
In this section, we would show the effect of the preconditioning for the OMP algorithm. While it has been shown in section~\ref{sec-byproduct-OMP} that the $n\Delta$ must depend on $\dynx$ to guarantee the performance of the OMP algorithm, we would show that preconditioning can help reduce such dependency. Figure~\ref{fig-omp-precond} shows reconstruction loss and recovery probability for the OMP algorithm with three kernels over 10 times of simulation:
\begin{figure}[H]
\centering
  \begin{subfigure}[b]{0.4\textwidth}
 \centering
 \hspace{0cm}\includegraphics[width=\textwidth]{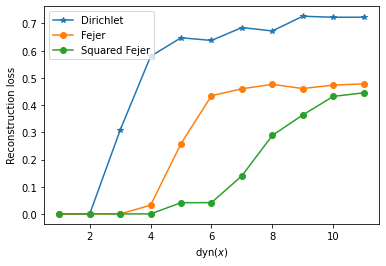}
 \caption{Reconstruction error}
 \end{subfigure}
 \qquad 
   \begin{subfigure}[b]{0.4\textwidth}
 \centering
 \hspace{0cm}\vspace{0.1cm}	\includegraphics[width=\textwidth]{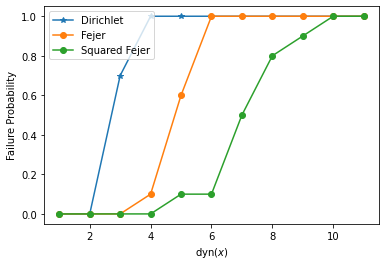}
 \caption{Probability of failure}
 \end{subfigure}
\caption{
 Performance of preconditioned OMP algorithms under different $\dynx$, with $180$ uniformly sub-sampled  measurements,  $n\Delta = 1.15$, 5 frequencies $\tau$ generated as $\tau_i = \frac{(1+i)\Delta }{n},  2n+1 = 789$ and 1800 grids. }
 \label{fig-omp-precond}
\end{figure} 

In the experiment , we draw the argument of $\bm x$ uniformly from $[0,2\pi)$, and set its amplitude as $$\lvert \bm x\rvert = [1,u,\max\{\frac{u}{2},1\}, 1,u ]^T, \quad 1\leq u \leq 8.$$  The algorithm output is said to recover $\btau$ if  $\max_{i} \lvert \omega_i-\tau_{T(i)}\rvert < 10^{-4}$.
As shown in Figure~\ref{fig-omp-precond}, the preconditioner with larger $\alpha$ has better tolerance on $\dynx$.  This result demonstrates our motivation for introducing precondition:    The preconditioned concentration kernel has a lighter tail, thus reducing the interaction between different spikes.

\subsubsection*{ The trade-off on $\alpha$ }

 While it can be observed that larger $\alpha$ encourages better decay on the tail of the kernel in Figure~\ref{fig-kernel}(b), Figure~\ref{fig-kernel}(c) shows that smaller $\alpha$ will lead to sharper concentration near its peak. Indeed, the phenomenon shown in Figure~\ref{fig-kernel}(c) prevents us from selecting $K_{n,\sigma;\alpha}$ with an arbitrary large $\alpha$, that provides a trade-off between the concentration near peak and the decay rate of the tail. 
 Theoretically balancing the trade-off on $\alpha$ when selecting kernels of type~\eqref{eq-general-concentration} or introducing other type trigonometric kernels to balance the trade-off between tail and peak is an interesting direction to explore.


\subsection{Effect of Sliding}

In this section, we show the necessity of sliding in the large $\dynx$ regime by comparing the Sliding-OMP algorithm with the OMP algorithm \cite{Benard2022} and the Greedy algorithm\cite{Eftekhari2015, traonmilin2020projected}. Since we focus on studying the effect of sliding operation, we fix the pre-condition kernel for all algorithms as Fej\'er kernel. 
  \begin{figure}[H]
 \centering
  \begin{subfigure}[b]{0.4\textwidth}
 \centering

 \hspace{0cm}\includegraphics[width=\textwidth]{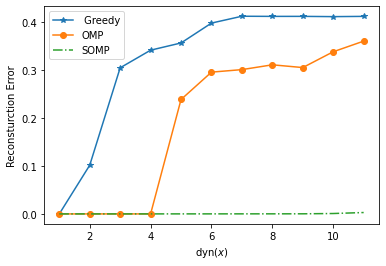} \caption{Reconstruction error}
 \end{subfigure}
 \qquad 
   \begin{subfigure}[b]{0.4\textwidth}
 \centering
 \hspace{0cm}\vspace{0.1cm}	\includegraphics[width=\textwidth]{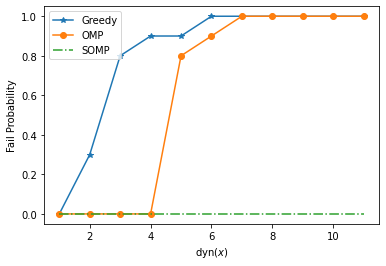}
 \caption{Probability of failure}
 \end{subfigure}
 \caption{
 Performance of algorithms under different $\dynx$, with the same setting as in Figure~\ref{fig-omp-precond}. }
 \label{fig-dynx-error}
 \end{figure} 
%
%
%
%

\subsection{Empirical Phase Transition of $n\Delta$}

In this section, we numerically evaluate our algorithms' dependency on $n\Delta$. We compare our algorithms with the greedy algorithm and the OMP algorithm in Figure~\ref{fig-dependency}. In the experiment, we set different $\dynx$ by sampling the angle of $x_i$ i.i.d. from $\text{Unif }[0,2\pi)$  and the amplitude of $x_i$ i.i.d. from $1+10^{\text{Unif}[0,v]}$ with different $v$. Larger $v$ corresponds to larger $\dynx$ in expectation. We denote the results of Sliding-OMP with Dirichlet kernel, Fej\'er kernel, Squared Fej\'er kernel   by D-SOMP, F-SOMP, SF-SOMP. 
  \begin{figure}[H]
 \centering
    \begin{subfigure}[b]{0.3\textwidth}
 \centering
 \hspace{0cm}\vspace{0.1cm}	\includegraphics[width=\textwidth]{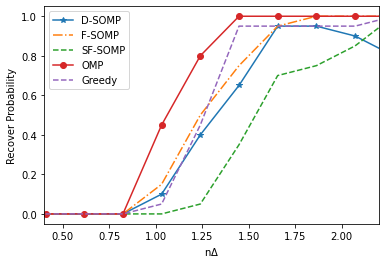}
 \caption{$v = 0.5$}
 \end{subfigure}
  \begin{subfigure}[b]{0.3\textwidth}
 \centering
 \hspace{0cm}\includegraphics[width = \textwidth]{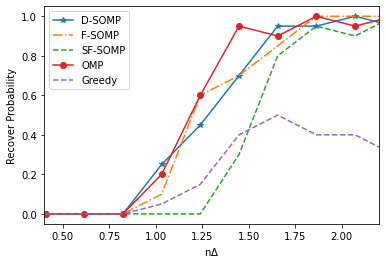}
 \caption{$v = 1$}
 \end{subfigure}
   \begin{subfigure}[b]{0.3\textwidth}
 \centering
 \hspace{0cm}\vspace{0.1cm}	\includegraphics[width=\textwidth]{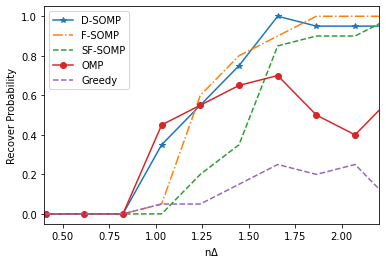}
 \caption{$v = 1.5$}
 \end{subfigure}
 \caption{
 Performance of algorithms under different $\dynx$,  with the same measurement number,  $n$,$\Delta$,$\tau$ and grid discretization in Figure~\ref{fig-omp-precond}.  }\label{fig-dependency}
 \end{figure}
 
 In the small $\dynx$ regime, the OMP and greedy algorithm have smaller phase-transition points, since the sliding guarantee may need a larger absolute constant in dependency on  $n\Delta$. When $\dynx$ is large, the Sliding-OMP algorithm is more stable than the OMP algorithm and the Greedy algorithm.

\section{Other Related Works \& Discussion}\label{sec-discuss}

\paragraph{Continuous OMP with refinements} Besides the Sliding-OMP algorithm in our work, there exist other works trying to add refinement steps to improve the performance of OMP over continuous dictionaries \cite{mamandipoor2016newtonized, marzi2016compressive, weiland2018omp, liang2021direction}. These works mainly design numerical experiments to show the efficiency of the algorithms and the paper \cite{mamandipoor2016newtonized} provides a rough convergence guarantee of their algorithm. Instead of iteratively refining in Sliding-OMP, other one-shot refinement with similar accuracy is to be explored.

\paragraph{Compressed Sensing Off-the-grid} Our result with incomplete measurements is related to the compressed sensing off-the-grid literature \cite{Tang2013}, which uses SDP-based method to estimate all coordinates of $\bm y$, under provable smaller separation gap. It would be much applicable to integrate these two methods by leveraging on their advatages.

There are also several other new topics to explore. One interesting direction is to pursue better trade-off mentioned in Section~\ref{sec-precondition-experiment} by extending  Proposition~\ref{prop-squared-fejer-kernel} to concentration kernel classes other than \eqref{eq-general-concentration}, e.g. the $G_\alpha$-cutoff kernel\cite{Tadmor2007}  or the DPSWF\cite{Eftekhari2015}. Another practical topic is to consider stable recovery in noisy case, which is much more challenging.

\bibliography{ref.bib}

\appendix
\section{Calculating the Gradient}
Denoting $\mv(\bomega): = \W(\bomega) \y$ and $A(\bomega) = \W(\bomega)^*\W(\bomega) ,$
we get now \begin{align*}
	 \partial_{\omega_m} 2\mathcal{L}_t (\bomega) & = -\partial_{\omega_m} \big( \mv(\bomega)^* A^{-1}(\bomega) \mv(\bomega) \big)\\
	 & = - 2\text{Re}\big[ \big(\partial_{\omega_m} \mv(\bomega) \big)^* A^{-1}(\bomega)\mv(\bomega)  \big] - \mv(\bomega)^* \partial_{\omega_m} (A^{-1}(\bomega) \big) \mv(\bomega) 
\end{align*}
now we introduce the new notation $ \bm{z}(\tau) = \dfrac{d}{d\tau} \w (\tau), \mZ(\bomega) = \big(\mz(\omega_1),\dots,\mz(\omega_t)\big) , \bm{W}(\bomega) = A^{-1}(\bomega)\mv(\bomega),$ then we have \begin{align*}
	\big[ \partial_{\omega_m}\mv(\bomega)\big]_{\ell} = \sum_{i=1} ^s \bar x_i \w^*(\tau_i) \partial_{\omega_m}  \w(\omega_\ell) = \delta_{lm} \sum_{i = 1}^s \bar x_i \w^*(\tau_i) \bm{z}(\omega_m).
\end{align*}
i.e. \begin{align*}
	 \partial_{\omega_m}\mv(\bomega) =  \sum_{i = 1}^s \bar{x}_i \w^*(\tau_i) \bm{z}(\omega_m) \cdot \mathbf{e}_m.
\end{align*}
And \begin{align*}
	 \partial_{\omega_m} (A^{-1}(\bomega)) = A^{-1} (\bomega)[ \partial_{\omega_m} A(\bomega) ] A^{-1}(\bomega)
\end{align*}
For $\partial_{\omega_m} A(\bomega),$ we have \begin{align*}
	 \partial_{\omega_m} A(\bomega) = \W(\bomega)^* \bm{z}(\omega_m) \otimes \e_m^*  + \bm{z}^*(\omega_m)\W(\bomega)  \otimes \e_m - 2\text{Re}[ \w (\omega_m)^*\bm{z}(\omega_m)] \e_m\e_m^*
\end{align*}
that leads to 
\begin{align*}
	-\mv(\bomega)^* \partial_{\omega_m} (A^{-1}(\bomega) \big)\mv(\bomega) &= \bm{W}(\bomega)^* \partial_{\omega_m} A(\bomega) \bm{W}(\bomega)\\
	& =  W(\bomega)^*[\W(\bomega)^* \mz(\omega_m) \otimes \e_m^*  + Z^*(\omega_m)\W(\bomega)  \otimes \e_m ] W(\bomega)  -2\text{Re}[ \w(\omega_m)^*\mz(\omega_m)] \lvert (W(\bomega))_m \rvert^2\\
	& = 2\text{Re}[\text{tr} \big( (\W(\bomega)^* \mz(\omega_m) \otimes \e_m^*  ) \underbrace{W(\bomega) W(\bomega)^* }_{= W(\bomega)^* \otimes W(\bomega) } \big) ]   -2\text{Re}[ \w(\omega_m)^*\mz(\omega_m)] \lvert (W(\bomega))_m \rvert^2 \\
	& = 2\text{Re}\big[ W(\bomega)^* \W(\bomega)^* \mz(\omega_m) \cdot (W(\bomega))_m \big]  -2\text{Re}[ \w(\omega_m)^*\mz(\omega_m)] \lvert (W(\bomega))_m \rvert^2.
\end{align*}
Combining these two terms lead to \begin{align*}
	\partial_{\omega_m} F(\bomega) & = 2\text{Re}\big[  [W(\bomega)^*\W(\bomega)^*\mz(\omega_m)-\y^*\mz(\omega_m) ]  W(\bomega)_m    -  \underbrace{\w(\omega_m)^*\mz(\omega_m)}_{ = 0} \lvert W(\bomega)_m\rvert^2\big] \\
	& =2 \text{Re}\big[ \y^*(\mP - \mI )\mz(\omega_m)\cdot (A^{-1}(\bomega) \W(\bomega)^*\y)_m  \big]
\end{align*}

In particular, we have $\hat{\x}(\bomega):=A(\bomega)^{-1} \W(\bomega)^*  \y$ is exactly the coefficient of best approximation of $\y$ at $\text{span}(\text{Col}(\W(\bomega)) ),$  thus we have \begin{align*}
	 \partial_{\omega_m} 2\mathcal{L}_t(\bomega) =  \text{Re}\big[  \y^*(\mP - \mI )\mZ \cdot \text{diag}(\hat \x)  \big]   
\end{align*}

\section{Proof of Results in Section~\ref{sec-approximate-localization}}
In the whole appendix, we set $c_0,c_1$ as following numerical constants and $c_2,c_3$ as following functions on $\varepsilon,\Delta$:
\begin{align*}c_0&=   \sum_{i=1}^\infty \dfrac{1}{(i-1/2)^4}+ \dfrac{1}{(i+1/2)^4} \leq 16.535, \\
 c_1  &= \sum_{i=1}^\infty \dfrac{2}{i^4}\leq 2.17, \\
c_2(\varepsilon,\Delta)&=  \sum_{i=1}^\infty  \dfrac{2}{(i- \varepsilon/\Delta)^4 }, \\
c_3(\varepsilon,\Delta)&=  \sum_{j = 1}^\infty \big[ \dfrac{1}{(j - \frac{2\varepsilon}{\Delta} )^4}+ \dfrac{1}{(j- \frac{\varepsilon}{\Delta})^4 }\big]\end{align*}
\subsection{Proof of Lemma~\ref{lem-J-bounds}}\label{appendix-proof-J-bound}
We first re-claim Lemma~2.3 with the exact formula of $\nu_1,\nu_2,\nu_3:$ 
 \begin{lemma}	As long as $n\Delta > 2$ and $\lvert \omega_i - \tau_{T(i)}\rvert \leq 1/2n,$ we have \begin{align}
	\inf_{\tau  \in \mS(\mT_t^c)}	\lvert J_{2,t}(\tau) \rvert &\geq   \max_{i\in \mT_t^c}  \lvert J_{2,t}(\tau) \rvert  \geq (1-\frac{c_1}{(n+2)^4\Delta^4} )\lvert x_{t+1
}\rvert, \\ 
	\sup_{\tau \notin \mS(\mT_t^c) }	\lvert J_{2,t}(\tau) \rvert &\leq  \lvert x_{t+1} \rvert \bigg(0.7+\underbrace{ \dfrac{c_0}{(n+2)^4\Delta^4}+\big( 1- \dfrac{c_3(\varepsilon_t,\Delta)}{n^4\Delta^4}\big)^{-1} \dfrac{c_0 c_1}{(n+2)^4\Delta^4 } }_{\nu_2}  \bigg),  \\
	\sup_{-1/2\leq \tau \leq 1/2 }	\lvert J_{1,t}(\tau) \rvert &\leq 	 \underbrace{\bigg[ 1+ (1-\dfrac{c_3(\varepsilon_t,\Delta)}{n^4\Delta^4})^{-1} (1+\dfrac{c_0}{n+2)^4\Delta^4}) \bigg](\dfrac{\pi^2}{3} +\pi^2 \dfrac{c_2(\varepsilon_t,\Delta)}{(n\Delta)^4})}_{\nu_1} \cdot n\varepsilon_{\x,t}. \\
	\sup_{-1/2\leq \tau \leq 1/2 }	\lvert J_{3,t}(\tau) \rvert &\leq \lvert x_{t+1}\rvert   \underbrace{\big( 1- \dfrac{c_3(\varepsilon_t,\Delta)}{n^4\Delta^4}\big)^{-1}   \bigg(    \dfrac{c_0+c_1}{(n+2)^4\Delta^4 }    \bigg) }_{\nu_3}	\end{align}
\end{lemma}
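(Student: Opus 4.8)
Everything rests on two facts: after the preconditioning step $\w(\tau_1)^{*}\w(\tau_2)=K_n(\tau_1-\tau_2)$ with $K_n$ the squared Fej\'er kernel, and this $K_n$ together with $K_n',K_n''$ obeys the pointwise estimates of Proposition~\ref{prop-squared-fejer-kernel}. The idea is to write the two projectors explicitly and reduce $J_{1,t},J_{2,t},J_{3,t}$ to finite sums of kernel values and kernel increments. Put $G:=\W(\btau(\mT_t))^{*}\W(\btau(\mT_t))$ and $\tilde G:=\W(\bomega_{\leq t})^{*}\W(\bomega_{\leq t})$; both have unit diagonal, and their off-diagonal entries are values of $K_n$ at arguments of modulus at least $\Delta$ (resp.\ $\Delta-2\varepsilon_t$), so $|K_n(\tau)|\le\min\{0.7,((n+2)|\tau|)^{-4}\}$ forces $\|G-\mI\|,\|\tilde G-\mI\|\lesssim(n\Delta)^{-4}$ in the maximum absolute row sum norm. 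Under the standing hypotheses $n\Delta>C$ and $|\omega_i-\tau_{T(i)}|\le\tfrac1{2n}$ (so $\varepsilon_t<\Delta$ and the $c_2,c_3$ sums converge), a Neumann series makes $G,\tilde G$ invertible with inverse norms $1+O((n\Delta)^{-4})$; tracking the constants gives the bound $(1-c_3(\varepsilon_t,\Delta)/(n^4\Delta^4))^{-1}$, which is the origin of every such factor in $\nu_1,\nu_2,\nu_3$. With $\bm P_t=\W(\btau(\mT_t))G^{-1}\W(\btau(\mT_t))^{*}$ and $\tilde{\bm P}_t=\W(\bomega_{\leq t})\tilde G^{-1}\W(\bomega_{\leq t})^{*}$, any term $\w(\tau)^{*}\bm P_t\bm u$ is a bilinear form $\sum_{j,l}K_n(\tau-\tau_j)(G^{-1})_{jl}\,\w(\tau_l)^{*}\bm u$, and the device used throughout is to bound it by $\bigl(\sum_j|K_n(\tau-\tau_j)|\bigr)\,\|G^{-1}\|\,\max_l|\w(\tau_l)^{*}\bm u|$: the $\ell_1$ factor is only $1+c_0(n\Delta)^{-4}$ or less (the kernel tail is summable; $c_0$ comes from anchor spacings $\ge\Delta/2,\tfrac32\Delta,\dots$ and $c_1$ from spacings $\ge\Delta,2\Delta,\dots$), so no factor $t$ or $\sqrt t$ ever appears.

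For $J_{2,t}(\tau)=\sum_{k\in\mT_t^c}x_kK_n(\tau-\tau_k)-\w(\tau)^{*}\bm P_t\w(\mT_t^c)\x(\mT_t^c)$: at $\tau=\tau_{t+1}$ the first sum has dominant term $x_{t+1}K_n(0)=x_{t+1}$, the remaining $\mT_t^c$-spikes contribute at most $c_1|x_{t+1}|/((n+2)\Delta)^4$, and $\W(\btau(\mT_t))^{*}\w(\mT_t^c)\x(\mT_t^c)$ has $\ell_\infty$-norm $\lesssim(n\Delta)^{-4}|x_{t+1}|$ because $\mT_t$ is $\Delta$-separated from $\mT_t^c$; this yields the lower bound $(1-c_1/((n+2)\Delta)^4)|x_{t+1}|$, which persists on $\mS(\tau_{t+1})$ by the near-peak behaviour of $K_n$. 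For $\tau\notin\mS(\mT_t^c)$ only the unique nearest spike changes, namely $|K_n(\tau-\tau_{i(\tau)})|\le0.7$ instead of $=1$, while every other spike sits at distance $\ge\Delta/2$; collecting the off-peak tails and the $\bm P_t$-term gives the upper bound $|x_{t+1}|(0.7+\nu_2)$ with $\nu_2=O((n\Delta)^{-4})$ of the stated shape.

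The delicate one is $J_{1,t}$. The key point is $(\mI-\tilde{\bm P}_t)\W(\bomega_{\leq t})\bm a=0$ for all $\bm a$, so (assuming $T$ is the identity on $\mT_t$, WLOG) we subtract $\W(\bomega_{\leq t})\x(\mT_t)$ and write $J_{1,t}(\tau)=\sum_{i\in\mT_t}x_i\,\w(\tau)^{*}(\mI-\tilde{\bm P}_t)(\w(\tau_i)-\w(\omega_i))$. Each $\w(\tau_i)-\w(\omega_i)$ has weighted size $|x_i||\tau_i-\omega_i|\le\varepsilon_{\x,t}$, and in the two resulting pieces we apply the mean value theorem to $\w(\tau)^{*}(\w(\tau_i)-\w(\omega_i))=K_n(\tau-\tau_i)-K_n(\tau-\omega_i)$ and to $\w(\omega_j)^{*}(\w(\tau_i)-\w(\omega_i))=K_n(\omega_j-\tau_i)-K_n(\omega_j-\omega_i)$: for the single index $i$ whose anchor lies within $O(1/n)$ of $\tau$ (resp.\ of $\omega_j$) we invoke $|K_n'|\le\tfrac{\pi^2}3(n+2)$, producing the leading $\tfrac{\pi^2}3\,n\varepsilon_{\x,t}$; for every other index $|K_n'(u)|\le\pi^2((n+2)^3u^4)^{-1}$ with $|u|\ge m\Delta-O(\varepsilon_t)$, and these sum to $\pi^2c_2(\varepsilon_t,\Delta)/((n+2)^3\Delta^4)$ (resp.\ $c_3$), negligible next to $n\varepsilon_{\x,t}$. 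Feeding $\max_j\bigl|\sum_i x_i\,\w(\omega_j)^{*}(\w(\tau_i)-\w(\omega_i))\bigr|\lesssim n\varepsilon_{\x,t}$ through $\tilde G^{-1}$ and the row vector $[K_n(\tau-\omega_j)]_j$ (with $\ell_1$-norm $\le1+c_0(n\Delta)^{-4}$) reproduces, up to the constant bookkeeping, the bracketed prefactor of $\nu_1$. Finally, $J_{3,t}(\tau)=\w(\tau)^{*}(\bm P_t-\tilde{\bm P}_t)\w(\mT_t^c)\x(\mT_t^c)$ is handled by expanding both projectors as above: $\W(\btau(\mT_t))^{*}\w(\mT_t^c)\x(\mT_t^c)$ and $\W(\bomega_{\leq t})^{*}\w(\mT_t^c)\x(\mT_t^c)$ each have $\ell_\infty$-norm $\lesssim(n\Delta)^{-4}|x_{t+1}|$ (the spikes of $\mT_t^c$ are at distance $\ge\Delta$ from those of $\btau(\mT_t)$ and $\ge\Delta-\varepsilon_t$ from $\bomega_{\leq t}$), and contracting with the inverse Gram matrices and the row vectors $[K_n(\tau-\tau_j)]_j$, $[K_n(\tau-\omega_j)]_j$ keeps everything $\lesssim(n\Delta)^{-4}|x_{t+1}|$, giving $\nu_3$ with its constant emerging from the two summable tails.

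The main obstacle is exactly the $J_{1,t}$ estimate: getting it proportional to $n\varepsilon_{\x,t}$ --- rather than the naive $t\,n\varepsilon_{\x,t}$ from a term-by-term triangle inequality, or $n^2\varepsilon_t$ from a Cauchy--Schwarz that loses the kernel decay --- is what forces the ``subtract $\W(\bomega_{\leq t})\x(\mT_t)$'' reformulation, the $\ell_1/\ell_\infty$ splitting of the projector, and a near/far dichotomy in every mean-value step. The accompanying nuisance is keeping apart the weighted magnitude $|x_i||\tau_i-\omega_i|\le\varepsilon_{\x,t}$, which sets the overall scale, from the unweighted displacement $|\tau_i-\omega_i|\le\varepsilon_t$, which only shifts the kernel arguments and thereby enters the constants $c_2(\varepsilon_t,\Delta),c_3(\varepsilon_t,\Delta)$. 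Once $J_{1,t}$ is under control, the other three bounds drop out of the same projector expansion together with the relevant pointwise kernel inequality --- $K_n(0)=1$ for the lower bound on $J_{2,t}$, the $0.7$ cap for its off-$\mS$ upper bound, and the quartic tail for $J_{3,t}$.
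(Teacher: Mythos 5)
Your plan is correct and matches the paper's proof in all essentials: you isolate the peak kernel value for $J_{2,t}$, expand the projectors through their Gram inverses (which the paper phrases via coefficient vectors $b_k,\tilde b_k$ and its auxiliary Lemma on $\lVert\bm b\rVert_1$, equivalent to your $\ell_1/\ell_\infty$ Neumann-series bookkeeping), exploit $(\mI-\tilde{\bm P}_t)\w(\omega_i)=0$ to rewrite $J_{1,t}$ as a sum over the increments $\w(\tau_i)-\w(\omega_i)$, apply the mean-value theorem with the near/far split on $K_n'$, and keep the weighted error $\varepsilon_{\x,t}$ separate from the unweighted displacement $\varepsilon_t$. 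The only small presentational divergence is that the paper derives the $J_{2,t}$ lower bound by invoking a diagonally-dominant-inverse norm lemma on $\w(\mT_t^c)^*\w(\mT_t^c)$, while you split off $K_n(0)=1$ directly; both work, and you also elide (though clearly understand) the paper's explicit two-case split on whether the spike nearest to $\tau\notin\mS(\mT_t^c)$ lies in $\mT_t$ or in $\mT_t^c$.
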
  
\begin{proof}[Proof of the Lemma~\ref{lem-J-bounds}]

1. For $J_{2,t}(\tau)$ , we have by Lemma~5 in [Cai2011], \begin{align*}
	 \max_{i\in \mT_t^c}  \lvert J_{2,t}(\tau) \rvert &= 	\max_{i \in \mathcal{T}_t^c}\lvert  \w(\tau_i )^* (\bm{I}-\mP_t) \y \rvert\\
	  & =\lVert  \w(\mathcal{T}_t^c )^* (\bm{I}-\mP_t) \w(\mathcal{T}_t^c ) \bm x(\mathcal{T}_t^c) \rVert_\infty \\
	&\geq  \big[\big(\lVert \w(\mT_t^c)^*\w(\mT_t^c)\big]^{-1}  \rVert_{\infty,\infty}\big)^{-1}  \lVert  \bm{x} (\mT_t^c)  \rVert_\infty . 
	\end{align*}
Now noticing that for a diagonal-dominate symmetric matrix $A$ with $\Delta_i,$ we have $ \lVert A^{-1}\rVert_{\infty,\infty} \leq  \max_i {1}/{\Delta_i(A)}. $
 On the other hand, we have \begin{align*}
 	 \Delta_i( \w(\mT_t^c)^*\w(\mT_t^c)) = 1 - \sum_{k\in \mT_t^c} K_n(\tau_i - \tau_k) \geq 1-\dfrac{c_1}{(n+2)^4\Delta^4} \quad \forall i \in [t],
 \end{align*}
 thus  $ \max_{i\in \mT_t^c}  \lvert J_{2,t}(\tau) \rvert  \geq (1-\frac{c_1}{(n+2)^4\Delta^4} )\lvert x_{t+1
}\rvert.$\\  
For any $\tau \notin \mS(\mT_t^c), $ we have denoting $\mP_t\w(\tau) = \sum_{i\in \mT_t} b_i(\tau) \w(\tau_i),$ then \\
\textbf{1. If $T_{t+1}(i) \in \mT_{t}^c$}:  \begin{align*}
	\lvert  J_{2,t}(\tau)   \rvert & \leq \lvert \sum_{i\in \mT_t^c} x_i \w(\tau)^*\w(\tau_i)\rvert +  \lvert \sum_{i\in \mT_t^c} x_i \sum_{k\in \mT_t} b_k(\tau)\w(\tau_k)^*\w(\tau_i)\rvert \\
	&\leq \lvert x_{t+1} \rvert \lvert \sum_{i\in \mT_t^c} K_n(\tau-\tau_i)\rvert +  \big(\sum_{k\in \mT_t} \lvert b_k(\tau)   \rvert\big) \cdot \max_{k\in \mT_t} \lvert  \sum_{i\in \mT_t^c}x_i \w(\tau_k)^* \w(\tau_i)  \rvert\\
	&\leq \lvert x_{t+1} \rvert \bigg( 0.7 + \dfrac{c_0}{(n+2)^4\Delta^4 } \bigg)+ \lvert x_{t+1}\rvert \big( 1- \dfrac{c_3(\varepsilon_t, \Delta)}{n^4\Delta^4}\big)^{-1} \bigg(  \dfrac{c_1}{(n+2)^4\Delta^4}\bigg)\cdot \bigg( \dfrac{c_0}{(n+2)^4\Delta^4 } \bigg) .\end{align*} 
\textbf{2. Otherwise $T_{t+1}(i) \in \mT_t$:} 
 \begin{align*}
	\lvert  J_{2,t}(\tau)   \rvert & \leq \lvert \sum_{i\in \mT_t^c} x_i \w(\tau)^*\w(\tau_i)\rvert +  \lvert \sum_{i\in \mT_t^c} x_i \sum_{k\in \mT_t} b_k(\tau)\w(\tau_k)^*\w(\tau_i)\rvert \\
	&\leq \lvert x_{t+1} \rvert \lvert \sum_{i\in \mT_t^c} K_n(\tau-\tau_i)\rvert +  \big(\sum_{k\in \mT_t} \lvert b_k(\tau)   \rvert\big) \cdot  \max_{k\in \mT_t} \lvert  \sum_{i\in \mT_t^c}x_i \w(\tau_k)^* \w(\tau_i)  \rvert\\
	&\leq \lvert x_{t+1} \rvert \bigg( \dfrac{c_0}{(n+2)^4\Delta^4 } \bigg)+ \lvert x_{t+1}\rvert \big( 1- \dfrac{c_3(\varepsilon_t,\Delta)}{n^4\Delta^4}\big)^{-1} \bigg(  \dfrac{c_1}{(n+2)^4\Delta^4}\bigg)\cdot \bigg( \dfrac{c_0}{(n+2)^4\Delta^4 } + 0.7 \bigg) \\
	& \leq  \lvert x_{t+1} \rvert \bigg(0.7+\underbrace{ \dfrac{c_0}{(n+2)^4\Delta^4}+\big( 1- \dfrac{c_3(\varepsilon_t,\Delta)}{n^4\Delta^4}\big)^{-1} \dfrac{c_0 c_1}{(n+2)^4\Delta^4 } }_{\nu_2}  \bigg),  \end{align*} 
when $n^4\Delta^4 > \frac{10c_0}{3}$ . Where we have used the following lemma to bound the summation of $b_k$: \begin{lemma}\label{lem-b-bound-full}
	For $\bm{F}_t = [\w(\omega_1),\dots,\w(\omega_t) ]$ and $\bm{v}$ a vector in $\mC^n,$ we have $\bm{P}_t \bm{v} = \sum_{i=1}^t b_i  \w(\omega_i),$ with \begin{align*}
		 \lVert \bm{b}\rVert_\infty &\leq \big( 1- \dfrac{c_3(\varepsilon_t,\Delta)}{n^4\Delta^4}\big)^{-1} \lVert  \bm{F}_t \bm{v} \rVert_\infty \\
		 \lVert \bm{b}\rVert_1 &\leq \big( 1- \dfrac{c_3(\varepsilon_t,\Delta)}{n^4\Delta^4}\big)^{-1} \lVert  \bm{F}_t \bm{v} \rVert_1 , 
	\end{align*}
	 as long as $\varepsilon_t \leq \dfrac{1}{2n+4}.$
\end{lemma}
\noindent and the fact
 $$ \lvert \sum_{i\in \mT_t}b_i(\tau) \rvert \leq \big( 1- \dfrac{c_3(\varepsilon_t,\Delta)}{n^4\Delta^4}\big)^{-1} \lvert \sum_{i\in \mT_t} K_n(\tau-\tau_i)\rvert. $$ \\
Then the bound for $J_2$ holds by the bound when $T_{t+1}(i) \in \mT_t$ is larger than its bound when $T_{t+1}(i)\in \mT_t^c$.

\noindent 2. For $J_{3,t}(\tau),$ we have for any $\tau \in [0,1),$ denoting $\tilde{\mP}_t\w(\tau) = \sum_{i\in \mT_t} \tilde{b}_i(\tau)\w(\omega_i),$ then similar as in previous argument, we have \begin{align*}
	\lvert J_{3,t}(\tau)\rvert & \leq \lvert \sum_{i\in \mT_t^c} x_i \sum_{k\in \mT_t} \tilde{b}_k(\tau)\w(\omega_k)^*\w(\tau_i)\rvert + \lvert \sum_{i\in \mT_t^c} x_i \sum_{k\in \mT_t} b_k(\tau)\w(\tau_k)^*\w(\tau_i)\rvert\\
	&\leq \lvert x_{t+1}\rvert \max_{i\in \mT_t^c} \sum_{k\in \mT_t} \lvert\tilde{b}_k(\tau) K_n(\tau_i - \omega_k) \rvert + \lvert x_{t+1}\rvert \max_{i\in \mT_t^c} \sum_{k\in \mT_t} \lvert {b}_k(\tau) K_n(\tau_i - \tau_k) \rvert     \\
	&\leq \lvert x_{t+1}\rvert   \underbrace{\big( 1- \dfrac{c_3(\varepsilon_t,\Delta)}{n^4\Delta^4}\big)^{-1}   \bigg(    \dfrac{c_0+c_1}{(n+2)^4\Delta^4 }    \bigg) }_{\nu_3}
\end{align*} 
3.
For $J_{1,t}(\tau),$ for all $\tau \in [0,1),$ we have \begin{align*}
	\lvert J_{1,t}(\tau) \rvert &=\lvert \sum_{i\in \mT_t} x_i \w(\tau)^*(\mI-\tilde{\mP}_t) (\w(\tau_i) - \w(\omega_i)  ) \rvert \\
	&\leq  \big\lvert \sum_{i\in \mT_t} x_i [K_n(\tau-\tau_i)-K_n(\tau-\omega_i)]  \big \rvert  +\lvert \sum_{k\in\mT_t} \tilde{b}_k(\tau) \rvert \max_{k\in \mT_t} \lvert \sum_{i\in \mT_t} x_iK_n(\omega_k-\tau_i) - K_n(\omega_k - \omega_i)\rvert \\
	&\leq  \bigg(1+ \sum_{k \in \mT_t}\lvert \tilde{b}_k(\tau)\rvert \bigg) \sup_{\tau \in [0,1)} \big \lvert \sum_{i\in \mT_t}  x_i[ K_n(\tau-\tau_i) - K_n(\tau-\omega_i)  ] \big \rvert\\
	&\leq  \bigg(1+ \sum_{k \in \mT_t}\lvert \tilde{b}_k(\tau)\rvert \bigg) \sup_{\tau \in [0,1)}  \sum_{i\in \mT_t}  \big \lvert x_i(\tau_i - \omega_i) K'_n(\tau- \xi_i)   \big \rvert,
\end{align*}
Noticing that \begin{align*}
	 \sum_{k\in \mT_t} \lvert \tilde{b}_k(\tau) \rvert & \leq \big( 1- \dfrac{c_3(\varepsilon_t,\Delta)}{n^4\Delta^4}\big)^{-1}  \sum_{i\in \mT_t}  \lvert K_n(\tau-\omega_i)\rvert \\
	 &\leq  \big( 1- \dfrac{c_3(\varepsilon_t,\Delta)}{n^4\Delta^4}\big)^{-1} \sum_{i\in \mT_t} \dfrac{1}{\max\{1, (n+2)^4(\tau-\omega_i)^4 \}}\\
	 & \leq (1-\dfrac{c_3(\varepsilon_t,\Delta)}{n^4\Delta^4} )^{-1} \big(1+ \dfrac{c_0}{(n+2)^4\Delta^4} \big)
\end{align*}
On the other hand, we have denoting $\mT_t(\tau):=\{i\in \mT_t, \lvert \tau_i - \tau\rvert < \Delta \}$, then by $$\lvert \mT_t(\tau) \rvert\leq 2,\quad \lvert  K_n'(\tau)\rvert \leq \min\{ \dfrac{\pi^2}{3}(n+2)  , \cdot \dfrac{\pi^2}{(n+2)^3\lvert \tau \rvert ^4 } \}  $$ \begin{align*}
	 \lvert \sum_{i\in \mT_t} x_i(\tau_i-\omega_i) K_n'(\tau-\xi_i)\rvert &\leq   \max_{i\in \mT_t}\lvert x_i(\tau_i-\omega_i)\rvert \cdot \big( \sum_{i \in \mT_t(\tau)}  + \sum_{i\notin  \mT_t(\tau)} \lvert K_n'(\tau - \tau_i)\rvert    \big)\\
	 &\leq \max_{i\in \mT_t} \lvert x_i(\omega_i -\tau_i) \rvert \cdot \big( \dfrac{\pi^2}{3} (n+2)  + \pi^2 (n+2) (\sum_{i\neq i(\tau)}\dfrac{1}{n^4(\xi_i - \tau )^4}  )  \big) \\
	 &\leq \max_{i\in \mT_t} \lvert n x_i(\omega_i - \tau_i) \rvert \cdot \big( \dfrac{\pi^2}{3}+ \pi^2  \sum_{i=1}^\infty  \dfrac{2}{n^4(i\Delta- \varepsilon_t)^4 }   \big)\\
	 &\leq n{\varepsilon}_t \bigg(\frac{\pi^2}{3} +\pi^2 \dfrac{c_2(\varepsilon_t,\Delta)}{(n\Delta)^4}\bigg) .
\end{align*}  
That leads to \begin{align*}
	\lvert J_{1,t}(\tau)\rvert &\leq \underbrace{\bigg[ 1+ (1-\dfrac{c_3(\varepsilon_t,\Delta)}{n^4\Delta^4})^{-1} (1+\dfrac{c_0}{n+2)^4\Delta^4}) \bigg](\dfrac{\pi^2}{3} +\pi^2 \dfrac{c_2(\varepsilon_t,\Delta)}{(n\Delta)^4})}_{\nu_1} \cdot n\varepsilon_{\x,t}. 
\end{align*}
\end{proof}

\subsection{Proof of Lemma~\ref{lem-J-small-bound}}\label{appendix-J-small-bound}
\begin{lemma} We have \begin{align*}
	\sup_{\tau \in \mS(\mT_t^c)} \lvert J_{1,t}(\tau) \rvert &\leq n\varepsilon_{\x,t} \cdot  \underbrace{ \dfrac{c_2(\varepsilon_t+\frac{1}{2n},\Delta)}{(n+2)^4\Delta^4} \bigg[ \pi^2 + \big(\dfrac{\pi^2}{3} + \dfrac{\pi^2c_2(\varepsilon_t,\Delta)}{(n\Delta)^4} \big)\cdot \big( 1- \dfrac{c_3(\varepsilon_t,\Delta)}{n^4\Delta^4}\big)^{-1}  \big) \bigg] }_{\kappa_1} . \\
	\sup_{\tau \in \mS(\mT_t^c)} \lvert J_{3,t}(\tau) \rvert &\leq  \lvert x_{t+1}\rvert  \cdot \underbrace{\big( 1- \dfrac{c_3(\varepsilon_t ,\Delta)}{n^4\Delta^4}\big)^{-1}   \bigg(    \dfrac{c_2(\varepsilon_t,\Delta) c_2(\varepsilon_t+\frac{1}{2n},\Delta)}{(n+2)^8\Delta^8 } + \dfrac{c_2(\frac{1}{2n},\Delta) c_1}{(n+2)^8\Delta^8 }  \bigg)}_{\kappa_3} .
\end{align*}
	And \begin{align*}
		\sup_{\omega\in \mS(\tau_i)} \lvert  J_{2,t}(\omega) \rvert &\leq \lvert x_m \rvert + \lvert x_{t+1} \rvert \underbrace{\big(\dfrac{c_2(\frac{1}{2n},\Delta)}{(n+2)^4\Delta^4 } +\big(1-\dfrac{c_3(\varepsilon_t,\Delta)}{(n+2)^4\Delta^4}  \big)^{-1} \dfrac{c_1 c_2(\frac{1}{2n}+ \varepsilon_t,\Delta) }{(n+2)^8\Delta^8 }  \big)}_{\kappa_{21}}, \\ 
		 \lvert J_{2,t}(\tau_i) \rvert &\geq  \lvert x_{i}\rvert -\lvert x_{t+1}\rvert \underbrace{ \big[1+ \big( 1- \dfrac{c_3(\varepsilon_t,\Delta) }{n^4\Delta^4}\big)^{-1} \dfrac{c_2(\varepsilon_t,\Delta)}{(n+2)^4\Delta^4}   \big]  \dfrac{c_1}{(n+2)^4\Delta^4}}_{\kappa_{22}} .
	\end{align*}
	when $\tau_i \in  \mT_t^c.$
\end{lemma}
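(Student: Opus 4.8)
The plan is to re-run the estimates from the proof of Lemma~\ref{lem-J-bounds}, but now specialized to an evaluation point lying in $\mS(\mT_t^c)$, i.e.\ within $\tfrac1{2n+4}$ of some true frequency $\tau_m$ with $m\in\mT_t^c$. Under the standing hypotheses $\varepsilon_t\le\tfrac1{2n+4}$ (so $\lvert\omega_k-\tau_k\rvert\le\tfrac1{2n+4}$ for $k\in\mT_t$) and $n\Delta>C$, such a point is at distance at least $\Delta-\tfrac1{2n+4}$ from every $\tau_k$ with $k\in\mT_t$, and at least $\Delta-\tfrac1{2n+4}-\varepsilon_t$ from every $\omega_k$ with $k\in\mT_t$; in particular all these distances exceed $\tfrac1{2n+4}$, so the polynomial tail bounds $\lvert K_n(t)\rvert\le((n+2)t)^{-4}$ and $\lvert K_n'(t)\rvert\le\pi^2(n+2)^{-3}t^{-4}$ of Proposition~\ref{prop-squared-fejer-kernel} apply to every surviving term. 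This is exactly what removes the ``near-peak'' contributions ($K_n\approx1$, $K_n'\approx\tfrac{\pi^2}{3}(n+2)$) that forced the $O(1)$, resp.\ $O((n\Delta)^{-4})$, bounds in Lemma~\ref{lem-J-bounds}; in their place one gets a tail sum over the $\Delta$-separated $\mT_t$-grid of the form $\sum_{j\ge1}2(j\Delta-\varepsilon_t-\tfrac1{2n})^{-4}\le c_2(\varepsilon_t+\tfrac1{2n},\Delta)\,\Delta^{-4}$ (or with $\tfrac1{2n}$ in place of $\varepsilon_t+\tfrac1{2n}$ in the sums involving only exact frequencies), each such sum contributing an extra factor $\asymp(n\Delta)^{-4}$ relative to Lemma~\ref{lem-J-bounds}.

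For $J_{1,t}$: using $(\mI-\tilde{\mP}_t)\w(\omega_i)=0$ exactly as in Lemma~\ref{lem-J-bounds}, write $J_{1,t}(\tau)=\sum_{i\in\mT_t}x_i[K_n(\tau-\tau_i)-K_n(\tau-\omega_i)]-\sum_{i\in\mT_t}x_i\sum_{k\in\mT_t}\overline{\tilde{b}_k(\tau)}[K_n(\omega_k-\tau_i)-K_n(\omega_k-\omega_i)]$. The first sum is bounded by the mean value theorem together with the tail bound on $K_n'$ at the points $\tau-\xi_i$, all of modulus $\ge j\Delta-\varepsilon_t-\tfrac1{2n}$, producing a bound of order $n\varepsilon_{\x,t}\,c_2(\varepsilon_t+\tfrac1{2n},\Delta)(n\Delta)^{-4}$. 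For the second sum, Lemma~\ref{lem-b-bound-full} gives $\sum_k\lvert\tilde{b}_k(\tau)\rvert\le(1-\tfrac{c_3(\varepsilon_t,\Delta)}{n^4\Delta^4})^{-1}\sum_k\lvert K_n(\tau-\omega_k)\rvert\lesssim(1-\tfrac{c_3(\varepsilon_t,\Delta)}{n^4\Delta^4})^{-1}c_2(\varepsilon_t+\tfrac1{2n},\Delta)(n+2)^{-4}\Delta^{-4}$, while the inner $\max_k$ is bounded by the same $\sup_\tau$-estimate already carried out in Lemma~\ref{lem-J-bounds}, namely $n\varepsilon_{\x,t}(\tfrac{\pi^2}{3}+\pi^2 c_2(\varepsilon_t,\Delta)(n\Delta)^{-4})$; collecting the two pieces reproduces $\kappa_1$.

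The bounds for $J_{3,t}$ and $J_{2,t}$ follow the same idea. Writing $\mP_t\w(\tau)=\sum_k b_k(\tau)\w(\tau_k)$ and $\tilde{\mP}_t\w(\tau)=\sum_k\tilde{b}_k(\tau)\w(\omega_k)$, one has $J_{3,t}(\tau)=\sum_{i\in\mT_t^c}x_i[\sum_k\overline{\tilde{b}_k(\tau)}K_n(\omega_k-\tau_i)-\sum_k\overline{b_k(\tau)}K_n(\tau_k-\tau_i)]$; here both $\lVert b(\tau)\rVert_1$ and $\lVert\tilde{b}(\tau)\rVert_1$ are $O(c_2(\cdot,\Delta)(n\Delta)^{-4})$ by Lemma~\ref{lem-b-bound-full} (again because $\tau\in\mS(\tau_m)$ is far from every $\tau_k,\omega_k$ with $k\in\mT_t$), and each $\sum_k\lvert K_n(\omega_k-\tau_i)\rvert$, $\sum_k\lvert K_n(\tau_k-\tau_i)\rvert$ is likewise $O(c_\bullet(n\Delta)^{-4})$ for $\tau_i\in\mT_t^c$, so the product of the two $(n\Delta)^{-4}$ factors (after $\lvert x_i\rvert\le\lvert x_{t+1}\rvert$) gives the $(n\Delta)^{-8}$-type bound $\kappa_3$. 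For $J_{2,t}$, split off the diagonal term: $J_{2,t}(\tau)=x_m\w(\tau)^*(\mI-\mP_t)\w(\tau_m)+\sum_{\ell\in\mT_t^c,\,\ell\ne m}x_\ell\,\w(\tau)^*(\mI-\mP_t)\w(\tau_\ell)$ with $m=i$. The diagonal term at $\tau=\tau_i$ equals $x_i(1-\lVert\mP_t\w(\tau_i)\rVert^2)$ with $\lVert\mP_t\w(\tau_i)\rVert^2\le\lVert b(\tau_i)\rVert_1\max_k\lvert K_n(\tau_k-\tau_i)\rvert=O((n\Delta)^{-8})$, hence it lies in $[\lvert x_i\rvert(1-O((n\Delta)^{-8})),\lvert x_i\rvert]$; for $\omega\in\mS(\tau_i)$ its modulus is at most $\lvert K_n(\omega-\tau_i)\rvert+\lvert\w(\omega)^*\mP_t\w(\tau_i)\rvert\le1+O((n\Delta)^{-8})$. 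Each off-diagonal term is bounded by $\lvert K_n(\tau-\tau_\ell)\rvert+\lvert\w(\tau)^*\mP_t\w(\tau_\ell)\rvert$, and summing over $\ell\ne m$ in $\mT_t^c$ yields $c_2(\tfrac1{2n},\Delta)(n+2)^{-4}\Delta^{-4}$ (the dominant $(n\Delta)^{-4}$ piece) plus an $(n\Delta)^{-8}$ correction coming through $\lVert b(\tau)\rVert_1$; multiplying by $\lvert x_{t+1}\rvert=\lVert\x(\mT_t^c)\rVert_\infty$ and regrouping reproduces $\kappa_{21}$ (supremum over $\mS(\tau_i)$) and $\kappa_{22}$ (value at $\tau_i$).

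The main difficulty is not conceptual --- the lemma is a bookkeeping refinement of Lemma~\ref{lem-J-bounds} --- but lies in keeping the geometric series honest. The points to watch are: (i) that the mean value points $\xi_i$ and the estimates $\omega_i$ all stay within $\varepsilon_t+\tfrac1{2n+4}$ of $\tau_i$, so that distances from a point of $\mS(\mT_t^c)$ to the $\mT_t$-grid are $\ge j\Delta-(\varepsilon_t+\tfrac1{2n+4})$ and the tail sums come out with precisely the constants $c_2(\varepsilon_t+\tfrac1{2n},\Delta)$, $c_2(\tfrac1{2n},\Delta)$, $c_1$ and $c_3(\varepsilon_t,\Delta)$ appearing in $\kappa_1,\kappa_3,\kappa_{21},\kappa_{22}$; (ii) that $n\Delta$ exceeds the appropriate absolute constant, both so that all these distances exceed $\tfrac1{2n+4}$ (making the polynomial tail bounds of Proposition~\ref{prop-squared-fejer-kernel} applicable) and so that $c_3(\varepsilon_t,\Delta)/(n\Delta)^4<1$ (making Lemma~\ref{lem-b-bound-full} applicable); and (iii) that the compounded powers of $(n\Delta)^{-4}$ generated by successive uses of the $\mP_t$- and $\tilde{\mP}_t$-coefficient bounds are tallied correctly so that the final exponents and constants match the stated $\kappa$'s.
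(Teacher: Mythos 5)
Your argument is correct and follows essentially the same route as the paper's proof: the same decompositions of $J_{1,t},J_{2,t},J_{3,t}$ into a direct correlation term and projection terms, the same mean-value-theorem step for $J_{1,t}$, the same invocation of Lemma~\ref{lem-b-bound-full} to bound $\lVert b\rVert_1$ and $\lVert\tilde b\rVert_1$, and the same use of the polynomial tail bounds of Proposition~\ref{prop-squared-fejer-kernel} which become applicable precisely because a point of $\mS(\mT_t^c)$ sits at distance at least $j\Delta - \varepsilon_t - \tfrac{1}{2n+4}$ from the $\mT_t$-grid and its estimates, gaining the extra $(n\Delta)^{-4}$ factor over Lemma~\ref{lem-J-bounds}. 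Your phrasing of the $J_{2,t}$ estimate as a diagonal plus off-diagonal split is a cosmetic reorganization of the paper's bound (which absorbs the $i=m$ projection correction into the global $\sum_i x_i \w(\tau_m)^*\mP_t\w(\tau_i)$ term), not a genuinely different argument.
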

\begin{proof}
\textbf{1. $J_2$ lower bound:} When $\tau_m \in \mT_t^c,$ we have denoting $\mP_t\w(\tau) = \sum_{i\in \mT_t} b_i(\tau) \w(\tau_i),$ then 
  \begin{align*}
	\lvert  J_{2,t}(\tau_m)   \rvert &  \geq \lvert x_{m}\rvert -  \lvert \sum_{i\in \mT_t^c,i\neq m} x_i \w(\tau_m)^*\w(\tau_i)\rvert -  \lvert \sum_{i\in \mT_t^c} x_i \sum_{k\in \mT_t} b_k(\tau_m)\w(\tau_k)^*\w(\tau_i)\rvert\\
	&\geq\lvert x_m \rvert -\lvert x_{t+1}  \rvert   \sum_{i\neq m} \lvert K_n(\tau_m - \tau_i) \rvert  - \lvert \sum_{k\in \mT_t} b_k(\tau_m) \rvert  \cdot \lvert x_{t+1}\rvert \cdot \max_{k\in \mT_t} \lvert \sum_{i\in \mT_t^c} K_n(\tau_k-\tau_i) \rvert\\
	&\geq \lvert x_m \rvert - \lvert x_{t+1}\rvert  \underbrace{ \big[1+ \big( 1- \dfrac{c_3(\varepsilon_t,\Delta) }{n^4\Delta^4}\big)^{-1} \dfrac{c_2(\varepsilon_t,\Delta)}{(n+2)^4\Delta^4}   \big]  \dfrac{c_1}{(n+2)^4\Delta^4}}_{\kappa_{22}} .  
	\end{align*}
\textbf{2. $J_2$ upper bound: } When $\omega\in \mS(\tau_m), $ we have \begin{align*}
	\lvert J_{2,t}(\omega) \rvert & \leq \lvert x_{m}\rvert+  \lvert \sum_{i\in \mT_t^c,i\neq m} x_i \w(\omega)^*\w(\tau_i)\rvert +  \lvert \sum_{i\in \mT_t^c} x_i \sum_{k\in \mT_t} b_k(\omega)\w(\tau_k)^*\w(\tau_i)\rvert\\
	&\leq \lvert x_m \rvert + \lvert x_{t+1}\rvert \sum_{i\neq m} \lvert K_n(\tau_i - \omega) \rvert + \lvert \sum_{k\in \mT_t} b_k(\omega ) \rvert  \cdot \lvert x_{t+1}\rvert \cdot \max_{k\in \mT_t} \lvert \sum_{i\in \mT_t^c} K_n(\tau_k-\tau_i) \rvert \\
	&\leq \lvert x_m \rvert + \lvert x_{t+1} \rvert  \underbrace{\big(\dfrac{c_2(\frac{1}{2n},\Delta)}{(n+2)^4\Delta^4 } +\big(1-\dfrac{c_3(\varepsilon_t,\Delta)}{(n+2)^4\Delta^4}  \big)^{-1} \dfrac{c_1 c_2(\frac{1}{2n}+ \varepsilon_t,\Delta) }{(n+2)^8\Delta^8 }  \big)}_{\kappa_{21}}
\end{align*} 
\textbf{3. $J_3$ upper bound: } When $\tau \in \mS(\tau_m)$ for some $\tau_m$,    denoting $\tilde{\mP}_t\w(\tau) = \sum_{i\in \mT_t} \tilde{b}_i(\tau)\w(\omega_i),$ then similar as in previous argument, we have \begin{align*}
	\lvert J_{3,t}(\tau)\rvert & \leq \lvert \sum_{i\in \mT_t^c} x_i \sum_{k\in \mT_t} \tilde{b}_k(\tau)\w(\omega_k)^*\w(\tau_i)\rvert + \lvert \sum_{i\in \mT_t^c} x_i \sum_{k\in \mT_t} b_k(\tau)\w(\tau_k)^*\w(\tau_i)\rvert\\
	&\leq \sum_{k\in \mT_t}\lvert \tilde{b}_k(\tau) \rvert \cdot  \lvert x_{t+1}\rvert \max_{k\in \mT_t} \sum_{i\in \mT_t^c} \lvert K_n(\tau_i - \omega_k) \rvert + \sum_{k\in \mT_t}\lvert {b}_k(\tau) \rvert \cdot  \lvert x_{t+1}\rvert \max_{k\in \mT_t} \sum_{i\in \mT_t^c} \lvert K_n(\tau_i - \tau_k) \rvert   \\
	&\leq \lvert x_{t+1}\rvert  \underbrace{ \big( 1- \dfrac{c_3(\varepsilon_t ,\Delta)}{n^4\Delta^4}\big)^{-1}   \bigg(    \dfrac{c_2(\varepsilon_t,\Delta) c_2(\varepsilon_t+\frac{1}{2n},\Delta)}{(n+2)^8\Delta^8 } + \dfrac{c_2(\frac{1}{2n},\Delta) c_1}{(n+2)^8\Delta^8 }  \bigg) }_{\kappa_3}  .
\end{align*} 
\textbf{4. $J_1$ upper bound: }When $\tau \in \mS(\tau_m) $ for some $m \in \mT_t^c,$ we have \begin{align*}
	\lvert J_{1,t}(\tau) \rvert &=\lvert \sum_{i\in \mT_t} x_i \w(\tau)^*(\mI-\tilde{\mP}_t) (\w(\tau_i) - \w(\omega_i)  ) \rvert \\
	&\leq  \big\lvert \sum_{i\in \mT_t} x_i [K_n(\tau-\tau_i)-K_n(\tau-\omega_i)]  \big \rvert  +\lvert \sum_{k\in\mT_t} \tilde{b}_k(\tau) \rvert \max_{k\in \mT_t} \lvert \sum_{i\in \mT_t} x_iK_n(\omega_k-\tau_i) - K_n(\omega_k - \omega_i)\rvert \\
	&\leq \sum_{i\in \mT_t} \lvert  x_i(\tau_i-\omega_i) K_n'(\tau - \tilde{\xi}_{i}) \rvert    + \sum_{k \in \mT_t}\lvert \tilde{b}_k(\tau)\rvert \cdot \sup_{\omega \in [0,1)} \big \lvert \sum_{i\in \mT_t}  x_i[ K_n(\omega-\tau_i) - K_n(\omega-\omega_i)  ] \big \rvert\\
	&\leq   \sum_{i\in \mT_t} \lvert  x_i(\tau_i-\omega_i) K_n'(\tau - \tilde{\xi}_{i}) \rvert    + \sum_{k \in \mT_t}\lvert \tilde{b}_k(\tau)\rvert \cdot \sup_{\tau \in [0,1)}  \sum_{i\in \mT_t}  \big \lvert x_i(\tau_i - \omega_i) K'_n(\omega- \xi_i)   \big \rvert,
\end{align*}
Noticing that \begin{align*}
	 \sum_{k\in \mT_t} \lvert \tilde{b}_k(\tau) \rvert & \leq \big( 1- \dfrac{c_3(\varepsilon_t,\Delta)}{n^4\Delta^4}\big)^{-1}  \sum_{i\in \mT_t}  \lvert K_n(\tau-\omega_i)\rvert \\
	 &\leq  \big( 1- \dfrac{c_3(\varepsilon_t,\Delta)}{n^4\Delta^4}\big)^{-1} \dfrac{c_2(\varepsilon_t+\frac{1}{2n},\Delta) }{(n+2)^4\Delta^4}.\end{align*}
And  $$ \lvert \sum_{i\in \mT_t} x_i(\tau_i-\omega_i) K_n'(\omega-\xi_i)\rvert  \leq (n+2) \varepsilon_{\x,t} \big(\dfrac{\pi^2}{3} + \dfrac{\pi^2c_2(\varepsilon_t,\Delta)}{(n\Delta)^4} \big),  $$
we have the second term is bounded by \begin{align*}
	(n+2) \varepsilon_{\x,t}\big(\dfrac{\pi^2}{3} + \dfrac{\pi^2c_2(\varepsilon_t,\Delta)}{(n\Delta)^4} \big)\cdot \big( 1- \dfrac{c_3(\varepsilon_t,\Delta)}{n^4\Delta^4}\big)^{-1} \dfrac{c_2(\varepsilon_t+\frac{1}{2n},\Delta) }{(n+2)^4\Delta^4} .
\end{align*}
For the first term, we have \begin{align*}
	\sum_{i\in \mT_t} \lvert  x_i(\tau_i-\omega_i) K_n'(\tau - \tilde{\xi}_{i}) \rvert   \leq \varepsilon_{\x,t} \sum_{i\in \mT_t} \lvert K_n'(\tau - \tilde{\xi}_i)\rvert \leq  (n+2) \varepsilon_{\x,t}  \dfrac{\pi^2c_2(\varepsilon_t+\frac{1}{2n},\Delta)}{(n+2)^4\Delta^4}, 
\end{align*}
i.e. \begin{align*}
	\lvert J_{1,t}(\tau) \rvert & \leq (n+2)\varepsilon_{\x,t}  \underbrace{ \dfrac{c_2(\varepsilon_t+\frac{1}{2n},\Delta)}{(n+2)^4\Delta^4} \bigg[ \pi^2 + \big(\dfrac{\pi^2}{3} + \dfrac{\pi^2c_2(\varepsilon_t,\Delta)}{(n\Delta)^4} \big)\cdot \big( 1- \dfrac{c_3(\varepsilon_t,\Delta)}{n^4\Delta^4}\big)^{-1}  \big) \bigg] }_{\kappa_1}.
\end{align*}
	\end{proof}

\section{Proof of Results in Section~\ref{sec-proof-sliding-lanscape}}

\subsection{Proof or Proposition~\ref{thm-sliding-regularity}}
\begin{proof}
  We have \begin{align*}
	    \nabla \mathcal{L}_t(\bomega) & = \text{Re}\big[  \y^* (\mP(\bomega)-\bm{I})  \bm{Z}(\bomega)  \text{diag}(\hat\x ) \big]  .
	\end{align*}
	Thus we have the following upper bound and lower bound of $\lvert \nabla \mathcal{L}_{t,m} (\bomega)\rvert^2$ and $\nabla \mathcal{L}_{t,m} (\bomega)(\omega_m - \tau_m)$:
   \begin{align*}
   	  \lvert \dfrac{\nabla\mathcal{L}_{t,m} (\bomega)}{2} \rvert^2
	&\leq    \lvert \hat{x}_m \rvert^2\big[ \underbrace{\lvert \bar{x}_m \w(\tau_m)^*(\mP(\bomega)-\bm{I})\mz(\omega_m)\rvert^2}_{G_{1,m}^2} +  \underbrace{\lvert  \x_{-m}^*\w_{-m}^* (\mP(\bomega) - \bm{I} ) \mz(\omega_m)  \rvert^2}_{G_{2,m}^2}+ 2 G_{1,m}G_{2,m}\big] ,\\
	  \dfrac{ \nabla \mathcal{L}_{t,m}(\bomega)}{2}(\omega_m - \tau_m)
	 &\geq  \underbrace{\text{Re}\big(\hat{x}_m \bar{x}_m \w (\tau_m)^*(\mP_t-\mI)\mz(\omega_m) \big) (\omega_m - \tau_m)}_{G_{3,m}} -  \lvert \hat{x}_m\rvert \underbrace{\big \lvert  \x_{-m}^*\w_{-m}^* (\mP_t-\mI)\mz(\omega_m)\big) (\omega_m - \tau_m) \big\rvert}_{=G_{2,m}\cdot \lvert \omega_m - \tau_m\rvert }.
\end{align*}
and we aim to control the lower bound of $G_{3,m}$ and upper bounds of $G_{2,m},G_{1,m} $ separately.
Firstly, we have for $\hat{\bm x}= (\bm F^*(\bomega)\bm F(\bomega))^{-1} \bm F(\bomega)^* \bm y$, 
\begin{lemma}\label{lem-coefficient-bound}
	 Denoting  $\bm F_t = \bm F(\bomega)$ for simplicity,  we have then  $$\lvert [  (\bm{F}_t^* \bm{F}_t)^{-1}  \bm{F}_t^* \sum_{i=1}^n x_i \w(\tau_i)]_m - x_m \rvert \leq  \big(1- \dfrac{c_3(\varepsilon_t,\Delta)}{n^4\Delta^4})^{-1} \cdot\big((n+2) \varepsilon_{\x,t} (\dfrac{\pi^2}{3} +\dfrac{\pi^2c_2(\varepsilon_t,\Delta)}{(n\Delta)^4}) + \dfrac{\lVert \x(\mT_t^c)\rVert_\infty   c_2(\varepsilon_t,\Delta)}{(n\Delta)^4}   \big) $$  
	 as long as $ \varepsilon_t  	 \leq \dfrac{1 }{2n+4}  $.
	 \end{lemma}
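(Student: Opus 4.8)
The plan is to compare the least-squares coefficient vector $\hat{\bm x}(\bomega):=(\bm F_t^*\bm F_t)^{-1}\bm F_t^*\bm y$ against the true coefficient vector $\x(\mT_t)$ by splitting $\bm y$ along $\mT_t$ and its complement, exactly as in the proof of Lemma~\ref{lem-J-bounds}. Assume W.L.O.G. $T(i)=i$, so that $\bm F_t$ has columns $\w(\omega_1),\dots,\w(\omega_t)$ aligned with the true frequencies $\tau_1,\dots,\tau_t\in\mT_t$. Write $\bm y = \w(\mT_t)\x(\mT_t) + \w(\mT_t^c)\x(\mT_t^c)$. Since $\varepsilon_t\le\frac1{2n+4}$, the Gram matrix $\bm F_t^*\bm F_t$ has off-diagonal entries $K_n(\omega_i-\omega_j)$ with $|\omega_i-\omega_j|\ge\Delta-2\varepsilon_t\gtrsim\Delta$, so it is strictly diagonally dominant and invertible, and $(\bm F_t^*\bm F_t)^{-1}\bm F_t^*\bm F_t\,\x(\mT_t)=\x(\mT_t)$. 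Subtracting, the $m$-th coordinate of the error is
\[
\hat x_m - x_m = \big[(\bm F_t^*\bm F_t)^{-1}\bm F_t^*\big(\w(\mT_t)-\bm F_t\big)\x(\mT_t)\big]_m + \big[(\bm F_t^*\bm F_t)^{-1}\bm F_t^*\w(\mT_t^c)\x(\mT_t^c)\big]_m .
\]

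Next I would apply Lemma~\ref{lem-b-bound-full} to each of the two vectors $\bm v$ appearing on the right: the coefficient vector $\bm b=(\bm F_t^*\bm F_t)^{-1}\bm F_t^*\bm v$ of $\bm P_t\bm v$ obeys $\|\bm b\|_\infty\le\big(1-\tfrac{c_3(\varepsilon_t,\Delta)}{n^4\Delta^4}\big)^{-1}\|\bm F_t^*\bm v\|_\infty$. It then remains to bound, uniformly in $m\in[t]$, the two inner products $\big|\sum_{i\in\mT_t}x_i\big(K_n(\omega_m-\tau_i)-K_n(\omega_m-\omega_i)\big)\big|$ and $\big|\sum_{i\in\mT_t^c}x_iK_n(\omega_m-\tau_i)\big|$, using $\w(\omega_m)^*\w(\tau)=K_n(\omega_m-\tau)$. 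For the first, a mean-value estimate gives $K_n(\omega_m-\tau_i)-K_n(\omega_m-\omega_i)=(\tau_i-\omega_i)K_n'(\omega_m-\xi_i)$ with $\xi_i$ between $\tau_i$ and $\omega_i$, so this is at most $\varepsilon_{\x,t}\sum_{i\in\mT_t}|K_n'(\omega_m-\xi_i)|$; isolating the index $i=m$ nearest to $\omega_m$ (for which $|\omega_m-\xi_m|\le\varepsilon_t$ and $|K_n'|\le\tfrac{\pi^2}{3}(n+2)$ by Proposition~\ref{prop-squared-fejer-kernel}) and bounding the remaining $\Delta$-separated indices by the tail estimate $|K_n'(\tau)|\le\tfrac{\pi^2}{(n+2)^3|\tau|^4}$ (whose sum is $\le\pi^2(n+2)\tfrac{c_2(\varepsilon_t,\Delta)}{(n\Delta)^4}$) gives the bound $(n+2)\varepsilon_{\x,t}\big(\tfrac{\pi^2}{3}+\tfrac{\pi^2c_2(\varepsilon_t,\Delta)}{(n\Delta)^4}\big)$. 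For the second, the $\tau_i$ with $i\in\mT_t^c$ are $\Delta$-separated and lie at distance $\ge\Delta$ from $\tau_m$, while $|\omega_m-\tau_m|\le\varepsilon_t$, so the distances $|\omega_m-\tau_i|$ take values $\gtrsim j\Delta-\varepsilon_t$ for distinct $j\ge1$ on each side of $\tau_m$; the decay bound $|K_n(\tau)|\le\big((n+2)\tau\big)^{-4}$ then yields $\le\|\x(\mT_t^c)\|_\infty\tfrac{c_2(\varepsilon_t,\Delta)}{(n\Delta)^4}$. Adding the two contributions and multiplying by $\big(1-\tfrac{c_3(\varepsilon_t,\Delta)}{n^4\Delta^4}\big)^{-1}$ produces exactly the claimed inequality, with the $(n+2)$-versus-$n$ discrepancy absorbed as in the remark after Proposition~\ref{prop-squared-fejer-kernel}.

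I do not expect a genuine conceptual obstacle, since every ingredient already occurs inside the proofs of Lemmas~\ref{lem-J-bounds} and \ref{lem-J-small-bound}; the work is purely the bookkeeping of the near/far split in the mean-value step (how many summands fall inside the $\frac1{2n+4}$-window, and what the sharp lower bounds on the shifted distances $|\omega_m-\xi_i|$ and $|\omega_m-\tau_i|$ are) so that the resulting geometric-type sums line up with the definitions of $c_2(\varepsilon_t,\Delta)$ and $c_3(\varepsilon_t,\Delta)$. This is also where the hypothesis $\varepsilon_t\le\frac1{2n+4}$ is used twice: to make $\bm F_t^*\bm F_t$ invertible and to keep $\omega_m$ and each $\xi_i$ within the windows on which the sharp bounds of Proposition~\ref{prop-squared-fejer-kernel} hold.
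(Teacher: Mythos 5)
Your decomposition of $\hat x_m - x_m$ into the ``deformation'' term $(\bm F_t^*\bm F_t)^{-1}\bm F_t^*[\w(\mT_t)-\bm F_t]\x(\mT_t)$ and the ``remainder'' term $(\bm F_t^*\bm F_t)^{-1}\bm F_t^*\w(\mT_t^c)\x(\mT_t^c)$, followed by the $\|\cdot\|_{\infty,\infty}$ bound on $(\bm F_t^*\bm F_t)^{-1}$ (Lemma~\ref{lem-b-bound-full}) and the mean-value/tail-decay estimates on $\max_k|\w(\omega_k)^*\bm v|$, is exactly what the paper does in its proof of Lemma~\ref{lem-coefficient-bound}. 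The only cosmetic difference is that the paper writes $\w(\tau_i)=\w(\omega_i)+[\w(\tau_i)-\w(\omega_i)]$ and pulls out the $\delta_{mi}$ directly rather than invoking Lemma~\ref{lem-b-bound-full} as a black box, but the estimates and constants that result are identical.
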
 
On the other hand, we have 
\begin{lemma} \label{lem-G-bound-squared} As long as $ \max_i \lvert  x_{i}(\omega_i - \tau_{i}) \rvert
	 \leq \dfrac{\lVert \x_{\geq t}\rVert_\infty }{2n+4}  $ , we have 
\begin{align*}
	G_{1,m}&\leq   {\varphi_1}   \cdot  n^2\lvert x_m(\omega_m-\tau_m) \rvert ,\\
	G_{2,m}&\leq  {\varphi_{21}}  \cdot n^2\varepsilon_{\x,t}  + n\lVert \x_{>t}\rVert_\infty  \cdot  {\varphi_{22}},\\
	G_{3,m}&\geq  {\varphi_3} n^2\lvert x_m\rvert^2 (\omega_m-\tau_m)^2  .
\end{align*}
with  $$\dfrac{\pi^2}{3} \leq  \varphi_{1} \leq \dfrac{\pi^2}{3}+\dfrac{C_1}{(n\Delta)^8}, \quad \varphi_{21},\varphi_{22} \leq \dfrac{C_2}{(n\Delta)^4}, \quad \dfrac{\pi^2}{3} \geq  \varphi_3 \geq  \dfrac{\pi^2}{3} - C_3[n^2\varepsilon_t^2+\dfrac{1}{(n\Delta)^8} ], $$
where $C_1,C_2,C_3>0$ are universal constants independent of $t$.
\end{lemma}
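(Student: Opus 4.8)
The plan is to reduce every quantity appearing in $G_{1,m},G_{2,m},G_{3,m}$ to values of the squared Fej\'er kernel $K_n=K_{n,\sigma}$ and its first two derivatives, and then, in each of the three quantities, to isolate a ``near-diagonal'' leading term (governed by $K_n',K_n''$ near the origin) from ``off-diagonal'' remainders (governed by the tail bounds $|K_n(\tau)|\lesssim(n\tau)^{-4}$, $|K_n'(\tau)|\lesssim n^{-3}\tau^{-4}$ of Proposition~\ref{prop-squared-fejer-kernel} together with the separation $|\omega_i-\omega_j|\gtrsim\Delta$). First I would record the identities $\w(\tau_1)^*\w(\tau_2)=K_n(\tau_1-\tau_2)$, $\w(\tau_1)^*\bm z(\tau_2)=K_n'(\tau_2-\tau_1)$ and $\bm z(\tau_1)^*\bm z(\tau_2)=-K_n''(\tau_2-\tau_1)$, so that $\w(\omega_m)^*\bm z(\omega_m)=K_n'(0)=0$ and $\lVert\bm z(u)\rVert_2^2=-K_n''(0)=\tfrac{\pi^2}{3}n^2(1+o(1))$ for every $u$, with matching two-sided bounds from Proposition~\ref{prop-squared-fejer-kernel}. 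Since $\w(\omega_1),\dots,\w(\omega_t)$ lie in $\text{range}(\mP(\bomega))$, I would also control $\mP(\bomega)\bm z(\omega_m)$: its coefficient vector $\bm b=(\W(\bomega)^*\W(\bomega))^{-1}\W(\bomega)^*\bm z(\omega_m)$ has $m$-th driving entry $K_n'(0)=0$ and the other entries $K_n'(\omega_m-\omega_j)$ of size $O(n^{-3}\Delta^{-4})$, so Lemma~\ref{lem-b-bound-full} gives $\lVert\mP(\bomega)\bm z(\omega_m)\rVert_2\lesssim n^{-3}\Delta^{-4}$ and $|b_m|\lesssim n^{-7}\Delta^{-8}$.

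For $G_{1,m}$ and $G_{3,m}$, which come from the $x_m\w(\tau_m)$ part of $\y$, I would use $(\mP(\bomega)-\mI)\w(\omega_m)=0$ to write $\w(\tau_m)^*(\mP(\bomega)-\mI)\bm z(\omega_m)=-\bigl(\w(\tau_m)-\w(\omega_m)\bigr)^*\bigl(\bm z(\omega_m)-\mP(\bomega)\bm z(\omega_m)\bigr)$; its dominant part equals $-\bigl(\w(\tau_m)-\w(\omega_m)\bigr)^*\bm z(\omega_m)=-K_n'(\omega_m-\tau_m)$, and its remainder, after expanding $\bigl(\w(\tau_m)-\w(\omega_m)\bigr)^*\mP(\bomega)\bm z(\omega_m)=\sum_j b_j\bigl[K_n(\omega_j-\tau_m)-K_n(\omega_j-\omega_m)\bigr]$ and using the size bounds on the $b_j$ above, is $O\bigl(|\omega_m-\tau_m|\,n^{-6}\Delta^{-8}\bigr)$. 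Combined with $|\omega_m-\tau_m|\le\tfrac{1}{2n+4}$, the near-origin bound $|K_n'(t)|\le\tfrac{\pi^2}{3}n^2|t|$, and the refined lower bound $-K_n'(t)t\ge\bigl(\tfrac{\pi^2}{3}-O((nt)^2)\bigr)n^2t^2$, this gives the stated $\varphi_1$. For $G_{3,m}$ I would additionally replace $\hat x_m$ by $x_m$, the error $|\hat x_m-x_m|=O(n\varepsilon_{\x,t})+O\bigl(\lVert\x(\mT_t^c)\rVert_\infty(n\Delta)^{-4}\bigr)$ being supplied by Lemma~\ref{lem-coefficient-bound}, and use $\text{Re}(\hat x_m\bar x_m)\ge|x_m|^2\bigl(1-o(1)\bigr)$ together with $(nt)^2\le n^2\varepsilon_t^2$ to obtain $\varphi_3\ge\tfrac{\pi^2}{3}-C_3[n^2\varepsilon_t^2+(n\Delta)^{-8}]$.

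For $G_{2,m}$, the $\w_{-m}\x_{-m}$ part, I would split the indices $i\ne m$ into those in $\mT_t$ and those in $\mT_t^c$. For $i\in\mT_t\setminus\{m\}$, $\w(\tau_i)$ is within $O(|\omega_i-\tau_i|)$ in $\bm z$-norm of the in-range vector $\w(\omega_i)$, so $\w(\tau_i)^*(\mP(\bomega)-\mI)\bm z(\omega_m)=-\bigl(\w(\tau_i)-\w(\omega_i)\bigr)^*\bigl(\bm z(\omega_m)-\mP(\bomega)\bm z(\omega_m)\bigr)$; this is bounded by $\int|K_n''(\omega_m-u)|\,du\lesssim|\omega_i-\tau_i|\,n^{-2}\Delta^{-4}$ (the integration range keeping $|\omega_m-u|\gtrsim\Delta$) plus $\lVert\bm z\rVert_2\lVert\mP(\bomega)\bm z(\omega_m)\rVert_2|\omega_i-\tau_i|$, and after weighting by $|x_i|$ and summing against the convergent tail constant $c_2$ this contributes the $\varphi_{21}n^2\varepsilon_{\x,t}$ term. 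For $i\in\mT_t^c$ both $K_n'(\omega_m-\tau_i)$ and $\w(\tau_i)^*\mP(\bomega)\bm z(\omega_m)=\sum_j b_jK_n(\omega_j-\tau_i)$ are tiny since every argument is $\gtrsim\Delta$ from the origin, and summing $|x_i|$ against $c_0,c_1$ contributes the $\varphi_{22}n\lVert\x(\mT_t^c)\rVert_\infty$ term; both $\varphi$'s are $O((n\Delta)^{-4})$.

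The hard part will not be any single estimate but the bookkeeping: one must simultaneously keep the leading constant $\tfrac{\pi^2}{3}$ (which needs the refined near-origin expansions of $K_n',K_n''$ rather than the crude two-sided bounds), propagate the cascade of coefficient bounds ($\bm b$, $\hat\x$, and the projection coefficients hidden inside $\w(\tau_m)-\w(\omega_m)$), and control the $\sum k^{-4}$ and $\sum k^{-8}$ tail sums, so that no spurious factor of $s$ or of $\dynx$ appears and every remainder lands in exactly the $n^2\varepsilon_{\x,t}$, $n\lVert\x(\mT_t^c)\rVert_\infty$, $n^2\varepsilon_t^2$, or $(n\Delta)^{-8}$ slot dictated by the statement; handling the substitution $\hat x_m\leftrightarrow x_m$ in $G_{3,m}$ without contaminating the curvature constant is the most delicate point.
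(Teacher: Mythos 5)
Your decompositions for $G_{1,m}$ and $G_{3,m}$ are essentially the paper's: both replace $\w(\tau_m)$ by $\w(\tau_m)-\w(\omega_m)$ to kill the in-range contribution (using that $\w(\omega_m)^*\mP(\bomega)\mz(\omega_m)=\w(\omega_m)^*\mz(\omega_m)=K_n'(0)=0$), isolate the leading term $-K_n'(\omega_m-\tau_m)$, bound the projection remainder via the coefficient vector $\bm b$ of $\mP(\bomega)\mz(\omega_m)$, and substitute $\hat x_m\approx x_m$ via Lemma~\ref{lem-coefficient-bound}. So far so good.

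The genuine gap is in your $G_{2,m}$ bound, specifically the projection part for $i\in\mT_t\setminus\{m\}$. You propose to bound $\bigl|(\w(\tau_i)-\w(\omega_i))^*\mP(\bomega)\mz(\omega_m)\bigr|$ by $\lVert\w(\tau_i)-\w(\omega_i)\rVert_2\,\lVert\mP(\bomega)\mz(\omega_m)\rVert_2\lesssim n|\omega_i-\tau_i|\cdot n^{-3}\Delta^{-4}$. This estimate is \emph{uniform in $i$}: it does not decay with the gap between $\omega_i$ and $\omega_m$, so when you weight by $|x_i|$ and sum over $i\in\mT_t\setminus\{m\}$ you get at best $(t-1)\varepsilon_{\x,t}\,n^{-2}\Delta^{-4}$, i.e.\ $\varphi_{21}\lesssim (t-1)/(n\Delta)^4$. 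The statement requires $C_2$ to be independent of $t$ (equivalently, of $s$), so the Cauchy--Schwarz route contaminates the constant. Even the sharper version of Cauchy--Schwarz (bounding $\lVert\sum_i x_i(\w(\tau_i)-\w(\omega_i))\rVert_2$ via the Gram matrix of $\mz(\xi_i)$'s) still leaves a $\sqrt{t}$ factor through the diagonal $\sum_i|x_i(\omega_i-\tau_i)|^2$ term. The paper avoids this by never using an $\ell_2$--$\ell_2$ pairing here: it expands $\mP(\bomega)\mz(\omega_m)=\sum_j b_j\w(\omega_j)$, \emph{swaps the order of summation} so the inner sum is over $i$, and bounds $\sum_j|b_j|\cdot\max_j\bigl|\sum_i x_i[K_n(\tau_i-\omega_j)-K_n(\omega_i-\omega_j)]\bigr|$. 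For each fixed $j$, the inner sum over $i$ has one near term of size $O(n\varepsilon_{\x,t})$ plus an absolutely convergent tail, and $\lVert\bm b\rVert_1\lesssim n^{-3}\Delta^{-4}$ by Lemma~\ref{lem-b-bound-full} together with the convergence of $\sum_{j}|K_n'(\omega_m-\omega_j)|$; the product is $O(n^{-2}\Delta^{-4}\varepsilon_{\x,t})$ with no factor of $t$. The same $\ell_1$--$\ell_\infty$ duality with the right order of summation is also what makes the $i\in\mT_t^c$ piece give $\varphi_{22}\lesssim(n\Delta)^{-4}$ without a factor of $s$; your sketch "summing $|x_i|$ against $c_0,c_1$" glosses over this. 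Your own closing paragraph flags exactly this risk ("no spurious factor of $s$"), but the mechanism you chose does not in fact avoid it; you would need to replace Cauchy--Schwarz by the coefficient-expansion and sum-swap argument.
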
 

Now denoting \begin{align*}
	 \rho(\varepsilon_t,\Delta, \zeta): = \big(1- \dfrac{c_3(\varepsilon_t,\Delta)}{n^4\Delta^4})^{-1} \cdot\big((n+2)\zeta (\dfrac{\pi^2}{3} +\dfrac{\pi^2c_2(\varepsilon_t,\Delta)}{(n\Delta)^4}) + \dfrac{2   c_2(\varepsilon_t,\Delta)}{(n\Delta)^4}   \big), 
\end{align*} and $\rho =\rho (\varepsilon_t,\Delta, \frac{\varepsilon_{\x,t}}{\min_{i\in \mT_t} \lvert x_i\rvert} ) $ for simplicity, we get by $\lVert \x_{>t}\rVert_\infty \leq 2\min_{i\in \mT_t}\lvert x_i \rvert,$ \begin{align*}
	\lvert \dfrac{\nabla\mathcal{L}_{t,m}(\bomega)}{2} \rvert_2^2 &\leq  2(1+\rho)^2 \lvert x_m\rvert^2 \big[ \varphi_1^2 n^4 \lvert x_m(\omega_m-\tau_m)\rvert^2 +2 \varphi_{21}^2  n^4 \varepsilon_{\x,t}^2 + 2n^2\lVert \x_{>t}\rVert_\infty^2\varphi_{22}^2   \big],\\
	{\nabla\mathcal{L}_{t,m}(\bomega)}(\omega_m-\tau_m) &\geq   2 \varphi_{3} n^2\lvert x_m\rvert^2(\omega_m-\tau_m)^2 -   (1+\rho) \cdot(\varphi_{21}n^2\varepsilon_{\x,t}+\varphi_{22} n\lVert \x_{>t}\rVert_\infty )\lvert x_m(\omega_m-\tau_m)\rvert
\end{align*} 
	Thus   \begin{align*}
	&2{ \dfrac{\nabla\mathcal{L}_{t,m}(\bomega)}{2n}(\omega_m-\tau_m) } - \tilde{\mu} \lvert \dfrac{\nabla\mathcal{L}_{t,m}(\bomega)}{2n} \rvert_2^2 \\
	\geq &  2\lvert x_m\rvert^2 \cdot \big[ \varphi_3n(\omega_m-\tau_m)^2 -  [1+\rho]^2 \tilde{\mu}\varphi_1^2 n^2\lvert x_m\rvert^2 (\omega_m-\tau_m)^2       \big]\\
	&  -\bigg( [1+\rho] \varphi_{21}n\varepsilon_{\x,t}\lvert x_m(\omega_m-\tau_m)\rvert +4[1+\rho]^2 \tilde{\mu}\lvert x_m\rvert^2\varphi_{21}^2 n^2\varepsilon_{\x,t}^2  \bigg)   \\
		& - \bigg( \big[1+\rho \big] \varphi_{22}\lVert \x_{>t}\rVert_\infty \lvert x_m(\omega_m - \tau_m) \rvert + 4[1+\rho]^2 \tilde{\mu} \lvert x_m\rvert^2 \lVert \x_{>t}\rVert_\infty^2 \varphi_{22}^2  \bigg),  
\end{align*}
 Thus as long as $\tilde{\mu} $ is small enough so that $$
	n\tilde{\mu}\varphi_1^2[1+\rho]^2    \lvert x_m\rvert^2   \leq  {\varphi_3}/{2},$$ and the above difference turns to be larger than  \begin{align*}
 	& \varphi_3n \lvert x_m \rvert^2  \lvert \omega_m - \tau_m \rvert^2 -\big[ 1+\rho \big]\big(  \varphi_{21}\varepsilon_{\x,t}+  \dfrac{\varphi_{22}}{n}\lVert \x_{>t}\rVert_\infty  \big)\lvert  n x_m(\omega_m-\tau_m)\rvert\\
 	& - \underbrace{4\tilde{\mu}\lvert x_m\rvert^2[1+\rho]^2 }_{\leq {2\varphi_3}/({\varphi_1^2 n)}}\big[ \varphi_{21}^2n^2\varepsilon^2_{\x,t}+\lVert \x_{>t}\rVert_\infty^2\varphi_{22}^2 \big]  ,
 	    	 \end{align*}
 	    	 which can be re-written as \begin{align*}
 	    	 	&\varphi_3 n \bigg( \lvert x_m(\omega_m-\tau_m)\rvert-\dfrac{1+\rho }{2\varphi_3} \cdot \big[ \varphi_{21}\varepsilon_{\x,t}+\dfrac{\varphi_{22}}{n}\lVert \x_{>t}\rVert_\infty \big]    \bigg)^2 -\dfrac{2\varphi_3}{\varphi_1^2n}\big[ \varphi_{21}^2n^2\varepsilon_{\x,t}^2+\lVert \x_{>t}\rVert_\infty^2\varphi_{22}^2 \big]\\
 	    	 	&- \dfrac{n[1+\rho]^2}{4\varphi_3}\big[ \varphi_{21}\varepsilon_{\x,t}+\dfrac{\varphi_{22}}{n}\lVert \x_{>t}\rVert_\infty \big]^2,
 	    	 \end{align*} 
 	  which is larger or equal than\begin{equation} \label{eq-diff-lower-bound} \begin{aligned}
 	  		&\varphi_3 n \bigg( \lvert x_m(\omega_m-\tau_m)\rvert-\dfrac{1+\rho }{2\varphi_3} \cdot \big[ \varphi_{21}\varepsilon_{\x,t}+\dfrac{\varphi_{22}}{n}\lVert \x_{>t}\rVert_\infty \big]    \bigg)^2 \\
 	    	 	&-\varphi_3 n\big[\dfrac{2}{\varphi_1^2}+ \dfrac{[1+\rho]^2}{4\varphi_3^2}\big] \cdot \big[ \varphi_{21}\varepsilon_{\x,t}+\dfrac{\varphi_{22}}{n}\lVert \x_{>t}\rVert_\infty \big]^2,
 	  \end{aligned}  	 \end{equation}
Now under the condition \begin{equation}\label{eq-condition-inequality}
	\big[\dfrac{1}{2}- \dfrac{1+\rho}{2\varphi_3}\varphi_{21} \big]\varepsilon_{\x,t}  > \dfrac{2\varphi_{22}}{n}\lVert \x_{>t}\rVert_\infty
\end{equation}
we have $\lvert x_m (\omega_m-\tau_m)\rvert> \dfrac{1}{2}\varepsilon_{\x,t}$
 implies  \eqref{eq-diff-lower-bound} is lower bounded by \begin{align*}
 	&\varphi_{3}n \bigg( \big[   \big(\dfrac{1}{2}- \dfrac{1+\rho}{2\varphi_3}\varphi_{21} \big)\varepsilon_{\x,t}  - \dfrac{\varphi_{22}}{n}\lVert \x_{>t}\rVert_\infty   \big]^2 -  \big[\dfrac{2}{\varphi_1^2}+ \dfrac{[1+\rho]^2}{4\varphi_3^2}\big] \cdot \big[ \varphi_{21}\varepsilon_{\x,t}+\dfrac{\varphi_{22}}{n}\lVert \x_{>t}\rVert_\infty \big]^2 \bigg) \\
 	\geq & \varphi_3 n \varepsilon_{\x,t}^2 \underbrace{\bigg(\big[\dfrac{1}{4}-\dfrac{1+\rho}{4\varphi_3} \varphi_{21}\big ]^2 - \big[ \dfrac{2}{\varphi_1^2}+ \dfrac{[1+\rho]^2}{4\varphi_3^2} \big] [\varphi_{21}+\dfrac{1}{4}-\dfrac{[1+\rho]}{2\varphi_3} \varphi_{21}  ]^2  \bigg)}_{\tilde{\nu} }. 
 \end{align*} 	    	 
 Thus when $\tilde{\nu}>0,$ we have the regularity condition  is satisfied by  $ \mu_m \nabla\mathcal{L}_{t,m} $ along $m$ with \begin{align*}
 	\mu_m =  \dfrac{\varphi_3}{2n[1+\rho]^2 \lvert x_m\rvert^2\varphi_1^2},\quad \lambda_m \geq \varphi_3 \mu_m n \lvert x_m\rvert^2 \tilde{\nu}\geq \dfrac{\varphi_3^2\tilde{\nu}  }{2\varphi_1^2[1+\rho]^2}   
 \end{align*}
In particular, $\tilde{\nu }$ is lower bounded by \begin{align*}
	 \mathcal{CR}_t =  \big[\dfrac{1}{4}-\dfrac{1+\rho}{4\varphi_3} \varphi_{21}\big ]^2    -   \big(\dfrac{18}{\pi^4}+ \dfrac{[ 1+\rho]^2}{4\varphi_3^2} \big) \ \cdot\big( \dfrac{1}{4}+ \varphi_{21}\big )^2 \big)
\end{align*} 
which is monotone increasing as $n\Delta$ increasing and $\varepsilon$ decreases.

On the other hand, when \eqref{eq-condition-inequality} holds,  we have for every $m$,
\begin{align*}
 		 \lvert {\mu}_m   x_m \dfrac{\nabla\mathcal{L}_{t,m} }{2n} (\bomega) \rvert & \leq \dfrac{\varphi_3}{2n^2 \varphi_1^2[1+\rho]^2} (G_{1,m}+G_{2,m} )  \\
 		 &\leq \dfrac{\varphi_3}{2n^2 \varphi_1^2 [1+\rho]^2}\big( [\varphi_1+\varphi_{21} ] n^2\varepsilon_{\x,t} + \varphi_{22}n \lVert \x_{>t}\rVert_\infty    \big)  \\
 		 & \leq  \dfrac{\varphi_3}{2\varphi_1^2[1+\rho]^2 } \big( \varphi_1+\varphi_{21}+ \big[\dfrac{1}{4}- \dfrac{1+\rho}{4\varphi_3}\varphi_{21} \big]   \big)  \varepsilon_{\x,t} . 	\end{align*} 
 		 which is smaller or equal than \begin{align*}
 		 	\tilde{\beta}:= \dfrac{\varphi_3}{2\varphi_1^2 [1+\rho]^2}\big( \dfrac{\varphi_1}{2}+\varphi_{21}+ \big[\dfrac{1}{4}- \dfrac{1+\rho}{4\varphi_3}\varphi_{21} \big]   \big)  \varepsilon_{\x,t} 
 		 \end{align*}
 when $\lvert x_m(\omega_m-\tau_m)\rvert < \dfrac{1}{2}\varepsilon_{\x,t}.$ Thus the $\text{WRC}_{ \x_{\leq t}} (\bm{\lambda} ,\dfrac{1}{2},\tilde{\beta})$ is satisfied by $\bm \mu \odot \dfrac{\nabla \mathcal{L}_t}{2n}(\cdot) $ over the region \begin{align*}
 	 	B_t: = \{ \underbrace{\big[\dfrac{1}{2}- \dfrac{1+\rho}{2\varphi_3}\varphi_{21} \big]^{-1}\cdot 2\varphi_{22} }_{:= \mathcal{R}_t}   \dfrac{\lVert \x_{>t}\rVert_\infty}{n}    \leq \lVert \bomega - \btau_{\leq t}\rVert_{t,\infty}\leq \dfrac{\varepsilon  \lvert x_t\rvert}{2n}  \}
 \end{align*}

While $\mu_m$ contains the orcale information $x_m$, we can choose $$  \hat{\mu}_m =   \dfrac{\varphi_3[1-\rho]^2 }{2n [1+\rho]^2  \lvert \hat{x}_m\rvert^2\varphi_1^2},$$, then by $[1-\rho] \lvert \x_{\leq t}\rvert \ge \lvert \hat{\x}_{\leq t} \rvert \geq [1+\rho]\lvert \x_{\leq t}\rvert$, we have $$   \dfrac{[1-\rho]^2}{[1+\rho]^2} \bm \mu  \leq  \hat{\bm\mu} \leq \bm \mu,$$ thus the regularity condition holds with $\bm g(\cdot) = \hat{\bm \mu} \odot \dfrac{\nabla \mathcal{L}_t}{2n}(\cdot)$ and some $\bm\lambda >0.$ 

Finally, by \begin{align*}
	 \lvert x_mg_m\rvert &\leq \lvert x_m\mu_m \dfrac{(\nabla \mL_t)_m}{2n}\rvert \\
	 & \leq \dfrac{\varphi_3}{2n^2 \varphi_1^2[1+\rho]^2} \lvert G_{1,m}+G_{2,m} \rvert \\
	 &\leq \underbrace{ \dfrac{\varphi_3}{2 \varphi_1^2 [1+\rho]^2} [\varphi_1+\varphi_{21} ]}_{\phi_1} \varepsilon_{\x,t} + \underbrace{\dfrac{\varphi_{22} \varphi_3}{2 \varphi_1^2 [1+\rho]^2}}_{\phi_2} \frac{\lVert \x_{>t}\rVert_\infty}{n} ,   \end{align*}
	  the claim holds.
\end{proof}

\subsection{Proof of Lemma~\ref{lem-G-bound-squared}}

We first reclaim the lemma with explicit formula of $\varphi_{1},\varphi_{21},\varphi_{22},\varphi_{3}. $ 

\begin{lemma} As long as $ \max_i \lvert  x_{i}(\omega_i - \tau_{i}) \rvert
	 \leq \dfrac{\lVert \x_{\geq t}\rVert_\infty }{2n}  $ , we have 
\begin{align*}
	G_{1,m}&\leq \underbrace{\bigg(\dfrac{\pi^2}{3}+ 2\pi^4  \big(1- \dfrac{c_3(\varepsilon_t,\Delta)}{n^4\Delta^4})^{-1} \dfrac{c_3^2(\varepsilon_t,\Delta)}{(n\Delta)^8} \bigg)}_{\varphi_1}   \cdot  n^2\lvert x_m(\omega_m-\tau_m) \rvert ,\\
	G_{2,m}&\leq  \underbrace{2\pi^4\dfrac{c_3(\varepsilon_t,\Delta)}{n^4\Delta^4}  \bigg[ 2 + (1-\dfrac{c_3(\varepsilon_t,\Delta)}{n^4\Delta^4})^{-1} \big(1+\dfrac{c_2(\varepsilon_t,\Delta)} {(n\Delta)^4} \big)   \bigg]}_{\varphi_{21}}  \cdot n^2\varepsilon_{\x,t} \\
&\quad + \lVert \x_{>t}\rVert_\infty  \cdot \underbrace{\pi^2\bigg[(1- \dfrac{c_3(\varepsilon_t,\Delta)}{n^4\Delta^4})^{-1} \dfrac{c_3^2(\varepsilon_t,\Delta)}{(n\Delta)^8} +  \dfrac{c_3(\varepsilon_t,\Delta) }{n^4\Delta^4}   \bigg]  }_{\varphi_{22}},\\
	G_{3,m}&\geq \underbrace{\bigg([1-\rho]\big[\dfrac{\pi^2}{3}- \dfrac{\pi^4}{18} (n+2)^2  \varepsilon_t^2\big]  - 2\pi^4  \big(1- \dfrac{c_3(\varepsilon_t,\Delta)}{n^4\Delta^4})^{-1} \dfrac{c_3^2(\varepsilon_t,\Delta)}{(n\Delta)^8} \bigg)}_{\varphi_3}n^2\lvert x_m\rvert^2 (\omega_m-\tau_m)^2  .
\end{align*}
\end{lemma}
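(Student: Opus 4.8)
The plan is to treat $G_{1,m},G_{2,m},G_{3,m}$ uniformly by exploiting the single structural fact that $\w(\omega_m)$ is a column of $\W(\bomega)$, so $(\mI-\mP)\w(\omega_m)=0$, where $\mP:=\mP(\bomega)$. Self-adjointness of $\mI-\mP$ then gives $\w(\omega_m)^*(\mI-\mP)=0$, hence the identity
\[
\w(\tau_m)^*(\mI-\mP)\mz(\omega_m)=\int_{\omega_m}^{\tau_m}\mz(u)^*(\mI-\mP)\mz(\omega_m)\,du ,
\]
so every quantity is first order in $\tau_m-\omega_m$, with leading term $(\tau_m-\omega_m)\lVert(\mI-\mP)\mz(\omega_m)\rVert_2^2$. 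Expanding $\mz(u)^*(\mI-\mP)\mz(\omega_m)$ in $u$ about $\omega_m$, the linear coefficient splits as $\mz'(\omega_m)^*\mz(\omega_m)-\mz'(\omega_m)^*\mP\mz(\omega_m)$; the first term vanishes because the preconditioner is symmetric, $\sigma_\ell=\sigma_{-\ell}$, so odd moments of $\sigma$ are zero, and the second is doubly small, so the remainder contributes an $O(n^2\varepsilon_t^2)$ term (genuinely quadratic part) and an $O((n\Delta)^{-8})$ relative term (projection correction). All base-point-independent Fourier norms appearing are controlled by Proposition~\ref{prop-squared-fejer-kernel} together with the crude estimate $|\ell|\le n$: $\lVert\mz\rVert_2^2=-K_n''(0)\in[1.9(n+2)^2,\tfrac{\pi^2}{3}(n+2)^2]$, $\lVert\mz'\rVert_2\le n\lVert\mz\rVert_2$, $\lVert\mz''\rVert_2\le n^2\lVert\mz\rVert_2$, while the running assumption on $\max_i|x_i(\omega_i-\tau_i)|$ ensures each $|\omega_i-\tau_i|\le\tfrac{1}{2n+4}$ so the near-peak bounds apply.

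The second ingredient I would prove is the estimate $\lVert\mP\mz(\omega_m)\rVert_2\lesssim n/(n\Delta)^4$, so that $\lVert(\mI-\mP)\mz(\omega_m)\rVert_2^2=\lVert\mz\rVert_2^2-\lVert\mP\mz(\omega_m)\rVert_2^2$ agrees with $\lVert\mz\rVert_2^2$ up to an $O((n\Delta)^{-8})$ relative error. This uses that $\W(\bomega)^*\mz(\omega_m)=\big(-K_n'(\omega_j-\omega_m)\big)_j$ has a vanishing $m$-th coordinate ($K_n'(0)=0$) and off-peak coordinates $O\big((n+2)^{-3}\Delta^{-4}\big)$, together with diagonal dominance of $A(\bomega)=\W(\bomega)^*\W(\bomega)$ (off-diagonal row sums $\le c_3/(n\Delta)^4$, by the off-peak bound on $K_n$), which bounds $A(\bomega)^{-1}$ in $\ell_\infty$-operator norm as in Lemma~\ref{lem-b-bound-full}. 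Writing $\mP\mz(\omega_m)=\sum_j b_j\w(\omega_j)$, the on-support coefficient $b_m$ is itself negligible because $(\W(\bomega)^*\mz(\omega_m))_m=0$, a fact I use repeatedly to make cross-terms doubly small. The same diagonal-dominance input controls $\hat x_m=[A(\bomega)^{-1}\W(\bomega)^*\y]_m$ via Lemma~\ref{lem-coefficient-bound}, giving $|\hat x_m-x_m|\le\rho|x_m|$ and hence $\text{Re}(\hat x_m\bar x_m)\ge(1-\rho)|x_m|^2$.

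With these in hand the three bounds are assembled. For $G_{1,m}=|x_m|\,|[\w(\tau_m)-\w(\omega_m)]^*(\mI-\mP)\mz(\omega_m)|$, I split off $\mP\mz(\omega_m)$: the $\mz(\omega_m)$ part equals $|K_n'(\tau_m-\omega_m)|\le\tfrac{\pi^2}{3}(n+2)^2|\tau_m-\omega_m|$ by the near-peak bound (producing the main $\varphi_1$), and the $\mP\mz(\omega_m)$ part is $O((n\Delta)^{-8})$-relative since $\w(\tau_m)-\w(\omega_m)$ is nearly orthogonal to the far atoms and $b_m$ is negligible. For $G_{3,m}$ I keep the integral-identity main term $(\tau_m-\omega_m)\lVert(\mI-\mP)\mz(\omega_m)\rVert_2^2$, lower-bound $\lVert(\mI-\mP)\mz(\omega_m)\rVert_2^2$ by the lower estimate on $\lVert\mz\rVert_2^2$ minus the projection defect, absorb the quadratic remainder as $O(n^2\varepsilon_t^2)$ using $\lVert\mz''\rVert_2\le n^2\lVert\mz\rVert_2$ and $|\tau_m-\omega_m|\le\varepsilon_t$, and pay the $(1-\rho)$ factor from $\text{Re}(\hat x_m\bar x_m)\ge(1-\rho)|x_m|^2$; this yields $\varphi_3$. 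For $G_{2,m}=|\sum_{i\ne m}\bar x_i[(\mI-\mP)\w(\tau_i)]^*(\mI-\mP)\mz(\omega_m)|$ I split the sum: for active $i\le t$, $i\ne m$, write $(\mI-\mP)\w(\tau_i)=(\mI-\mP)(\w(\tau_i)-\w(\omega_i))$, so each term is $O(|x_i(\tau_i-\omega_i)|)$ times a cross inner product bounded by $|K_n''(\omega_m-\omega_i)|\lesssim n^2/(n\Delta)^4$, and summing over the geometric tail gives the $\varphi_{21}n^2\varepsilon_{\x,t}$ piece; for non-active $i>t$, $\tau_i$ is $\gtrsim\Delta$ away from $\omega_m$ so $\w(\tau_i)^*\mz(\omega_m)=-K_n'(\omega_m-\tau_i)$ is small and the $\mP\mz(\omega_m)$ correction is lower order, and summing gives the $\varphi_{22}\lVert\x_{>t}\rVert_\infty$ piece, with the $\ell_\infty$ bounds on $A(\bomega)^{-1}$ absorbing the $(\mI-\mP)$ corrections throughout.

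I expect the lower bound on $G_{3,m}$ to be the main obstacle, since it is the only estimate where signs and all competing errors must simultaneously align: the quadratic Taylor remainder, the deviation $\hat x_m-x_m$, the projection defect $\lVert\mP\mz(\omega_m)\rVert_2^2$, and the lower estimate on $\lVert\mz\rVert_2^2$ all have to combine so that $\varphi_3$ stays bounded away from $0$ in the regime $n\Delta>C$, $n\varepsilon_t=o(1)$. The bounds on $G_{1,m}$ and $G_{2,m}$ are, by comparison, Cauchy--Schwarz together with the kernel-decay estimates of Proposition~\ref{prop-squared-fejer-kernel} applied termwise.
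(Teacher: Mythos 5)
Your proposal is correct and follows essentially the same route as the paper's proof: both hinge on $(\mI-\mP)\w(\omega_m)=0$ (hence $\w(\tau_m)^*(\mI-\mP)\mz(\omega_m)$ being first order in $\tau_m-\omega_m$ with leading coefficient $\lVert(\mI-\mP)\mz(\omega_m)\rVert_2^2$, which reduces to $-K_n''(0)$ up to a projection defect), on diagonal dominance of $A(\bomega)$ to bound $\lVert A(\bomega)^{-1}\rVert_{\infty,\infty}$ (Lemma~\ref{lem-b-bound-full}) and hence the projection corrections, on Lemma~\ref{lem-coefficient-bound} to get $\mathrm{Re}(\hat x_m\bar x_m)\ge(1-\rho)|x_m|^2$, and on the active/inactive split of the sum defining $G_{2,m}$ with $(\mI-\mP)\w(\tau_i)=(\mI-\mP)(\w(\tau_i)-\w(\omega_i))$ for $i\le t$. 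The only substantive difference is cosmetic: you phrase the expansion via the integral identity $\int_{\omega_m}^{\tau_m}\mz(u)^*(\mI-\mP)\mz(\omega_m)\,du$ and Parseval-type norm identities (noting $\mz'(\omega_m)^*\mz(\omega_m)=0$ from the symmetry of $\sigma$), whereas the paper works directly with $K_n'$, $K_n''$; these are the same computation. One caution worth making explicit: the crude operator-norm bounds $\lVert\mz^{(q)}\rVert_2\lesssim n^q\lVert\mz\rVert_2$ on their own will not reproduce the specific constants $\tfrac{\pi^2}{3}$ and $\tfrac{\pi^4}{18}(n+2)^2\varepsilon_t^2$ appearing in $\varphi_1$ and $\varphi_3$; for those you must substitute the sharp near-peak estimates $K_n''(v)\le -\tfrac{\pi^2}{3}n(n+4)+\tfrac{\pi^4}{6}(n+2)^4v^2$ and $|K_n''(v)|\le\tfrac{\pi^2}{3}n(n+4)$ from Proposition~\ref{prop-squared-fejer-kernel}, as the paper does — you do cite that proposition, so this is a matter of filling in rather than a gap.
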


\begin{proof}\quad\\
\textbf{1. Lower Bound of $G_{3,m}$:}\\	
 Noticing that \begin{align*}
	\bar{x}_m \w(\tau_m)^* (\mP_t - \bm{I}) \mz(\omega_m)   & = \bar{x}_m [ \w(\tau_m)^* \mP_t \mz(\omega_m)  -K_n' ( \omega_m - \tau_m)       ] ,\\
	\hat{x}_m \bar{x}_m &= [ (\bm{F}_t^* \bm{F}_t)^{-1} \bm{F}_t^*  \y ] _m \bar{x}_m \\
	&=  \sum_{i=1}^m x_i\bar{x}_m [  (\bm{F}_t^* \bm{F}_t^*)^{-1}  \bm{F}_t^* \w(\tau_i)    ]_m  \\
	& =  \bar{x}_m \e_m^* (\W^*_t\W_t)^{-1}  \bm{F}_t^* [\sum_{i\leq t}+ \sum_{i> t} x_i \w(\tau_i) ] .
\end{align*}
By Lemma~\ref{lem-coefficient-bound}, we have $	\big\lvert \hat{x}_m \bar{x}_m - \lvert x_m\rvert^2 \big\rvert  \leq  \rho , $
    thus $\text{Re}(\hat{x}_m \bar{x}_m) \geq\big(1-\rho \big) \lvert x_m \rvert^2 . $
On the other hand, denote $\bm{P}_t \bm{z}(\omega_m) = \sum_{i\neq m,i\in [t]} a_{i,m} \w(\omega_i),$ then \allowdisplaybreaks \begin{align*}
\lvert \w(\tau_m)^* \mP(\bomega) \mz(\omega_m) \rvert  & = \lvert [\w(\tau_m)-\w(\omega_m)] ^* \mP(\bomega) \mz(\omega_m) \rvert  \\
&\leq \lVert \bm{a}_{m,-m}\rVert_1 \cdot \max_{i\neq m}\lvert  [\w(\tau_m)-\w(\omega_m)] ^* \w(\omega_i)  \rvert \\
&   \leq \big(1- \dfrac{c_3(\varepsilon_t,\Delta)}{n^4\Delta^4})^{-1} \cdot \sum_{i\neq m}\lvert K_n'(\omega_m - \omega_i) \rvert \cdot \max_{i\neq m} \lvert (\omega_m-\tau_m) K_n'(\omega_i-\xi_m)\rvert \\
& \leq \big(1- \dfrac{c_3(\varepsilon_t,\Delta)}{n^4\Delta^4})^{-1} \cdot \sum_{i\neq m}\lvert K_n'(\omega_m - \omega_i) \rvert \cdot \max_{i\neq m} \lvert (\omega_m-\tau_m) K_n'(\omega_i-\xi_m)\rvert \\
&\leq 2\pi^4  \big(1- \dfrac{c_3(\varepsilon_t,\Delta)}{n^4\Delta^4})^{-1} \dfrac{c_3^2(\varepsilon_t,\Delta)}{(n\Delta)^8} \cdot n^2 \lvert \omega_m - \tau_m\rvert
\end{align*}
where we have used  $ \sum_{i\neq m }\lvert K_n'(\omega_m-\omega_i)\rvert \leq \pi^2n\cdot\dfrac{c_3(\varepsilon_t,\Delta)}{n^4\Delta^4}$ and $\lvert K_n'(\tau) \rvert \leq  \dfrac{\pi^2}{(n+2)^3\tau^4}  $ in the last line.  
Finally, we have \begin{align*}
	-(\omega_m - \tau_m)K_n'(\omega_m-\tau_m) & = -(\omega_m-\tau_m) \int_{0}^{\omega_m - \tau_m} K_n''(v) dv\\
	& \geq  (\omega_m-\tau_m)\int_0^{\omega_m-\tau_m} \dfrac{\pi^2}{3}n(n+4)-\dfrac{\pi^4}{6}(n+2)^4 v^2 dv\\
	&\geq \dfrac{\pi^2}{3}n(n+4)(\omega_m-\tau_m)^2-\dfrac{\pi^4}{18} (n+2)^4(\omega_m-\tau_m)^4\\
	&\geq  \big[\dfrac{\pi^2}{3}- \dfrac{\pi^4}{18} (n+2)^2  \varepsilon_t^2\big]  n^2 (\omega_m-\tau_m)^2 
\end{align*}

Combining these bounds, we get,
\begin{align*}
	G_{3,m}  &\geq  \text{Re}\big( \hat{x}_m\bar{x}_m  \big)\big ( - K_n'(\omega_m - \tau_m)\big ) (\omega_m - \tau_m) -  \lvert x_m^2( \omega_m-\tau_m)\rvert \cdot \lvert \w(\tau_m)^*\mP(\bomega)\mz(\omega_m)\rvert\\
	& \geq \underbrace{\bigg([1-\rho]\big[\dfrac{\pi^2}{3}- \dfrac{\pi^4}{18} (n+2)^2  \varepsilon_t^2\big]  - 2\pi^4  \big(1- \dfrac{c_3(\varepsilon_t,\Delta)}{n^4\Delta^4})^{-1} \dfrac{c_3^2(\varepsilon_t,\Delta)}{(n\Delta)^8} \bigg)}_{\varphi_3} n^2\lvert x_m\rvert^2 (\omega_m-\tau_m)^2   ,
\end{align*}
as desired.

\noindent\textbf{2. Upper bound of $G_{1,m}$:} By previous bounds, we have \begin{align*}
	G_{1,m} &\leq \lvert x_m K'_n(\omega_m - \tau_m) \rvert    +\lvert x_m \w(\tau_m)^* \mP(\bomega) \mz(\omega_m) \rvert\\
	& \leq \lvert x_m\rvert\bigg[\int_{0}^{\lvert \omega_m-\tau_m\rvert}\lvert K_n''(v)\rvert dv + 2\pi^4  \big(1- \dfrac{c_3(\varepsilon_t,\Delta)}{n^4\Delta^4})^{-1} \dfrac{c_3^2(\varepsilon_t,\Delta)}{(n\Delta)^8} \cdot n^2 \lvert \omega_m - \tau_m\rvert \bigg] \\
	&\leq \underbrace{\bigg(\dfrac{\pi^2}{3}+ 2\pi^4  \big(1- \dfrac{c_3(\varepsilon_t,\Delta)}{n^4\Delta^4})^{-1} \dfrac{c_3^2(\varepsilon_t,\Delta)}{(n\Delta)^8} \bigg)}_{\varphi_1}   \cdot  n^2\lvert x_m(\omega_m-\tau_m) \rvert 
\end{align*}

\noindent\textbf{3. Upper bound of $G_{2,m}$:}  
We would divide $G_{2,m}$ into two parts and bound them separately: \begin{align*}
	G_{2,m} &\leq   \lvert\sum_{i\leq t,i\neq m}  x_i \w (\tau_i)^* (\mP(\bomega)-\bm{I}) \mz(\omega_m) \rvert +   \lvert  \x_{>t}^* \w_{>t}^* (\mP(\bomega)-\bm{I}) \mz(\omega_m) \rvert 
\end{align*}
For the first part, we have
\begin{align*}
	  \lvert\sum_{i\leq t,i\neq m}  x_i \w (\tau_i)^* (\mP(\bomega)-\bm{I}) \mz(\omega_m) \rvert & = \lvert \sum_{i\leq t,i\neq m}  x_i [\w(\tau_i) - \w(\omega_i)]^*(\mP(\bomega)-\bm{I}) \mz(\omega_m) \rvert \\
	  & = \lvert \sum_{i\leq t,i\neq m}  x_i [\w(\tau_i) - \w(\omega_i)]^*(\mP(\bomega)-\bm{I}) \mz(\omega_m)  \rvert \\
	  &\leq \sum_{i\leq t,i\neq m} \lvert  x_i\rvert \big[ \sum_{k \leq t, k\neq m}a_{k,m}  \lvert( \w(\tau_i)-\w(\omega_i))^*\w(\omega_k) \rvert     + \lvert [\w(\tau_i) - \w(\omega_i)]^* \mz(\omega_m)  \rvert  \big]  \\
	  & \leq \lVert \bm{a}_{-m,m}\rVert_1   \max_{k\leq t,k\neq m} \big\lvert \sum_{i\leq t, i\neq m} \lvert x_i \rvert \cdot \lvert K_n(\tau_i-\omega_k) - K_n(\omega_i - \omega_k)\rvert \big\rvert\\
	  &\quad +\sum_{i<t,i\neq m} \lvert x_i [K_n'(\tau_i-\omega_m) - K_n'(\omega_i-\omega_m)]\rvert .
	 \end{align*} 
	 In particular, the first term can be bounded by $$  \big(1- \dfrac{c_3(\varepsilon_t,\Delta)}{n^4\Delta^4})^{-1} \dfrac{c_3(\varepsilon_t,\Delta)}{(n\Delta)^4} \cdot  2\pi^4 n^2\varepsilon_{\x,t}\big(1+\dfrac{c_2(\varepsilon_t,\Delta)} {(n\Delta)^4} \big) $$
	 as we do when bounding $G_{3,m}.$ For the second term, we have \begin{align*}
	 	\sum_{i<t,i\neq m} \lvert x_i[K_n'(\tau_i - \omega_m) - K_n'(\omega_i - \omega_m) ] \rvert & \leq \sum_{i<t,i\neq m} \lvert x_i(\tau_i -\omega_i) K_n''(\xi_i - \omega_m)\rvert \\
	 	&\leq \varepsilon_{\x,t} \sum_{i<t , i\neq m} \lvert K_n''(\xi_i -\omega_m) \rvert\\
	 	&\leq \varepsilon_{\x,t} \dfrac{4\pi^4n^2  c_{3}(\varepsilon_t,\Delta) }{(n\Delta)^4}
	 \end{align*}
Thus the first part  is upper bounded by \begin{align*}
	\underbrace{2\pi^4\dfrac{c_3(\varepsilon_t,\Delta)}{n^4\Delta^4}  \bigg[ 2 + (1-\dfrac{c_3(\varepsilon_t,\Delta)}{n^4\Delta^4})^{-1} \big(1+\dfrac{c_2(\varepsilon_t,\Delta)} {(n\Delta)^4} \big)   \bigg]}_{\varphi_{21}}  \cdot n^2\varepsilon_{\x,t}
\end{align*}

	 And for the remaining part,  we have 
\begin{align*}
    \lvert  \x_{>t}^* \w_{>t}^* (\mP(\bomega)-\bm{I}) \mz(\omega_m) \rvert & \leq \sum_{i>t} \lvert x_i \rvert\big[ \sum_{\ell\neq m ,\ell \leq t}\lvert a_{\ell,m} \w(\omega_i)^* \w(\omega_\ell)\rvert     +\lvert    \w(\omega_i)^* \mz(\omega_m) \rvert\big]\\
    &\leq \sum_{\ell\neq m,\ell\leq t} \lvert a_{\ell,m}\rvert\big[ \sum_{i>t }  \lvert x_i \rvert  \lvert \w(\omega_i)^* \w(\omega_\ell)\rvert \big] + \sum_{i>t} \lvert x_i \rvert \lvert \w(\omega_i)^*\mz(\omega_m) \rvert \\
    &\leq \lVert \bm{a}_{-m,m}\rVert_1 \lVert \x_{>t}\rVert_\infty \max_{\ell\leq t}\lvert \sum_{i>t} \w(\omega_i)^*\w(\omega_\ell)\rvert + \lVert \x_{>t}\rVert_\infty  \sum_{i>t} \lvert \w (\omega_i)^*\mz(\omega_m)\rvert  \\
    &\leq  n\lVert \x_{>t}\rVert_\infty  \cdot \underbrace{\pi^2\bigg[(1- \dfrac{c_3(\varepsilon_t,\Delta)}{n^4\Delta^4})^{-1} \dfrac{c_3^2(\varepsilon_t,\Delta)}{(n\Delta)^8} +  \dfrac{c_3(\varepsilon_t,\Delta) }{n^4\Delta^4}   \bigg]  }_{\varphi_{22}}
\end{align*}
\noindent Combining all bounds above  together, we get the desired bound.
\end{proof}

\section{Proof of Results in Section~\ref{sec-analysis-incomplete-sample} }\label{appendix-analysis-incomplete}
\subsection{Proof of Proposition~\ref{prop-subsampled-approximate-localization} }
We would first reclaim the formula of $\tilde{\mathcal{C}}_t$ and $\tilde{\lambda}_t:$ 
There exists $\tilde{\nu}_1,\tilde{\nu}_2,\tilde{\nu}_3$ and  $\tilde{\kappa}_{1},\tilde{\kappa}_{21},\tilde{\kappa}_{22},\tilde{\kappa}_{3}$ satisfying  \begin{align*}
	\tilde{\nu}_1 &= O( 1+ \dfrac{1}{(n\Delta)^4}+ s \sqrt{\log(n)/np} ),\\
	\tilde{\nu}_2 &= O( \dfrac{1}{(n\Delta)^4} +s\sqrt{\log(n)/np} ),\\
	\tilde{\nu}_3 &=\big(1 - ( \dfrac{1}{(n\Delta)^4} +s\sqrt{\log(n)/np} )\big)^{-1}  \cdot O( \dfrac{1}{(n\Delta)^4} +s\sqrt{\log(n)/np} ),\\
	\tilde{\kappa}_1 &=( \dfrac{1}{(n\Delta)^4} +s\sqrt{\log(n)/np} )\\
	\tilde{\kappa}_{21} &=( \dfrac{1}{(n\Delta)^4} +s\sqrt{\log(n)/np} )\\
	\tilde{\kappa}_{22} &=( \dfrac{1}{(n\Delta)^4} +s\sqrt{\log(n)/np} )\\
	\tilde{\kappa}_3 &=( \dfrac{1}{(n\Delta)^4} +s\sqrt{\log(n)/np} )
	\end{align*}
	so that \begin{align*}
		 \tilde{\mathcal C}_t &=  0.3 - (\tilde \nu_2 + 2\tilde\nu_3 + \dfrac{c_1}{n^4\Delta^4}+ cs\sqrt{\frac{\log n}{np}} +2 \dfrac{n\varepsilon_{\x, t}}{\lvert x_{t+1}\rvert} \tilde \nu_1),   \\
		 & = 0.3 - O\big( \dfrac{1}{(n\Delta)^4}+ s\sqrt{\log(n)/np}+\dfrac{n\varepsilon_{\x,t}}{\lvert x_{t+1}\rvert} \big)  \\
		 \tilde{\lambda}_1 & = 1- O \big(\dfrac{1}{(n\Delta)^4}+ s\sqrt{\log(n)/np} \big)    \\
		 \tilde{\lambda}_t &=   \bigg[1 - \big( 2 \dfrac{n\varepsilon_{\x,t}}{\lvert x_{t+1}\rvert} \tilde{\kappa}_1+\tilde{\kappa}_{21}+\tilde{\kappa}_{22}+2\tilde{\kappa}_3 \big) \bigg] \lvert x_{t+1}\rvert\\
		 & = \bigg[ 1-O\big(\dfrac{n\varepsilon_{\x,t}}{\lvert x_{t+1}\rvert} [\dfrac{1}{(n\Delta)^4}+ s\sqrt{\log(n)/np} ]  \big) \bigg], \quad t >1.    
	\end{align*}
\begin{proof}
\textbf{The first iteration:} When $t = 0$  W.L.O.G. assume that $\lvert x_1 \rvert = \lVert \x\rVert_\infty,$  we have if $\omega_{1} \notin \mS([s])$, then with probability at least $1-1/n^2,$  \begin{align*}
	\lvert \w(\omega_1)^*\y\rvert  &\leq \lvert  \sum_{i = 1}^s x_i \tilde{K}_n(\tau_i - \omega_1) \rvert\\
	&\leq 0.7 {\lvert x_{T(1)}\rvert}p(1 + \sqrt{c\log(n)/(np)}  ) + \sum_{i\neq T(1)}^s \lvert x_i\rvert p( \dfrac{1}{
	n^4\lvert \omega_1 - \tau_i\rvert^4}+ \sqrt{c\log(n)/(np)}  ) \\
	&\leq p \lvert x_1\rvert \bigg( 0.7 + \dfrac{c_1}{(n\Delta)^4 } + c \cdot s\sqrt{\log(n)/np} \bigg) \end{align*}
On the other hand, we have \begin{align*}
	\lvert \w(\tau_1)^* \y\rvert \geq  p\lvert x_1\rvert \bigg(1 - \dfrac{c_1}{(n\Delta)^4 } -  c\cdot s\sqrt{\log(n)/np} \bigg) ,\end{align*}
that leads to \begin{align*}
\dfrac{2c}{(2n)^2\Delta^2}+  c\cdot s\sqrt{\log(n)/np} < 0.3 \implies \lvert \w(\tau_1)^*\y \rvert > \lvert \w(\omega_1)^* \y\rvert,
\end{align*}
a contradiction.\\
Thus $\omega_1\in \mS([s]). $\\
Now suppose  $\mS(\tau_{T(1)}),$ then we have \begin{align*}
	 p \lvert x_{T(1)}\rvert(1+ s\sqrt{\log(n)/np} )   +p\lvert x_1\rvert (\dfrac{c_1}{(n\Delta )^4 } + cs\sqrt{\log(n)/np})\geq p\lvert x_1\rvert \bigg( 1- \dfrac{c_1}{(n\Delta)^4 } -  cs\sqrt{\log(n)/np}\bigg),
\end{align*} 
which leads to \begin{align*}
	\lvert x_{T(1)} \rvert  \geq \lvert x_1 \rvert  \bigg( 1-\underbrace{ \dfrac{2c_1}{(n\Delta)^4 } - c\cdot s\sqrt{\log(n)/np}}_{\tilde{\lambda}_1 } \bigg)  
\end{align*}
That finishes the proof when $t = 0 $. \\
\textbf{The $t$-th iteration:} For $t>0$, we have suppose W.L.O.G. that $x_{t+1} \in \mT_{t}^c$ and $\lvert x_{t+1}\rvert = \lVert \x(\mT_t^c)\rVert_\infty,$ then for any $\tau \in [0,1),$ \begin{align*}
	&\w(\tau)^* (\bm{I}- \tilde{\bm{P}}_t) \y \\
	 =& \w(\tau)^* (\mI - \tilde{\bm{P}}_t) [\w(\mT_t)\x(\mT_t)+\w(\mT_t^c)\x(\mT_t^c) ]\\
	 =& \underbrace{\w(\tau)^* (\mI - \tilde{\bm{P}}_t) \w(\mT_t)\x(\mT_t)}_{\tilde J_{1,t}(\tau)} + \underbrace{\w(\tau)^* (\mI - {\bm{P}}_t)\w(\mT_t^c)\x(\mT_t^c)}_{\tilde J_{2,t}(\tau)} +\underbrace{\w(\tau)^* ({\bm{P}}_t- \tilde{\mP}_t )\w(\mT_t^c)\x(\mT_t^c)}_{\tilde J_{3,t}(\tau)} .
	 \end{align*}
1. For $\tilde J_{2,t}(\tau)$ , we have by Lemma~5 in [Cai2011], \begin{align*}
	 \max_{i\in \mT_t^c}  \lvert J_{2,t}(\tau) \rvert &= 	\max_{i \in \mathcal{T}_t^c}\lvert  \w(\tau_i )^* (\bm{I}-\mP_t) \y \rvert\\
	  & =\lVert  \w(\mathcal{T}_t^c )^* (\bm{I}-\mP_t) \w(\mathcal{T}_t^c ) \bm x(\mathcal{T}_t^c) \rVert_\infty \\
	&\geq   \big[\big(\lVert \w(\mT_t^c)^*\w(\mT_t^c)\big]^{-1}  \rVert_{\infty,\infty}\big)^{-1}  \lVert  \bm{x} (\mT_t^c)  \rVert_\infty . 
	\end{align*}
Now noticing that for a diagonal-dominate symmetric matrix $A$ with $\Delta_i,$ we have $ \lVert A^{-1}\rVert_{\infty,\infty} \leq  \max_i {1}/{\Delta_i(A)}. $
Then by \begin{align*}
 	 \Delta_i( \w(\mT_t^c)^*\w(\mT_t^c)) = \tilde{K}_n(0)  - \sum_{k\in \mT_t^c} \tilde{K}_n(\tau_i - \tau_k) \geq p(1-\dfrac{c_1}{(n\Delta)^4} -  cs\sqrt{\log(n)/np}), \quad \forall i \in [t],
 \end{align*}
 we have \begin{align*}
 	 \min_{i \in \mT_t^c} \lvert J_{2,t}(\tau_i)\rvert \geq p \big( 1- \dfrac{c_1}{(n\Delta)^4} - cs\sqrt{\log (n)/np} \big) \lvert x_{t+1}\rvert.
 \end{align*}

 On the other hand, for any $\tau \notin \mS(\mT_t^c), $  denoting $\mP_t\w(\tau) = \sum_{i\in \mT_t} b_i(\tau) \w(\tau_i),$ then by  Lemma~\ref{lem-subsampled-coefficient-bound}(an analog of Lemma~\ref{lem-b-bound-full}), we can show that 
  \begin{align*}
	\lvert  J_{2,t}(\tau)   \rvert & \leq \lvert \sum_{i\in \mT_t^c} x_i \w(\tau)^*\w(\tau_i)\rvert +  \lvert \sum_{i\in \mT_t^c} x_i \sum_{k\in \mT_t} b_k(\tau)\w(\tau_k)^*\w(\tau_i)\rvert \\
	&\leq \lvert x_{t+1} \rvert \lvert \sum_{i\in \mT_t^c}  \tilde{K}_n(\tau-\tau_i)\rvert + \big(\sum_{k\in \mT_t} \lvert b_k(\tau)   \rvert\big) \cdot \max_{k\in \mT_t} \lvert  \sum_{i\in \mT_t^c}x_i \w(\tau_k)^* \w(\tau_i)  \rvert\\
	&\leq  p \lvert x_{t+1} \rvert \bigg[0.7 + \underbrace{ \dfrac{c_0}{(n\Delta)^4}+ cs\sqrt{\log(n)/np} )+ \dfrac{(\dfrac{c_1}{(n\Delta)^4}+ cs\sqrt{\log(n)/np}) ^2}{1- \dfrac{c_3}{n^4\Delta^4} - cs \sqrt{\log(n)/np}} }_{\tilde{\nu}_2} \bigg]
\end{align*} 
as previous arguments in Section~\ref{appendix-proof-J-bound}.

\noindent 2. For $J_{3,t}(\tau),$ we have for any $\tau \in [0,1),$ denoting $\tilde{\mP}_t\w(\tau) = \sum_{i\in \mT_t} \tilde{b}_i(\tau)\w(\omega_i),$ then similar as in previous argument, we have \begin{align*}
	\lvert J_{3,t}(\tau)\rvert & \leq \lvert \sum_{i\in \mT_t^c} x_i \sum_{k\in \mT_t} \tilde{b}_k(\tau)\w(\omega_k)^*\w(\tau_i)\rvert + \lvert \sum_{i\in \mT_t^c} x_i \sum_{k\in \mT_t} b_k(\tau)\w(\tau_k)^*\w(\tau_i)\rvert\\
	&\leq p \lvert x_{t+1}\rvert \underbrace{ \big(1- \dfrac{c_3}{(n\Delta)^4} - cs \sqrt{\log(n)/np} \big)^{-1}   (\dfrac{c_0+c_1}{n^4\Delta^4} + cs\sqrt{\log(n)/np})}_{\tilde{\nu}_3 }.  
\end{align*} 
3.
 For $J_{1,t}(\tau),$ for all $\tau \notin \mS(\mT_t),$ we have similar to the argument in section~\ref{appendix-proof-J-bound},  \begin{align*}
	\lvert J_{1,t}(\tau) \rvert &=\lvert \sum_{i\in \mT_t} x_i \w(\tau)^*(\mI-\tilde{\mP}_t) (\w(\tau_i) - \w(\omega_i)  ) \rvert \\
	&\leq \big\lvert \sum_{i\in \mT_t} x_i [\tilde{K}_n(\tau-\tau_i)-\tilde{K}_n(\tau-\omega_i)]  \big \rvert + \lvert \sum_{i\in \mT_t } x_i \sum_{k\in \mT_t} \tilde{b}_k(\tau) \w(\omega_k)^*(\w(\tau_i) - \w (\omega_i)) \rvert \\
	&\leq  p \underbrace{\pi^2\bigg[ 1+ \dfrac{1 - \dfrac{c_3}{n^4\Delta^4} - c s \sqrt{ \log(n)/np} )} {1+\dfrac{c_0}{n^4\Delta^4}+ cs\sqrt{\log(n)/np}} \bigg] \cdot( \dfrac{1}{3} + \dfrac{c_2}{n^4\Delta^4}+cs \sqrt{ \log(n)/np}) }_{\tilde{\nu}_1}\cdot n \varepsilon_{\x,t}.
	\end{align*}
Thus if we define $$\tilde{\mathcal C}_t =p \cdot \bigg(0.3 - (\tilde \nu_2 + 2\tilde\nu_3 + \dfrac{c_1}{n^4\Delta^4}+ cs\sqrt{\frac{\log n}{np}} +2 \dfrac{n\varepsilon_{\x,t}}{\lvert x_{t+1}\rvert} \tilde \nu_1)\bigg) , $$
then we have $\mathcal{C}_t>0 \implies \omega_{t+1}\in \mS(\mT_t^c).$	 \\
Now  suppose $\omega_{t+1} \in \mS(x_{T(t+1)})$ for some $T(t+1) \in \mT_t^c,$ we have then \begin{align*}
		\lvert \w(\omega_{t+1})^*(\mI-\tilde{\mP} ) \y \rvert  > \lvert \w(\tau_{t+1})^*(\mI-\tilde{\mP} ) \y \rvert 
\end{align*} 
implies \begin{align*}
	\lvert  \tilde J_{2,t} (\omega_{t+1}) \rvert - \lvert \tilde J_{2,t}(\tau_{t+1}) \rvert \geq  2\sup_{\tau \in \mS(\mT^c_t)} \big( \lvert \tilde J_{1,t}(\tau )\rvert + \lvert \tilde J_{3,t}(\tau) \rvert \big)
\end{align*}
Moreover, similar as in Section~\ref{appendix-J-small-bound}, we have

\textbf{1. $J_2$ lower bound:} When $\tau_m \in \mT_t^c,$ we have denoting $\mP_t\w(\tau) = \sum_{i\in \mT_t} b_i(\tau) \w(\tau_i),$ then 
  \begin{align*}
	\lvert  J_{2,t}(\tau_m)   \rvert  &\geq p\bigg( \lvert x_m \rvert - \lvert x_{t+1}\rvert \underbrace{\bigg[1+  \dfrac{\dfrac{c_2(\varepsilon_t,\Delta)}{(n+2)^4\Delta^4}+cs \sqrt{\log(n)/np}}{  1- \dfrac{c_3(\varepsilon_t,\Delta)}{n^4\Delta^4} - cs \sqrt{\log(n)/np} }   \bigg]\cdot \bigg[  \dfrac{c_1} {(n+2)^4\Delta^4}+ cs\sqrt{\frac{\log(n)}{np}}  \bigg]}_{\tilde{\kappa}_{22} } \bigg) .  
	\end{align*}
\textbf{2. $J_2$ upper bound: } When $\omega\in \mS(\tau_m), $ we have \begin{align*}
	\lvert J_{2,t}(\omega) \rvert 
	& \leq p \bigg(\lvert x_m \rvert + \lvert x_{t+1}\rvert \underbrace{\bigg[ \dfrac{c_2(\frac{1}{2n},\Delta)}{n^4\Delta^4}+ cs\sqrt{\frac{\log n}{np}} + \dfrac{\big( c_1 c_2(\frac{1}{2n}+\varepsilon_t,\Delta)/(n\Delta)^8+ cs\sqrt{\frac{\log n}{np}}\big)   }{1 - \frac{c_3}{n^4\Delta^4} - cs \sqrt{\frac{\log n}{np}}}  \bigg] }_{\tilde\kappa_{21} }  \bigg) 
\end{align*}  
\textbf{3. $J_3$ upper bound: } When $\tau \in \mS(\tau_m)$ for some $\tau_m$,    denoting $\tilde{\mP}_t\w(\tau) = \sum_{i\in \mT_t} \tilde{b}_i(\tau)\w(\omega_i),$ then similar as in previous argument, we have \begin{align*}
	\lvert J_{3,t}(\tau)\rvert&\leq p\lvert x_{t+1}\rvert  \underbrace{ \dfrac{       \dfrac{c_2(\varepsilon_t,\Delta) c_2(\varepsilon_t+\frac{1}{2n},\Delta)}{(n+2)^8\Delta^8 } + \dfrac{c_2(\frac{1}{2n},\Delta) c_1}{(n+2)^8\Delta^8 } +cs\sqrt{\log(n)/np}   }{ 1- \dfrac{c_3(\varepsilon_t ,\Delta)}{n^4\Delta^4} - cs \sqrt{\log(n)/np} }}_{\tilde{\kappa}_3} .
\end{align*} 
\textbf{4. $J_1$ upper bound: }When $\tau \in \mS(\tau_m) $ for some $m \in \mT_t^c,$ we have  \begin{align*}
	\lvert J_{1,t}(\tau) \rvert & \leq p\cdot n\varepsilon_{\x,t}   \underbrace{2\pi^2 \bigg[ \dfrac{c_2(\varepsilon_t+\frac{1}{2n},\Delta)}{(n+2)^4\Delta^4} + cs\sqrt{\log (n)/np}\bigg] \bigg[ 1+\dfrac { \frac{1}{3}+ \dfrac{c_2(\varepsilon_t,\Delta)}{(n\Delta)^4} + cs \sqrt{\log(n)/np} }{ 1- \dfrac{c_3(\varepsilon_t,\Delta)}{n^4\Delta^4} - cs \sqrt{\log(n)/np} }  \big) \bigg]}_{\tilde{\kappa}_{1}}
\end{align*}

Then the claim holds by letting \begin{align*}
	 \tilde{\lambda}_t = \bigg[1 - \big( 2 \dfrac{n\varepsilon_{\x,t}}{\lvert x_{t+1}\rvert}\tilde{\kappa}_1+\tilde{\kappa}_{21}+\tilde{\kappa}_{22}+2\tilde{\kappa}_3 \big) \bigg].  
\end{align*}

\end{proof}

\subsection{Proof of Proposition~\ref{prop-subsampled-improved-estimation}}

\begin{proof}\quad \\
\textbf{The first iteration: } For the first step, suppose W.L.O.G. $\lvert x_1 \rvert = \max_{i\in [s]}\lvert x_i\rvert,$  we have by $\omega_1 \in \mS(\tau_{T(1)}),$  \begin{align*}
	&\lvert  \w(\omega_{1})^*\y \rvert \geq \lvert \w(\tau_{T(1)})^*\y \rvert \\
	\implies & \lvert \sum_{i=1}^s x_i  \tilde{K}_n(\omega_1 -\tau_i)\rvert \geq  \lvert \sum_{i=1}^s x_i \tilde{K} _n(\tau_{T(1)} -\tau_i)\rvert\\
	\implies & \lvert x_{T(1)}\rvert (p- p K_n(\omega_1-\tau_{T(1)}))  \leq p\lvert x_1 \rvert \bigg[\dfrac{2 c_1}{(n+2)^4\Delta^4} +cs\sqrt{\frac{\log n}{np}}\bigg] \\
	\implies &1- {K}_n  (\omega_1-\tau_{T(1)}) \leq  \lvert x_1\rvert \bigg[  \dfrac{2c_1}{\lvert x_{T(1)}\rvert (n+2)^4\Delta^4} + cs \sqrt{\dfrac{\log n}{np}} \bigg]  \\
	\implies& -\int_{0}^{\omega_1-\tau_{T(1)}} \int_0^u K_n''(v)  dvdu \leq   \lvert x_1\rvert \bigg[  \dfrac{2c_1}{\lvert x_{T(1)}\rvert (n+2)^4\Delta^4} + cs \sqrt{\dfrac{\log n}{np}} \bigg]\end{align*}
Then by   \begin{align*}
	\int_{0}^{\omega_1-\tau_{T(1)}} \int_0^u K_n''(v)  dvdu 	&\leq -1.19 n^2\lvert \omega_1-\tau_{T(1)}\rvert^2,
\end{align*}
we get \begin{align*}
	\lvert \omega_1-\tau_{T(1)}\rvert \leq \dfrac{1}{\sqrt{1.19}n} \big[ \dfrac{ \frac{2c_1}{n^4\Delta^4} +cs\sqrt{\frac{\log n}{np }}}{{1-\frac{2c_1}{n^4\Delta^4} - cs\sqrt{\frac{\log n}{np}}}  } \big]^{1/2}
\end{align*}
Thus the claim holds for the first iteration with $\tilde{\lambda}_1$ defined above.
\\
\textbf{The $t+1 $-th iteration: } Suppose W.L.O.G.$\lvert x_{t+1}\rvert = \lVert \x(\mT_{t}^c)\rVert_\infty$, then by $  \omega_{t+1} \in \mS(\tau_{T(t+1)}) $ and  \begin{align}\label{eq-OMP-stept-basic-inequality}
	 &\lvert \w(\omega_{t+1})^*(\mI - \tilde{\mP}_t) \y \rvert \geq \lvert \w(\tau_{T(t+1)})^* (\mI - \tilde{\mP}_t)\y\rvert . 	
\end{align}
Then   \begin{align*}
	 p\lvert x_{T(t+1)}\rvert (1- K_n(\omega_{t+1}-\tau_{T(t+1)})) \leq p  \lvert x_{t+1}\rvert  \underbrace{\big(2\dfrac{n\varepsilon_{\x,t}}{\lvert x_{t+1}\rvert}    \tilde\kappa_1 + \tilde\kappa_{21}+ \tilde\kappa_{22}+ 2\tilde \kappa_{3}  \big)} _{\tilde \lambda_{t+1}},
\end{align*} 
now as argued in step one, we get \begin{align*}
	\lvert \omega_{t+1}-\tau_{T(t+1)}\rvert \leq \dfrac{1}{\sqrt{1.19}n} \big[(1-\lambda_{t+1})^{-1}\lambda_{t+1} \big]^{1/2},
\end{align*}
as desired. 
\end{proof}

\subsection{Proof of Proposition~\ref{prop-subsampled-sliding}} 

\begin{proof} 
  Our proof will be divided into two steps. \\
  \textbf{First Step:}
  	Denoting $\mg(\bomega)$ the gradient of loss at $\bomega$, we have \begin{align*}
	    \mg(\bomega) & = 2 \text{Re}\big[  \y^* (\mP(\bomega)-\bm{I})  \bm{Z}(\bomega)  \text{diag}(\hat\x ) \big]  .
	\end{align*}
	Thus we have the following upper bound and lower bound of $\lvert g_m(\bomega)\rvert^2$ and $g_m(\bomega)(\omega_m - \tau_m)$:
   \begin{align*}
   	  \lvert \dfrac{g_m(\bomega)}{2} \rvert^2
	&\leq    \lvert \hat{x}_m \rvert^2\big[ \underbrace{\lvert \bar{x}_m \w(\tau_m)^*(\mP(\bomega)-\bm{I})\mz(\omega_m)\rvert^2}_{G_{1,m}^2} +  \underbrace{\lvert  \x_{-m}^*\w_{-m}^* (\mP(\bomega) - \bm{I} ) \mz(\omega_m)  \rvert^2}_{G_{2,m}^2}+ 2 G_{1,m}G_{2,m}\big] ,\\
	  \dfrac{g_m(\bomega)}{2}(\omega_m - \tau_m)
	 &\geq  \underbrace{\text{Re}\big(\hat{x}_m \bar{x}_m \w (\tau_m)^*(\mP_t-\mI)\mz(\omega_m) \big) (\omega_m - \tau_m)}_{G_{3,m}} -  \lvert \hat{x}_m\rvert \underbrace{\big \lvert  \x_{-m}^*\w_{-m}^* (\mP_t-\mI)\mz(\omega_m)\big) (\omega_m - \tau_m) \big\rvert}_{=G_{2,m}\cdot \lvert \omega_m - \tau_m\rvert }.
\end{align*}
and we aim to control the lower bound of $G_{3,m}$ and upper bounds of $G_{2,m},G_{1,m} $ separately.
As shown in proof of Proposition~\ref{thm-sliding-regularity},  we have the proof relies on the bound on $G_{1,m},G_{2,m},G_{3,m},$  and we provide an analogue of Lemma~\ref{lem-G-bound-squared} as following: 
\begin{lemma} \label{lem-subsampled-G-bound} As long as $ \max_i \lvert  x_{i}(\omega_i - \tau_{i}) \rvert
	 \leq \dfrac{\lVert \x_{\geq t}\rVert_\infty }{2n}  $ , we have 
\begin{align*}
	G_{1,m}&\leq   p\cdot \tilde{\varphi}_1  \cdot  n^2\lvert x_m(\omega_m-\tau_m) \rvert ,\\
	G_{2,m}&\leq p\cdot \tilde{\varphi}_{21}  \cdot n^2\varepsilon_{\x,t}  + np\lVert \x_{>t}\rVert_\infty  \cdot  \tilde {\varphi}_{22},\\
	G_{3,m}&\geq p\cdot \tilde \varphi_3 \cdot n^2\lvert x_m\rvert^2 (\omega_m-\tau_m)^2  .
\end{align*}
with \begin{align*}
&\dfrac{\pi^2}{3}  \leq \tilde \varphi_{1} \leq \dfrac{\pi^2}{3}+O \big(\dfrac{1}{(n\Delta)^8}+ s\sqrt {\frac{\log n}{np} } \big) , \\
 &\tilde\varphi_{21},\tilde \varphi_{22} = O(\dfrac{1}{(n\Delta)^4} +s \sqrt{\dfrac{\log n}{np} }), \\
   &\dfrac{\pi^2}{3} \geq \tilde \varphi_3 \geq  \dfrac{\pi^2}{3} - O\big(n^2\varepsilon_t^2+\dfrac{1}{(n\Delta)^8} - s\sqrt{\frac{\log n}{np}} \big) .  \end{align*}
\end{lemma}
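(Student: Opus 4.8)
The plan is to reduce the statement to its full-sample counterpart, Lemma~\ref{lem-G-bound-squared}, using the uniform concentration estimate of Lemma~\ref{lem-uniform-concentration}. First I would fix the event $\mathcal{A}$ on which
\[
\sup_{\tau\in[-\frac12,\frac12]}\ \frac{1}{n^q}\bigl|\tilde K_n^{(q)}(\tau)-p\,K_n^{(q)}(\tau)\bigr|\ \le\ C\,p\,\epsilon_0,\qquad \epsilon_0:=\sqrt{\tfrac{\log n}{np}},\qquad q=0,1,2,
\]
which, by Lemma~\ref{lem-uniform-concentration} applied with $\delta=n^{-2}$ and a union bound over the three values of $q$, has probability at least $1-n^{-2}$ once $np\gtrsim s^2\log n$. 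Everything below is then a deterministic consequence of $\mathcal{A}$ together with the hypotheses $n\Delta\ge c$ and $\max_i|x_i(\omega_i-\tau_i)|\le\lVert\x_{\ge t}\rVert_\infty/(2n)$.

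Next I would observe that subsampling does not change the algebra: absorbing the symmetric Bernoulli mask into $\w(\cdot)$ and $\y$, one still has $\w(\tau)^*\w(\tau')=\tilde K_n(\tau'-\tau)$, $\w(\tau)^*\mz(\tau')=\tilde K_n'(\tau'-\tau)$ and $\mz(\tau)^*\mz(\tau')=-\tilde K_n''(\tau'-\tau)$, so the gradient formula for $\mL_t$ and the decompositions defining $G_{1,m},G_{2,m},G_{3,m}$ hold verbatim with every $K_n^{(q)}$ replaced by $\tilde K_n^{(q)}$. I would then reproduce the proof of Lemma~\ref{lem-G-bound-squared} line by line, substituting the dictionary $\tilde K_n^{(q)}(\tau)=p\,K_n^{(q)}(\tau)+O(p\,n^q\epsilon_0)$ pointwise and using that a sum of at most $s$ kernel evaluations picks up an extra $O(p\,s\,n^q\epsilon_0)$. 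The only inputs that need to be re-derived in subsampled form are: diagonal dominance of the Gram matrix, $\Delta_i\bigl(\w(\mT_t^c)^*\w(\mT_t^c)\bigr)\ge p\bigl(1-c_1(n\Delta)^{-4}-Cs\epsilon_0\bigr)$, which upgrades Lemma~\ref{lem-b-bound-full} to $\lVert\bm b\rVert_1\le p^{-1}\bigl(1-c_3(n\Delta)^{-4}-Cs\epsilon_0\bigr)^{-1}\lVert\bm F_t\bm v\rVert_1$ and, combined with Lemma~\ref{lem-coefficient-bound}, yields $|\hat x_m\bar x_m-|x_m|^2|\le\tilde\rho$ with $\tilde\rho=O\bigl(n\varepsilon_{\x,t}/\min_{i\in\mT_t}|x_i|+(n\Delta)^{-4}+s\epsilon_0\bigr)$ (the $p$'s cancel in $\hat\x$); the peak estimate $-(\omega_m-\tau_m)\tilde K_n'(\omega_m-\tau_m)\ge p\bigl(\tfrac{\pi^2}{3}-\tfrac{\pi^4}{18}(n+2)^2\varepsilon_t^2-Cn\epsilon_0\bigr)n^2(\omega_m-\tau_m)^2$; and the tail sums $\sum_{i\ne m}|\tilde K_n'(\omega_m-\omega_i)|\le p\,n\bigl(\pi^2 c_3(n\Delta)^{-4}+Cs\epsilon_0\bigr)$ and $\sum_{i}|\tilde K_n''(\xi-\omega_m)|\le p\,n^2\bigl(Cc_3(n\Delta)^{-4}+Cs\epsilon_0\bigr)$, used to control $\w(\tau_m)^*\mP(\bomega)\mz(\omega_m)$ and the two pieces of $G_{2,m}$ exactly as before. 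Collecting the contributions gives $\tilde\varphi_1\le\tfrac{\pi^2}{3}+O\bigl((n\Delta)^{-8}+s\epsilon_0\bigr)$, $\tilde\varphi_{21},\tilde\varphi_{22}=O\bigl((n\Delta)^{-4}+s\epsilon_0\bigr)$, and $\tilde\varphi_3\ge\tfrac{\pi^2}{3}-O\bigl(n^2\varepsilon_t^2+(n\Delta)^{-8}+s\epsilon_0\bigr)$, the one-sided bounds $\tilde\varphi_1\ge\pi^2/3$ and $\tilde\varphi_3\le\pi^2/3$ being immediate from the form of these constants; the overall factor $p$ in $G_{1,m},G_{2,m},G_{3,m}$ is preserved because each inner product carries one power of $p$.

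I expect the main obstacle to be uniformity rather than any single estimate: the deviation bounds on $\tilde K_n^{(q)}$ must hold simultaneously over all of $[-\tfrac12,\tfrac12]$ and for $q=0,1,2$, since they are applied at the random, adaptively chosen points $\{\omega_i\}_{i\le t}$ and $\{\tau_i\}_{i\le t}$ produced by Algorithm~\ref{alg-SOMP}. This is exactly what the $\epsilon$-net argument behind Lemma~\ref{lem-uniform-concentration} provides; the only real constraint it forces on the sampling rate is $np\gtrsim s^2\log n$, needed so that $s\epsilon_0=s\sqrt{\log n/np}=o(1)$ --- this keeps both the $s$-fold sums of deviations and the various $(1-\cdots)^{-1}$ factors under control --- while the $n^{-2}$ failure probability survives the union bound over $q$. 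Since $\mathcal{A}$ is a single event carrying no further randomness, the subsequent chaining of this lemma over $t=1,\dots,s$ in the proof of Proposition~\ref{prop-subsampled-sliding} (and hence of Theorem~\ref{thm-subsampling-result}) is legitimate; the only bookkeeping care needed is to track the global $p$ consistently, which is why it cancels in the preconditioned step sizes $\hat\mu_m$ and leaves the weak-regularity constants $p$-independent.
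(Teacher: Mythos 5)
Your proposal matches the paper's proof essentially step for step: condition on the high-probability uniform-concentration event from Lemma~\ref{lem-uniform-concentration} (at the cost of replacing each $K_n^{(q)}$ with $pK_n^{(q)}+O(p\,n^q\sqrt{\log n/np})$), rerun the full-sample argument of Lemma~\ref{lem-G-bound-squared}, and feed in the subsampled Gram-matrix and coefficient bounds (Lemma~\ref{lem-subsampled-coefficient-bound} and Lemma~\ref{lem-subsampled-xhat-bound}) to produce the $\tilde\varphi$'s carrying the extra $s\sqrt{\log n/np}$ terms. The only thing you make more explicit than the paper is why the concentration must be uniform in $\tau$ (the evaluation points are chosen adaptively by the algorithm), which is a fair and correct observation rather than a deviation.
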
 
Then the claim holds by the same argument as proof of Proposition~\ref{thm-sliding-regularity} with  \begin{align*}
\tilde{\mathcal R}_t &= \big[\dfrac{1}{2} - \dfrac{1+\tilde{\rho} }{2\tilde{\varphi}_3} \tilde{\varphi}_{21} \big]^{-1} \cdot 2 \tilde{\varphi}_{22} = \big[ \dfrac{1}{2} - O(\dfrac{1}{(n\Delta)^4}+ s\sqrt{\dfrac{\log n}{np}}) \big]  \cdot O\big( \dfrac{1}{(n\Delta)^4} + s \sqrt{\dfrac{\log n}{np}} \big)   \\
	\tilde{\mathcal {CR}} _t &= \big[ \dfrac{1}{4} - \dfrac{1+\tilde{\rho}}{4\tilde{\varphi}_3}\tilde{\varphi}_{21} \big]^2  - \big( \dfrac{18}{\pi^4} + \dfrac{(1+\tilde\rho )^2}{4\tilde{\varphi}_3}   \big) \cdot \big(\dfrac{1}{4}+\tilde{\varphi}_{21}\big)^2= \dfrac{1-18/\pi^4}{16} -  O( \dfrac{1}{(n\Delta)^4} + s \sqrt{\dfrac{\log n}{np}}+ n\varepsilon_t)^2   \\
	\tilde{\mathcal E}_t &=    \dfrac{3}{2\pi^2}\bigg( \big[ \dfrac{1}{2} - \dfrac{1+\tilde\rho}{2\tilde\varphi_3}\tilde{\varphi}_{21} \big]^{-1}2(\tilde{\varphi}_1+\tilde{\varphi}_{21})+1  \bigg)\tilde{\varphi}_{22} = \big(1 + [\dfrac{1}{2}-O(  \dfrac{1}{(n\Delta)^4} + s \sqrt{\dfrac{\log n}{np}}  ) ]^{-1}   \big)   \cdot O( \dfrac{1}{(n\Delta)^4} + s \sqrt{\dfrac{\log n}{np}}).
\end{align*}
 
\end{proof}

\begin{proof}[Proof of Lemma~\ref{lem-subsampled-G-bound}]
	\quad \\
	\textbf{1. Lower Bound of $G_{3,m}$:}\\	
 Noticing that \begin{align*}
	\bar{x}_m \w(\tau_m)^* (\mP_t - \bm{I}) \mz(\omega_m)   & = \bar{x}_m [ \w(\tau_m)^* \mP_t \mz(\omega_m)  -K_n' ( \omega_m - \tau_m)       ] ,\\
	\hat{x}_m \bar{x}_m &= [ (\bm{F}_t^* \bm{F}_t)^{-1} \bm{F}_t^*  \y ] _m \bar{x}_m \\
	&=  \sum_{i=1}^m x_i\bar{x}_m [  (\bm{F}_t^* \bm{F}_t^*)^{-1}  \bm{F}_t^* \w(\tau_i)    ]_m  \\
	& =  \bar{x}_m \e_m^* (\W^*_t\W_t)^{-1}  \bm{F}_t^* [\sum_{i\leq t}+ \sum_{i> t} x_i \w(\tau_i) ] .
\end{align*}
By Lemma~\ref{lem-subsampled-xhat-bound} , we have for $$
	 \rho(\varepsilon_t,\Delta, \zeta): = \big(1- \dfrac{c_3(\varepsilon_t,\Delta)}{n^4\Delta^4})^{-1} \cdot\big((n+2)\zeta (\dfrac{\pi^2}{3} +\dfrac{\pi^2c_2(\varepsilon_t,\Delta)}{(n\Delta)^4}) + \dfrac{2   c_2(\varepsilon_t,\Delta)}{(n\Delta)^4}   \big), $$
	 if we denote $\rho  = \rho(\varepsilon_t,\Delta,\frac{\varepsilon_{\x,t}}{\min_{i\in {\mT_t} \lvert x_i \rvert }}),$  then
  \begin{displaymath}	\big\lvert \hat{x}_m \bar{x}_m - \lvert x_m\rvert^2 \big\rvert  \leq  \rho ,\end{displaymath}
   thus $\text{Re}(\hat{x}_m \bar{x}_m) \geq\big(1-\rho \big) \lvert x_m \rvert^2 . $
On the other hand, denote $\bm{P}_t \bm{z}(\omega_m) = \sum_{i\neq m,i\in [t]} a_{i,m} \w(\omega_i),$ then \allowdisplaybreaks \begin{align*}
\lvert \w(\tau_m)^* \mP(\bomega) \mz(\omega_m) \rvert  & = \lvert [\w(\tau_m)-\w(\omega_m)] ^* \mP(\bomega) \mz(\omega_m) \rvert  \\
&\leq \lVert \bm{a}_{m,-m}\rVert_1 \cdot \max_{i\neq m}\lvert  [\w(\tau_m)-\w(\omega_m)] ^* \w(\omega_i)  \rvert \\
&   \leq p^{-1} \big(1- \dfrac{c_3(\varepsilon_t,\Delta)}{n^4\Delta^4} - cs \sqrt{\log(n)/np})^{-1} \cdot \sum_{i\neq m}\lvert \tilde K_n'(\omega_m - \omega_i) \rvert \cdot \max_{i\neq m} \lvert (\omega_m-\tau_m) \tilde K_n'(\omega_i-\xi_m)\rvert \\&\leq 2\pi^4 \dfrac{[c_3/(n\Delta)^4+ cs\sqrt{\log (n)/np}]^2}{1- \dfrac{c_3}{n^4\Delta^4} - cs\sqrt{\log (n)/np} } \cdot n^2 p \lvert \omega_m -  \tau_m\rvert  .
\end{align*}
where we have used  $$ \sum_{i\neq m }\lvert \tilde K_n'(\omega_m-\omega_i)\rvert \leq 3np\cdot [\dfrac{c_3(\varepsilon_t,\Delta)}{n^4\Delta^4}+cs \sqrt{\log(n)/np ]}, \quad \lvert \tilde K_n'(\omega_i - \xi_m) \rvert \leq 3n \cdot [\dfrac{c_3(\varepsilon_t,\Delta)}{n^4\Delta^4} + cs\sqrt{\log(n)/np}],  $$ 
Finally, we have  by $\tilde{K}_n'(0) = 0,$ \begin{align*}
	-(\omega_m - \tau_m) \tilde{K} _n'(\omega_m-\tau_m) & \geq  -p (\omega_m-\tau_m) \int_{0}^{\omega_m - \tau_m} {K}_n''(v) dv - c n^2p\lvert \omega_m-\tau_m \rvert^2  cs\sqrt{\log(n)/np} \\
	& \geq  p (\omega_m-\tau_m)\int_0^{\omega_m-\tau_m} \dfrac{\pi^2}{3}n(n+4)-\dfrac{\pi^4}{6}(n+2)^4 v^2 dv  -  n^2p\lvert \omega_m-\tau_m \rvert^2  cs\sqrt{\log(n)/np} \\
	&\geq \big[\dfrac{\pi^2}{3}- \dfrac{\pi^4}{18} (n+2)^2  \varepsilon_t^2 - cs\sqrt{\dfrac{\log(n)}{np} } \big]  n^2p (\omega_m-\tau_m)^2 .
\end{align*}

Combining these bounds, we get 
\begin{align*}
G_{3,m}  &\geq  \text{Re}\big( \hat{x}_m\bar{x}_m  \big)\big ( - K_n'(\omega_m - \tau_m)\big ) (\omega_m - \tau_m) -  \lvert x_m^2( \omega_m-\tau_m)\rvert \cdot \lvert \w(\tau_m)^*\mP(\bomega)\mz(\omega_m)\rvert\\
	& \geq  p \lvert x_m\rvert^2 \underbrace{ \bigg([1-\tilde\rho_t]\big[\dfrac{\pi^2}{3}- \dfrac{\pi^4}{18} (n+2)^2  \varepsilon_t^2 - cs\sqrt{\dfrac{\log(n)}{np} }\big]  - 2\pi^4 \dfrac{[c_3/(n\Delta)^4+ cs\sqrt{\log (n)/np}]^2}{1- \dfrac{c_3}{n^4\Delta^4} - cs\sqrt{\log (n)/np} } \bigg)}_{\tilde{\varphi}_3}  n^2 (\omega_m-\tau_m)^2   ,
\end{align*}
as desired.

\noindent\textbf{2. Upper bound of $G_{1,m}$:} By previous bounds, we have \begin{align*}
	G_{1,m} &\leq \lvert x_m \tilde{K} '_n(\omega_m - \tau_m) \rvert    +\lvert x_m \w(\tau_m)^* \mP(\bomega) \mz(\omega_m) \rvert\\
	& \leq p\lvert x_m\rvert\bigg[\int_{0}^{\lvert \omega_m-\tau_m\rvert}\lvert K_n''(v)\rvert dv + 9   \dfrac{[c_3+cs\sqrt{\log(n)/np}]^2 /{(n\Delta)^8}}{1- \dfrac{c_3(\varepsilon_t,\Delta)}{n^4\Delta^4} - cs \sqrt{\log(n)/np}} \cdot  \big( n^2 + cs n^2\sqrt{\frac{\log(n)}{np}} \big) \lvert \omega_m - \tau_m\rvert \bigg] \\
	&\leq p \cdot \underbrace{\bigg(\dfrac{\pi^2}{3}+ 2\pi^4  \dfrac{[c_3+cs\sqrt{\log(n)/np}]^2 /{(n\Delta)^8}}{1- \dfrac{c_3(\varepsilon_t,\Delta)}{n^4\Delta^4} - cs \sqrt{\log(n)/np}}  + cs  \sqrt{ \dfrac{\log(n)}{np}} \bigg) }_{\tilde{\varphi}_{1}}  \cdot  n^2\lvert x_m(\omega_m-\tau_m) \rvert 
\end{align*}

\noindent\textbf{3. Upper bound of $G_{2,m}$:}   
We would divide $G_{2,m}$ into two parts and bound them separately: \begin{align*}
	G_{2,m} &\leq   \lvert\sum_{i\leq t,i\neq m}  x_i \w (\tau_i)^* (\mP(\bomega)-\bm{I}) \mz(\omega_m) \rvert +   \lvert  \x_{>t}^* \w_{>t}^* (\mP(\bomega)-\bm{I}) \mz(\omega_m) \rvert 
\end{align*}
For the first part, we have
\begin{align*}
	  \lvert\sum_{i\leq t,i\neq m}  x_i \w (\tau_i)^* (\mP(\bomega)-\bm{I}) \mz(\omega_m) \rvert & = \lvert \sum_{i\leq t,i\neq m}  x_i [\w(\tau_i) - \w(\omega_i)]^*(\mP(\bomega)-\bm{I}) \mz(\omega_m) \rvert \\
	  & = \lvert \sum_{i\leq t,i\neq m}  x_i [\w(\tau_i) - \w(\omega_i)]^*(\mP(\bomega)-\bm{I}) \mz(\omega_m)  \rvert \\
	  &\leq \sum_{i\leq t,i\neq m} \lvert  x_i\rvert \big[ \sum_{k \leq t, k\neq m}a_{k,m}  \lvert( \w(\tau_i)-\w(\omega_i))^*\w(\omega_k) \rvert     + \lvert [\w(\tau_i) - \w(\omega_i)]^* \mz(\omega_m)  \rvert  \big]  \\
	  & \leq \lVert \bm{a}_{-m,m}\rVert_1   \max_{k\leq t,k\neq m} \big\lvert \sum_{i\leq t, i\neq m} \lvert x_i \rvert \cdot \lvert \tilde K_n(\tau_i-\omega_k) - \tilde K_n(\omega_i - \omega_k)\rvert \big\rvert\\
	  &\quad +\sum_{i<t,i\neq m} \lvert x_i [\tilde K_n'(\tau_i-\omega_m) -\tilde  K_n'(\omega_i-\omega_m)]\rvert  \\
	 \end{align*} 
	 In particular, the first term can be bounded by 
	 $$np\cdot  2\pi^4 \dfrac{[c_3/(n\Delta)^4+ cs\sqrt{\log (n)/np}]^2}{1- \dfrac{c_3}{n^4\Delta^4} - cs\sqrt{\log (n)/np} } \cdot \varepsilon_{\x,t}   [n  +  csn \sqrt{\log(n)/np}]  $$
	 as we do when bounding $G_{3,m}.$ For the second term, we have \begin{align*}
	 	\sum_{i<t,i\neq m} \lvert x_i[\tilde K_n'(\tau_i - \omega_m) - \tilde K_n'(\omega_i - \omega_m) ] \rvert & \leq \sum_{i<t,i\neq m} \lvert x_i(\tau_i -\omega_i) \tilde K_n''(\xi_i - \omega_m)\rvert \\
	 	&\leq \varepsilon_{\x,t} \sum_{i<t , i\neq m} \lvert \tilde K_n''(\xi_i -\omega_m) \rvert\\
	 	&\leq p\varepsilon_{\x,t}4\pi^4\big[ \dfrac{n^2c_{3}(\varepsilon_t,\Delta) }{(n\Delta)^4} + n^2cs \sqrt{ \log (n)/np} \big]
	 \end{align*}
Thus the first part  is upper bounded by \begin{align*}
	2\pi^4 n^2p  \varepsilon_{\x,t}  [\dfrac{c_3(\varepsilon_t,\Delta)}{n^4\Delta^4}+ cs\sqrt{\frac{\log(n)}{np}} ]  \bigg[ 2 + \dfrac{1+\frac{c_2(\varepsilon_t,\Delta)}{(n\Delta)^4} +cs\sqrt{\log(n)/np}}{1-\frac{c_3(\varepsilon_t,\Delta)}{(n\Delta)^4} -cs\sqrt{\log(n)/np}}  \bigg]  \cdot [1+cs \sqrt{\log(n)/np}] 
\end{align*}

	 And for the remaining part,  we have 
\begin{align*}
    \lvert  \x_{>t}^* \w_{>t}^* (\mP(\bomega)-\bm{I}) \mz(\omega_m) \rvert & \leq \sum_{i>t} \lvert x_i \rvert\big[ \sum_{\ell\neq m ,\ell \leq t}\lvert a_{\ell,m} \w(\omega_i)^* \w(\omega_\ell)\rvert     +\lvert    \w(\omega_i)^* \mz(\omega_m) \rvert\big]\\
    &\leq \sum_{\ell\neq m,\ell\leq t} \lvert a_{\ell,m}\rvert\big[ \sum_{i>t }  \lvert x_i \rvert  \lvert \w(\omega_i)^* \w(\omega_\ell)\rvert \big] + \sum_{i>t} \lvert x_i \rvert \lvert \w(\omega_i)^*\mz(\omega_m) \rvert \\
    &\leq \lVert \bm{a}_{-m,m}\rVert_1 \lVert \x_{>t}\rVert_\infty \max_{\ell\leq t}\lvert \sum_{i>t} \w(\omega_i)^*\w(\omega_\ell)\rvert + \lVert \x_{>t}\rVert_\infty  \sum_{i>t} \lvert \w (\omega_i)^*\mz(\omega_m)\rvert  \\
    &\leq  np\lVert \x_{>t}\rVert_\infty  \cdot \pi^2 \bigg[\dfrac{ [\dfrac{c_3 }{n^4\Delta^4} + cs \sqrt{\log(n)/np}]^2} { 1- \dfrac{c_3(\varepsilon_t,\Delta)}{n^4\Delta^4} -cs \sqrt{\log(n)/np} } +  \dfrac{c_3 }{n^4\Delta^4}+ cs \sqrt{\log(n)/np}   \bigg]  
\end{align*}
\noindent Combining all bounds above  together, we get \begin{align*}
G_{2,m}&\leq  n^2p  \varepsilon_{\x,t}    \underbrace{2\pi^4 [\dfrac{c_3(\varepsilon_t,\Delta)}{n^4\Delta^4}+ cs\sqrt{\frac{\log(n)}{np}} ]  \bigg[ 4 + 3\dfrac{1+\frac{c_2(\varepsilon_t,\Delta)}{(n\Delta)^4} +cs\sqrt{\log(n)/np}}{1-\frac{c_3(\varepsilon_t,\Delta)}{(n\Delta)^4} -cs\sqrt{\log(n)/np}}  \bigg]  \cdot [1+cs \sqrt{\log(n)/np}]}_{\tilde\varphi_{21}} \\
&+ np\lVert \x_{>t}\rVert_\infty  \cdot \underbrace{ \pi^2 \bigg[\dfrac{ [\dfrac{c_3 }{n^4\Delta^4} + cs \sqrt{\log(n)/np}]^2} { 1- \dfrac{c_3(\varepsilon_t,\Delta)}{n^4\Delta^4} -cs \sqrt{\log(n)/np} } +  \dfrac{c_3 }{n^4\Delta^4}+ cs \sqrt{\log(n)/np}   \bigg]}_{\tilde \varphi_{22}}  .
\end{align*}
as desired.

\end{proof}

\section{Proof of Results in section~\ref{sec-byproduct-OMP} }

\subsection{Proof of Theorem~\ref{thm-omp-naive}}

\textbf{At the first round: } Since the first step of OMP and Sliding-OMP has no difference, we have it held directly by Proposition~\ref{prop-improved-localization} that \begin{align*}
	\lvert \omega_1 - \tau_{T(1)}\rvert \lesssim \dfrac{1}{n\dynx^2 \zeta^2}
\end{align*}
by $n\Delta > \zeta \dynx. $\\
\textbf{At the $t+1$-th round:} Supposing $t+1 \leq s$ and there exists some $c\geq 1$  independent of $t$ so that \begin{align*}
	\lvert \omega_k - \tau_{T(k)}\rvert \leq \dfrac{c}{n\cdot  \dynx^2 \zeta^2}
\end{align*}
for $1\leq k \leq t,$ we have then $\lambda_{t+1} \leq  \frac{C'c}{n^4\Delta^4\zeta^2} $and  \begin{align*}\mathcal{C}_{t+1} &= 0.3 - O(\dfrac{1}{(n\Delta)^4}+ \dfrac{n \varepsilon_{\x,t}}{\lVert \x(\mT_t^c)\rVert_\infty} )  \\
&> 0.3 - O \big(\dfrac{1}{\zeta^4} +  \dfrac{c}{\zeta} \big) \\
&> 0.3 - O(\dfrac{1}{C^4}+ \dfrac{c}{C})  > 0.
\end{align*}
thus by Proposition~\ref{lem-error-formula} we have there exists $T(t+1)$ so that $ \omega_{t+1}\in \mS (\tau_{T(t+1)})$ then by Proposition~\ref{prop-improved-localization}, we get \begin{align*}
	 \lvert \omega_{t+1} - \tau_{T(t+1)}\rvert  &\leq \dfrac{1}{n} \sqrt{\dfrac{\lambda_{t+1}}{1.19(1-\lambda_{t+1})}}\\
	 &\leq \dfrac{1}{n} \sqrt{ \dfrac{C'c}{1.19n^4\Delta^4 \zeta^2 (1- \dfrac{C'c}{ (n\Delta)^4\zeta^2 }  ) }  }\\
	 &\leq \dfrac{1}{n}\dfrac{C''c}{n^2\Delta^2 \zeta}\\
	 &\leq \dfrac{1}{n  \dynx^2 \zeta^2}\cdot \dfrac{C''c}{C}.
\end{align*}
Thus the claim at $t+1$-th step holds when  $C$ is large enough so that $C \geq C''.$ 

In conclusion, we show by induction that when $t\leq s$, we have \begin{align*}
	\lvert \omega_{k}-\tau_{T(k)}\rvert
 \leq \dfrac{c}{n \dynx^2\zeta^2}.\end{align*} 
Now we would show that there exists $c',c''$ so that the algorithm with threshold $c'\min_{i}\lvert x_i\rvert \leq \gamma \leq c'' \min_{i}\lvert x_i\rvert $ will stop after exactly $s$ steps: \\
When $t<s,$ we have $\mT_t^c \neq \emptyset$ and  \begin{align*}
	\max_{\tau \in [0,1)}  \lvert\w^*(\tau)  \bm r_{t} \rvert &\geq  \max_{i\in \mT_t^c} \lvert J_2(\tau) \rvert - \sup_{\tau\in [0,1)}  \lvert J_{1,t}(\tau)+J_{3,t}(\tau) \rvert \\
	&\geq \big(1- \dfrac{c_1}{n^4\Delta^4}\lVert \x(\mT_t^c)\rVert_\infty \big) -\nu_1  n\varepsilon_{\x,t} - \nu_3\lVert \x(\mT_t^c)\rVert_\infty \\
	&\geq \bigg(1-\dfrac{c_1}{n^4\Delta^4} - \nu_1 \dfrac{n\varepsilon_{\x,t} }{\lVert \x(\mT_t^c)\rVert_\infty } - \nu_3 \bigg) \lVert \x(\mT_t^c) \rVert_\infty \\
	& \geq \underbrace{\bigg(1-O (\dfrac{c}{C^2 } + \dfrac{1}{n^4\Delta^4}) \bigg)}_{c''} \min_i\lvert x_i  \rvert.      
\end{align*}
Thus the algorithm with threshold $\gamma\leq c''\min_i \lvert x_i \rvert	$ will not stop before $s$-th iteration.\\
On the other hand, when $t  = s,$ we have \begin{align*}
\lvert	 \w(\tau)^*\bm r_{s}\rvert &= \lvert	 \w(\tau)^*\bm r_{s}\rvert \\
& = \lvert \w(\tau)^*(\bm I - \bm P_s) \bm y \rvert \\
&=\lvert J_{1,s}(\tau) \rvert\\
&\leq C_3 n\max_i\lvert x_i \rvert  \varepsilon_s  \\
&\leq  \underbrace{\dfrac{C_3}{\zeta^2}}_{c'}\min_i\lvert x_i \rvert 
\end{align*}
Thus the algorithm with $\gamma \geq c'\min_i \lvert x_i \rvert $ will stop before or on $s+1$-th step.

Finally, for large enough $C$ we have $c'< c'',$ thus  the interval $I =[c'\min_i\lvert x_i \rvert,  c''\min_i\lvert x_i \rvert]$ is non-empty and the algorithm with $\gamma \in I$ will stop exactly after $s$ iterations. 
\subsection{Proof of Theorem~\ref{thm-impossible-OMP} }

Consider the following instance: \begin{align*}
	\y  = x_1 \w(\tau_1)+x_2 \w(\tau_2)+x_3 \w(\tau_3)
\end{align*}
We would first claim several properties of the Dirichlet kernel, which can be proved following the same argument as in section~\ref{appendix-kernel-inequality}: \begin{align*}
	\lvert D_n(t) \rvert &\lesssim \dfrac{1}{nt}, t\in [-1/2,1/2]\\
	\lvert D_n'(t) \rvert &\lesssim \dfrac{1}{t}, t\in [-1/2,1/2]\\
	\lvert D_n''(t) \rvert &\lesssim \dfrac{n}{t}, t\in [-1/2,1/2] \\
	  D_n''(t)  &\asymp -n^2, t\in [-\frac{1}{2n+4},\frac{1}{2n+4} ]
\end{align*}
Firstly, notice that \begin{align*}
		D_n'(x) = \pi \bigg( \dfrac{\cos (n\pi x)}{\sin (\pi x)} - \dfrac{ \cos (\pi x) \sin (n\pi x) }{n\sin^2 (\pi x)}  \bigg)
\end{align*}
we have by Taylor expansion, \begin{align*}
	D_n'(\dfrac{k}{n}) &= \pi \dfrac{(-1)^{k} }{\sin (\pi k/n)} =  (-1)^k\dfrac{n}{k}+ O( \dfrac{k^3}{n^3}), \\
	D_n'(\dfrac{k+\varepsilon}{n})& = \pi \bigg( \dfrac{\cos ((k+\varepsilon)\pi )}{\sin ({\pi(k+\varepsilon)}/{n})} - \dfrac{\cos(\pi (k+\varepsilon)/n)\sin (\pi (k+\varepsilon))  }{n \sin^2 (\pi (k+\varepsilon)/n)}  \bigg) \\
	& = \pi \bigg( \dfrac{(-1)^k \big(1+O(\varepsilon^2)\big) }{\pi (k+\varepsilon)/n+O((k+\varepsilon)^3/n^3 ) } - \dfrac{\big(\varepsilon+ O(\frac{k\varepsilon+\varepsilon^2}{n})\big)   }{n\big(\frac{\pi^2 (k+\varepsilon)^2}{n^2}+ O(\frac{k^6+\varepsilon^6}{n^6} ))}  \bigg)\\
	& =  \bigg( 1+ O(\varepsilon+\frac{k}{n}) \bigg) (-1)^k \dfrac{k}{n}.
\end{align*}
Now let $\lvert \tau_2 - \tau_1\rvert :=\Delta_1 = \dfrac{\ell_1}{n} , \lvert \tau_3 - \tau_1\rvert =  \Delta_2: =  \dfrac{\ell_2}{n} ,\lvert \tau_2-\tau_3\rvert = \Delta_3:= \dfrac{\ell_3}{n}$ for  large enough positive integers $\ell_1,\ell_2,\ell_3$ to be determined, and $ x_1>2x_2>4x_3$, we get then by an analogue of Theorem~\ref{prop-improved-localization},   $\lvert \omega_1 - \tau_1\rvert < \frac{1}{\sqrt{\min (n\ell_1,n\ell_2 )}} . $ Now for \begin{align*}
	h(\tau): =  \w(\tau)^*\y =  x_1 D_n (\tau- \tau_1) +x_2 D_n (\tau- \tau_2) +x_3 D_n (\tau- \tau_3),
\end{align*}
we have 
\begin{align*}
	 &0=h'(\omega_1)  =  x_1 D_n'(\omega_1 - \tau_1)  + \bigg(1+O( \dfrac{1}{\sqrt{\min( n\ell_1,n\ell_2)}}+ \frac{k}{n}) \bigg)\bigg( (-1)^{\ell_1} x_2 \dfrac{n}{\ell_1}+(-1)^{\ell_2} x_3 \dfrac{n}{\ell_2} \bigg)
\end{align*}
which then leads to    \begin{align*}
x_1 D_n''(\xi_1) (\omega_1-\tau_1)	= \bigg(1+O( \dfrac{1}{\sqrt{\min( \ell_1,\ell_2)}}+ \frac{k}{n}) \bigg)\bigg( (-1)^{\ell_1} x_2 \dfrac{n}{\ell_1}+(-1)^{\ell_2} x_3 \dfrac{n}{\ell_2} \bigg). 
\end{align*}
for some $\xi_1 \in \mS(\tau_1).$
In particular, for any fixed $k$ and  sufficiently large $\ell_1,\ell_2$ we have $$\frac{1}{2} < \bigg(1+O( \dfrac{1}{\sqrt{\min( \ell_1,\ell_2)}}+ \frac{k}{n}) < \frac{3}{2} $$  for sufficiently large $n$, that then leads to there exists some universal $c,C$ so that  \begin{align}\label{eq-lb-error-t1}
\dfrac{c}{n x_1}\big( \dfrac{x_2}{n\Delta_1}+ \dfrac{x_3}{n\Delta_2} \big) < 
 	\lvert \omega_1-\tau_1\rvert  < \dfrac{C}{n x_1}\big( \dfrac{x_2}{n\Delta_1}+ \dfrac{x_3}{n\Delta_2} \big)  . 
\end{align}

For $\omega_2,$ we have by Lemma~\ref{lem-error-formula} $\omega_2 \in \mS(\tau_2)$, thus for \begin{align*}
	h_1(\tau): = \w(\tau)^*(\bm I- \w(\omega_1)\w(\omega_1)^* )\y = h(\tau) - h (\omega_1) D_n(\tau-\omega_1), 
\end{align*} 
we have $h_1'(\omega_2) = 0$ implies $h'(\omega_2) = h(\omega_1)D_n'(\omega_2  - \omega_1)$, i.e. \begin{align*}
 		x_2 D_n'(\omega_2 - \tau_2)  + \bigg(1+O( \dfrac{1}{\sqrt{\min( n\ell_1,n\ell_2)}}+ \frac{k}{n}) \bigg)\bigg( (-1)^{\ell_1} x_1 \dfrac{n}{\ell_1}+(-1)^{\ell_3} x_3 \dfrac{n}{\ell_3} \bigg) = h(\omega_1)D_n'(\omega_2-\omega_1).
\end{align*}
Now noticing $h(\omega_1)D_n'(\omega_2-\omega_1) = O(\frac{\lvert x_1\rvert}{\Delta_3}  ) $, we get there exists some $C'$ so that
\begin{equation}\label{eq-lb-error-t2} \lvert \omega_2 - \tau_2\rvert \leq C'\dfrac{x_1}{n x_2}\big( \dfrac{1}{n\Delta_1}+\dfrac{1}{n\Delta_3} \big) \end{equation} 
Now for $\omega_3:$  \begin{align*}
	\w(\tau)' (I-P_3) \y 
	&= f(\tau) - D_n(\tau - \omega_1) f(\omega_1)- \dfrac{ [D_n(\tau - \omega_2) - D_n(\omega_2-\omega_1) D_n(\tau - \omega_1)][f(\omega_2) - D_n(\omega_2 - \omega_1) f(\omega_1)  ]}{\big(1 -  D_n(\omega_2 - \omega_1)\big)^2} \\
	& = f(\tau) -  D_n(\tau-\omega_1)\big( f(\omega_1)- \dfrac{D_n(\omega_2 - \omega_1)f(\omega_2) - D_n^2(\omega_2-\omega_1) f(\omega_1)}{(1-D_n(\omega_2-\omega_1))^2} \big) \\
	&\quad - D_n(\tau - \omega_2)  \dfrac{ f(\omega_2)-D_n(\omega_2-\omega_1) f(\omega_1)}{(1-D_n(\omega_2-\omega_1))^2}  
\end{align*}
We can rewrite $h_2$ as \begin{align*}
h_2(\tau):= \w(\tau)' (I-P_3) \y = H_1 x_1+H_2x_2+H_3x_3, 
 \end{align*}
 with \allowdisplaybreaks\begin{align*}
 	H_1 &= D_n(\tau - \tau_1) - D_n(\tau - \omega_1)\big[ D_n(\omega_1 - \tau_1)  - \dfrac{D_n(\omega_2-\omega_1) D_n(\omega_2-\tau_1) - D_n^2(\omega_2-\omega_1) D_n (\omega_1-\tau_1)}{(1-D_n(\omega_2-\omega_1))^2}\big]\\
 	&\quad  - D_n(\tau - \omega_2)\dfrac{ D_n(\tau_1-\omega_2) - D_n(\omega_2-\omega_1)D_n(\tau_1 - \omega_1)}{(1-D_n(\omega_2-\omega_1))^2} ,\\
 	H_2 &= D_n(\tau - \tau_2) - D_n(\tau - \omega_1)\big[  D_n(\omega_1 - \tau_2)  - \dfrac{D_n(\omega_2-\omega_1) D_n(\omega_2-\tau_2) - D_n^2(\omega_2-\omega_1) D_n (\omega_1-\tau_2)}{(1-D_n(\omega_2-\omega_1))^2}\big] \\
 	&\quad  - D_n(\tau - \omega_2)\dfrac{ D_n(\tau_2-\omega_2) - D_n(\omega_2-\omega_1)D_n(\tau_2 - \omega_1)}{(1-D_n(\omega_2-\omega_1))^2} ,\\
 		H_3 &= D_n(\tau - \tau_3) - D_n(\tau - \omega_1)\big[  D_n(\omega_1 - \tau_3)  - \dfrac{D_n(\omega_2-\omega_1) D_n(\omega_2-\tau_3) - D_n^2(\omega_2-\omega_1) D_n (\omega_1-\tau_3)}{(1-D_n(\omega_2-\omega_1))^2}\big]\\
 	&\quad  - D_n(\tau - \omega_2)\dfrac{ D_n(\tau_3-\omega_2) - D_n(\omega_2-\omega_1)D_n(\tau_3 - \omega_1)}{(1-D_n(\omega_2-\omega_1))^2} ,	
 \end{align*}
 In particular, for $\tau \geq \tau_3 - \Delta_2/2$, we have \begin{align*}
 	h_2(\tau_1) - h_2(\tau) = \sum_{i=1}^3 {\big( H_i(\tau_1) - H_i(\tau) \big)} x_i.
 \end{align*} 

\noindent \textbf{I.1 Lower bound on $\lvert H_1(\tau_1)\rvert $:}   
\begin{align*}
	H_1(\tau_1) &= 1 - D_n(\tau_1- \omega_1)^2 + \dfrac{ - D_n^2(\tau_1-\omega_1) D_n^2(\omega_2-\omega_1) + 2 D_n(\tau_1-\omega_1)D_n(  \omega_2-\omega_1)D_n(\omega_2-\tau_1)-D_n^2(\tau_1-\omega_2) }{(1-D_n(\omega_2-\omega_1))^2}\\
	& = 1 - D_n(\tau_1-\omega_1)^2- \dfrac{\big(D_n(\omega_2-\omega_1) D_n(\tau_1-\omega_1) - D_n(\tau_1-\omega_2)\big)^2 }{(1-D_n(\omega_2-\omega_1))^2}
\end{align*}
denoting $\epsilon_1 = \lvert \omega_1-\tau_1\rvert,\epsilon_2 = \lvert \omega_2-\tau_2\rvert,$ we have \begin{align*}
	\lvert D_n(\omega_2-\omega_1)D_n(\tau_1-\omega_1) -D_n(\tau_1-\omega_2)\rvert &= \lvert D_n(\omega_2-\omega_1)[D_n(\tau_1-\omega_1)-1] + D_n(\omega_2-\omega_1)-D_n(\tau_1-\omega_2) \rvert\\
	& \leq \dfrac{1}{2}\lvert D_n''(\xi_1) \epsilon_1^2\rvert + \dfrac{\epsilon_1}{\Delta_1 - \epsilon_1-\epsilon_2}.
\end{align*}
i.e. the second term is of order $O\bigg(  \big(\lvert D_n''(\xi_1) \epsilon_1^2\rvert + \dfrac{\epsilon_1}{\Delta_1 - \epsilon_1-\epsilon_2}\big)^2 \bigg)$. \\
For the first term, we have \begin{align*}
	\lvert 1 - D_n(\tau_1- \omega_1)^2\rvert = \lvert 1+D_n(\tau_1-\omega_1)\rvert \cdot \dfrac{1}{2} D_n''(\xi_1) \epsilon_1^2
\end{align*}
i.e. \begin{align*}
	\lvert H_1(\tau_1)\rvert \geq   \lvert 1+D_n(\tau_1-\omega_1)\rvert \cdot \dfrac{1}{2} D_n''(\xi_1) \epsilon_1^2 - O\bigg(  \big(\lvert D_n''(\xi_1) \epsilon_1^2\rvert + \dfrac{\epsilon_1}{\Delta_1 - \epsilon_1-\epsilon_2}\big)^2 \bigg)
\end{align*}
\noindent \textbf{I.2 Upper bound on $\lvert H_2(\tau_1)\rvert $:}   
\\
\begin{align*}
	&H_2(\tau_1) = \underbrace{D_n(\tau_1-\tau_2) - D_n(\tau_1-\omega_1) D_n(\omega_1-\tau_2)}_{J_1}+{\dfrac{D_n(\tau_1-\omega_1)[ D_n(\omega_2-\omega_1) D_n(\omega_2-\tau_2) - D_n^2(\omega_2-\omega_1)D_n(\omega_1-\tau_2)]}{(1-D_n(\omega_2-\omega_1))^2}}\\
	& {-\dfrac{  D_n(\tau_1-\omega_2) D_n(\tau_2-\omega_2)-D_n(\omega_2-\omega_1)D_n(\tau_2-\omega_1)D_n(\tau_1-\omega_2)}{(1-D_n(\omega_2-\omega_1))^2}}
\end{align*}
notice after multiplying $(1-D_n(\omega_2-\omega_1))^2,$ the terms except $J_1$ can be arranged as\begin{align*}
&\underbrace{D_n(\omega_2-\omega_1)D_n(\tau_2-\omega_1)D_n(\tau_1-\omega_2)  -D_n(\tau_1-\omega_1)  D_n^2(\omega_2-\omega_1)D_n(\omega_1-\tau_2) }_{J_{21}}\\
&+ \underbrace{[D_n(\tau_1-\omega_1)D_n(\omega_2-\omega_1)-D_n(\tau_1-\omega_2) ]D_n(\omega_2-\tau_2)}_{J_{22}}
\end{align*}
and $J_{21} =  O (\dfrac{1}{n(\Delta_1 -\epsilon_1-\epsilon_2)^3})$. \begin{align*}
	  J_{22} &= [D_n(\tau_1-\omega_1)D_n(\omega_2-\omega_1)-D_n(\tau_1-\omega_2) ]D_n(\omega_2-\tau_2)\\
	& =  (1-D_n(\tau_1-\omega_1) )  D_n(\omega_2-\omega_1)+ [ D_n(\omega_2-\omega_1)-D_n( \tau_1-\omega_2)] D_n(\omega_2-\tau_2) \\
	& = O( \dfrac{D_n''(\xi_1)\epsilon_1^2}{n(\Delta_1-\epsilon_1-\epsilon_2)} )+ [ D_n(\omega_2-\omega_1)-D_n( \tau_1-\omega_2)] D_n(\omega_2-\tau_2).
\end{align*}
we get $$H_2(\tau_1) = D_n(\tau_1-\omega_1) (D_n(\tau_1-\tau_2) - D_n(\omega_1-\tau_2))+ \dfrac{D_n(\omega_2-\tau_2)(D_n(\omega_2-\omega_1)-D_n(\tau_2-\omega_1))}{(1-D_n(\omega_2-\omega_1))^2}  +O(\dfrac{1}{n^3\Delta^3} )  $$
In particular, we have omitted the high-order terms,  \begin{align*}
	&\lvert H_1(\tau_1)\rvert > \lvert H_2(\tau_1) \rvert \iff  \\
 	&( 1+D_n(\tau_1-\omega_1)) ( 1-D_n(\tau_1-\omega_1)) > D_n(\tau_1-\omega_1) (D_n(\tau_1-\tau_2) - D_n(\omega_1-\tau_2))\\
 	 &+ {D_n(\omega_2-\tau_2)(D_n(\omega_2-\omega_1)-D_n(\tau_1-\omega_2))} \end{align*}
 	 notice now that \begin{align*}
 	 	\text{RHS} &= D_n(\tau_1-\tau_2)-D_n(\omega_1-\tau_2)+D_n(\omega_2-\omega_1)-D_n(\tau_1-\omega_2)   + O( \dfrac{1}{(n\Delta)^3} )\\
 	 	& = D_n'(\xi_{11}) (\tau_1-\omega_1) - D_n'(\xi_{22})(\tau_1-\omega_1) + O( \dfrac{1}{(n\Delta_1)^3} )  	 \end{align*}
where $\xi_{11} $ is between $\tau_1-\tau_2$ and $\omega_1-\tau_2,$ $\xi_{22} $ is between $\omega_1-\omega_2$ and $\tau_1-\omega_2$.  Thus we have \begin{align*}
	\lvert  \text{RHS} \rvert \leq  D_n''(\xi_4) (\tau_1-\omega_1)^2+ O(\dfrac{1}{(n\Delta_1)^3})
\end{align*}
Finally noticing that $\lvert \xi_4 \rvert \geq \dfrac{1}{\Delta_1 - \epsilon_1-\epsilon_2},$ we get then \begin{align*}
	D_n''(\xi_4)  \lesssim \dfrac{\pi n}{\Delta_1-\epsilon_1-\epsilon_2}.
\end{align*}
And our condition turns to \begin{align*}
	D''(\xi_1)\epsilon_1^2 \gtrsim  \dfrac{ n}{\Delta_1-\epsilon_1-\epsilon_2} \epsilon_1^2
\end{align*}
i.e. \begin{align*}
	D''(\xi_1) \gtrsim  \dfrac{ n^2}{n(\Delta_1-\epsilon_1-\epsilon_2)}
\end{align*}
which is satisfied for large enough $\ell_1$.\\

In conclusion, we get the following inequalites:\allowdisplaybreaks \begin{align*}
 &\lvert h_2(\tau_1)\rvert \gtrsim  \epsilon_1^2	\big ( D_n''(\xi_1) \lvert x_1\rvert   -  \dfrac{ n^2}{n(\Delta_1-\epsilon_1-\epsilon_2)} \lvert x_2\rvert\big) - \lvert x_3\rvert,\\
 &\sup_{\lvert \tau -\tau_3 \rvert\leq \Delta_2/2 } \lvert h_2  (\tau) \rvert \lesssim  \lvert x_3\rvert+ \lvert \dfrac{1}{n\Delta_2}\rvert^2  ( \lvert x_1\rvert+\lvert x_2\rvert )
\end{align*}
now by $\epsilon_1\gtrsim \dfrac{  \lvert x_2\rvert  }{\lvert x_1\rvert n(n\Delta_1 )},D_n''(\xi_1)\gtrsim{n^2}$ , we get then \begin{align*}
	[\dfrac{\lvert x_2\rvert}{\lvert x_1\rvert (n\Delta_1)^2} -  \dfrac{2\pi\lvert x_2\rvert^2 }{(n\Delta_1)^3 \lvert x_1\rvert^2 } ] \lvert x_2\rvert \gtrsim  \dfrac{\lvert x_1\rvert+\lvert x_2\rvert}{(n\Delta_2)^2 }  -\lvert x_3\rvert
\end{align*}
is sufficient to guarantee $\lvert h_2(\tau_1)\rvert > \sup_{\lvert \tau -\tau_3 \rvert\leq \Delta_2/2 }  \lvert h_2 (\tau) \rvert .$ In particular, letting $x_1 = 2x_2, $ and let $\Delta_2 = L\Delta_1$ for some absolute constant $L$ to be chosen, we get the condition turns to \begin{align}
{(n\Delta_1)^2}\big( 1  - O( \dfrac{1 }{n\Delta_1} + \dfrac{2}{L^2} )  	\big)\lvert x_2\rvert \gtrsim   \lvert x_3\rvert
\end{align}
That implies for $n\Delta_1, L, $ large enough so that $ 1  - O( \dfrac{1 }{n\Delta_1} + \dfrac{2}{L^2} ) >0$, we get \begin{align*}
	\dfrac{\lvert x_2\rvert}{\lvert x_1\rvert} \gtrsim  1  - O( \dfrac{1 }{n\Delta_1} + \dfrac{2}{L^2} ) \implies \lvert h_2(\tau_1)\rvert > \sup_{\lvert \tau -\tau_3 \rvert\leq \Delta_2/2 }  \lvert h_2 (\tau) \rvert,
\end{align*}
that leads to the desired claim.

\section{Auxiliary results}

\subsection{Proof of Lemma~\ref{lem-b-bound-full}}
\begin{proof}
	Noticing that $\lVert (\bm{F}_t^* \bm{F}_t )^{-1}\rVert_{\infty,\infty} = \lVert (\bm{F}_t^* \bm{F}_t )^{-1}\rVert_{1,1}$, and \begin{align*}
		\bm{P}_t \bm{v} = \bm{F}_t (\bm{F}_t^* \bm{F}_t )^{-1} \bm{F}_t^* \bm{v} = \sum_{i=1}^t  \underbrace{[(\bm{F}_t^* \bm{F}_t )^{-1} \bm{F}_t^* \bm{v}]_i}_{b_i} \w(\omega_i),
	\end{align*}
now by  \begin{align*}
		\lvert (\bm{F}_t^* \bm{F}_t )_{ii} \rvert - \sum_{j\neq i}  \lvert (\bm{F}_t^* \bm{F}_t )_{ij } \rvert &\geq 1 - \sum_{j\neq i} \lvert K_n(\omega_j - \omega_i)\rvert \\
		& \geq 1 - \sum_{j\neq i} \dfrac{1}{(n+2) ^4(\omega_j- \omega_i )^4} \\
		& \geq 1- \sum_{j = 1}^\infty [\dfrac{1}{(n+2)^4(j\Delta- 2\max_i  \lvert \omega_i - \tau_{T(i)} \rvert)^4     }  + \dfrac{1}{(n+2)^4(j\Delta- \max_i  \lvert \omega_i - \tau_{T(i)} \rvert)^4     } ] \\ 
		& \geq  1- \dfrac{1}{(n+2)^4 \Delta^4} \sum_{j = 1}^\infty \big[ \dfrac{1}{(j - \frac{2n\varepsilon_t}{n\Delta} )^4}+ \dfrac{1}{(j- \frac{n\varepsilon_t}{n\Delta})^4 }\big]   \\
    &\geq 1- \dfrac{c_3(\varepsilon_t,\Delta)}{n^4\Delta^4} \quad \forall i \in [t],
	\end{align*}
		
	we get then \begin{align*}
		\lVert (\bm{F}_t^* \bm{F}_t )^{-1} \rVert_{\infty,\infty} \leq  \big( 1- \dfrac{c_3(\varepsilon_t,\Delta)}{n^4\Delta^4}\big)^{-1}.	\end{align*}
	Thus  $\lVert \bm b\rVert_\infty \leq   \big( 1- \dfrac{c_3(\varepsilon_t,\Delta)}{n^4\Delta^4}\big)^{-1} \lVert \bm{F}_t^* \bm{v}\rVert_\infty.$
	
\end{proof}

\subsection{Proof of Lemma~\ref{lem-coefficient-bound}}

\begin{proof} 
W.L.O.G. supposing $T(i)=i,$	
 we have firstly, for $i\leq t,$ \begin{align*}
	[  (\bm{F}_t^* \bm{F}_t)^{-1}  \bm{F}_t^* \w(\tau_i)    ]_m  &= \e_m^*  (\bm{F}_t^* \bm{F}_t)^{-1}  \bm{F}_t^*[ \w(\tau_i)- \w(\omega_i)] + \delta_{mi}\\
	& =  \sum_{k = 1}^t \w(\omega_k)^*[\w(\tau_i)- \w(\omega_i) ] \e_m^* (\W_t^*\W_t)^{-1} \e_k  + \delta_{mi}  
\end{align*}
As a result, we get \begin{align*}
	&[\sum_{i=1}^t x_i  (\bm{F}_t^* \bm{F}_t)^{-1}  \bm{F}_t^* \w(\tau_i)    ]_m \\
	= &\sum_{i=1}^t x_i \e_m^*  (\bm{F}_t^* \bm{F}_t)^{-1}  \bm{F}_t^* (\w(\tau_i)\pm \w(\omega_i))      \\
	 =&   \sum_{i=1}^t x_i\bigg( \sum_{k = 1}^t \w(\omega_k)^*[\w(\tau_i)- \w(\omega_i) ] \e_m^* (\W_t^*\W_t)^{-1} \e_k  + \delta_{mi} \bigg)\\
	 =& x_m+  \underbrace{\sum_{k=1}^t \sum_{i=1}^t x_i  [K_n(\tau_i-\omega_k) - K_n  (\omega_i-\omega_k) ]   \e_m^*(\W_t^*\W_t)^{-1}\e_k}_{U_1}.
	\end{align*}
	Now notice that \begin{align*}
	 \lvert U_1\rvert &\leq \max_{1\leq k\leq t} \sum_{i=1}^t   \lvert x_i(\tau_i - \omega_i) K_n'(\omega_k - \xi_i) \rvert \cdot \underbrace{\sum_{k=1}^t \lvert \e_m^*(\W_t^*\W_t)^{-1} \e_k\rvert}_{\leq  \lVert (\W_t^*\W_t)^{-1}\rVert_{\infty,\infty}}  \\
	 &\leq  n\varepsilon_{\x,t}\big(\dfrac{\pi^2}{3} +\pi^2\dfrac{c_2(\varepsilon_t,\Delta)} {(n\Delta)^4} \big)  \cdot \big(1- \dfrac{c_3(\varepsilon_t,\Delta)}{n^4\Delta^4})^{-1},
	\end{align*}
	On the other hand, for $i>t$, we have  \begin{align*}
	 \lvert  \sum_{i>t} x_i	[   (\bm{F}_t^* \bm{F}_t)^{-1}   \bm{F}_t^* \w(\tau_i)    ]_m  \rvert & =  \lvert   \sum_{k\leq t} \sum_{i>t} x_i \w(\omega_k)^* \w(\tau_i) \e_m^*     (\bm{F}_t^* \bm{F}_t)^{-1} \e_k   \rvert \\
	 &\leq \big\lvert \max_{k\leq t} \lvert \sum_{i> t} x_i K_n(\omega_k - \tau_i) \rvert \cdot \underbrace{\lvert \sum_{k\leq t} \e_m^* (\bm{F}_t^* \bm{F}_t)^{-1} \e_k \rvert }_{= O(\lVert (\W_t^*\W_t)^{-1}\rVert_{\infty,\infty}  ) }    \big\rvert\\
	 &=  \dfrac{c_2(\varepsilon_t,\Delta)  \lVert \x_{>t}\rVert_\infty }{n^4\Delta^4 } \cdot  \big(1- \dfrac{c_3(\varepsilon_t,\Delta)}{n^4\Delta^4})^{-1}
	\end{align*}  
	Thus the claim holds.
\end{proof}

\subsection{Coefficient Bounds: Sub-Sampled Version}

\begin{lemma}\label{lem-subsampled-coefficient-bound}
	For $\bm{F}_t = [\w(\omega_1),\dots,\w(\omega_t) ]$ and $\bm{v}$ a vector in $\R^n,$ we have $\bm{P}_t \bm{v} = \sum_{i=1}^t b_i  \w(\omega_i),$ with \begin{align*}
		 \lVert \bm{b}\rVert_\infty &\leq p^{-1}\big (1-  \dfrac{c_3(\varepsilon_t,\Delta)  }{n^4\Delta^4} - cs \sqrt{\log(n)/np} \big )^{-1}   \cdot \lVert  \bm{F}_t^* \bm{v} \rVert_\infty \\
		 \lVert \bm{b}\rVert_1 &\leq p^{-1}\big (1-  \dfrac{c_3(\varepsilon_t,\Delta)  }{n^4\Delta^4} - cs \sqrt{\log(n)/np} \big )^{-1} \cdot \lVert  \bm{F}_t^* \bm{v} \rVert_1 \\  
	\end{align*}
\end{lemma}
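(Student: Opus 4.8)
The plan is to mirror the full-sample argument for Lemma~\ref{lem-b-bound-full} but replace the Gram matrix $\bm F_t^*\bm F_t$ with the sub-sampled Gram matrix. Writing $B_\ell$ for the Bernoulli$(p)$ selection variables, the sub-sampled inner product is $\w(\omega_i)^*\w(\omega_j) = \sum_{\ell=-n}^n B_\ell\sigma_\ell e^{2\pi j\ell(\omega_j-\omega_i)} = \tilde K_n(\omega_j-\omega_i)$, so $\bm F_t^*\bm F_t$ has diagonal entries $\tilde K_n(0)$ and off-diagonal entries $\tilde K_n(\omega_j-\omega_i)$. The key point is that $\lVert(\bm F_t^*\bm F_t)^{-1}\rVert_{\infty,\infty}=\lVert(\bm F_t^*\bm F_t)^{-1}\rVert_{1,1}$ is controlled by the minimal diagonal dominance margin $\Delta_i := |(\bm F_t^*\bm F_t)_{ii}| - \sum_{j\neq i}|(\bm F_t^*\bm F_t)_{ij}|$, exactly as in the full-sample proof; so the whole task reduces to lower-bounding this margin with high probability.

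First I would invoke Lemma~\ref{lem-uniform-concentration} with $\delta = 1/n^{O(1)}$: on the good event (probability at least $1-1/n^2$ after adjusting constants) we have $\sup_{\tau}|\tilde K_n(\tau) - p K_n(\tau)| \lesssim p\sqrt{\log(n)/(np)}$, hence also $\tilde K_n(0) \geq p(1 - c\sqrt{\log n/np})$ since $K_n(0)=1$. Then for each $i\in[t]$,
\begin{align*}
\Delta_i &= \tilde K_n(0) - \sum_{j\neq i}|\tilde K_n(\omega_j-\omega_i)| \\
&\geq p\Big(1 - c s\sqrt{\tfrac{\log n}{np}}\Big) - \sum_{j\neq i}\Big(p|K_n(\omega_j-\omega_i)| + c\,p\sqrt{\tfrac{\log n}{np}}\Big) \\
&\geq p\Big(1 - \tfrac{c_3(\varepsilon_t,\Delta)}{n^4\Delta^4} - c s\sqrt{\tfrac{\log n}{np}}\Big),
\end{align*}
where the bound $\sum_{j\neq i}|K_n(\omega_j-\omega_i)| \leq c_3(\varepsilon_t,\Delta)/(n^4\Delta^4)$ is the same separation-plus-decay estimate used in Lemma~\ref{lem-b-bound-full} (using $\varepsilon_t \leq \tfrac{1}{2n+4}$ to keep the $\omega_j$ well-separated and the squared-Fej\'er tail bound of Proposition~\ref{prop-squared-fejer-kernel}), and the extra error terms accumulate over at most $s$ summands. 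Dividing through, $\lVert(\bm F_t^*\bm F_t)^{-1}\rVert_{\infty,\infty} \leq p^{-1}(1 - c_3(\varepsilon_t,\Delta)/(n^4\Delta^4) - cs\sqrt{\log n/np})^{-1}$, and since $\bm b = (\bm F_t^*\bm F_t)^{-1}\bm F_t^*\bm v$ we conclude $\lVert\bm b\rVert_\infty \leq \lVert(\bm F_t^*\bm F_t)^{-1}\rVert_{\infty,\infty}\lVert\bm F_t^*\bm v\rVert_\infty$ and likewise for the $\ell_1$ norm via the $\lVert\cdot\rVert_{1,1}$ identity; this is exactly the claimed inequality.

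The main obstacle is not conceptual but bookkeeping: one must be careful that the $N$ summands bounding $\sum_{j\neq i}|\tilde K_n(\omega_j-\omega_i)|$ each carry an additive $O(p\sqrt{\log n/np})$ concentration error, so the total extra term is $s$ times that — this is where the $s$ in the hypothesis $np\gtrsim s^2\log n$ enters (we need $s\sqrt{\log n/np}$ to be a small constant). One should also note that Lemma~\ref{lem-uniform-concentration} is stated for the uniform sup-norm deviation of $\tilde K_n - pK_n$ over all of $[-1/2,1/2]$ simultaneously, so a single invocation covers every pair $(\omega_i,\omega_j)$ at once and the good event is common to all of $[t]$ — there is no union bound over pairs needed. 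Finally one records that the same good event is the one used throughout Appendix~\ref{appendix-analysis-incomplete} (Propositions~\ref{prop-subsampled-approximate-localization}--\ref{prop-subsampled-sliding}), so no new probability loss is incurred.
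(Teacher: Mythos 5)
Your proof follows the same approach as the paper's: write $\bm b=(\bm F_t^*\bm F_t)^{-1}\bm F_t^*\bm v$, bound $\lVert(\bm F_t^*\bm F_t)^{-1}\rVert_{\infty,\infty}$ by the reciprocal of the smallest diagonal-dominance margin, and lower-bound that margin via the uniform concentration of $\tilde K_n$ around $pK_n$ together with the squared-Fej\'er tail estimate that gives $c_3(\varepsilon_t,\Delta)/(n^4\Delta^4)$. The only cosmetic slip is that your display writes $\tilde K_n(0)\geq p(1-cs\sqrt{\log n/np})$ whereas the diagonal term only incurs a single concentration error of size $cp\sqrt{\log n/np}$; the extra factor of $s$ correctly enters only through the $t\leq s$ off-diagonal summands, as your surrounding prose already notes, so the final bound is the same.
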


\begin{proof}
	Noticing that $\lVert (\bm{F}_t^* \bm{F}_t )^{-1}\rVert_{\infty,\infty} = \lVert (\bm{F}_t^* \bm{F}_t )^{-1}\rVert_{1,1}$, and \begin{align*}
		\bm{P}_t \bm{v} = \bm{F}_t (\bm{F}_t^* \bm{F}_t )^{-1} \bm{F}_t^* \bm{v} = \sum_{i=1}^t  \underbrace{[(\bm{F}_t^* \bm{F}_t )^{-1} \bm{F}_t^* \bm{v}]_i}_{b_i} \w(\omega_i),
	\end{align*}
	we need only to upper bound  $\lVert (\bm{F}_t^* \bm{F}_t )^{-1}\rVert_{\infty,\infty}$: By  \begin{align*}
		\lvert (\bm{F}_t^* \bm{F}_t )_{ii} \rvert - \sum_{j\neq i}  \lvert (\bm{F}_t^* \bm{F}_t )_{ij } \rvert \geq p(1-  \dfrac{c_3(\varepsilon_t,\Delta)  }{n^4\Delta^4} - cs \sqrt{\log(n)/np} ) \quad \forall i \in [t],
	\end{align*}
	we get then \begin{align*}
		\lVert (\bm{F}_t^* \bm{F}_t )^{-1} \rVert_{\infty,\infty} \leq  p^{-1}\big (1-  \dfrac{c_3(\varepsilon_t,\Delta)  }{n^4\Delta^4} - cs \sqrt{\log(n)/np} \big )^{-1} .
	\end{align*}
	Thus the claim holds.
\end{proof}

\begin{lemma}\label{lem-subsampled-xhat-bound}
	 We have  \begin{align*}\lvert [  (\bm{F}_t^* \bm{F}_t)^{-1}  \bm{F}_t^* \sum_{i=1}^n x_i \w(\tau_i)     ]_m - x_m \rvert \leq &  \big(1- \dfrac{c_3(\varepsilon_t,\Delta)}{n^4\Delta^4}-cs\sqrt{\frac{\log n}{np}})^{-1} \cdot\bigg(n \varepsilon_{\x,t} (\dfrac{\pi^2}{3} +\dfrac{\pi^2c_2(\varepsilon_t,\Delta)}{(n\Delta)^4}+cs\sqrt{\log(n)/np}) \\
	 & +  \lVert \x(\mT_t^c)\rVert_{\infty} \big[\dfrac{   c_2(\varepsilon_t,\Delta)}{(n\Delta)^4} + cs\sqrt{\dfrac{\log(n)}{np}} \big]  \bigg) 
	 \end{align*}  
	 as long as $\varepsilon_t \leq \frac{1}{2n+4}.$
\end{lemma}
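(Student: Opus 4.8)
The statement to prove is Lemma~\ref{lem-subsampled-xhat-bound}, the sub-sampled analogue of Lemma~\ref{lem-coefficient-bound}. Let me think about how to prove it.

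We want to bound $|[(\bm F_t^*\bm F_t)^{-1}\bm F_t^*\sum_{i=1}^s x_i\w(\tau_i)]_m - x_m|$.

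In the full-sample case, the proof splits $\sum_i x_i \w(\tau_i) = \sum_{i\le t} + \sum_{i>t}$. For $i \le t$, we use $[(\bm F_t^*\bm F_t)^{-1}\bm F_t^*\w(\tau_i)]_m = \bm e_m^*(\bm F_t^*\bm F_t)^{-1}\bm F_t^*(\w(\tau_i) - \w(\omega_i)) + \delta_{mi}$. Then the telescoping gives $x_m + U_1$ where $U_1$ is a sum over $k,i$ of $x_i[K_n(\tau_i-\omega_k) - K_n(\omega_i-\omega_k)]\bm e_m^*(\bm F_t^*\bm F_t)^{-1}\bm e_k$, bounded by $\max_k \sum_i |x_i(\tau_i-\omega_i)K_n'(\omega_k-\xi_i)| \cdot \|(\bm F_t^*\bm F_t)^{-1}\|_{\infty,\infty}$.

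In the sub-sampled case, $\w(\tau)^*\w(\tau')$ is replaced by $\tilde K_n(\tau-\tau')$, and we have the uniform concentration Lemma~\ref{lem-uniform-concentration}: $\sup_\tau \frac{1}{n^q}|\tilde K_n^{(q)}(\tau) - p K_n^{(q)}(\tau)| \lesssim \sqrt{\frac{p}{n}\log(n/\delta)}$ w.p. $\ge 1-\delta$. So $\tilde K_n^{(q)}(\tau) = p K_n^{(q)}(\tau) + O(n^q \sqrt{\frac{p}{n}\log(n/\delta)}) = p K_n^{(q)}(\tau) + O(p n^q s^{-1}\cdot s\sqrt{\log n/(np)})$ — actually we want the error in terms of $p \cdot s\sqrt{\log n/(np)}$ per term so that summing over $s$ terms gives... hmm, let me reconsider. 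Actually with $\delta = 1/n^2$ or similar, $\sqrt{\frac p n \log(n/\delta)} \asymp \sqrt{\frac{p\log n}{n}}$, and dividing out $p$ gives relative error $\sqrt{\frac{\log n}{np}}$ per pairwise inner product. Summing over $O(s)$ such terms gives $s\sqrt{\frac{\log n}{np}}$, matching the statement.

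So the plan: First I would replace all occurrences of $\w(\cdot)^*\w(\cdot)$ by $\tilde K_n$ and invoke Lemma~\ref{lem-uniform-concentration} (with $\delta$ chosen so the failure probability is at most $1/n^2$, using $np\gtrsim s^2\log n$) to write $\tilde K_n^{(q)}(\tau) = p[K_n^{(q)}(\tau) + E_q(\tau)]$ with $\sup_\tau |E_q(\tau)| \lesssim n^q \sqrt{\log n/(np)}$ for $q=0,1,2$. Next, following the full-sample proof verbatim, decompose $[(\bm F_t^*\bm F_t)^{-1}\bm F_t^*\sum_i x_i\w(\tau_i)]_m = x_m + U_1 + U_2$ where $U_1$ handles $i\le t$ (the telescoping error term) and $U_2$ handles $i>t$. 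For $U_1$ we bound $|U_1| \le \max_k \sum_{i\le t}|x_i[\tilde K_n(\tau_i-\omega_k) - \tilde K_n(\omega_i-\omega_k)]| \cdot \|(\bm F_t^*\bm F_t)^{-1}\|_{\infty,\infty}$; the first factor uses the mean-value form $\tilde K_n(\tau_i-\omega_k)-\tilde K_n(\omega_i-\omega_k) = (\tau_i-\omega_i)\tilde K_n'(\omega_k-\xi_i)$ and then $|\tilde K_n'| \le p[|K_n'| + |E_1|]$, giving $\le \sum_i |x_i(\tau_i-\omega_i)| \cdot p[\,|K_n'(\omega_k-\xi_i)| + O(n\sqrt{\log n/(np)})\,]$; bounding $\sum_i |K_n'(\omega_k-\xi_i)|$ by $(n+2)(\frac{\pi^2}{3} + \pi^2 c_2(\varepsilon_t,\Delta)/(n\Delta)^4)$ as in Lemma~\ref{lem-coefficient-bound}, and $\sum_i O(n\sqrt{\cdots})$ over $\le s$ terms — wait, actually $\varepsilon_{\x,t} = \max_i|x_i(\omega_i-\tau_i)|$ so $\sum_{i\le t}|x_i(\tau_i-\omega_i)|\cdot(\text{stuff}_i) \le n\varepsilon_{\x,t}(\frac{\pi^2}{3} + \dots + cs\sqrt{\log n/(np)})$ with the $s$ coming from $t\le s$ terms each contributing $O(\sqrt{\log n/(np)})$. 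For $\|(\bm F_t^*\bm F_t)^{-1}\|_{\infty,\infty}$, I would invoke Lemma~\ref{lem-subsampled-coefficient-bound}, which gives $\le p^{-1}(1 - \frac{c_3(\varepsilon_t,\Delta)}{n^4\Delta^4} - cs\sqrt{\log n/(np)})^{-1}$; the $p^{-1}$ cancels the $p$ from $\tilde K_n'$.

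For $U_2$ (the $i>t$ part), by the same manipulation $|U_2| \le \max_{k\le t}|\sum_{i>t} x_i\tilde K_n(\omega_k-\tau_i)|\cdot\|(\bm F_t^*\bm F_t)^{-1}\|_{\infty,\infty}$; using $|\tilde K_n(\omega_k-\tau_i)| \le p[\frac{1}{(n+2)^4(\omega_k-\tau_i)^4} + O(\sqrt{\log n/(np)})]$ and summing the decay over $i>t$ gives $\le p\|\x(\mT_t^c)\|_\infty[\frac{c_2(\varepsilon_t,\Delta)}{(n\Delta)^4} + cs\sqrt{\log n/(np)}]$, and again the $p^{-1}$ from the inverse-norm bound cancels. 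Combining $U_1$ and $U_2$ yields exactly the claimed bound. The main obstacle — though it is more bookkeeping than conceptual — is tracking how the $O(\sqrt{\log n/(np)})$ error accumulates an extra factor of $s$ under the summations (and ensuring the diagonal-dominance argument behind Lemma~\ref{lem-subsampled-coefficient-bound} still holds, i.e. $\tilde K_n(0) - \sum_{j\ne i}|\tilde K_n(\omega_i-\omega_j)| \ge p(1 - \frac{c_3}{n^4\Delta^4} - cs\sqrt{\log n/(np)}) > 0$, which needs $np\gtrsim s^2\log n$); everything else is a line-by-line transcription of the proof of Lemma~\ref{lem-coefficient-bound} with $K_n \rightsquigarrow p^{-1}\tilde K_n$ and an added uniform-concentration error, so I would not belabor those computations.
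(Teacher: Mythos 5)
Your proposal is correct and follows essentially the same route as the paper's own proof: the same decomposition into $i\le t$ (telescoping via $\w(\tau_i)-\w(\omega_i)$) and $i>t$ tail terms, the same $\|(\bm F_t^*\bm F_t)^{-1}\|_{\infty,\infty}$ diagonal-dominance bound from Lemma~\ref{lem-subsampled-coefficient-bound}, and the same replacement of $K_n^{(q)}$ by $\tilde K_n^{(q)}=p[K_n^{(q)}+O(n^q\sqrt{\log n/(np)})]$ with the $p$ cancelling the $p^{-1}$ from the inverse-norm bound. The only difference is presentational — you emphasize the $p$-cancellation and the $s$-accumulation in the concentration error more explicitly than the paper, which simply writes the final bounds with those factors already absorbed.
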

\begin{proof} 
W.L.O.G. supposing $T(i)=i,$	
 we have firstly, for $i\leq t,$ \begin{align*}
	[  (\bm{F}_t^* \bm{F}_t)^{-1}  \bm{F}_t^* \w(\tau_i)    ]_m  &= \e_m^*  (\bm{F}_t^* \bm{F}_t)^{-1}  \bm{F}_t^*[ \w(\tau_i)- \w(\omega_i)] + \delta_{mi}\\
	& =  \sum_{k = 1}^t \w(\omega_k)^*[\w(\tau_i)- \w(\omega_i) ] \e_m^* (\W_t^*\W_t)^{-1} \e_k  + \delta_{mi}  
\end{align*}
As a result, we get \begin{align*}
	&[\sum_{i=1}^t x_i  (\bm{F}_t^* \bm{F}_t)^{-1}  \bm{F}_t^* \w(\tau_i)    ]_m \\
	= &\sum_{i=1}^t x_i \e_m^*  (\bm{F}_t^* \bm{F}_t)^{-1}  \bm{F}_t^* (\w(\tau_i)\pm \w(\omega_i))      \\
	 =&   \sum_{i=1}^t x_i\bigg( \sum_{k = 1}^t \w(\omega_k)^*[\w(\tau_i)- \w(\omega_i) ] \e_m^* (\W_t^*\W_t)^{-1} \e_k  + \delta_{mi} \bigg)\\
	 =& x_m+  \underbrace{\sum_{k=1}^t \sum_{i=1}^t x_i  [\tilde K_n(\tau_i-\omega_k) - \tilde K_n  (\omega_i-\omega_k) ]   \e_m^*(\W_t^*\W_t)^{-1}\e_k}_{U_1}.
	\end{align*}
	Now notice that \begin{align*}
	 \lvert U_1\rvert &\leq \max_{1\leq k\leq t} \sum_{i=1}^t   \lvert x_i(\tau_i - \omega_i) \tilde K_n'(\omega_k - \xi_i) \rvert \cdot \underbrace{\sum_{k=1}^t \lvert \e_m^*(\W_t^*\W_t)^{-1} \e_k\rvert}_{\leq  \lVert (\W_t^*\W_t)^{-1}\rVert_{\infty,\infty}}  \\
	 &\leq  n\varepsilon_{\x,t} \cdot\big(\dfrac{\pi^2}{3} +\dfrac{\pi^2c_2(\varepsilon_t,\Delta)} {(n\Delta)^4} +cs \sqrt{\frac{\log n}{np}} \big)  \cdot \big(1- \dfrac{c_3(\varepsilon_t,\Delta)}{n^4\Delta^4} - cs \sqrt{\frac{\log n}{np}}   )^{-1},
	\end{align*}
	On the other hand, for $i>t$, we have  \begin{align*}
	 \lvert  \sum_{i>t} x_i	[   (\bm{F}_t^* \bm{F}_t)^{-1}   \bm{F}_t^* \w(\tau_i)    ]_m  \rvert & =  \lvert   \sum_{k\leq t} \sum_{i>t} x_i \w(\omega_k)^* \w(\tau_i) \e_m^*     (\bm{F}_t^* \bm{F}_t)^{-1} \e_k   \rvert \\
	 &\leq \big\lvert \max_{k\leq t} \lvert \sum_{i> t} x_i K_n(\omega_k - \tau_i) \rvert \cdot \underbrace{\lvert \sum_{k\leq t} \e_m^* (\bm{F}_t^* \bm{F}_t)^{-1} \e_k \rvert }_{= O(\lVert (\W_t^*\W_t)^{-1}\rVert_{\infty,\infty}  ) }    \big\rvert\\
	 &= \lVert \x_{>t}\rVert_\infty   (\dfrac{c_2(\varepsilon_t,\Delta)  }{n^4\Delta^4 }+cs\sqrt{\frac{\log n}{np}}) \cdot  \big(1- \dfrac{c_3(\varepsilon_t,\Delta)}{n^4\Delta^4}-cs \sqrt{\frac{\log n}{np}} )^{-1}
	\end{align*}  
	Thus the claim holds by noticing $\lvert x_m \rvert \leq 2\lVert \x_{>t}\rVert_\infty $.
\end{proof}

\subsection{Proof of Proposition~\ref{prop-squared-fejer-kernel}}\label{appendix-kernel-inequality}
We develop the Proposition~\ref{prop-squared-fejer-kernel} based on the following Lemma in \cite{Candes2014}:
\begin{lemma}
	For all $t\in [-1/2,1/2]$, we have \begin{align}
 K(t) & \geq 1- \dfrac{\pi^2}{6} n(n+4)t^2,\\
 \lvert K'(t) \rvert &\leq \dfrac{\pi^2}{3}n(n+4) t,\\
 K''(t) &\leq -\dfrac{\pi^2}{3}n(n+4) +\dfrac{\pi^4}{6}(n+2)^4 t^2,\\
  	\lvert K''(t)\rvert &\leq \dfrac{\pi^2}{3} n(n+4).
 \end{align}

\end{lemma}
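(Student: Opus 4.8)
The plan is to reduce the lemma to the single kernel factor $g(t):=\dfrac{\sin(m\pi t)}{m\sin(\pi t)}$ with $m:=\tfrac{n}{2}+1$ (the paper assumes $n$ even), so that $K=g^{4}$ and, crucially, $2m=n+2$ and hence $4(m^{2}-1)=n(n+4)$. First I would record the local Taylor data of $g$ at the origin: expanding numerator and denominator, $g(t)=1-\dfrac{\pi^{2}(m^{2}-1)}{6}t^{2}+O(t^{4})$, whence $g(0)=1$, $g'(0)=0$, $g''(0)=-\dfrac{\pi^{2}(m^{2}-1)}{3}$, $g'''(0)=0$, and — carrying the expansion one order further — an explicit $g^{(4)}(0)$ satisfying $g^{(4)}(0)\le\dfrac{\pi^{4}m^{4}}{5}$. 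The chain rule $K'=4g^{3}g'$ and $K''=12g^{2}(g')^{2}+4g^{3}g''$ then gives $K(0)=1$, $K'(0)=0$, $K''(0)=4g''(0)=-\dfrac{\pi^{2}}{3}n(n+4)$, which fixes the constant appearing in the statement, and (Leibniz, using that the odd derivatives of $g$ vanish at $0$) $K^{(4)}(0)=4g^{(4)}(0)+36\,g''(0)^{2}$.

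Next I would pass to the Fourier side. Since $g^{2}$ is a normalization of the Fej\'er kernel it has nonnegative Fourier coefficients, and squaring a trigonometric polynomial with nonnegative coefficients preserves this (Cauchy product) — this is also exactly what the expression \eqref{eq-particular-preconditioner} for $\bm\sigma$ makes manifest. Hence $K(t)=\sum_{\ell}\sigma_{\ell}\cos(2\pi\ell t)$ with $\sigma_{\ell}\ge 0$ and $\sum_{\ell}\sigma_{\ell}=K(0)=1$, so $K''(t)=-4\pi^{2}\sum_{\ell}\ell^{2}\sigma_{\ell}\cos(2\pi\ell t)$. Estimating $|\cos|\le 1$ termwise gives $|K''(t)|\le 4\pi^{2}\sum_{\ell}\ell^{2}\sigma_{\ell}=-K''(0)=\dfrac{\pi^{2}}{3}n(n+4)$, which is the fourth inequality; the first two then follow by Taylor's theorem, since $K(0)=1$ and $K'(0)=0$ give $K(t)=1+\tfrac12 K''(\xi)t^{2}\ge 1-\dfrac{\pi^{2}}{6}n(n+4)t^{2}$ and $|K'(t)|=\bigl|\int_{0}^{t}K''\bigr|\le\dfrac{\pi^{2}}{3}n(n+4)|t|$.

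For the remaining inequality I would use the sharper termwise bound $\cos x\ge 1-\tfrac{x^{2}}{2}$: each coefficient $-4\pi^{2}\ell^{2}\sigma_{\ell}$ is $\le 0$, so $-4\pi^{2}\ell^{2}\sigma_{\ell}\cos(2\pi\ell t)\le -4\pi^{2}\ell^{2}\sigma_{\ell}+8\pi^{4}\ell^{4}\sigma_{\ell}t^{2}$, and summing yields $K''(t)\le -\dfrac{\pi^{2}}{3}n(n+4)+8\pi^{4}t^{2}\sum_{\ell}\ell^{4}\sigma_{\ell}$. It then suffices to check $\sum_{\ell}\ell^{4}\sigma_{\ell}\le\dfrac{(n+2)^{4}}{48}$; since $\sum_{\ell}(2\pi\ell)^{4}\sigma_{\ell}=K^{(4)}(0)=4g^{(4)}(0)+36\,g''(0)^{2}$, the bounds $g^{(4)}(0)\le\tfrac{\pi^{4}m^{4}}{5}$ and $36\,g''(0)^{2}=4\pi^{4}(m^{2}-1)^{2}\le 4\pi^{4}m^{4}$ give $\sum_{\ell}\ell^{4}\sigma_{\ell}\le\dfrac{3m^{4}}{10}<\dfrac{m^{4}}{3}=\dfrac{(n+2)^{4}}{48}$, as required.

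I expect the fourth-moment estimate to be the only genuinely delicate step: it is exactly where the crude $|\cos|\le 1$ bound is too lossy and one has to use the cancellation carried by the exact value of $K^{(4)}(0)$ — equivalently the power sum $\sum_{k=0}^{m-1}(m-1-2k)^{4}$, which a midpoint/convexity comparison with $\int_{-m}^{m}x^{4}\,dx$ bounds by $m^{5}/5$, comfortably enough to beat $\tfrac{(n+2)^{4}}{48}$. Everything else is termwise trigonometric inequalities together with one application of Taylor's theorem.
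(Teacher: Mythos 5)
The paper does not prove this lemma: it is quoted at the start of Appendix F.4 and attributed directly to Cand\`es and Fernandez-Granda \cite{Candes2014}, then used as a black box to establish Proposition~\ref{prop-squared-fejer-kernel}. What you have written is therefore a self-contained replacement for a citation rather than a reconstruction of an argument in the paper, and it is correct.

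Your route is clean and I verified the key identities. With $m=\tfrac n2+1$ one has $4(m^2-1)=n(n+4)$ and $(2m)^4=(n+2)^4$, so $K''(0)=4g''(0)=-\tfrac{\pi^2}{3}n(n+4)$ and the target $\sum_\ell\ell^4\sigma_\ell\le\tfrac{(n+2)^4}{48}$ is exactly $\tfrac{m^4}{3}$. The Fa\`a di Bruno computation $K^{(4)}(0)=4g^{(4)}(0)+36\,g''(0)^2$ is right (the cross terms all carry a factor of $g'(0)$ or $g'''(0)$, which vanish by evenness of $g$). The explicit expansion gives $g^{(4)}(0)=\pi^4\bigl(\tfrac{m^4}{5}-\tfrac{2m^2}{3}+\tfrac{7}{15}\bigr)\le\tfrac{\pi^4 m^4}{5}$, consistent with your power-sum bound $\tfrac1m\sum_{k=0}^{m-1}(m-1-2k)^4\le\tfrac{m^4}{5}$ (the intervals $[x_k-1,x_k+1]$ tile $[-m,m]$, and $x^4\le\tfrac12\int_{x-1}^{x+1}u^4\,du$ pointwise). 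Combining, $\sum_\ell\ell^4\sigma_\ell\le\tfrac{3m^4}{10}<\tfrac{m^4}{3}$, so the $\cos x\ge 1-\tfrac{x^2}{2}$ step delivers inequality (3) with room to spare, and the other three inequalities follow from $\lvert\cos\rvert\le1$ and Taylor's theorem exactly as you say (reading the second inequality with $\lvert t\rvert$ on the right, as one must for $t<0$). The only conceptual input you rely on beyond calculus is that $K=(g^2)^2$ has nonnegative Fourier coefficients because $g^2$ is a normalized Fej\'er kernel; that is precisely what the paper's formula \eqref{eq-particular-preconditioner} encodes, so this is the right observation to lean on.
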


\begin{proof}
\textbf{Inequality1: }By $\lvert \sin(\pi x) \rvert\geq \lvert 2x \rvert$ when $-0.5<x <0.5$ we have $ \lvert K_{n}(\tau) \rvert \leq \dfrac{1}{( n+2)^4\tau^4}.$ On the other hand, when $ ((n +2)\lvert \tau\rvert)^4>\frac{10}{7}, $ the second bound is trivial; when $0.5<(n+2)\lvert \tau \rvert< (10/7)^{1/4}, $ we have  \begin{align*}
	\big(\dfrac{\sin\big((n/2+1 ) \pi t\big )}{(n/2+1) \sin(\pi t)}\big)^4 < 0.7 \iff \dfrac{\lvert  \sin \big((n/2+1) \pi t\big) \rvert}{\lvert \sin (\pi t) \rvert  }< 0.7^{1/4}  (n/2+1).
\end{align*}
By \begin{align*}
	\dfrac{d}{dt} \dfrac{\sin\big ((n/2+1)\pi t\big )}{\sin \big(\pi t \big)} &= \dfrac{\pi }{\sin^2\big(\pi t \big)} \cdot \bigg((n/2+1)\cos\big((n/2+1)\pi t\big) \sin (\pi t) - \cos(\pi t) \sin\big((n/2+1)\pi t \big) \bigg)
\end{align*}
and when $\frac{1}{2n+4}< t< \frac{0.76}{n+2} ,$  \begin{align*}
	 \cos \big((n/2+1)\pi t\big) \sin (\pi t) &\leq \cos \big( (\frac{n}{2}+1)   \dfrac{\pi }{2n+4}    \big)  \pi t = \cos (\frac{\pi }{4})\cdot  \pi t, \\
	 \cos(\pi t) \sin\big( (n/2+1)\pi t  \big) &\geq  \big(1-\frac{4\pi^2}{n^2} \big)  \big( \dfrac{1-\sin(\pi /4)}{1/4 }(n/2+1)t  + 2\sin(\frac{\pi }{4})-1     \big) .  
\end{align*}
where the second line is because when $\frac{1}{4} \leq t\leq \frac{1}{2},$ \begin{align*}
	 \sin (\pi t) &\geq 4(1-\sin(\pi/4))t + 2 \sin (\pi /4)-1. 	  \end{align*}
Then as a result we have \begin{align*}
	\dfrac{d}{dt} \dfrac{\sin\big ((n/2+1)\pi t\big )}{\sin \big(\pi t \big)} \leq \dfrac{\pi }{\sin^2(\pi t)} \bigg( \big[ \dfrac{\sqrt{2}\pi }{2} - (1-\frac{4\pi^2}{n^2})(4- 2\sqrt{2})   ) \big](n/2+1) t - (1-\frac{4\pi^2}{n^2})\big[\sqrt{2} - 1\big]  \bigg),\\
\end{align*}
then by when $n>20\pi , $ $$(n/2+1)t < 0.38 \implies \big[ \dfrac{\sqrt{2}\pi }{2} - (1-\frac{4\pi^2}{n^2})( 4- 2\sqrt{2})   \big](n/2+1) t - (1-\frac{4\pi^2}{n^2}) \big[\sqrt{2} - 1\big]  0,  $$
 which then implies $\frac{d}{dt} \frac{\sin\big ((n/2+1)\pi t\big )}{\sin \big(\pi t \big)} <0$ when $\frac{1}{2n}< t < \frac{0.76}{n+2}$. \\
When $\frac{0.76}{n+2} <t<\frac{1}{n+2}$, we have \begin{align*}
	 \cos \big((n/2+1)\pi t\big) \sin (\pi t) &\leq \pi \big(\frac{1}{2}-(n/2+1)t \big) \sin(\pi t) \leq 0.12\pi^2 t, \\
	 	 \cos(\pi t) \sin\big( (n/2+1)\pi t  \big) &\geq  \big(1-O(\frac{1}{n}) \big)  \big( \dfrac{1-\sin(\pi /4)}{1/4 }(n/2+1)t  + 2\sin(\frac{\pi }{4})-1     \big) .  
\end{align*} 
Thus $$  (n/2+1)t\leq 0.5 \implies  \big[0.12\pi^2-4+2\sqrt{2} \big] ( n/2+1) t  - \big[\sqrt{2}-1 \big] \leq -0.38<0 , $$
which then implies  $\frac{d}{dt} \frac{\sin\big ((n/2+1)\pi t\big )}{\sin \big(\pi t \big)} <0$ when $\frac{0.76}{n+2}< t < \frac{1}{n+2}$.\\
When $\frac{1}{n+2} <t< \frac{(10/7)^4}{n+2},$ we have $\cos\big( (n/2+1)t \big) <0,$ thus  $\frac{d}{dt} \frac{\sin\big ((n/2+1)\pi t\big )}{\sin \big(\pi t \big)} <0$.

In conclusion, we have 
$$ \dfrac{d}{dt} \dfrac{\sin\big ((n/2+1)\pi t\big )}{\sin \big(\pi t \big)} <0 \text{ \quad when }\dfrac{1}{2n+1} <t<\dfrac{(10/7)^{1/4}}{n+2}. $$
Thus \begin{align*}
	\max_{\frac{1}{2n+4}\leq  t\leq  \frac{(10/7)^{1/4}}{n+2}} \lvert  \frac{\sin\big ((n/2+1)\pi t\big )}{\sin \big(\pi t \big)}  \rvert = \max \{\dfrac{\sin (\pi /4)}{\sin(\pi /(2n+4))} , \dfrac{\sin ((\frac{10}{7})^{1/4} \pi /2  )}{\sin \big((\frac{10}{7})^{1/4} \pi/(n+2)\big) }   \}
\end{align*}
Then the claim holds by the right-hand side is smaller or equal to $0.7^{1/4}(n/2+1)$

\textbf{Inequlity2: } By (9), we have \begin{align*}
	 \lvert 1-K_n(\tau) \rvert&\leq \dfrac{\pi^2}{6}n(n+4)\tau^2 \leq 4(n+2)^2\tau^2, \quad \forall \lvert \tau \rvert \leq \frac{1}{2n+4}.
\end{align*}
By (11), we have \begin{align*}
	 1 - K_n(\tau) & = -\int_0^{\tau} K_n'(t) dt\\
	 & = -\int_0^\tau \int_0^t K_n''(s) ds dt \\
	 & \geq   \int_0^\tau\int_0^t \dfrac{\pi^2}{3}n(n+4)-\dfrac{\pi^4}{6}(n+2)^4s^2 ds  dt \\
	 &=  \int_0^\tau  \dfrac{\pi^2}{3}n(n+4) t - \dfrac{\pi^4}{18} (n+2)^4 t^3 dt\\
	 & = \dfrac{\pi^2}{6}n(n+4)\tau^2 - \dfrac{\pi^4}{64}(n+2)^4 \tau^4\\
	 & \geq (\dfrac{\pi^2}{6} - \dfrac{\pi^4}{256}) (n+2)^2\tau^2 - \frac{2\pi^2}{3}\tau^2\\
	 & \geq (n+2)^2  \tau^2 , 
\end{align*}
where the last line is by $(\dfrac{\pi^2}{6} - \dfrac{\pi^4}{256}) (n+2)^2 -\dfrac{2\pi^2}{3}> (n+2)^2$ when $n>10.$\\
\textbf{Inequality3: } By calculating the derivative explicitly, we get \begin{align*}
	K_n'(t) = 4\pi  \big[ \dfrac{\sin((\frac{n}{2}+1)\pi t )}{(\frac{n}{2}+1)\sin(\pi t)} \big]^3\cdot \bigg( \dfrac{ (\frac{n}{2}+1 )\cos ((\frac{n}{2}+1) \pi t ) \sin(\pi t) - \sin((\frac{n}{2}+1)\pi t ) \cos (\pi t)  }{(\frac{n}{2}+1) \sin^2(\pi t) }  \bigg). 
\end{align*}
Thus  when $t>\frac{1}{2n+4}$,\begin{align*}
	\lvert K_n'(t)\rvert &\leq 4\pi \dfrac{1}{(n+2)^3t^3} \cdot  \big[\dfrac{(n/2+1) \pi t}{2(n+2)t^2 } + \dfrac{1}{(n+2)^2t^2 } \big]\leq \dfrac{\pi^2+\frac{4\pi }{(n+2)t} } {(n+2)^3t^4}\leq \dfrac{4\pi^2}{(n+2)^3 t^4}
\end{align*}
On the other hand, we have when $0\leq t\leq \frac{1}{2n+4}, $ \begin{align*}
	\lvert K_n'(t) \rvert \leq \dfrac{\pi^2}{3}n(n+4)t\leq \dfrac{\pi^2}{6} (n+2).
\end{align*} 
\textbf{Inequality4: }  When $0<\tau\leq \frac{1}{2n+4} $ \begin{align*}
	-K_n'(\tau) &= -\int_{0}^\tau K_n''(t) dt\\
	& \geq  \int_{0}^\tau \dfrac{\pi^2}{3}n(n+4)- \dfrac{\pi^4}{6}(n+2)^4 t^2 dt \\
	& = \dfrac{\pi^2}{3}n(n+4) \tau -\dfrac{\pi^4}{18}(n+2)^4 \tau^3   \\
	& \geq \big(\dfrac{\pi^2}{3}- \dfrac{\pi^4}{72} - \dfrac{4\pi^2}{3(n+2)^2}  \big) (n+2)^2\tau  ,
\end{align*}
In particular, when $4\pi^2/3(n+2)^2<0.36,$
we get \begin{align*}
	-K_n'(\tau) \tau  \geq 1.9 (n+2)^2\tau^2.
\end{align*}
On the other hand, \begin{align*}
	-K_n'(\tau) \leq \int_0^\tau \dfrac{\pi^2}{3}n(n+4) dt \leq \dfrac{\pi^2}{3}n(n+4) \tau ,
\end{align*}
thus \begin{align*}
	-K_n'(\tau) \tau \leq \dfrac{\pi^2}{3}(n+2)^2\tau^2\leq 4 (n+2)^2\tau^2.
\end{align*}
\noindent
\textbf{Inequality5:} \begin{align*}
	\lvert K_n''(t) \rvert \leq \dfrac{4\pi^4}{(n+2)^2t^4}
\end{align*} \begin{align*}
	 K''_n(t) =& 12\pi^2  \big[ \dfrac{\sin((\frac{n}{2}+1)\pi t )}{(\frac{n}{2}+1)\sin(\pi t)} \big]^2\cdot \bigg( \dfrac{ (\frac{n}{2}+1 )\cos ((\frac{n}{2}+1) \pi t ) \sin(\pi t) - \sin((\frac{n}{2}+1)\pi t ) \cos (\pi t)  }{(\frac{n}{2}+1) \sin^2(\pi t) }  \bigg)^2\\
	 & + 4\pi  \big[ \dfrac{\sin((\frac{n}{2}+1)\pi t )}{(\frac{n}{2}+1)\sin(\pi t)} \big]^3\cdot \dfrac{d}{d\tau }\bigg( \dfrac{ (\frac{n}{2}+1 )\cos ((\frac{n}{2}+1) \pi t ) \sin(\pi t) - \sin((\frac{n}{2}+1)\pi t ) \cos (\pi t)  }{(\frac{n}{2}+1) \sin^2(\pi t) }  \bigg) 
\end{align*}
As in we bounding $K_n'(t), $  the first term can be bounded by \begin{align*}
	 12\pi^2 \dfrac{1}{(n+2)^2t^2} \cdot  \big[\dfrac{\pi }{4t } + \dfrac{1}{(n+2)^2t^2 } \big]^2\leq \dfrac{{2\pi^4} +\frac{24\pi^2}{(n+2)^4t^2}}{(n+2)^2 t^4} \leq \dfrac{2\pi^4+1}{(n+2)^2t ^4} \end{align*} 
	 when $t > \frac{1}{2n+4}$ and $n > 40$.
Denoting  $$h(t): =   (\frac{n}{2}+1 )\cos ((\frac{n}{2}+1) \pi t ) \sin(\pi t) - \sin((\frac{n}{2}+1)\pi t ) \cos (\pi t),  $$
then we have \begin{align*}
	h'(t)=& -(\dfrac{n}{2}+1)^2\pi \sin\big ((\frac{n}{2}+1)\pi t \big )\sin (\pi t) + (\dfrac{n}{2} +1)\pi \cos \big((\frac{n}{2}+1 )\pi t \big)\cos(\pi t) \\
	&+(\dfrac{n}{2}+1)^2\pi   \cos \big((\dfrac{n}{2}+1 )\pi t\big) \cos (\pi t) - \sin((\frac{n}{2}+1) \pi t)\sin (\pi t).
\end{align*}
Thus \begin{align*}
	\lvert \frac{h'(t)}{\sin (\pi t)} \rvert & \leq  (\dfrac{n}{2}+1)^2\pi +\dfrac{2(n/2+1)^2\pi } {2t}+1 ,\\
		\lvert \frac{h(t)}{\sin (\pi t)} \rvert & \leq  (\dfrac{n}{2}+1) +\dfrac{1} {2t}.
\end{align*}
As a result, \begin{align*}
\lvert \dfrac{d}{dt } \dfrac{h(t)}{\sin^2(\pi t)}  \rvert &= \lvert \dfrac{h'(t)\sin^2(\pi t) - 2\pi \sin (\pi t)\cos (\pi t)h(t) }{\sin^4(\pi t)}\rvert \\
&\leq \dfrac{1}{2t}   \lvert \dfrac{h'(t)}{\sin (\pi t)}\rvert+ \dfrac{\pi }{2t^2} \lvert \dfrac{h(t)}{\sin(\pi t)}\rvert\\
& \leq \dfrac{1}{2t} \cdot \big[(\dfrac{n}{2}+1)^2\pi +\dfrac{2(n/2+1)^2\pi } {2t}+1 \big] + \dfrac{\pi }{2t^2}\cdot \big[ \dfrac{n}{2}+1 +\dfrac{1 } {2t}\big]\\
&\leq \dfrac{1}{t} \big[(\frac{n}{2}+1)^2{\pi }\big]+ \dfrac{1}{t^2} \big[ (n/2+1){\pi}    \big] +  \frac{\pi}{4t^3}. 
\end{align*}
Thus when $t > \frac{1}{2n+4},$ the second term of $K_n''(t)$ is bounded by \begin{align*}
\dfrac{8\pi}{(n+2)^4t^3} \cdot \big[ \dfrac{(n/2+1)^2\pi }{t}+\dfrac{(n/2+1)\pi}{t^2}+\dfrac{\pi}{4t^3}  \big]  &\leq \dfrac{8\pi^2}{(n+2)^4t^4}\big[ \frac{(n+2)^2}{4}+\dfrac{n+2}{2t}+\frac{1}{4t^2}   \big]\\
&\leq \dfrac{18\pi^2}{(n+2)^2t^4}.
\end{align*}
Now the claim is followed by combining the bounds for the first term and second term together and the fact $18\pi^2+1\leq 2\pi^4$.
\end{proof}

\subsection{Proof of Lemma~\ref{lem-grid-bound}}

Denoting $\hat{\tau}_{t+1}$ the nearest grid to $\tau_{T(t+1)},$ we have then \begin{align*}
	\lvert \w^*(\hat{\omega}_{t+1,\text{grid}})\bm r_{t} \rvert \geq \lvert \w^*(\hat{\tau}_{t+1,\text{grid}})\bm r_{t} \rvert \geq  \lvert \w^*(\tau_{T(t+1)})\bm r_{t} \rvert -\lvert  [\w(\tau_{T(t+1)})-\w(\hat{\omega}_{t+1,\text{grid}}) ]^* \bm r_{t} \rvert ,
\end{align*}
thus it sufficient to control $\lvert  [\w(\tau_{T(t+1)})-\w(\hat{\omega}_{t+1,\text{grid}}) ]^* \bm r_{t} \rvert:$ \begin{align*}
	[\w(\tau_{T(t+1)})-\w(\hat{\omega}_{t+1,\text{grid}}) ]^* \bm r_{t} &= [\w(\tau_{T(t+1)})-\w(\hat{\omega}_{t+1,\text{grid}}) ]^* (\bm I - \bm P(\bomega_{t}) ) \bm y  \\ 
	& = \underbrace{[\w(\tau_{T(t+1)})-\w(\hat{\omega}_{t+1,\text{grid}}) ]^* (\bm I - \bm P(\bomega_{t}) ) \x (\mT_t) \w(\mT_t) }_{I_1}\\
	& \quad +\underbrace{ [\w(\tau_{T(t+1)})-\w(\hat{\omega}_{t+1,\text{grid}}) ]^* (\bm I - \bm P(\bomega_{t}) ) \x (\mT_t^c) \w(\mT_t^c)}_{I_2}
\end{align*}
W.L.O.G. assume $T(i) = i$ for $i\leq t$ and $\lvert x_{t+1} \rvert = \lVert \x(\mT_t^c)\rVert_\infty$ and denoting $$\bm P(\bomega_t) \mz(\tau) = \sum_{i=1}^t b_i(\tau) \w(\omega_i),$$then \\
\textbf{Bounding $I_1$:}  we have 
 there exists $\xi_{t+1}$ between $\tau_{T(t+1)}$ and $\hat\omega_{t+1,\text{grid}}$  so that \begin{align*}
	I_1 &= \sum_{i=1}^t x_i   [\w(\tau_{T(t+1)})-\w(\hat{\omega}_{t+1,\text{grid}}) ]^* [\bm I - \bm P(\bomega_{t}) ] [\w(\tau_i)   - \w(\omega_i) ] \\
	& = \sum_{i=1}^t x_i (\tau_{T(t+1)} - \hat{\omega}_{t+1,\grid} )\cdot \big[ \mz (\xi_{t+1})  ^*[\w (\tau_i ) - \w(\omega_i)] + \sum_{k=1}^t b_k(\xi_{t+1}) \w(\omega_k)^*[\w(\tau_i) - \w(\omega_i)] \big]
\end{align*}

Now by the similar argument as we bounding $J_1$ in Section~\ref{appendix-J-small-bound}, we get \begin{align}
	I_1 \lesssim \dfrac{n}{N_{\text{grid}}}\cdot \dfrac{n\varepsilon_{\x,t}}{n^4\Delta^4}.
\end{align}
\textbf{Bounding $I_2$:} we have 
 there exists $\xi_{t+1}$ between $\tau_{T(t+1)}$ and $\hat\omega_{t+1,\text{grid}}$  so that \begin{align*}
	I_1 
	& = \sum_{i>t} x_i (\tau_{T(t+1)} - \hat{\omega}_{t+1,\grid} )\cdot \big[ \mz (\xi_{t+1})  ^*[\w (\tau_i ) - \w(\omega_i)] + \sum_{k=1}^t b_k(\xi_{t+1}) \w(\omega_k)^*[\w(\tau_i) - \w(\omega_i)] \big]\\
	&\lesssim \lvert x_{t+1}\rvert \dfrac{n}{N_{\grid}} \big(1+ \dfrac{1}{(n\Delta)^4}  \big) \lesssim \lvert x_{t+1}\rvert \dfrac{n}{N_{\grid}}.
\end{align*}
where the second inequality is shown by the similar argument as we bounding $J_2, J_3$ in Section~\ref{appendix-J-small-bound}.
Now combining the bound on $I_1,I_2,$ we get  \begin{align*}
	 [\w(\tau_{T(t+1)})-\w(\hat{\omega}_{t+1,\text{grid}}) ]^* \bm r_{t} = O( \dfrac{n}{N_{\grid}}\big[\dfrac{n\varepsilon_{\x,t}}{n^4\Delta^4}+\lVert \x(\mT_t^c)\rVert_\infty \big]  ),
\end{align*}
that completes the proof.

\end{document}